
\documentclass[11pt]{article}
\usepackage[top=1in, left=1in, right=1in, bottom=0.85in, headheight=15pt,headsep=10pt]{geometry}
\usepackage{titlesec} 
\usepackage{graphicx}
\usepackage{subfigure}
\usepackage{booktabs} 
\usepackage{fancyhdr}
\pagestyle{fancy}
\fancyhf{} 
\fancyhead[L]{\footnotesize \textbf{Golrezaei and Sahoo:} \textit{Learning Safe Strategies for Value Maximizing Buyers in Uniform Price Auctions}} 
\fancyhead[R]{\thepage} 

\usepackage[skip=0pt]{subcaption}
\usepackage{diagbox}
\usepackage[round]{natbib}
\usepackage[backref=page]{hyperref}




\usepackage{amsmath}
\usepackage{amssymb}
\usepackage{mathtools}
\usepackage{amsthm}
\usepackage[mathscr]{euscript}
\usepackage[capitalize]{cleveref}
\usepackage{cancel}    
\usepackage{url}            
\usepackage{booktabs}       
\usepackage{microtype}      
\usepackage{setspace}
\usepackage{amssymb}
\usepackage{url}
\usepackage{lettrine}
\usepackage{framed}
\usepackage{color}
\usepackage{xfrac}
\usepackage{nicefrac}
\usepackage[svgnames, x11names]{xcolor}
\usepackage{algorithm}
\usepackage{algpseudocode}
\usepackage{overpic}
\usepackage{dsfont}
\usepackage{booktabs} 
\usepackage{framed}
\usepackage{graphicx}
\usepackage{dsfont}
\usepackage{framed}
\usepackage{bm}
\usepackage{bbm}
\usepackage{overpic}
\usepackage{tikz}
\usepackage{titletoc}
\usepackage{wrapfig}
\usetikzlibrary{decorations.pathreplacing}
\newlength{\bibitemsep}\setlength{\bibitemsep}{.2\baselineskip plus .05\baselineskip minus .05\baselineskip}
\newlength{\bibparskip}\setlength{\bibparskip}{0pt}
\let\oldthebibliography\thebibliography
\renewcommand\thebibliography[1]{%
  \oldthebibliography{#1}%
  \setlength{\parskip}{\bibitemsep}%
  \setlength{\itemsep}{\bibparskip}%
}
\theoremstyle{plain}
\newtheorem{theorem}{Theorem}[section]

\newtheorem{lemma}[theorem]{Lemma}
\newtheorem{corollary}[theorem]{Corollary}
\newtheorem{example}{Example}

\theoremstyle{definition}
\newtheorem{definition}{Definition}
\newtheorem{assumption}[theorem]{Assumption}
\newtheorem{claim}[theorem]{Claim}
\newtheorem{observation}{Observation}
\newtheorem{fact}{Fact}
\theoremstyle{remark}
\newtheorem{remark}[theorem]{Remark}
\usepackage{enumitem}
\usepackage[textsize=tiny]{todonotes}
\newcommand{\ind}[1]{\mathbb{I}{\left[#1\right]}}

\newcommand{\otherbid}[1]{\boldsymbol{\beta}_{-}^{#1}}
\newcommand{\ordotherbid}[2]{\boldsymbol{\beta}_{-, #1}^{-(#2)}}
\newcommand{\ordotherbidnew}[1]{\boldsymbol{\beta}_{-}^{-(#1)}}

\hypersetup{
	colorlinks=true,
	linkcolor=Blue4,
	urlcolor=black,
	citecolor=Blue4,
	linkbordercolor={0 0 1}
}

\def\R{\mathbb{R}}
\def\P{\mathbb{P}}
\def\N{\mathbb{N}}
\def\E{\mathbb{E}}

\def\v{\mathbf{v}}

\def\prob{\mathbb{P}}
\def\path{\mathfrak{p}}
\newcommand{\negin}[1]{{\color{brown}[\textsc{NG}: \emph{#1}]}}
\newcommand{\sourav}[1]{{\color{blue}{#1}}}


\usepackage{comment}

\allowdisplaybreaks[4]
\newcommand{\argmax}{\mathop{\mathrm{argmax}}}

\newcommand{\inv}[1]{\dfrac{1}{#1}}

\newcommand{\floor}[1]{\left\lfloor #1 \right\rfloor}
\newcommand{\ceil}[1]{\left\lceil #1 \right\rceil}

\def\l{\ell}
\def\hist{\mathcal{H}_{-}}
\def\nature{\mathscr{B}_{c}}

\def\generic{\mathscr{B}}
\def\val{V}
\def\price{P}
\def\numbid{m}
\def\allbids{\boldsymbol{\beta}}
\def\ibid{\mathbf{b}}

\def\maxbid{M}

\newcommand{\optbid}[1]{\mathbf{b}_{#1}^{\mathsf{OPT}}\left(\hist\right)}
\newcommand{\safebid}[1]{\mathbf{b}_{#1}^{\mathsf{SAFE}}\left(\hist\right)}
\newcommand{\psum}[1]{{W}_{#1}}

\newcommand{\optvalue}[1]
{V^\mathsf{OPT}_{#1}\left(\hist\right)}
\newcommand{\safevalue}[1]
{V^\mathsf{SAFE}_{#1}\left(\hist\right)}

\newcommand{\optufclass}[1]{\mathscr{U}_{#1}^\star}
\newcommand{\optoneufclass}[1]{\mathscr{S}_{#1}^\star}
\newcommand{\ufclass}[1]{\mathscr{U}_{#1}}
\newcommand{\feasclass}[1]{\mathscr{F}^{\hist}_{#1}}
\newcommand{\onefeasclass}[1]{\mathscr{G}^{\hist}_{#1}}
\newcommand{\resonefeasclass}[1]{\widetilde{\mathscr{G}}^{\hist}_{#1}}
\newcommand{\oneufclass}[1]{\mathscr{S}_{#1}}

\titleformat{\section}
  {\normalfont\bfseries\sffamily\large}
  {\thesection.}{0.5em}{}

\titleformat{\subsection}
  {\normalfont\bfseries\sffamily\normalsize}
  {\thesubsection.}{0.5em}{}

\titleformat{\subsubsection}
  {\normalfont\bfseries\sffamily\normalsize}
  {\thesubsubsection.}{0.5em}{}
  
\title{\vspace{-1cm}Learning Safe Strategies for Value Maximizing Buyers in Uniform Price Auctions\footnote{A preliminary version appeared at the 42nd International Conference on Machine Learning (ICML) 2025.}}

\author{
  Negin Golrezaei\\
  \footnotesize Sloan School of Management, Massachusetts Institute of Technology, \href{mailto:golrezae@mit.edu}{\textsf{golrezae@mit.edu}}\\[1ex]
  Sourav Sahoo\\
  \footnotesize Operations Research Center, Massachusetts Institute of Technology, \href{mailto:sourav99@mit.edu}{\textsf{sourav99@mit.edu}}
}


\date{}
\begin{document}
\maketitle
\setstretch{1.33}

\begin{abstract}

We study the bidding problem in repeated uniform price multi-unit auctions, a format prevalent in emission permit auctions, Treasury auctions and energy markets, from the perspective of a \textit{value-maximizing} buyer. The buyer aims to maximize their cumulative value over $T$ rounds while adhering to per-round return-on-investment~(RoI) constraints in a strategic~(or adversarial) environment. The buyer uses $\numbid$-\textit{uniform bidding} format, submitting $\numbid$ bid-quantity pairs \((b_i, q_i)\) to demand \( q_i \) units at bid \( b_i \), with $\numbid\ll \maxbid$ in practice, where, $\maxbid$ is the maximum demand of the buyer. 

We introduce the notion of \textit{safe} bidding strategies as those that satisfy the RoI constraint irrespective of competing bids. Despite the stringent requirement, we show that these strategies satisfy a mild no-overbidding condition, depend only on the bidder’s valuation curve, and the bidder can focus on a finite subset without loss of generality. While the number of strategies in this subset is $O(\maxbid^\numbid)$, leveraging a key insight on the decomposition of a strategy's obtained value across its constituent bid-quantity pairs, we develop a polynomial-time algorithm to learn the optimal safe strategy that achieves a regret of $\widetilde{O}(\maxbid\sqrt{\numbid T})$ and $\widetilde{O}(\maxbid^{2}\numbid^{3/2}\sqrt{T})$ in the full-information and bandit settings respectively, where regret is measured against a clairvoyant that selects the fixed hindsight optimal safe strategy. We complement the upper bounds by establishing a $O(\maxbid\sqrt{T})$ regret lower bound.


We then assess the robustness of safe strategies against the hindsight-optimal strategy from a richer class of strategies. We define the \emph{richness ratio} $\alpha \in (0,1]$ as the minimum ratio of the value of the optimal safe strategy to that of the optimal strategy from richer class and construct hard instances showing the tightness of $\alpha$. Our algorithm achieves $\alpha$-approximate sublinear regret against these stronger benchmarks. Simulations on semi-synthetic real-world auction data demonstrate that empirical richness ratios are significantly better than the tight theoretical bounds. Finally, we show that (variants of) safe strategies and the proposed learning algorithm are well-suited to cater to more nuanced behavioral models of the buyer and the competitors. The simplicity, robustness, and adaptability of safe bidding strategies make them highly promising for practitioners.
\end{abstract}
\clearpage
\setcounter{tocdepth}{2}
\tableofcontents

\section{Introduction}
 
In a uniform price multi-unit auction, the auctioneer sells \( K \) identical units of a single good to buyers who may demand multiple units, with the per-unit price set at the \( K^{th} \) highest bid. These auctions are widely used to allocate scarce resources in critical {avenues such as emissions permit auctions, wholesale electricity markets}, and Treasury auctions. To succeed in these markets, bidders must develop effective bidding strategies that balance long-term value maximization with financial risk management, all while operating under limited information, uncertainty in competing bids and strategic~(or adversarial) behavior of other bidders, among other challenges. 

In this paper, we model the bidders as a value-maximizing agent  with per-round return-on-investment (RoI) constraints. This model reflects real-world decision-making in industries where managing financial risks is as {important} as maximizing value. Unlike traditional auction theory, which assumes bidders with quasilinear utility (i.e., utility decreases linearly with payments), many practical settings involve bidders who optimize total value while adhering to financial constraints such as RoI or budgets. These scenarios frequently arise in industries where agents or algorithms bid on behalf of clients, optimizing for high-level objectives within strict constraints--a context akin to the principal-agent framework~\citep{fadaei2016truthfulness, aggarwal2024auto}. 

For example, in online advertising markets, performance marketing agencies often manage autobidding systems for clients who seek to maximize conversions or clicks while ensuring the average cost-per-click or return-on-ad-spend remains within specified limits \citep{lucier2023autobidders, balseiro2021robust, deng2023autobidding}. Similarly, in the EU Emissions Trading System (EU ETS), many {compliant companies} participate in {emission permits} auctions indirectly through third-party {firms such as MK Brokers JSC\footnote{\scriptsize \url{https://mkb.bg/en/services/sustainability-products/}.}, Act Group\footnote{\scriptsize\url{https://www.actgroup.com/products-solutions/compliance/emissions-compliance-products}.}, and AFS Group\footnote{\scriptsize\url{https://www.afsgroup.nl/exchange-traded-energy-instruments}.}}. These intermediaries bid on behalf of clients often under contractual constraints on allowable expenditures or minimum cost-effectiveness of permit acquisition \citep{ellerman2010pricing}. 

Building on this, we study the bidding problem in repeated uniform price auctions over $T$ rounds, from the perspective of a single bidder who seeks to maximize cumulative value while adhering to per-round RoI constraints—that is, ensuring that the value obtained in each round is at least a fixed multiple of the corresponding payment~(see \cref{def:RoI}). To address practical bidding interfaces, we assume bidders adopt \textit{$\numbid$-uniform bidding} format, a generalization of the uniform bidding format~\citep{de2013inefficiency, birmpas2019tight} for some $\numbid\in\N$. In a $\numbid$-uniform strategy $\ibid:=\langle (b_1, q_1), \dots, (b_\numbid, q_\numbid)\rangle$, bidders bid $b_1$ for the first $q_1$ units, $b_2$ for the next $q_2$ units and so on~(see \cref{def:m-unif-bid}).

\subsection{Our Contributions}

\textbf{Safe Bidding Strategies (\cref{ssec:uf-strategies}).}  
To ensure that the bidder satisfies the RoI constraint without knowing the (potentially adversarial) competing bids in the online setting, we introduce the concept of \textit{safe} bidding strategies. These strategies guarantee that the RoI constraint is met regardless of how the other participants bid. We show that these strategies follow a mild no-overbidding condition~(see~\cref{def:under-over_bid}) and the bidder can focus on a \textit{finite} subset of this class without loss of generality. For any $\numbid\in\N$, we characterize this finite subset as the class of $\numbid$-uniform safe \textit{undominated} bidding strategies, denoted by $\optoneufclass{\numbid}$ (\cref{thm:opt-bid}), and demonstrate that the strategies within this class depend solely on the bidder's valuation vector $[v_1, \dots, v_K]$ and exhibit a ``nested'' structure as illustrated in \cref{fig:nested}.

 \begin{figure}[!tbh]
    \centering
    \scalebox{0.7}{
    \begin{tikzpicture}
    \draw (0,0) rectangle (10,1);
    
    \foreach \i in {0,1,...,9}
        \draw (\i,0) -- (\i,1);
    
    \foreach \i in {1,2,...,10}
        \node[font=\Large] at (\i-0.5,0.5) {\(v_{\i}\)};
    \draw[decorate,decoration={brace,amplitude=5pt,mirror},xshift=0pt,yshift=-3pt]
    (0,0) -- (3,0) node[midway,yshift=-11pt] {\({w_3}\)};

    \draw[decorate,decoration={brace,amplitude=5pt,mirror, raise=16pt},xshift=0pt,yshift=-6pt]
    (0,0) -- (7,0) node[midway,yshift=-29pt] {\({w_7}\)};
    
    \draw[decorate,decoration={brace,amplitude=5pt,mirror, raise=34pt},xshift=0pt,yshift=-9pt]
    (0,0) -- (9,0) node[midway,yshift=-47pt] {\({w_9}\)};

\end{tikzpicture} } 
    \caption{Nested structure of the bidding strategies in $\optoneufclass{3}$. Consider the strategy $\ibid=\langle(w_3, 3), (w_7, 4), (w_9, 2)\rangle\in\optoneufclass{3}$, where we note that $Q_1= 3$, $Q_2=3+4=7$ and $Q_3 = 3+4+2=9$.  The $j^{th}$ highest bid (i.e., $b_j$) is the average of the first $Q_j=\sum_{\l\leq j}q_\l$ entries of the valuation vector, i.e., $b_j = w_{Q_j}$, where    $w_{j}=\frac{1}{j}\sum_{\ell\leq j}v_{\ell}$.
    } 
    \label{fig:nested}
\end{figure}

\textbf{Learning Safe Bidding Strategies (\cref{sec:learning-safe}).}  
Designing an algorithm to learn the optimal safe bidding strategy with at most $\numbid$ bid-quantity pairs poses significant challenges as the size of the decision space in $O(\maxbid^\numbid)$, where $\maxbid$ is the maximum number of units demanded by the bidder and in practice, $\numbid\ll\maxbid$. To this end, we first consider an offline setting where the competing bids are known \textit{a priori}. We show that the value obtained by a $\numbid$-uniform safe strategy can be decomposed across its $\numbid$ constituent bid-quantity pairs. Leveraging this decomposition, we construct a directed acyclic graph (DAG), of size $\text{poly}(\numbid, \maxbid)$, with carefully assigned edge weights and show that determining the maximum weight path in the DAG is equivalent to computing the optimal safe bidding strategy with at most $\numbid$ bid-quantity pairs~(\cref{thm:DAG-base-strategy}). This reduction allows us to compute the offline optimal safe strategy in $\text{poly}(\numbid, \maxbid)$ time, despite the exponential size of the original decision space. 

Building on this result, we study the online setting over $T$ rounds where the competing bids are generated by an oblivious adversary and unknown to the bidder when they submit their strategy.\footnote{For a detailed discussion on leveraging offline algorithms for no-regret learning, we refer readers to \citet{roughgarden2019minimizing, niazadeh2022online, branzei2023learning}.} Leveraging \cref{thm:DAG-base-strategy}, one can think of implementing a {na\"ive version of the} Hedge~(exponential weight updates) algorithm~\citep{freund1997decision} in the full information setting by considering each path in the DAG as an expert. However, such an approach is intractable as there are $O(\maxbid^\numbid)$ such paths~(one corresponding to each safe strategy). To tackle this, we leverage our DAG construction and its decomposition, and adopt ideas from the weight-pushing algorithm based on dynamic programming from \citet{takimoto2003path}. We propose an algorithm within $\text{poly}(\numbid, \maxbid)$ space and time complexity that achieves a regret of $\widetilde{O}(\maxbid\sqrt{\numbid T})$ in a full-information setting and $\widetilde{O}(\maxbid^{2}\numbid^{3/2} \sqrt{T})$ regret in a bandit setting, where the clairvoyant benchmark selects the fixed hindsight optimal safe strategy~(\cref{thm:full-info}).\footnote{Here, $\widetilde{O}(\cdot)$ hides the logarithmic factors in $\maxbid$.} Additionally, we establish a regret lower bound of $O(\maxbid\sqrt{T})$, that matches the regret upper bound in $T$~(\cref{thm:regret-LB}).  

The problem of learning to bid in multi-unit uniform price and pay-as-bid auctions has been explored recently by \citet{branzei2023learning, galgana2023learning, potfer2024improved}, assuming the bidders are quasilinear utility maximizers. Our work differs from them in two key ways: (a) we consider value-maximizing buyers with RoI constraints, a fundamentally different behavioral model, and (b) from a technical perspective, unlike prior works that require bid spaces to be discretized, our approach does not, due to the structure of safe strategies. Consequently, the time complexity becomes independent of $T$ in both online and offline settings\footnote{Assuming the discretization level in the offline setting is the same as that in the online setting. In the online setting, time complexity refers to per-round running time.}, and under bandit feedback, the dependence on $T$ in the regret bound improves from $T^{2/3}$ to $\sqrt{T}$ implying that it is easier to learn safe strategies for value maximizers compared to bidding strategies for quasilinear utility maximizers.


\textbf{Richer Classes of Strategies for the Clairvoyant (\cref{sec:learning-rich}).} We then evaluate the performance of the bidder's strategy class, i.e., safe bidding strategies with at most $\numbid$ bid-quantity pairs and the online learning algorithm from \cref{sec:learning-safe} against a clairvoyant benchmark that selects the optimal strategy from a richer class. Here, a richer class refers to any class of strategies that is a superset of the bidder's strategy class~(see examples below). Recall that the bidder chooses to follow the $\numbid$-uniform safe strategies to satisfy the RoI constraints with limited knowledge about the competing bids in any auction in the online setting. 

Comparing against such stronger benchmarks demonstrates the robustness of the bidder's strategy class. We quantify this robustness via the \textit{richness ratio} $\alpha \in (0, 1]$~(\cref{prop:choose-alpha}), analogous to an approximation ratio. We prove that for any richer class with richness ratio $\alpha$, the bidder achieves $\alpha$-approximate regret (as defined in \cref{eq:alpha-approx-regret}) of the same order as in \cref{thm:full-info}~(see \cref{thm:alpha-UB}). One of our main contributions is to compute the richness ratio $\alpha$ for various classes of bidding strategies. We define richness along two dimensions: (i) the safety of strategies (safe vs. only RoI-feasible), and (ii) the number of bid-quantity pairs allowed. A strategy is defined as RoI-feasible~(or feasible in short) if it satisfies the RoI constraint for a \textit{given} sequence of competing bids over $T$ rounds, unlike safe strategies that satisfy the constraint for \textit{every} such sequence. We choose these two dimensions of richness as the bidder follows the class of strategies that are (i) safe and (ii) have at most $\numbid$ bid-quantity pairs. 

When the clairvoyant selects the optimal strategy from the class of 
\begin{itemize}
    \item [(a)] strategies that are RoI-feasible and have at most $\numbid$ bid-quantity pairs, we get $\alpha=\frac{1}{2}$~(\cref{thm:Price_universal}). Notably, for this particular choice of richer class of strategies, $\alpha$ is \textit{independent} of $\numbid$. 
    \item [(b)] safe strategies with at most $\numbid'$ bid-quantity pairs, where $\numbid' \geq \numbid$, we obtain $\alpha=\frac{\numbid}{\numbid'}$~(\cref{thm:m-mbar}). Here, setting $\numbid'=\numbid$ refers to the case considered in \cref{sec:learning-safe} where the bidder and clairvoyant follow the same class of strategies. In this case, \cref{thm:m-mbar} correctly implies the optimal value of $\alpha=1$.
    \item [(c)] strategies that are RoI-feasible and have at most $\numbid'$ bid-quantity pairs, where $\numbid' \geq \numbid$, we get $\alpha=\frac{\numbid}{2\numbid'}$~(\cref{thm:mbar-non-safe}). Setting $\numbid'=\numbid$ in this case recovers the result of \cref{thm:Price_universal}.
\end{itemize}

Computing the richness ratio involves two different challenges: (a) deriving an upper bound on the ratio between the value obtained by the clairvoyant’s optimal strategy from the richer class and that of the optimal safe strategy in the worst case (see \cref{prop:choose-alpha} for details), and (b) proving that this upper bound is tight by constructing a hard problem instance for the given richer class of strategies. The key ideas for deriving the upper bounds are outlined in \cref{sec:learning-rich}. The hard problem instances that show the tightness of the bounds are highly non-trivial where we need to consider an instance of size that is exponential in \(\numbid\), with careful choice of competing bids and the valuation vector~(see details in \crefrange{apx:LB-2-m-gen}{apx:LB-2m}). 


\textbf{Extensions (\cref{sec:extensions}).} Finally, we show that safe strategies and the learning algorithm from \cref{sec:learning-safe} can be adapted to handle more nuanced modeling assumptions. First, as competitors may adapt to the bidder’s past actions in repeated auctions, we consider an adaptive adversary model for generating competing bids (as opposed to an oblivious one). In this setting, the learning algorithm with appropriate modifications still yields a high-probability regret bound of $\text{poly}(\numbid, \maxbid)\cdot\sqrt{T}$~(\cref{ssec:non-oblivious}). Second, when the RoI constraint is enforced over a sliding window of $T_0 \in \mathbb{N}$ rounds instead of each round, we propose a heuristic based on \textit{shifted} safe strategies for this setting that shows good empirical performance on semi-synthetic real-world auction data~(\cref{ssec:cumulative-roi}). Third, in the case of time-varying valuation vectors, that are either sampled from a distribution or generated adversarially from a finite set $\mathcal{V}$, we treat the valuation vector as contexts and show that a contextual version of our learning algorithms achieves $\text{poly}(\numbid, \maxbid, |\mathcal{V}|)\cdot\sqrt{T}$ regret in this setting~(\cref{ssec:time-varying}). These extensions highlight the practical appeal and flexibility of our approach.


\subsection{Managerial Insights}

Our findings offer several key implications for RoI-constrained value maximizing buyers. 
\begin{enumerate}
    \item Safe strategies are an intuitive class of bidding strategies due to their mild no-overbidding conditions. As the bidder can focus on a finite subset of the class without any loss {of generality}, designing and implementing the proposed learning algorithm in repeated auction environments is quite efficient from a practical perspective. Furthermore, the theoretical regret upper and lower bounds with $\sqrt{T}$ dependence indicate that the proposed algorithm is near-optimal, which is desirable to practitioners.
    \item We crucially show the robustness of safe strategies and the learning algorithm by proving that they are approximately optimal against stronger benchmarks. A key highlight here is that for any given $\numbid$, the cost of following safe strategies compared to the instance dependent RoI-feasible strategies, which are impossible to know \textit{a priori}, is uniformly bounded. So, a clairvoyant with prior knowledge of the adversarially generated competing bids in this setting can not be arbitrarily better than the bidder following safe strategies.
    \item Simulations on semi-synthetic data from EU Emission Trading System~(ETS) auctions show that the empirical richness ratios in real-world scenarios are significantly better than tight theoretical bounds, which reinforces the fact that tight bounds are attained under highly pathological settings and suggests that safe strategies can perform near-optimal in practice, even for small values  $\numbid$~(\cref{sec:sims}). 
    \item Finally, we show that variants of safe strategies and the proposed learning algorithm can be appropriately deployed in several practical extensions of the primary model considered in this work. This aspect is particularly interesting as it allows practitioners to tweak safe strategies and/or the proposed algorithm to design novel algorithms or heuristics to adapt to their objectives and risk appetite that may change over time. 
\end{enumerate}


\subsection{Related Work}\label{sec:related}
\textbf{Value Maximizers and RoI Constraints.} The concept of agents as value maximizers within financial constraints is a well-established notion in microeconomic theory \citep{mas1995microeconomic}. In mechanism design literature, one of the earliest explorations of value-maximizing agents was conducted by \citet{wilkens2016mechanism}. Their work primarily delved into the single-parameter setting, characterizing truthful auctions for value maximizers. Similarly, \citet{fadaei2016truthfulness} and \citet{lu2023auction} studied truthful (approximate) revenue-maximizing mechanisms in combinatorial markets tailored for such agents. More recently, \citet{tang2024towards} investigated auction design in a setting where buyers may be either quasilinear utility maximizers or RoI-constrained value maximizers. \citet{xu2024sponsored} proposed the allowance utility maximization model, which generalizes both quasilinear and value-maximizing preferences.

In recent years, there has been growing interest in RoI-constrained value maximizers, particularly in the context of autobidding and online advertising. One of the earliest works in this area is \citet{golrezaei2021auction}, which studied auction design for RoI-constrained buyers and validated the presence of such soft financial constraints using data from online ad auctions. Broadly, the literature in this space can be divided into two categories: (i) works that design optimal auctions under RoI constraints~\citep{balseiro2021robust, balseiro2021landscape, balseiro2022optimal, deng2021towards, deng2023autobidding}, and (ii) works that characterize optimal bidding strategies and/or develop learning algorithms in repeated auction settings~\citep{aggarwal2019autobidding, deng2023multi, golrezaei2023pricing, castiglioni2024online, aggarwal2025noregret, lucier2023autobidders}. Our work primarily aligns with the latter. We study value-maximizing buyers in uniform price auctions under RoI constraints and show that they can employ safe strategies that are efficiently learnable and robust against a variety of strong benchmarks. 

\textbf{Multi-unit Auctions.} In this work, we focus on a subset of combinatorial auctions termed as multi-unit auctions in which multiple identical goods are sold to a group of buyers.\footnote{For a comprehensive survey on combinatorial auctions, we refer readers to several excellent works by \citet{de2003combinatorial, blumrosen2007combinatorial,palacios2022combinatorial}.} These auctions find widespread application in various practical scenarios, including Treasury auctions \citep{nyborg2002bidder, hortaccsu2010mechanism, https://doi.org/10.3982/ECTA8365}, procurement auctions \citep{cramton2006dynamic}, electricity markets \citep{tierney2008uniform, fabra2006designing}, and emissions permit auctions \citep{goulder2013carbon, schmalensee2017lessons, goldner2020reducing}. 

While several works have focused on studying the equilibria properties in these auctions~\citep{ausubel2014demand, engelbrecht1998multi, noussair1995equilibria, markakis2015uniform, de2013inefficiency}, computing equilibrium strategies is usually intractable in general multi-unit auctions due to multi-dimensional valuations~\citep{kasberger2025bidding}. As a result, a growing body of work has emphasized on designing optimal bidding strategies in a prior-free setting, i.e., without any assumptions of the competing bidders' bids or valuations--in a repeated setting~\citep{galgana2023learning, potfer2024improved, branzei2023learning} or under a minimax loss framework~\citep{kasberger2025bidding}. Our work contributes to this literature by learning optimal strategies for value-maximizing buyers in repeated uniform price auctions.


\textbf{Data-Driven Methods in Auctions.} 
Traditionally, the problems of bidding in auctions and determining parameters such as reserve prices have been studied in the Bayesian framework~\citep{myerson1981optimal,riley1981optimal, HR09, beyhaghi2021improved}. However, these approaches rely on full knowledge of bidders’ valuation distributions, which is often unavailable or unreliable in practice. Moreover, they become computationally intractable in multi-unit auctions, as discussed earlier. In practice, historical bidding data is crucial for bidders to calibrate their strategies. For instance, \citet{kasberger2024robust} note that none of the auction consultants surveyed could offer viable solutions without access to sufficient real-world data. These challenges, along with the increasing availability of auction data, have led to a growing interest in data-driven approaches. Beyond learning optimal bidding strategies in a repeated setting, as discussed in the previous section, this line of work includes data-driven reserve price optimization~\citep{roughgarden2019minimizing, derakhshan2022linear, derakhshan2021beating, golrezaei2021dynamic, feng2021reserve}, boost value optimization in second-price auctions~\citep{golrezaei2021boosted} and deep learning methods for auction design~\citep{feng2018deep, dutting2024optimal, wang2024gemnet, wang2025bundleflow, liu2025interpretable}. 


\section{Model}\label{sec:model}
\textbf{Preliminaries.} There are $n$ buyers~(bidders) indexed by $i\in[n]$, and $K$ identical units of a single good. Each bidder $i$ has a fixed, private valuation curve, denoted by $\mathbf{v}_i\in\R^K_{+}$ that has diminishing marginal returns property, i.e., $v_{i, 1}\geq v_{i, 2}\geq\dots\geq v_{i, K}$ which is standard in literature~\citep{branzei2023learning, goldner2020reducing}. As we study optimal bidding strategies from the perspective of a single bidder, we drop the index $i$ when the context is clear. The maximum total demand for bidder $i$, denoted by $\maxbid\in[ K]$, is defined as $\min\{j\in [K-1]: v_{j+1}= 0 \}$. If such an index does not exist, we set $\maxbid$ as $K$. Hence, without loss of generality, we assume $\mathbf{v}\in\R_{+}^{\maxbid}$. For each $\mathbf{v}=[v_{1}, \dots, v_{\maxbid}]$, we define the \textit{average cumulative valuation} vector as $\mathbf{w}=[w_{1}, \dots, w_{\maxbid}]$, where
\begin{align}\label{eq:w}
w_{j}&=\frac{1}{j}\sum_{\ell\leq j}v_{\ell},\forall j\in[\maxbid]\,.
\end{align}
As $v_{1}\geq \dots\geq v_{\maxbid}$, we also have $w_{1}\geq \dots\geq  w_{\maxbid}$. 

\subsection{Auction Format and Bidders' Behavior}\label{ssec:bidder-behaviour}


\textbf{Allocation and Payment Rule.} In a uniform price auction, each bidder $i$ submits a sorted bid vector $\ibid$ using $\numbid$-uniform bidding language~(see \cref{def:m-unif-bid}). The auctioneer collects the bids~(entries of the bid vector) from all the bidders, sorts them in non-increasing order, and allocates units to the bidders with the top $K$ bids~(also termed as `winning' bids). That is, if bidder $i$ has $j$ bids in the top $K$ positions, they are allocated $j$ units. For ease of exposition, we assume there are no ties~(or ties are always broken in the favor of the bidder in consideration). See \cref{apx:sec:ties} for our discussion on handling ties. We assume the auction follows the last-accepted-bid payment rule~(bidders pay the $K^{th}$ highest bid per unit), which is widely used for uniform price auctions in practice~\citep{eu-ets-regulations, garbade2005treasury, burkett2020uniform}.

Let \( \otherbid{} \) denote the vector of bids submitted by all bidders except bidder~\( i \). Let \( \textsc{top}_K(\otherbid{}) \subseteq \otherbid{} \) be the multiset of the top \( K \) highest bids among these competing bids. Furthermore, we define \( \boldsymbol{\beta}_{-}^{-(j)} \) to be the \( j^{\text{th}} \) smallest element in \( \textsc{top}_K(\otherbid{}) \). Suppose bidder \(i\) submits \(\ibid\), such that the bid profile is \(\allbids := (\ibid; \otherbid{})\). Let \(x(\allbids)\) and \(p(\allbids)\) denote the number of units allocated to bidder \(i\) and the clearing price (i.e., the per-unit price which is the $K^{th}$ highest submitted bid), respectively. The total value obtained by the bidder is
$
\val(\allbids) = \sum_{j \leq x(\allbids)} v_{j},
$
while the total payment made is
$
\price(\allbids) = p(\allbids) \cdot x(\allbids).$




\textbf{Bidding Language.} Multi-unit auctions allocate a large number of identical units, requiring efficient ways for the bidders to express preferences. A common approach is \textit{standard bidding}, where bidders submit a vector of bids, one for each unit~\citep{branzei2023learning, galgana2023learning, babaioff2023making, birmpas2019tight, potfer2024improved}. Although expressive, this becomes computationally impractical when the number of units, \(K\), is large, as in EU ETS emission permit auctions and Treasury auctions. To address this, we consider a bidding language called \(\numbid\)-\textit{uniform bidding}, for any \(\numbid \in \mathbb{N}\).\footnote{This format generalizes the \textit{uniform bidding} format~\citep{de2013inefficiency, birmpas2019tight} and aligns with practical languages like those in product-mix auctions~\citep{klemperer2009new} and piecewise-linear bidding~\citep{nisan2015algorithmic}.} In \(\numbid\)-uniform bidding format, bidders submit $\numbid$ bid-quantity pairs \((b_i, q_i)\), where \(b_i\) is the bid value per unit and \(q_i\) is the quantity demanded:

\begin{definition}[$\numbid$-Uniform Bidding]\label{def:m-unif-bid}
For a fixed $\numbid\in\mathbb{N}$, a $\numbid$-uniform bidding strategy is characterized by $\numbid$ bid-quantity pairs, denoted as \[\ibid:=\langle(b_1, q_1),\dots, (b_\numbid, q_\numbid)\rangle\,,\] where $b_1>b_2>\dots>b_\numbid>0$ and $q_j> 0$, $j\in [m]$. This $\numbid$-uniform bidding strategy can be equivalently expressed as a vector~(similar to the standard bidding format) in which the first $q_1$ bids are $b_1$, followed by $q_2$ bids of $b_2$, and so on.
\end{definition}
We define $\ibid[1:\ell]=\langle(b_1, q_1),\dots, (b_\ell, q_\ell)\rangle$, for all $\ell<\numbid$, to represent the first $\ell$ bid-quantity pairs within a $\numbid$-uniform bidding strategy $\ibid=\langle(b_1, q_1),\dots, (b_\numbid, q_\numbid)\rangle$. We further define \[Q_j=\sum_{\l=1}^jq_{\ell},  \qquad\text{ for all $j\in[\numbid]$}\]  as the total quantity demanded in the first $j$ bid-quantity pairs, with $Q_0 = 0$, and we assume, without loss of generality that $Q_\numbid\leq \maxbid$.\footnote{Suppose the bidder bids for, and wins more than $\maxbid$ units. There is no additional value being allocated over $\maxbid$ units, but the total payment increases~(assuming the clearing price is positive), potentially violating the RoI constraint.} If $\numbid=\maxbid$, the bidding format is equivalent to standard bidding, but in practice, bidders often submit only a few bid-quantity pairs. For instance, in the 2023 EU ETS auctions, bidders submitted $\sim4.35$ pairs per auction on average~\citep{eex-eua-primary-auction}. 

\textbf{Bidders' Behavior.} The bidders maximize their total value obtained while adhering to a constraint that ensures the total value obtained in an auction is at least a constant multiple of the payment in that auction. This can be equivalently expressed as a return-on-investment (RoI) constraint:
\begin{align}\label{def:RoI}
    \val(\ibid; \otherbid{}) \geq (1+\gamma)\price(\ibid; \otherbid{})\,.
\end{align}
Here, $\gamma$ is defined as the \textit{target RoI} which is private and fixed. Without loss of generality, we assume $\gamma=0$~(or equivalently, the valuation curve $\mathbf{v}$ is scaled by $\frac{1}{1+\gamma}$) for the rest of this work. For $\gamma=0$, the RoI constraint implies the value obtained in an auction is at least the payment. 



\begin{example}\label{ex:mwe}
   Consider an auction with $n=2$ bidders, and $K=5$ identical units. The valuations are: $\mathbf{v}_1=[6, 4, 3, 1, 1]$ and $\mathbf{v}_2=[5, 3, 1, 1, 0]$. Both the bidders are value maximizing buyers with target RoI, $\gamma_1=\gamma_2=0$. Suppose $m=2$ and the bids submitted by the bidders are $\ibid_1=\langle(5, 2), (3, 3)\rangle$ and $\ibid_2=\langle(4, 2), (2, 2)\rangle$. The bid profile: $\allbids=[5, 5, 4, 4, 3, 3, 3, 2, 2]$ and top $K=5$ winning bids are $[\underline{5}, \underline{5}, 4, 4, \underline{3}]$. Bidder 1 is allocated $3$ units as they have $3$ bids~(underlined) in the winning bids, and bidder 2 gets the remaining $2$ units. The clearing price $p(\allbids)=3$,  $V_1(\allbids)=6+4+3= 13, V_2(\allbids)=5+3=8, P_1(\allbids)=3\cdot 3 =9$, and $P_2(\allbids)=2\cdot 3 =6$. The RoI constraint is satisfied for both the bidders as $13> 9$ and $8>6$. 
\end{example}

\subsection{Learning to Bid in Repeated Settings}

In practice, most multi-unit auctions, such as emission permit auctions and Treasury auctions, are conducted in a repeated setting. Formally, the auction described in the previous section takes place sequentially over $T$ rounds indexed by $t\in[T]$. Throughout this work, we assume that $T$ is known to the bidder.\footnote{In case $T$ is unknown, the bidder can use the standard \textit{doubling trick}~\citep{auer2002nonstochastic, cesa2006prediction} while designing learning algorithms.} We assume that the valuation vector $\v$ is fixed over the $T$ rounds which is a standard assumption in the literature~\citep{branzei2023learning, galgana2023learning, potfer2024improved}. Nonetheless, we discuss the case of time-varying valuation vectors in \cref{ssec:time-varying} and show that the techniques developed for the scenario when the valuation vector is fixed can be leveraged to learn to bid when the valuation vectors change across rounds.

We now extend the notations from the previous section to the repeated setting. Formally, in this setting, $\otherbid{t}$ denotes the bids submitted by all bidders except bidder $i$ in round $t$ and $\ordotherbid{t}{j}$ is the $j^{th}$ smallest among the top $K$ competing bids in round $t$. In round \(t\), if bidder \(i\) submits a bid \(\ibid^t\), the bid profile is \(\allbids^t := (\ibid^t; \otherbid{t})\). Let \(x(\allbids^t)\) and \(p(\allbids^t)\) denote the number of units allocated to bidder \(i\) and the clearing price, respectively, in round \(t\). For the $t^{th}$ auction, the total value obtained by the bidder is
$
\val(\allbids^t) = \sum_{j \leq x(\allbids^t)} v_{j},
$
while the total payment made is
$
\price(\allbids^t) = p(\allbids^t) \cdot x(\allbids^t).$

\textbf{Nature of Competing Bids.} We model the competing bids in each round as being chosen adversarially to capture realistic scenarios where the bidder has partial or no information about their competitors' strategy. We assume that the competing bids are selected by an \textit{oblivious adversary}, i.e., the bids in round~$t$ are chosen adversarially but independently of the bidder’s strategies in earlier rounds $\tau = 1, \dots, t-1$. This is equivalent to the case where all competing bids are fixed in advance, prior to the first round, and without any knowledge of the bidder’s realized strategies. In our setting, an oblivious adversary models a sufficiently large market where the bids submitted by a single bidder do not materially change the allocation, for which the competing bids could be viewed as independent of the strategies submitted by the bidder. In \cref{ssec:non-oblivious}, we extend our analysis for an \textit{adaptive adversary},\footnote{Also called as \textit{non-oblivious} adversary~\citep{cesa2006prediction} or \textit{reactive} adversary~\citep{maillard2011adaptive}.} where the competing bids in any round may depend on the bidder’s strategies in the previous rounds.

\textbf{RoI Constraints in Repeated Setting.} In the repeated setting, we require the bidders to satisfy the RoI constraint described in \cref{def:RoI} in every round. A bidding strategy $\ibid$ is called \textit{feasible} for a sequence of competing bids $[\otherbid{t}]_{t\in[T]}$, if the RoI constraint is satisfied for every round $t\in [T]$. 


\begin{remark} [RoI Constraints]
Our notion of RoI constraints in the repeated setting aligns with the definitions in \citet{wilkens2016mechanism, wilkens2017gsp, lv2023utility}. Similar constraints~(up to scaling factors) are considered by \citet{lucier2023autobidders} and \citet{gaitonde2023budget}, which they term as \textit{maximum bid} constraints.\footnote{An earlier version of \citet{lucier2023autobidders} termed this as \textit{marginal RoI (or value)} constraints.} Unlike the aggregate constraints over \(T\) rounds typically assumed in online ad auction literature~\citep{deng2021towards,deng2022fairness, feng2023online, deng2023multi}, we enforce RoI constraints for each auction individually. This distinction reflects the fundamental differences between the two settings: ad auctions often occur simultaneously and frequently, with values in the order of cents, whereas Treasury and emission permit auctions occur sequentially over longer horizons, with units valued in millions. Furthermore, we consider a prior-free setting that allows for strategic~(possibly adversarial) behavior by the competitors. Hence, bidders in such auctions prioritize profitability in each auction rather than waiting for an indefinite period, leading to per-round RoI constraints.\footnote{Although EU ETS emission permit auctions are scheduled to occur regularly, regulations stipulate that an auction may be canceled if the bidders' demand falls short of the supply of permits or if the clearing price of the auction does not meet the reserve price~\citep{eu-ets-regulations}. Hence, the bidders are more likely to ensure RoI feasibility in each round.} Nonetheless, in \cref{ssec:cumulative-roi}, we present a heuristic for the setting where the RoI constraints are enforced only in aggregate over a sliding window of $T_0 \geq 1$ rounds.
\end{remark}

\subsection{Objective and Performance Metric}
We consider an online setting where a bidder seeks to maximize cumulative value while satisfying RoI constraints and managing uncertainty about competing bids. In this setting, bidders privately submit their bids, and the auctioneer allocates items and sets prices based on these bids. In round \(t\), the learning algorithm maps the feedback information set from the first \(t-1\) rounds, denoted by \(\mathcal{I}^{t-1}\), to a bid \(\ibid^t\), where this mapping may be deterministic or random. Under full information feedback, the information set 
$\mathcal{I}^{t-1} = (\allbids^1, \allbids^2, \ldots, \allbids^{t-1}),$
which includes all bids from previous rounds. In the bandit setting, the information set $\mathcal{I}^{t-1} = (x(\allbids^1), \price(\allbids^1), \ldots, x(\allbids^{t-1}), \price(\allbids^{t-1}))$ includes only the allocations and prices of the previous rounds.

We design learning algorithms to minimize \textit{regret}, the difference in the value obtained by a fixed hindsight-optimal strategy chosen by a clairvoyant with \textit{a priori} knowledge of competing bids and that by the learner over time. Formally,
\begin{align}\label{eq:regret-def}
    \textsf{REG}=\max_{\ibid\in\generic}\sum_{t=1}^T\val(\ibid; \otherbid{t}) - \sum_{t=1}^T\E[\val(\ibid^t; \otherbid{t})],
\end{align}
where the expectation is with respect to any randomness in the learning algorithm. Here,  $\generic$ is the class of bidding strategies~(formally characterized in \cref{ssec:uf-strategies}). We require the bidding strategies in $\generic$ to be RoI feasible for the sequence of competing bids $[\otherbid{t}]_{t\in[T]}$. In \cref{sec:learning-safe}, we consider the case when both the clairvoyant and the learner choose strategies from the same class and aim to obtain sublinear regret. Later, we consider more challenging settings where the clairvoyant can choose the optimal strategy from much richer classes of strategies compared to the learner. In these cases, we obtain sublinear \textit{$\alpha$-approximate} regret, where $\alpha\in(0, 1]$ is the richness ratio~(see \cref{sec:learning-rich}).

\section{Safe Bidding Strategies}\label{ssec:uf-strategies}
Recall that strategies chosen by the learner must be non-anticipating (mapping the history to a bidding strategy) while remaining RoI feasible for all rounds \(t \in [T]\), even under adversarially generated competing bids. This creates an obvious challenge: a bidding strategy that satisfies RoI feasibility in previous rounds may become infeasible even under the slightest change in the competing bids e.g., if $\mathbf{v}=[0.9, 0.5, 0.1]$, the bidding strategy $(0.6, 3)$ is RoI feasible for competing bids $\otherbid{1}=[0.61, 0.59, 0.59]$ but not for $\otherbid{2}=[0.59, 0.59, 0.59]$. To address this, we focus on \textit{safe bidding strategies} that inherently guarantee RoI feasibility, regardless of the adversarial behavior of competing bids:

\begin{definition}[Safe Strategies] \label{def:uni_feasible}
A $\numbid$-uniform  bidding strategy, $\ibid=\langle(b_1, q_1),\dots, (b_\numbid, q_\numbid)\rangle$, is called a \textit{safe} strategy if it is feasible irrespective of the competing bids, i.e.,
\begin{align*}
    \val(\ibid; \otherbid{}) \geq \price({\ibid}; \otherbid{}), \quad \forall \otherbid{}\,.
\end{align*}
The class of all $\numbid$-uniform safe bidding strategies is denoted as $\oneufclass{\numbid}$. The union of classes of safe bidding strategies with at most $\numbid$ bid-quantity pairs is denoted as $\ufclass{\numbid}=\bigcup_{k\in[\numbid]}\oneufclass{k}$.
\end{definition}
Recall that we assume $\gamma=0$ throughout; otherwise the constraint in \cref{def:uni_feasible} is identical to \cref{def:RoI}.
\subsection{Characterizing Safe Bidding Strategies}
We begin by defining underbidding~(overbidding) under the $\numbid$-uniform bidding format in the given context. 
\begin{definition}[Underbid and Overbid]\label{def:under-over_bid}
    A $\numbid$-uniform strategy $\ibid=\langle(b_1, q_1),\dots, (b_\numbid, q_\numbid)\rangle$ is an

    (a) \textit{underbidding} strategy if $b_j \leq w_{Q_j}, \forall j\in[\numbid]$ and $\exists \l\in[\numbid]$ such that $b_\l < w_{Q_\l}$, and

    (b) \textit{overbidding} strategy if $\exists \l\in[\numbid]$ such that $b_\l > w_{Q_\l}$,

    where we recall that $Q_j=\sum_{\l\leq j}q_\l, \forall j\in[\numbid]$.
\end{definition}

\begin{figure}[!tbh]
    \centering
    \includegraphics[width=0.7\linewidth]{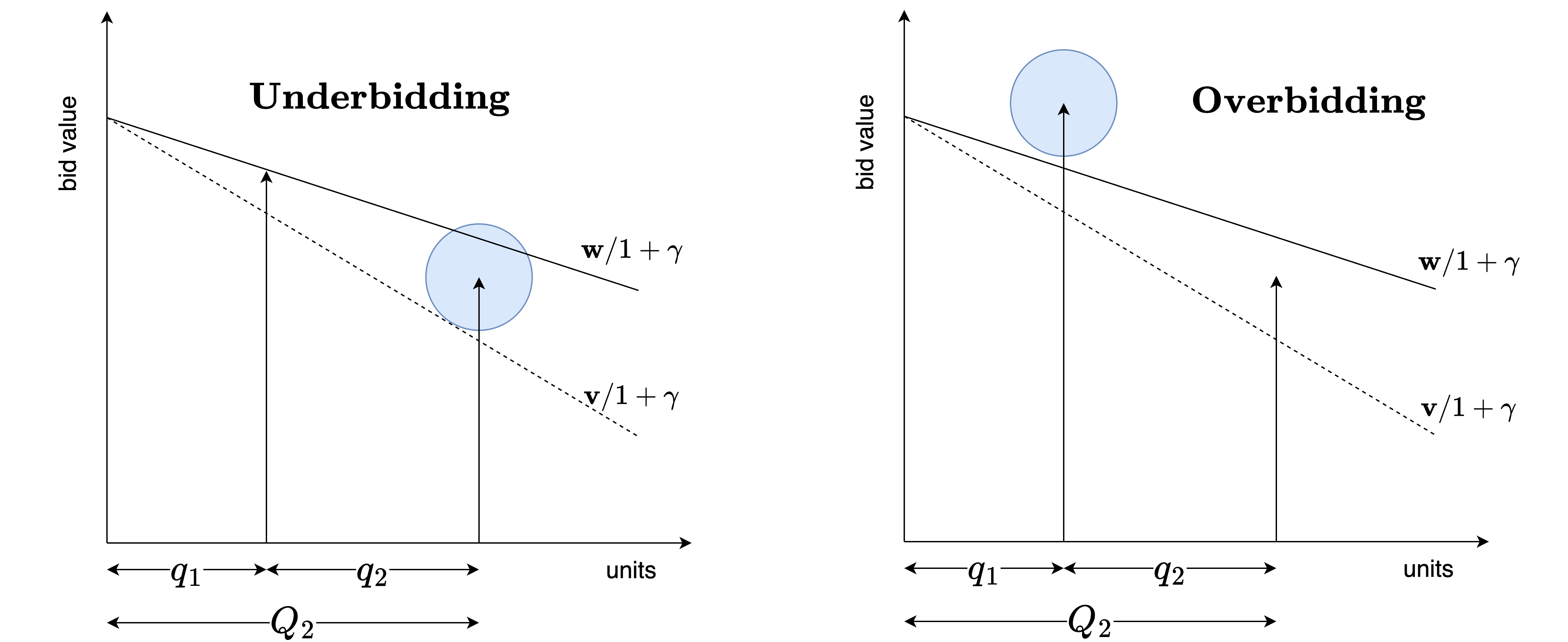}
    \caption{The solid line represents the average cumulative valuation curve, $\mathbf{w}$, and the dotted line represents the valuation curve, $\v$. The figure in the left~(resp. right) illustrates underbidding~(resp. overbidding) for a $2$-uniform bidding strategy. Note that the notions of underbidding and overbidding in \cref{def:under-over_bid} are defined with respect to $\mathbf{w}$ and \textit{not} $\v$.\protect\footnotemark~Here, the plots of $\v$ and $\mathbf{w}$ are shown to be linear for illustrative purposes only.}
    \label{fig:ub-and-ob}
\end{figure}
\footnotetext{In \cref{def:under-over_bid} (and throughout the paper), we assume $\gamma = 0$ without loss of generality for ease of exposition. \cref{fig:ub-and-ob} illustrates how different values of $\gamma$ impacts the results. Specifically, for any $\gamma \geq 0$, underbidding and overbidding are defined with respect to the scaled average cumulative valuation curve, $\frac{\mathbf{w}}{1+\gamma}$.}

In other words, after normalizing $\gamma=0$, $\ibid=\langle(b_1, q_1),\dots, (b_\numbid, q_\numbid)\rangle$ is an underbidding strategy if, for all $j\in [\numbid]$, the bid $b_j$ is at most the average of the first $Q_j$ entries of the valuation vector, denoted as $w_{Q_j}$, and there exists $\ell \in [\numbid]$ where the inequality is strict. Similarly, $\ibid$ is an overbidding strategy if there exists some $\l\in[\numbid]$ such that $b_\l$ is strictly greater than the average of the first $Q_\l$ entries of the valuation vector. Having defined the notions of overbidding and underbidding, we characterize the class of safe bidding strategies, $\oneufclass{\numbid}$:

\begin{theorem} \label{thm:P-m}
For any $\numbid\in\N$, no overbidding is allowed in $\oneufclass{\numbid}$. So, the collection of all $\numbid$-uniform safe strategies is
\begin{align*}
\oneufclass{\numbid} =\Big\{\ibid=\langle(b_1, q_1), \dots, (b_\numbid, q_\numbid)\rangle: b_\ell \leq w_{Q_\l}, \forall \ell\in[\numbid] \Big\}\,,
\end{align*}
where $\bf w$ is defined in Eq. \eqref{eq:w}.

\end{theorem}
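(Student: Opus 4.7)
The plan is to prove both inclusions in the characterization of $\oneufclass{\numbid}$.

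\textbf{Sufficiency ($\supseteq$).} First I would show that any $\numbid$-uniform strategy $\ibid$ satisfying $b_\ell \le w_{Q_\ell}$ for every $\ell \in [\numbid]$ is safe. Fix an arbitrary competing bid profile $\otherbid{}$, and let $x = x(\ibid;\otherbid{})$ be the number of units won and $p = p(\ibid;\otherbid{})$ the clearing price. If $x=0$, then $\val = \price = 0$ and RoI holds trivially. Otherwise, there is a unique $j \in [\numbid]$ with $Q_{j-1} < x \le Q_j$, and the bidder's $x$-th winning bid is $b_j$, which is at least the clearing price, so $p \le b_j$. The obtained value is
\begin{equation*}
\val(\ibid;\otherbid{}) \;=\; \sum_{i=1}^{x} v_i \;=\; x \, w_x\,,
\end{equation*}
while the payment is $\price(\ibid;\otherbid{}) = p\,x \le b_j\,x$. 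Since $v$ is non-increasing, $w$ is also non-increasing, so $w_x \ge w_{Q_j}$; combined with the hypothesis $b_j \le w_{Q_j}$, this yields $\price \le b_j\,x \le w_x\,x = \val$, establishing safety.

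\textbf{Necessity ($\subseteq$).} Next I would show the contrapositive: if $\ibid$ overbids, i.e.\ some $\ell \in [\numbid]$ has $b_\ell > w_{Q_\ell}$, then I can exhibit competing bids that violate RoI. Since $Q_\ell \le Q_\numbid \le \maxbid \le K$, pick any $B > b_\ell$ (and, if $\ell > 1$, further require $B < b_{\ell-1}$) and let the competitors submit $K - Q_\ell$ bids at value $B$ together with arbitrarily small bids in any remaining slots. Then the combined sorted bid profile places $b_1,\dots,b_{\ell-1}$ and the $K - Q_\ell$ copies of $B$ strictly above $b_\ell$, followed by the $q_\ell$ copies of $b_\ell$, followed by $b_{\ell+1},\dots,b_\numbid$ and the small competing bids. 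Hence the top $K$ bids consist of exactly $K - q_\ell$ bids above $b_\ell$ plus the $q_\ell$ copies of $b_\ell$, so the clearing price is $b_\ell$ and the bidder wins exactly $Q_\ell$ units. Then
\begin{equation*}
\val(\ibid;\otherbid{}) = \sum_{i=1}^{Q_\ell} v_i = Q_\ell \, w_{Q_\ell} \;<\; Q_\ell \, b_\ell = \price(\ibid;\otherbid{})\,,
\end{equation*}
so $\ibid$ is not safe.

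\textbf{Main obstacle.} The substantive step is the necessity direction, specifically designing a competing profile that isolates $Q_\ell$ as the quantity won at clearing price $b_\ell$ without inadvertently letting the bidder's lower bid-quantity pairs $(b_{\ell+1},q_{\ell+1}),\dots,(b_\numbid,q_\numbid)$ affect the outcome. The key observations that make this clean are: (i) strict inequalities $b_\ell > b_{\ell+1}$ and $B > b_\ell$ rule out tie-breaking subtleties, so no appeal to the tie-breaking convention is needed; (ii) the bound $Q_\ell \le \maxbid \le K$ guarantees the construction uses a non-negative number of competing bids at $B$; and (iii) the identity $\sum_{i=1}^{Q_\ell} v_i = Q_\ell\,w_{Q_\ell}$ converts the condition $b_\ell > w_{Q_\ell}$ directly into an RoI violation. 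Together with the monotonicity of $w$ used in the sufficiency argument, these observations make the proof a short computation once the reduction to the pivotal quantity $w_{Q_\ell}$ is identified.
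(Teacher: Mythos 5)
Your proposal is correct and follows essentially the same route as the paper: the sufficiency direction uses the individual-rationality bound $p \le b_j$ together with monotonicity of $\mathbf{w}$ (exactly as in the paper's proof that $\oneufclass{\numbid}^{\mathbf{NOB}}\subseteq\oneufclass{\numbid}$), and the necessity direction constructs an adversarial profile with $K-Q_\ell$ large bids and small fillers so that the bidder wins exactly $Q_\ell$ units at clearing price $b_\ell$, matching the paper's overbidding observation (which places the large bids at $2b_1$ rather than between $b_\ell$ and $b_{\ell-1}$, an immaterial difference). No gaps.
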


Since there are infinitely many safe strategies in $\oneufclass{\numbid}$, we focus on the subset of safe \textit{undominated} strategies, $\optoneufclass{\numbid}$, obtained by eliminating \textit{very weakly dominated} strategies. A safe strategy $\ibid$ is said to be very weakly dominated if there exists another safe strategy $\ibid'$ such that, for any competing bid $\otherbid{}$, $\val(\ibid, \otherbid{}) \leq \val(\ibid', \otherbid{})$~\citep[pp. 79]{shoham2008multiagent}. Keeping these strategies does not improve the bidder’s performance, but removing them yields a \textit{finite} strategy class~(as shown below), which significantly simplifies the design of online learning algorithms. We show that for every underbidding strategy (as per \cref{def:under-over_bid}) in $\oneufclass{\numbid}$, there exists a non-underbidding strategy in the same class that very weakly dominates it, allowing the former to be safely eliminated. Hence,
\begin{theorem}\label{thm:opt-bid}
The class of $\numbid$-uniform safe undominated bidding strategies, $\optoneufclass{\numbid}$, is given as follows:
\begin{align*}
    \optoneufclass{\numbid} =\Big\{\ibid=\langle(b_1, q_1), \dots, (b_\numbid, q_\numbid)\rangle: b_\l = w_{Q_\l}, \l\in[\numbid] \Big\}\,. 
\end{align*}
The union of classes of safe undominated strategies with at most $\numbid$ bid-quantity pairs is denoted by $\optufclass{\numbid}=\bigcup_{k\in[\numbid]}\optoneufclass{k}$. 
\end{theorem}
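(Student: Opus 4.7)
The plan is to prove the set equality by establishing both inclusions, leveraging the characterization of $\oneufclass{\numbid}$ from Theorem~\ref{thm:P-m} as the starting point.

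For the easier inclusion $\optoneufclass{\numbid} \subseteq \{\ibid : b_\ell = w_{Q_\ell},\ \forall \ell\}$, I argue by contrapositive. Take any safe $\ibid = \langle (b_1, q_1), \dots, (b_\numbid, q_\numbid) \rangle \in \oneufclass{\numbid}$ with $b_k < w_{Q_k}$ for some index $k$. I construct a candidate dominator $\ibid^\star = \langle (w_{Q_1}, q_1), \dots, (w_{Q_\numbid}, q_\numbid) \rangle$ with the same cumulative quantities but each bid pushed up to the safety boundary. By Theorem~\ref{thm:P-m}, $\ibid^\star$ is safe. The key observation is that in a uniform price auction, raising one's own bids (while holding $\otherbid{}$ fixed) can only weakly increase the number of units won, since a unit is won precisely when its bid lies in the top $K$, and raising a bid can only push it further into the top $K$. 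Because valuations are non-negative, it follows that $\val(\ibid; \otherbid{}) \le \val(\ibid^\star; \otherbid{})$ for every $\otherbid{}$, so $\ibid$ is very weakly dominated. A minor subtlety arises if consecutive $w_{Q_\ell}$'s coincide, breaking the strict decrease required by Definition~\ref{def:m-unif-bid}: in that case one merges the tied pairs and either obtains a safe $k$-uniform dominator in $\ufclass{\numbid}$ (sufficient under the natural reading of the definition of very weak dominance), or extracts a strict dominator in $\oneufclass{\numbid}$ by reindexing with an infinitesimally smaller bid on one pair.

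The converse inclusion $\{\ibid : b_\ell = w_{Q_\ell},\ \forall \ell\} \subseteq \optoneufclass{\numbid}$ is the main challenge. Fix $\ibid$ in the claimed set, and let $\ibid' \in \oneufclass{\numbid}$ be any candidate dominator with $\ibid' \neq \ibid$. The goal is to exhibit a competing bid profile $\otherbid{}$ such that $\val(\ibid; \otherbid{}) > \val(\ibid'; \otherbid{})$, which refutes very weak dominance. Since strategies in the claimed set are determined by their cumulative quantities $(Q_1, \dots, Q_\numbid)$, there must exist an index $k$ where the cumulative structures of $\ibid$ and $\ibid'$ disagree (either $Q_k \neq Q'_k$ or $\ibid'$ lies strictly inside the safety region, $b'_j < w_{Q'_j}$ for some $j$). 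I then choose $\otherbid{}$ to consist of $K - Q_k$ competing bids of value arbitrarily close to the maximum bid and the remainder equal to $0$; this pins the clearing price at $\ibid$'s $Q_k$-th highest bid, namely $w_{Q_k}$. Under this $\otherbid{}$, $\ibid$ wins exactly $Q_k$ units with value $\sum_{i\le Q_k} v_i$, and a case analysis on whether $\ibid'$'s demand at the price $w_{Q_k}$ (determined by its bid-quantity pairs compared against $w_{Q_k}$) is strictly smaller than $Q_k$ produces the strict value inequality, using non-negativity of the valuations.

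The main obstacle is the converse inclusion, which hinges on the demand-curve comparison between $\ibid$ and any distinct safe $\ibid' \in \oneufclass{\numbid}$. The safety boundary constraint $b'_\ell \le w_{Q'_\ell}$ combined with the monotonicity of $\mathbf w$ is what forces $\ibid'$'s demand at price $w_{Q_k}$ to fall short of $Q_k$ in each relevant case, and getting the case split right (distinguishing whether $\ibid'$ differs by having strictly smaller cumulative quantities somewhere, strictly larger ones, or merely smaller bids at matching quantities) is the delicate part of the argument.
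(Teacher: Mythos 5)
Your first inclusion is exactly the paper's proof: the paper establishes a monotonicity lemma (\cref{lem:monotone}) showing that lowering the bids of a feasible sorted bid vector preserves feasibility and weakly decreases value, and then dominates any underbidding strategy by the boundary strategy with the same quantities. Your version of this step is correct, and your observation about ties among consecutive $w_{Q_\ell}$'s (which would violate the strict ordering $b_1>\dots>b_\numbid$ in \cref{def:m-unif-bid}) is a genuine subtlety that the paper silently ignores; your merging fix is reasonable. Note that the paper's proof of \cref{thm:opt-bid} stops here --- it never verifies that the boundary strategies themselves are undominated --- so your converse inclusion is an attempt to close a gap in the paper rather than to reproduce its argument.

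That converse argument, however, has a real flaw in the construction. With $K-Q_k$ competing bids near the maximum and the \emph{remainder equal to $0$}, any strategy whose total demand is at least $Q_k$ and whose bids are all positive (which every admissible strategy's are, by \cref{def:m-unif-bid}) wins exactly $Q_k$ units: the $Q_k$ residual slots in the top $K$ are filled by its own bids regardless of their levels. Concretely, for $\ibid=\langle(w_1,1),(w_3,2)\rangle$ and $\ibid'=\langle(w_2,2),(w_3,1)\rangle$, both in $\optoneufclass{2}$, your $\otherbid{}$ yields identical allocations (hence identical values) for every choice of $k$, so no strict separation is ever produced. The underlying error is the claim that $\ibid'$'s allocation is governed by its ``demand at the price $w_{Q_k}$'': the clearing price when $\ibid'$ is submitted is determined by $\ibid'$'s own bids against $\otherbid{}$, not by $w_{Q_k}$. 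The repair is to place the $Q_k$ low competing bids at a \emph{positive} level $c$ chosen strictly between the relevant bid of $\ibid'$ and $w_{Q_k}$ (e.g.\ $c\in(\max\{b'_j, w_{Q_k+1}\},\, w_{Q_k})$ for a suitable index $j$ where $\ibid'$ differs from $\ibid$), with the remaining competing bids set to $C\gg w_1$; then $\ibid$ wins exactly $Q_k$ units while $\ibid'$ wins at most $Q_{j-1}'<Q_k$, giving the strict inequality since $v_i>0$ for $i\le\maxbid$. This is precisely the style of construction the paper uses in its lower-bound instances (\cref{apx:LB-2-m-1} onward), and with it your case split does go through.
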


 Observe that $\optoneufclass{\numbid}$ is a finite class of bidding strategies that depend only the valuation curve and is \textit{independent} of the competing bids. The strategies in $\optoneufclass{\numbid}$ exhibit a ``nested'' structure, where the $j^{th}$ highest bid ($b_j$) is the average of the first $Q_j =\sum_{\ell=1}^j q_\ell$ entries of the valuation curve (illustrated in \cref{fig:nested}). Importantly, fixing $Q_j$'s uniquely determines the bidding strategy, as $b_j = w_{Q_j}$ for strategies in $\optoneufclass{\numbid}$, which will be crucial in learning the optimal safe bidding strategy, as discussed in the following section.
\section{Learning Safe Bidding Strategies}\label{sec:learning-safe}
Here, the clairvoyant and the learner choose strategies from the class of safe bidding strategies with at most $\numbid$ bid-quantity pairs, $\optufclass{\numbid}$ (defined in \Cref{thm:opt-bid}). Thus, in this case, the regret defined in \cref{eq:regret-def} becomes
\begin{align}\label{eq:safe-regret}
\textsf{REG}=\max_{\ibid\in\optufclass{\numbid}}\sum_{t=1}^T\val(\ibid; \otherbid{t}) - \sum_{t=1}^T\E[\val(\ibid^t; \otherbid{t})]\,. 
\end{align}
To design learning algorithms, we first consider the offline setting, where  we aim to solve the following problem for a given bid history, $ [\otherbid{t}]_{t\in[T]}$. 
\begin{align}\label{eq:opt-safe}
    \max_{\ibid\in\optufclass{\numbid}}\sum_{t=1}^T\val(\ibid; \otherbid{t})\tag{$\textsc{Offline}$}
\end{align}
Since $\optufclass{\numbid}$ contains $O(\maxbid^\numbid)$ strategies, evaluating each strategy individually is intractable. To overcome this, we reduce the offline problem to finding the maximum weight path in an appropriate edge-weighted directed acyclic graph (DAG), which allows for the computation of the optimal offline strategy in poly($\numbid, \maxbid$) time.



\subsection{Offline Setting} \label{sec:offline}
Fix a bid history, $\hist=[\otherbid{t}]_{t\in[T]}$, and consider the Problem \eqref{eq:opt-safe}. The following lemma shows that the objective function can be decomposed across the bid-quantity pairs. Formally,

\begin{lemma}\label{lem:decomp}
Problem \eqref{eq:opt-safe} can be formulated as 
\begin{align*}
  \max_{\l\in[\numbid]}\max_{Q_1, \dots, Q_\l} \sum_{j=1}^\l \sum_{t=1}^T \sum_{k=Q_{j-1}+1}^{Q_j}v_k\cdot\ind{w_{Q_j}\geq \ordotherbid{t}{k}}\,.
\end{align*}
\end{lemma}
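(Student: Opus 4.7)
The plan is to parameterize an arbitrary strategy in $\optufclass{\numbid}$ using the characterization from \cref{thm:opt-bid}, and then unpack the per-round value $\val(\ibid; \otherbid{t})$ into a unit-by-unit sum that matches the expression in the lemma.

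First, I would apply \cref{thm:opt-bid}, which gives $\optufclass{\numbid} = \bigcup_{\l\in[\numbid]}\optoneufclass{\l}$, and where any $\ibid\in\optoneufclass{\l}$ has the form $\langle(w_{Q_1},q_1),\dots,(w_{Q_\l},q_\l)\rangle$ with cumulative quantities $0 < Q_1 < \dots < Q_\l \le \maxbid$. Since the bids $b_j = w_{Q_j}$ are forced by the $Q_j$'s, the outer optimization $\max_{\ibid\in\optufclass{\numbid}}$ is equivalent to $\max_{\l\in[\numbid]}\max_{Q_1<\dots<Q_\l}$, which matches the outer maxima in the target expression.

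Next, I would compute $\val(\ibid;\otherbid{t})$ unit by unit. The key observation, which I would state as a short claim, is that the bidder wins their $k$-th unit (equivalently, their $k$-th highest personal bid is among the top $K$ bids overall) if and only if that $k$-th bid is at least $\ordotherbidnew{k}$ in round $t$. This follows from the uniform-price allocation rule: the bidder's $k$-th highest bid lands in the top $K$ of the merged bid list iff at most $K-k$ competing bids strictly exceed it, i.e., iff the $(K-k+1)$-st largest competing bid is at most that bid, which is exactly $\ordotherbidnew{k}$ by definition (the $k$-th smallest among the top-$K$ competing bids is the $(K-k+1)$-st largest). The tie-breaking convention in favor of the bidder makes the inequality weak, giving $\ind{b_k' \ge \ordotherbid{t}{k}}$ as the win indicator for the $k$-th unit.

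Now I would translate this to the nested structure. Given $\ibid \in \optoneufclass{\l}$, its $k$-th highest bid (for $k\in[Q_\l]$) equals $w_{Q_j}$ where $j$ is the unique index with $Q_{j-1}+1 \le k \le Q_j$, setting $Q_0 := 0$. The value from winning the $k$-th unit is $v_k$. Combining these facts yields
\begin{align*}
\val(\ibid;\otherbid{t}) = \sum_{k=1}^{Q_\l} v_k \cdot \ind{w_{Q_{j(k)}} \ge \ordotherbid{t}{k}} = \sum_{j=1}^{\l}\sum_{k=Q_{j-1}+1}^{Q_j} v_k \cdot \ind{w_{Q_j} \ge \ordotherbid{t}{k}}\,,
\end{align*}
where the second equality simply re-indexes by the block $j$ to which $k$ belongs. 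Summing over $t\in[T]$ and combining with the outer maximization from Step 1 gives the claimed expression. The main obstacle, which is more bookkeeping than depth, is the careful handling of the index translation between the $(b_j,q_j)$ representation and the per-unit bid vector, together with correctly identifying $\ordotherbid{t}{k}$ as the threshold the bidder must meet to win their $k$-th unit; all subsequent manipulations are elementary rearrangements.
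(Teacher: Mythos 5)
Your proposal is correct and follows essentially the same route as the paper: parameterize strategies in $\optufclass{\numbid}$ by $(\l, Q_1,\dots,Q_\l)$ via \cref{thm:opt-bid}, write the per-round value as $\val(\ibid;\otherbid{t})=\sum_{j=1}^\l\sum_{k=Q_{j-1}+1}^{Q_j}v_k\cdot\ind{w_{Q_j}\ge\ordotherbid{t}{k}}$, and sum over $t$. The paper's proof simply asserts this per-round identity, whereas you additionally justify it via the (correct) unit-by-unit win condition $b'_k\ge\ordotherbid{t}{k}$, so your write-up is, if anything, more complete.
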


\begin{proof}
    By \cref{thm:opt-bid}, a bidding strategy $\ibid=\langle (b_1, q_1), \dots, (b_\l, q_\l)\rangle\in\optufclass{\numbid}$ for some $\l\in[\numbid]$ can be uniquely identified by $\{Q_1, \dots, Q_\l\}$ where $Q_j=\sum_{k\leq j}q_k$ because $b_j=w_{Q_j}$ for all $j\in[\l]$. Let the bid history $\hist=[\otherbid{t}]_{t\in[T]}$. Observe that if the bidder wins $r$ units in any auction, then the $r^{th}$ highest bid value in its submitted strategy must be at least the $r^{th}$ smallest bid in the top $K$ competing bids, i.e.,  $\ordotherbid{t}{r}$. Hence, for any round $t$, 
\begin{align*}
    \val(\ibid; \otherbid{t}) &=\sum_{j=1}^\l \sum_{k=Q_{j-1}+1}^{Q_j}v_k\cdot\ind{w_{Q_j}\geq \ordotherbid{t}{k}},
\end{align*}
which implies that the problem \eqref{eq:opt-safe} can be equivalently expressed as
\begin{align*}
    \max_{\ibid\in\optufclass{\numbid}}\sum_{t=1}^T\val(\ibid; \otherbid{t}) = \max_{\l\in[\numbid]}\max_{Q_1, \dots, Q_\l} \sum_{j=1}^\l \sum_{t=1}^T \sum_{k=Q_{j-1}+1}^{Q_j}v_k\cdot\ind{w_{Q_j}\geq \ordotherbid{t}{k}}\,.
\end{align*}
\end{proof}

Building on the decomposition in \Cref{lem:decomp}, we now construct an edge-weighted DAG $\mathcal{G}(V, E)$.

\textbf{Vertices.} The DAG has a `layered' structure with a source node~($s$), destination node~($d$) and $\numbid$ intermediate layers. The $\l^{th}$ intermediate layer contains $\maxbid+1-\l$ nodes, denoted by $(\l, j)$ where $\l\in[\numbid], j\in\{\l, \l+1, \dots, \maxbid\}$ as shown in \cref{fig:dag}. For convenience, we set $s=(0, 0)$.

\textbf{Edges and Edge Weights.} A directed edge exists from vertex \(x\) to \(y\) under the following conditions:

(i) \(x=(\l-1, j)\) and \(y=(\l, j')\), \(\forall \l\in[\numbid]\) and \(j < j'\). This edge connects consecutive layers in the graph, and its weight is given by:
\begin{align}\label{eq:dag-base-weight}
    \textsf{w}(e) = \sum_{t=1}^T\sum_{k=j+1}^{j'}v_k\cdot\ind{w_{j'}\geq \ordotherbid{t}{k}}\,.
\end{align}

(ii) \(x=(\l, j)\) and \(y=d\), \(\forall \l \in [\numbid], j \in [\maxbid]\). This edge connects the current layer to the destination node \(d\), and its weight is \(\textsf{w}(e) = 0\).



\begin{figure}[!tbh]
    \centering
    \includegraphics[width=0.35\linewidth]{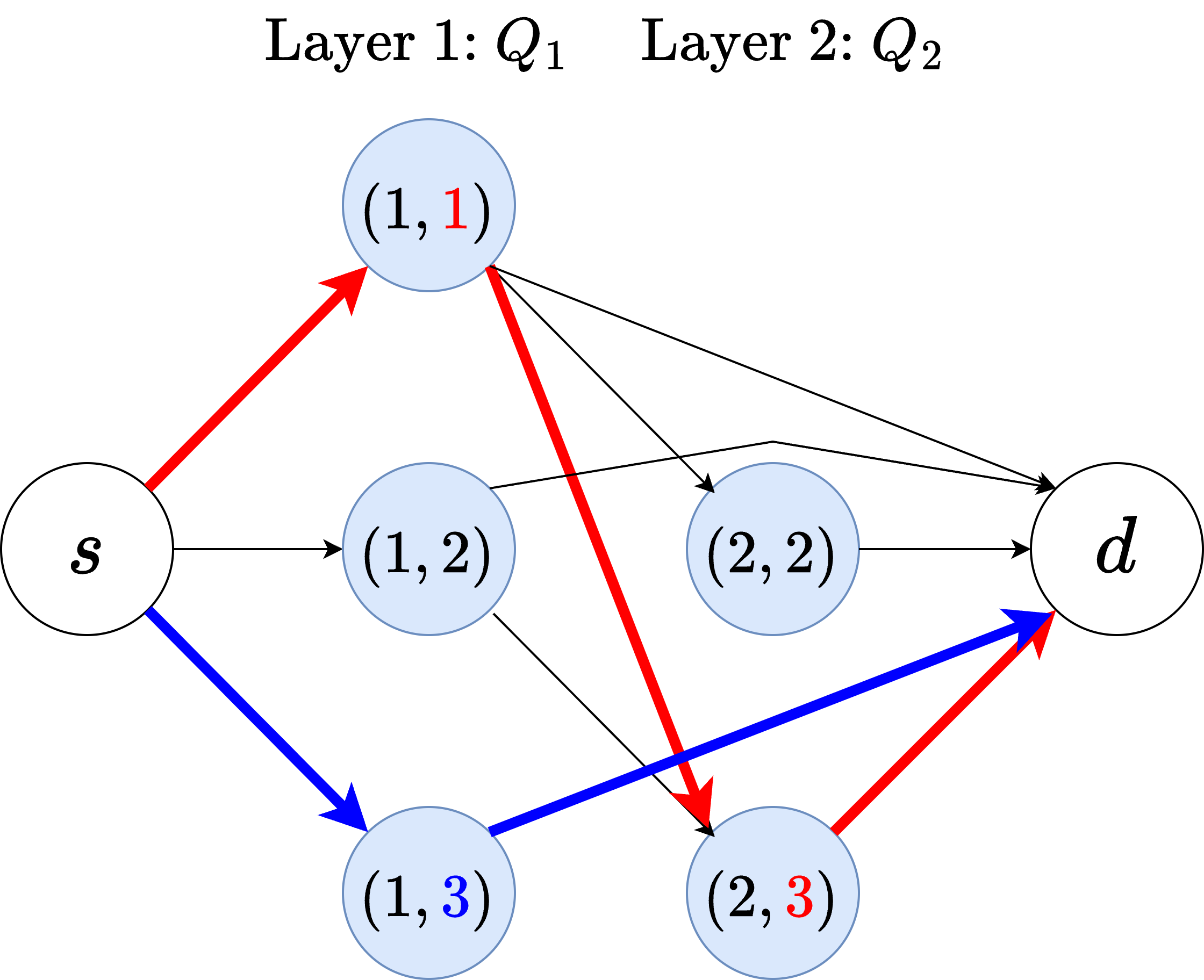}
    \caption{In this DAG, $\maxbid=3$, $\numbid=2$. The red path refers to $\ibid=\langle (w_1, 1), (w_3, 2)\rangle\in\optufclass{2}.$ The values in red are the corresponding $Q_j$'s. Similarly, the blue path refers to $\ibid=(w_3, 3)\in\optufclass{2}$.}
    \label{fig:dag}
\end{figure}

\begin{theorem}\label{thm:DAG-base-strategy}

There exists a bijective mapping between the $s$-$d$ paths in $\mathcal{G}(V, E)$ and strategies in $\optufclass{\numbid}$, i.e., the path $s\to (1, z_1)\to\dots\to(k, z_k)\to d$ for $k\in[\numbid]$ refers to the strategy $\ibid=\langle(b_1, q_1), \dots, (b_k, q_k)\rangle$ where 
\begin{align*}
  b_\l=w_{z_\l} \quad\text{and}\quad q_\l=z_\l-z_{\l-1}, \forall  \l\in [k] \,. 
\end{align*}
where $w_{z_\l}$ is defined as per \cref{eq:w} and $z_0=0$. Conversely, the strategy $\ibid=\langle(w_{Q_1}, q_1), \dots, (w_{Q_k}, q_k)\rangle$ maps to the path $s\to (1, Q_1)\to\dots\to(k, Q_k)\to d$. 

The value obtained by a bidding strategy is the weight of the corresponding $s$-$d$ path.  Thus, the Problem \eqref{eq:opt-safe} is equivalent to finding the maximum weight $s$-$d$ path in $\mathcal{G}(V, E)$ which can be computed in $\text{poly}(\numbid, \maxbid)$ time.

\end{theorem}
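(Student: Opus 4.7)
The plan is to establish both claims together by constructing the bijection between $s$-$d$ paths and safe undominated strategies explicitly, and then matching path weights with strategy values via \cref{lem:decomp}. The heart of the proof is index bookkeeping.

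First, I would characterize the $s$-$d$ paths in $\mathcal{G}$. By the edge construction, every such path has the form $s=(0,0) \to (1, z_1) \to (2, z_2) \to \cdots \to (k, z_k) \to d$ for some $k \in [\numbid]$ with $0 = z_0 < z_1 < \cdots < z_k \leq \maxbid$; the strict increase of the $z_j$'s is forced by the edge rule requiring $j < j'$ between consecutive layers. Associate to such a path the strategy $\ibid = \langle (w_{z_1}, z_1 - z_0), \ldots, (w_{z_k}, z_k - z_{k-1}) \rangle$. Setting $Q_j = z_j$, \cref{thm:opt-bid} identifies this with a member of $\optoneufclass{k} \subseteq \optufclass{\numbid}$. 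The inverse map sends $\ibid = \langle (w_{Q_1}, q_1), \ldots, (w_{Q_k}, q_k) \rangle \in \optoneufclass{k}$ to the path $s \to (1, Q_1) \to \cdots \to (k, Q_k) \to d$; these two maps are visibly inverses of one another, giving the claimed bijection.

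Next, for the value-weight correspondence, the total weight along the path associated to $(z_1, \ldots, z_k)$ is
\begin{align*}
    \sum_{j=1}^{k} \sum_{t=1}^{T} \sum_{k'=z_{j-1}+1}^{z_j} v_{k'} \cdot \ind{w_{z_j} \geq \ordotherbid{t}{k'}}\,,
\end{align*}
since the terminal edge to $d$ contributes zero. This is exactly the inner objective of \cref{lem:decomp} evaluated at $(Q_1, \ldots, Q_k) = (z_1, \ldots, z_k)$, which equals $\sum_{t=1}^T \val(\ibid; \otherbid{t})$. Hence maximizing path weight is equivalent to solving \eqref{eq:opt-safe}. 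For the complexity claim, $\mathcal{G}$ has $O(\numbid \maxbid)$ vertices and $O(\numbid \maxbid^2)$ edges; each edge weight is computable in $O(T \maxbid)$ time; and the maximum-weight $s$-$d$ path in a DAG is computable via topological sorting and dynamic programming in $O(|V| + |E|)$ time, yielding an overall runtime polynomial in $\numbid$, $\maxbid$, and $T$.

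The bulk of the argument is a careful index-alignment verification against \cref{lem:decomp}, which I do not anticipate being difficult. The only mild subtlety I foresee is reconciling the strict ordering $b_1 > b_2 > \cdots > b_k$ required by \cref{def:m-unif-bid} with the merely non-increasing monotonicity of $\mathbf{w}$: if two consecutive $w_{z_j}$'s coincide, those bid-quantity pairs can be merged into a single one without changing the value, which is precisely the kind of redundancy removed by passing to the undominated class $\optufclass{\numbid}$. After this trivial consolidation the bijection lands cleanly in $\optufclass{\numbid}$, completing the reduction.
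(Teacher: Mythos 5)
Your proof is correct and follows essentially the same route as the paper's: map the path $s\to(1,z_1)\to\dots\to(k,z_k)\to d$ to the nested strategy with $Q_\l=z_\l$ and $b_\l=w_{z_\l}$, verify the two maps are inverses, match the path weight to the decomposition of \cref{lem:decomp}, and solve the max-weight path problem on the DAG by topological ordering. Your closing remark on merging consecutive pairs with coinciding $w_{z_j}$'s addresses a degeneracy the paper's proof passes over silently, and your resolution (merging leaves the bid vector and hence the value unchanged) is sound.
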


\subsection{Online Setting}
In the online setting, in each round \(t \in [T]\), the learner submits a strategy \(\ibid^t \in \optufclass{\numbid}\) and receives feedback from the auction. We consider two feedback models: (a) full information setting and (b) bandit setting. Leveraging the DAG formulation, we design a polynomial-time online learning algorithm such that \(\textsf{REG} = \widetilde{O}(\text{poly}(\numbid, \maxbid)\cdot\sqrt{T})\). For the rest of the work, we assume \(v_j \in [0, 1], \forall j \in [\maxbid],\) without loss of generality.



In \cref{thm:DAG-base-strategy}, we established a bijection mapping between $s$–$d$ paths in $\mathcal{G}(V, E)$ and safe strategies in $\optufclass{\numbid}$. While naïvely applying the Hedge algorithm~\citep{freund1997decision} in the full-information setting by treating each path as an expert yields optimal regret bounds, it is computationally intractable due to the exponential number of paths: $O(\maxbid^\numbid)$. The situation further worsens in the bandit setting, as implementing the EXP3 algorithm~\citep{auer2002nonstochastic}, treating each path as an expert, also yields unsatisfactory regret bounds~(recall that EXP3 obtains $O(\sqrt{NT\log N})$ regret where $N$ is the number of arms and per-round rewards belong to $[0, 1]$. In this setting, $N=O(\maxbid^\numbid)$.). 

Our proposed algorithm retains the core idea of treating paths as experts but does so efficiently. Leveraging the decomposition in \cref{lem:decomp}, it maintains probabilities over edges rather than paths, enabling efficient path sampling as a Markov chain without explicitly maintaining a distribution over all the paths. Specifically, in every round \(t\), a DAG \(\mathcal{G}^t(V, E)\), as described earlier, is constructed. The proposed algorithm consists of three main steps: (1) $\textsf{UPDATE}$, (2) $\textsf{SAMPLE}$, and (3) $\textsf{MAP}$. In the \textsf{UPDATE} step, the bidder uses the feedback from the previous round to update the edge probabilities, $\varphi^t(u\to v)$, for every edge $e=u\to v$ in the DAG. Using the dynamic programming-based weight-pushing method from \cite{takimoto2003path}, this step can be performed in $\text{poly}(\numbid, \maxbid)$ time. In the \textsf{SAMPLE} step, the algorithm samples an $s$–$d$ path in a Markovian manner: starting at the source node $s$, it repeatedly samples the next node $v$ from the set of neighbors of the current node $u$ according to probabilities $\varphi^t(u \to v)$, until the destination node $d$ is reached. Once a $s$-$d$ path is sampled, in the \textsf{MAP} step, the bidder maps it to a safe strategy $\ibid^t\in\optufclass{\numbid}$ and submits $\ibid^t$ in round $t$. After receiving feedback at the end of the round, in the full information setting, the edge weights are set as follows: 

(i) If \(x=(\l-1, j)\) and \(y=(\l, j')\) with \(\l\in[\numbid]\) and \(j < j'\),
\begin{align}\label{eq:full-info-edge-weight}
    \textsf{w}^t(e) = \sum_{k=j+1}^{j'}v_k\cdot\ind{w_{j'}\geq \ordotherbid{t}{k}}\,.
\end{align}

(ii) If \(x=(\l, j)\) and \(y=d\), then \(\textsf{w}^t(e) = 0, \forall \l \in [\numbid], j \in [\maxbid]\).


The proposed algorithm is formally presented in \cref{alg:weight-pushing}. The main result of this section is:

\begin{algorithm}[!tbh]
\caption{Learning Safe Bidding Strategies~(Full Information)}
\label{alg:weight-pushing}
\small{
\begin{algorithmic}[1]
\Require valuation curve $\mathbf{v}$, time horizon $T$, learning rate $\eta\in(0, \frac{1}{\maxbid}]$, $\textsf{w}^0(e)=0,\text{and}~\varphi^0(e)=1, \forall e\in E$.
\For{$t = 1, 2, \dots, T$}
    \State Construct $\mathcal{G}^t(V, E)$ similar to $\mathcal{G}(V, E)$ without weights.
    \State $\textsf{UPDATE}:$ Set $\Gamma^{t-1}(d)=1$ and recursively compute in bottom-to-top fashion for every node $u$ in $\mathcal{G}^t(V, E)$:
    \begin{align*}
        \Gamma^{t-1}(u)=\sum_{v:u\to v=e\ni E}\Gamma^{t-1}(v)\cdot\varphi^{t-1}(e)\cdot\exp(\eta \textsf{w}^{t-1}(e))\,.
    \end{align*}

    \State For edge $e=u\to v$ in $\mathcal{G}^t(V, E)$, update edge probability:
       \begin{align*}
           \varphi^t(e)=\varphi^{t-1}(e)\cdot\exp(\eta \textsf{w}^{t-1}(e))\cdot\frac{\Gamma^{t-1}(v)}{\Gamma^{t-1}(u)}.
       \end{align*}
    
    

    \State $\textsf{SAMPLE}$: Define initial node $u=s$ and path $\path^t=s$. 
    \While{$u\neq d$}\label{line:sample-start}
    \State Sample $v$ with probability $\varphi^t(u\to v)$.
    \State Append $v$ to the path $\path^t$; set $u\gets v$.
    \EndWhile\label{line:sample-end}
    \State $\textsf{MAP}$: If $\path^t=s\to (1, z_1)\to\dots\to(k, z_k)\to d$ for some $k\in[\numbid]$, submit $\ibid^t=\langle(b_1, q_1), \dots, (b_k, q_k)\rangle$ where
    \begin{align*}
      b_\l=w_{z_\l} \quad\text{and}\quad q_\l=z_\l-z_{\l-1}, ~\forall  \l\in [k]\,.
    \end{align*}
    
    where $w_{z_\l}$ is defined as per \cref{eq:w} for all $\l\in[k]$ and $z_0=0$.
    \State The bidder observes $\otherbid{t}$ and sets $\textsf{w}^t(e)$ as per \cref{eq:full-info-edge-weight}.
\EndFor
\end{algorithmic}}
\end{algorithm}

\begin{theorem}\label{thm:full-info}
 In the full information setting, \cref{alg:weight-pushing} achieves \(\textsf{REG} \leq O(\maxbid\sqrt{\numbid T\log \maxbid})\) in \(\text{poly}(\numbid, \maxbid)\) space and time per round.
\end{theorem}

In the bandit setting, as the bidder receives only $x(\allbids^t)$ and $\price(\allbids^t)$ as feedback in round $t$, they can not compute \cref{eq:full-info-edge-weight} for all the edges. Hence, we use an unbiased estimator, $\widehat{\textsf{w}}^t(e)$, of $\textsf{w}^t(e)$ as follows: 
\begin{align}\label{eq:w-hat-estimate-exp3}
    \widehat{\textsf{w}}^t(e) &= \overline{\textsf{w}}(e)-\frac{\overline{\textsf{w}}(e)-\textsf{w}^t(e)}{p^t(e)}\cdot\ind{e\in\path^t}\quad\text{where}\quad p^t(e)=\sum_{\path:e\in\path}\mathbb
    P^t(\path)\,.
\end{align}
Here, $p^t(e)$ is the probability of selecting edge $e$ in round $t$, $\mathbb P^t$ is the distribution over all $s$-$d$ paths in $\mathcal{G}^t(V, E)$. Observe that $\widehat{\textsf{w}}^t(e)$ is well defined as $p^t(e)>0, \forall e\in E, \forall t\in[T]$. Here, $\overline{\textsf{w}}(e)$ is an edge-dependent upper bound of the edge weight $\textsf{w}^t(e)$. Formally, for $\l\in[\numbid]$,
\begin{align*}
    \overline{\textsf{w}}(e)=\begin{cases}
        j'-j, &\text{if } e=(\l-1, j)\to(\l, j')\\
        0, &\text{if } e=(\l, j)\to d\,.
    \end{cases}
\end{align*}

The algorithm followed by the bidder in the bandit setting is identical to \cref{alg:weight-pushing}, with each instance of $\textsf{w}^t(e)$ replaced by $\widehat{\textsf{w}}^t(e), \forall t\in[T], \forall e\in E$. Furthermore, define $\widehat{\textsf{w}}^0(e)=0, \forall e\in E$. Clearly, $\E[\widehat{\textsf{w}}^t(e)]=\textsf{w}^t(e)$. The unbiased estimator $\widehat{\textsf{w}}^t(e)$ is different from the natural importance-weighted estimator $\frac{\textsf{w}^t(e)}{p^t(e)}\cdot\ind{e\in\path^t}$ for technical reasons~(similar to \citet[Eq. 11.6]{lattimore2020bandit}) which we will clarify in the regret analysis~(see details in \cref{apx:thm:bandit}). Hence,

\begin{theorem}\label{thm:bandit}
   In the bandit setting, \cref{alg:weight-pushing} with the aforementioned changes achieves \(\textsf{REG} \leq O(\maxbid^2\sqrt{\numbid^3 T\log \maxbid})\) in \(\text{poly}(\numbid, \maxbid)\) space and time per round.
\end{theorem}


Note that the per-round reward $\val(\ibid^t; \otherbid{t})\in[0, \maxbid]$ and number of experts is $N = O(\maxbid^\numbid)$. In the bandit setting, although we compute unbiased estimators of the edge weights~(equivalently path weights) and perform Hedge-like updates efficiently, the resulting regret bound is tighter than that of a naïve EXP3 implementation. This improvement stems from two key advantages:
(a) after each round, the bidder can observe the rewards of the constituent edges of the submitted path, rather than just the aggregate reward,\footnote{Strictly speaking, this corresponds to the \textit{semi-bandit feedback }model in online combinatorial optimization~\citep{neu2013efficient, jourdan2021efficient, brandt2022finding}, whereas the full bandit model reveals only the total reward. However, in our setting, the full bandit model lacks a natural interpretation. For simplicity, we refer to this partial feedback setting as the bandit feedback model.} and
(b) the structure of the combinatorial action space reveals partial information about unselected paths as well. We complement \cref{thm:full-info} and \cref{thm:bandit} by establishing the following lower bound:

\begin{theorem}\label{thm:regret-LB}
Suppose $\maxbid\geq2$ and $\numbid=1$. Then, there exist competing bids, $[\otherbid{t}]_{t\in[T]}$, such that, under any learning algorithm, $\E[\textsf{REG}]=\Omega(\maxbid\sqrt{T})$ in the full information setting. This implies that the lower bound also holds under bandit feedback. 
\end{theorem}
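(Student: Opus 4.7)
The plan is to reduce, for $\numbid = 1$, to a classical two-armed bandit lower bound in which the two effective arms yield rewards on the scale of $\maxbid$. This directly gives $\Omega(\maxbid\sqrt{T})$ regret in both the full-information and bandit settings. I fix the valuation vector $v_1 = \maxbid$ and $v_j = 1$ for $2 \le j \le \maxbid$, which is non-increasing and gives $w_1 = \maxbid$ and $w_q = 1 + (\maxbid-1)/q$ for $q \ge 2$. By \cref{thm:opt-bid}, the clairvoyant picks from $\optoneufclass{1} = \{(w_q, q) : q \in [\maxbid]\}$; I focus on the two extreme strategies $s_1 = (w_1, 1)$ and $s_\maxbid = (w_\maxbid, \maxbid)$ and will show the intermediate $(w_q, q)$ are dominated under the competing bids constructed below.

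In each round $t$, $\otherbid{t}$ is drawn i.i.d.\ from a two-point distribution. The ``hard'' scenario (probability $p$) has $K-1$ competing bids placed far above $\maxbid$ and one bid at $\theta = (w_1 + w_2)/2 \in (w_2, w_1)$; the ``easy'' scenario (probability $1-p$) has $K-\maxbid$ bids far above $\maxbid$ and $\maxbid$ bids at $0$. A direct computation gives: $s_1$ always wins one unit with value $\maxbid$; $s_\maxbid$ wins $0$ in hard and $\maxbid$ units with value $2\maxbid - 1$ in easy; and every intermediate $(w_q, q)$ with $2 \le q \le \maxbid - 1$ wins $0$ in hard (since $w_q \le w_2 < \theta$) and $q \le \maxbid$ units in easy (value $\maxbid + q - 1$), so it is weakly dominated by $s_\maxbid$. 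Thus only $s_1$ and $s_\maxbid$ matter, with expected rewards $\maxbid$ and $(2\maxbid - 1)(1 - p)$ respectively; these agree at $p^\star = (\maxbid - 1)/(2\maxbid - 1) \in (0, 1)$.

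I then define two instances $\mathcal{I}_A$ and $\mathcal{I}_B$ with scenario probabilities $p_A = p^\star + \epsilon$ and $p_B = p^\star - \epsilon$ for $\epsilon = c/\sqrt{T}$ with $c$ a small absolute constant. Then $s_1$ is the unique optimum in $\mathcal{I}_A$, $s_\maxbid$ is the unique optimum in $\mathcal{I}_B$, and the per-round gap is $(2\maxbid-1)\epsilon = \Theta(\maxbid/\sqrt{T})$. I apply the standard two-point/Le Cam argument: under full information the learner observes the scenario each round, and the $T$-round KL divergence between the observation sequences under the two instances is $\Theta(T\epsilon^2) = O(1)$, so for at least one $\mu \in \{A, B\}$ the algorithm plays the wrong arm $\Omega(T)$ times, giving regret $\Omega\bigl((2\maxbid - 1)\epsilon \cdot T\bigr) = \Omega(\maxbid\sqrt{T})$. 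Under bandit feedback, $s_1$ returns the deterministic reward $\maxbid$ and thus yields no information about $p$; all distinguishing information comes from pulls of $s_\maxbid$, and balancing the number of such pulls against the per-round gap via the standard two-armed bandit KL argument (e.g., the classical lower bound of Auer et al.\ 2002) gives the same $\Omega(\maxbid\sqrt{T})$ bound.

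The main obstacle is the domination claim for the intermediate strategies, which requires checking that $w_q \le w_2$ for every $q \ge 2$ (immediate since $w_q$ is strictly decreasing in $q$ for this valuation) and that $\theta$ strictly separates $w_1$ from $w_2$ uniformly in $\maxbid \ge 2$; the latter holds because $w_1 - w_2 = \maxbid - (\maxbid + 1)/2 = (\maxbid - 1)/2 > 0$ (in particular, for $\maxbid = 2$, $w_1 = 2$ and $w_2 = 3/2$, so $\theta = 7/4$ works). RoI-feasibility of $s_1$ and $s_\maxbid$ against every realization of $\otherbid{t}$ is automatic from \cref{thm:opt-bid} since both belong to $\optoneufclass{1}$, so no further checking is needed.
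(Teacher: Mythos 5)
Your overall strategy---an i.i.d.\ two-point adversary, a reduction to two effective arms in $\optufclass{1}$, and a Le Cam/KL two-point argument with gap $\Theta(\epsilon)=\Theta(1/\sqrt{T})$---is the same route the paper takes, and the probabilistic part of your argument (domination of the intermediate strategies, the choice $p^\star=(\maxbid-1)/(2\maxbid-1)$, the $O(T\epsilon^2)$ KL bound, and the conclusion that the full-information bound carries over to bandit feedback) is sound.

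The gap is in the construction of the valuation curve. You set $v_1=\maxbid$, so the factor $\maxbid$ in your regret bound comes entirely from the magnitude of a single value, not from the structure of the multi-unit auction. This theorem is a matching lower bound for \cref{thm:full-info}, which is proved under the normalization $v_j\in[0,1]$ (stated just before that theorem); without some such normalization the claim $\E[\mathsf{REG}]=\Omega(\maxbid\sqrt{T})$ is vacuous, since one could rescale $\v$ by any constant and inflate the ``lower bound'' arbitrarily. Under the normalization, your instance rescales to $v_1=1$, $v_j=1/\maxbid$ for $j\ge 2$; the two arms then have expected rewards $1$ and $(1-p)(2\maxbid-1)/\maxbid=\Theta(1)$, the per-round gap collapses to $\Theta(\epsilon)$, and you only obtain $\Omega(\sqrt{T})$. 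The missing idea is that the $\Theta(\maxbid)$ reward scale must be generated by the two candidate strategies winning $\Theta(\maxbid)$ units each worth $\Theta(1)$: the paper takes $\v=[1,\dots,1,v,\dots,v]$ with $v=1-\delta$ and competing bids that let $(w_{\maxbid/2},\maxbid/2)$ win $\maxbid/2$ units deterministically while $(w_\maxbid,\maxbid)$ wins $\maxbid$ units with probability $\tfrac12\mp\delta$, so that the gap between the arms is $\Theta(\maxbid\delta)$ with all $v_j\le 1$.

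Two minor points. First, your assertion that under bandit feedback $s_1$ ``yields no information about $p$'' is false: the clearing price when playing $s_1$ is $\maxbid$ in the hard scenario and $0$ in the easy one, so the feedback $(\val,\price)$ reveals the scenario. This does not hurt you---the full-information lower bound already implies the bandit one since bandit feedback is a restriction---but the stated justification is wrong. Second, competing bids of exactly $0$ and the RoI constraint holding with equality (payment $=\maxbid=$ value for $s_1$ in the hard scenario) are harmless but should be perturbed or at least remarked upon.
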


\section{Competing Against Richer Classes of Bidding Strategies}\label{sec:learning-rich}
In previous sections, we characterized the class of safe bidding strategies and proposed a learning algorithm that obtains sublinear regret, where the regret is computed against a clairvoyant that also selects the optimal safe bidding strategy.  In this section, we consider the cases where the clairvoyant can choose the optimal strategy from richer bidding classes, which we will describe shortly. We show that the bidder who follows safe bidding strategies and runs \cref{alg:weight-pushing} in the full information setting achieves sublinear (approximate) regret when competing against these stronger benchmarks. A similar result also holds in the bandit setting, \textit{mutatis mutandis}. The performance metric we consider in this section is
\begin{align}\label{eq:alpha-approx-regret}
    \alpha\text{-}\textsf{REG}_{\nature}=\alpha\cdot\max_{\ibid\in\nature}\sum_{t=1}^T\val(\ibid; \otherbid{t}) - \sum_{t=1}^T\E[\val(\ibid^t; \otherbid{t})]\,.
\end{align}
Here, $\nature$ is the class of bidding strategies from which the clairvoyant (which serves as our benchmark) chooses the optimal bidding strategy whereas the bidder~(learner) submits bids, $\ibid^t\in\optufclass{\numbid}$, in each round unless stated otherwise. Here, $\alpha\in(0, 1]$ is defined as the \textit{richness ratio}.

\begin{remark}[Richness Ratio $\alpha$]  
In prior works on approximate regret, the parameter \(\alpha\) typically measures the hardness of the offline problem, such as \(\max_{\ibid \in \nature} \sum_{t=1}^T \val(\ibid; \otherbid{t})\) (see \citet{streeter2008online, niazadeh2022online}). In contrast, our work reinterprets \(\alpha\) as \emph{richness ratio}, capturing the relative richness of the clairvoyant's strategy class compared to the learner's. Specifically, it quantifies how closely the value achieved by the optimal safe strategy in \(\optufclass{\numbid}\) approximates that by the optimal strategy in \(\nature\). If \(\nature = \optufclass{\numbid}\), then \(\alpha = 1\)~(see \cref{sec:learning-safe}). By considering a richer class of strategies for the clairvoyant, we aim to quantify the robustness and performance of safe strategies against stronger benchmarks.
\end{remark}
Our main contribution in this section is to compute the richness ratio $\alpha$ for different classes of bidding strategies. To this end, for any $\nature$, define:
\begin{align}\label{eq:def-lambda}
   \Lambda_{\nature, \optufclass{\numbid}}&:=\max_{\v, \hist}\Lambda_{\nature, \optufclass{\numbid}}(\hist, \v), \text{where} \nonumber\\
\Lambda_{\nature, \optufclass{\numbid}}(\hist, \v)&=\frac{\max_{\ibid\in\nature}\sum_{t=1}^T\val(\ibid; \otherbid{t})}{\max_{\ibid'\in\optufclass{\numbid}}\sum_{t=1}^T\val(\ibid'; \otherbid{t})}
    \end{align}
    for $\hist=[\otherbid{t}]_{t\in[T]}$. In words, $\Lambda_{\nature, \optufclass{\numbid}}$ is the maximum ratio of the value obtained by the optimal bidding strategy in \(\nature\) and that by the optimal strategy in $\optufclass{\numbid}$ over all valuation curves and all bid histories of arbitrary length. 
    

\begin{definition}[Richness Ratio]\label{prop:choose-alpha}
Suppose the clairvoyant chooses the optimal strategy from \(\nature\) (which \emph{may depend on the bid history}, \(\hist\)) such that \(\Lambda_{\nature, \optufclass{\numbid}} \leq \lambda\), and the bound is tight, i.e., there exist a bid history \(\hist\), a valuation curve \(\v\), and \(\overline{\delta} > 0\) such that \(\Lambda_{\nature, \optufclass{\numbid}}(\hist, \v) \geq \lambda - \delta\) for any \(\delta \in (0, \overline{\delta}]\). Then, the richness ratio of $\nature$ is \(\alpha = \sfrac{1}{\lambda}\).

\end{definition}
The key idea is that if, in the worst-case scenario for the offline problem, the optimal safe strategy achieves no more than a \(\frac{1}{\lambda}\)-fraction of the value obtained by the optimal bidding strategy in \(\nature\), then the learner can achieve at most a \(\frac{1}{\lambda}\)-fraction of that value in the online setting as well. In the subsequent sections, we consider different classes of bidding strategies for the clairvoyant and compute its richness ratio $\alpha$.
An immediate corollary of \cref{prop:choose-alpha} is:
\begin{corollary}\label{thm:alpha-UB}
  Suppose the clairvoyant chooses the optimal strategy from the class $\nature$ such that $\Lambda_{\nature, \optufclass{\numbid}}\leq \lambda$ and this bound is tight implying the richness ratio of $\nature$ is $\alpha=\frac{1}{\lambda}$. Then, \cref{alg:weight-pushing} obtains $\alpha$-$\textsf{REG}_{\nature}\le O(\maxbid\sqrt{\numbid T\log \maxbid})$ in the full information setting.
\end{corollary}
\begin{proof}
    As $\Lambda_{\nature, \optufclass{\numbid}}\leq \lambda$, 
\begin{align*}
    \alpha\text{-}\textsf{REG}_{\nature}&=\frac{1}{\lambda}\cdot\max_{\ibid\in\nature}\sum_{t=1}^T\val(\ibid; \otherbid{t}) - \sum_{t=1}^T\E[\val(\ibid^t; \otherbid{t})]\\
    &\leq \max_{\ibid\in\optufclass{\numbid}}\sum_{t=1}^T\val(\ibid; \otherbid{t}) - \sum_{t=1}^T\E[\val(\ibid^t; \otherbid{t})]=\textsf{REG}, 
\end{align*}
where the first equality holds as $\alpha=\frac{1}{\lambda}$. By \cref{thm:full-info}, we get the stated regret upper bound.
\end{proof}

Recall that we defined richness of a class of bidding strategies along two dimensions: (i) the safety of strategies (safe vs. only RoI-feasible), and (ii) expressiveness, i.e., the number of bid-quantity pairs allowed. In the subsequent sections, we consider classes of strategies that are richer along either one or both of the said dimensions and compute the corresponding richness ratios, $\alpha$.

\subsection{The Class of RoI-feasible Strategies with $\leq\numbid$ Bid-Quantity Pairs }
First, we consider the clairvoyant class \(\nature\) to be the class of strategies containing at most \(\numbid\) bid-quantity pairs, which are RoI-feasible only for the given bid history $\hist$, rather than for every possible sequence of competing bids as in the case of safe strategies. We denote this class by $\feasclass{\numbid}$. Then,

\begin{theorem}\label{thm:Price_universal}
    For any $\numbid\in\N$, $\Lambda_{\feasclass{\numbid}, \optufclass{\numbid}}\leq2$.
    Furthermore, there exist a bid history and valuation curve such that for any $\delta\in(0, \frac{1}{2}]$, $\Lambda_{\feasclass{\numbid}, \optufclass{\numbid}}(\hist, \v)\geq 2-\delta$. Thus, the richness ratio of the class $\feasclass{\numbid}$ is $\alpha=1/2$.
\end{theorem}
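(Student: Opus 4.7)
My proof plan for $\Lambda_{\feasclass{\numbid}, \optufclass{\numbid}} \leq 2$ has two parts: (a) the upper bound, and (b) a matching construction demonstrating tightness up to any $\delta \in (0, 1/2]$.

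\textbf{Upper bound.} The plan is to fix any RoI-feasible $\ibid^{\mathsf{OPT}} = \langle (b_j, q_j)\rangle_{j=1}^{\numbid} \in \feasclass{\numbid}$, and in each round $t$ let $x^t = Q_{j^\star(t)}$ be the units won at clearing price $p^t$; the per-round RoI constraint yields the key inequality $w_{x^t} \geq p^t$. The goal is to exhibit a safe strategy in $\optufclass{\numbid}$ whose total value is at least $\bigl(\sum_t \val(\ibid^{\mathsf{OPT}}; \otherbid{t})\bigr)/2$. The natural candidate is the ``mirror'' safe strategy $\ibid^{\mathrm{mirror}} = \langle (w_{Q_j}, q_j)\rangle_{j=1}^{\numbid}$, which lies in $\optoneufclass{\numbid} \subseteq \optufclass{\numbid}$ by \cref{thm:opt-bid} (after merging any identical-bid pairs that arise from flat regions of $\mathbf{w}$). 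One may additionally consider a 1-uniform companion $\ibid^{\mathrm{aux}} = (w_{Q_{j_0}}, Q_{j_0})$ for a globally chosen index $j_0$, designed to compensate in rounds where $\ibid^{\mathrm{mirror}}$ falls short. Combining these two safe strategies via a max-of-two inequality yields a safe strategy attaining at least half of $\ibid^{\mathsf{OPT}}$'s aggregate value, giving $\Lambda \leq 2$.

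\textbf{Tightness.} The plan for the lower bound is to construct a valuation curve $\v$ (e.g., with geometrically or sharply decreasing entries) and a bid history $\hist = [\otherbid{t}]_{t \in [T]}$ partitioned into $\Theta(1/\delta)$ regimes of rounds. Within each regime, the competing bids are chosen so that (i) a specific non-safe RoI-feasible $\ibid^{\mathsf{OPT}}$ wins a fixed number of units in every round of that regime, exploiting rounds with low clearing prices to amortize higher prices in other rounds, and (ii) every safe strategy in $\optufclass{\numbid}$, constrained by $b_\ell \leq w_{Q_\ell}$, is forced to either miss winning in a constant fraction of regimes or to win strictly fewer units in each round. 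Tuning the regime sizes and competing-bid magnitudes drives the ratio to $2-\delta$. Since $\alpha = 1/2$ is independent of $\numbid$, the construction can plausibly be carried out for $\numbid = 1$ and lifted to larger $\numbid$ by padding with dummy bid-quantity pairs.

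\textbf{Main obstacles.} On the lower-bound side, the principal difficulty is simultaneously engineering $\v$ and $\hist$ so that (i) $\ibid^{\mathsf{OPT}}$ remains RoI-feasible in \emph{every} single round of $\hist$, (ii) every safe strategy in the exponentially large class $\optufclass{\numbid}$ is uniformly upper bounded, and (iii) the ratio approaches $2$ as $\delta \to 0$. Since $\optoneufclass{\numbid}$ has the nested structure shown in \cref{fig:nested} and contains $O(\maxbid^{\numbid})$ strategies, proving a uniform upper bound over this class against the specific history requires careful case analysis, likely partitioning safe strategies by their largest quantity $Q_{\numbid}$ and bounding each class separately. On the upper-bound side, the subtlety lies in correctly pairing the mirror and auxiliary safe strategies so that their per-round values collectively dominate $\ibid^{\mathsf{OPT}}$'s; this requires tracking how OPT's overbids redistribute value across rounds while preserving RoI feasibility.
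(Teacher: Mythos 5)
Your upper-bound plan contains the right high-level shape (reduce to a small family of safe strategies and take a max), but it is missing the one idea that makes the argument go through. Write $\optbid{\numbid}=\langle(b_1^*,q_1^*),\dots,(b_\numbid^*,q_\numbid^*)\rangle$ and let $T_j$ be the rounds where $b_j^*$ is the least winning bid. In a round $t\in T_j$ where $\optbid{\numbid}$ wins $r_t^*<Q_j^*$ units, RoI feasibility only forces $b_j^*\le w_{r_t^*}$, and since $w$ is non-increasing this allows $b_j^*>w_{Q_j^*}$. Hence your mirror strategy, which bids $w_{Q_j^*}$, can lose \emph{all} of those rounds, and the value $\sum_j V_{j,0}|T_{j,0}|$ it forfeits can be up to half of OPT's total. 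The paper recovers this value by introducing, for each $j$, the quantity $\widehat{Q}_j$ (the largest allocation $r_t^*$ strictly below $Q_j^*$ over $t\in T_j$) and proving an allocation lemma: the 1-uniform safe bid $(w_{r_t^*},r_t^*)$ wins at least $r_{t'}^*$ units in every round $t'\in T_j$ with $r_{t'}^*\le r_t^*$, because $w_{r_t^*}\ge b_j^*\ge\ordotherbid{t'}{r_{t'}^*}$. This yields a restricted family of safe strategies with $Q_j\in\{Q_j^*,\widehat{Q}_j\}$ chosen independently for each $j$, combined through the decomposition $\val(\ibid;\otherbid{})=\max_\l\val((w_{Q_\l},Q_\l);\otherbid{})$ and an exchange argument over the optimal choice of indices. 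Your single 1-uniform companion $(w_{Q_{j_0}},Q_{j_0})$ cannot substitute for this: the forfeited value $V_{j,0}|T_{j,0}|$ is spread across all $j\in[\numbid]$, and recovering it simultaneously requires the full $\numbid$-uniform strategy built on $\{\widehat{Q}_j\}_{j\in[\numbid]}$ (a max of that strategy with the all-$Q_j^*$ strategy does suffice for the plain factor 2, since their values sum to at least OPT's).

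On tightness, the claim that the $\numbid=1$ construction can be ``lifted to larger $\numbid$ by padding with dummy bid-quantity pairs'' is wrong. Enlarging $\numbid$ enlarges the \emph{denominator} class $\optufclass{\numbid}$: on the two-regime instance that witnesses $2-\delta$ for $\numbid=1$ (many high-competition rounds where only one unit is winnable, plus one low-competition round), the 2-uniform safe strategy $\langle(w_1,1),(w_\maxbid,\maxbid-1)\rangle$ already captures essentially all of OPT's value, collapsing the ratio to near 1. This is why the paper's construction for $\numbid\ge2$ uses $2\numbid$ partitions of exponentially varying sizes with $T=\maxbid=N^{2\numbid-1}$ and $N=O(1/\delta)$ — an instance of size $\Theta(1/\delta^{2\numbid-1})$, not $\Theta(1/\delta)$ — engineered so that every safe strategy trades off bid level against demand across the partitions. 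Relatedly, a sharply decreasing valuation curve works against you: the residual term in the upper bound is controlled by $\max_j\psum{\widehat{Q}_j}/\psum{Q_j^*}$, so driving the ratio to 2 requires a nearly flat curve (the paper uses $\v=[1,v,\dots,v]$ with $v=1-O(\delta)$) together with $\widehat{Q}_j\ll Q_j^*$.
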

We present the proof of \cref{thm:Price_universal} in \cref{apx:thm:Price_universal}. {\cref{thm:Price_universal} implies that restricting the learner to safe bidding strategies has a bounded cost and does not lead to an arbitrary loss in the obtained value as the upper bound is \textit{independent of $\numbid$}. Building on this, \cref{thm:alpha-UB} shows that  \cref{alg:weight-pushing} results in $\frac{1}{2}$-approximate sublinear regret in the online setting when the safe strategies compete against $\feasclass{\numbid}$. Additionally, the factor-of-two loss  represents a worst-case scenario, occurring in a highly non-trivial setting (see \cref{apx:LB-2-m-gen}). In practice, we expect these strategies to perform near-optimally, a claim further supported by the numerical simulations presented in \cref{sec:sims}.}

\subsection{Classes of Strategies with $\leq \numbid'$ Bid-Quantity Pairs For $\numbid'\geq \numbid$}
We now consider the setting where the clairvoyant selects the optimal strategy from the class of safe strategies with at most $\numbid’$ bid-quantity pairs, where $\numbid’ \geq \numbid$. This strategy class, $\optufclass{\numbid’}$, is richer than the bidder’s class $\optufclass{\numbid}$, although, along the expressiveness dimension compared to the class $\feasclass{\numbid}$ discussed in the previous section, which is richer along the safety dimension. Then,

\begin{theorem}\label{thm:m-mbar}
    For any $\numbid, \numbid'\in\N$ such that $\numbid'\geq \numbid$, $\Lambda_{\optufclass{\numbid'}, \optufclass{\numbid}}\leq\frac{\numbid'}{\numbid}$. Additionally, there exist $\hist$ and $\v$ such that for any $\delta\in(0, \frac{1}{2}]$, 
    \begin{align*}
        \Lambda_{\optufclass{\numbid'}, \optufclass{\numbid}}(\hist, \v)\geq\frac{\numbid'}{\numbid}-\delta\,.
    \end{align*}
    Thus, the richness ratio of the class $\optufclass{\numbid'}$ is $\alpha=\numbid/\numbid'$.
\end{theorem}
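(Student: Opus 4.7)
The plan is to establish the upper bound $\Lambda_{\optufclass{\numbid'}, \optufclass{\numbid}} \leq \numbid'/\numbid$ by adapting two of the three key ideas from the proof sketch of \cref{thm:Price_universal}---the value-decomposition identity for safe strategies and the partition of $[T]$ induced by the clairvoyant's optimum---and then to construct a matching lower-bound instance up to $\delta$. Fix any bid history $\hist = [\otherbid{t}]_{t \in [T]}$ and valuation curve $\v$, and let
\begin{align*}
\ibid^{\mathsf{OPT}} = \langle (w_{Q_1^*}, q_1^*), \ldots, (w_{Q_{\numbid'}^*}, q_{\numbid'}^*)\rangle \in \optufclass{\numbid'}
\end{align*}
attain the clairvoyant's benchmark. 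By the decomposition identity for safe strategies, $\val(\ibid^{\mathsf{OPT}}; \otherbid{t}) = \max_{j \in [\numbid']} \val((w_{Q_j^*}, Q_j^*); \otherbid{t})$ for every $t$. Defining $T_j \subseteq [T]$ as the rounds where index $j$ attains this maximum and $S_j := \sum_{t \in T_j} \val((w_{Q_j^*}, Q_j^*); \otherbid{t})$, the $T_j$'s partition $[T]$ and the clairvoyant's total value equals $\sum_{j=1}^{\numbid'} S_j$.

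The next step, which replaces the restricted-class and allocation-lemma machinery of \cref{thm:Price_universal}, is a simple pigeonhole selection: take the $\numbid$ indices $j_1 < \cdots < j_\numbid$ in $[\numbid']$ with the largest $S_j$'s. Since the $Q_j^*$'s are strictly increasing along $\ibid^{\mathsf{OPT}}$, the sub-sequence $Q_{j_1}^* < \cdots < Q_{j_\numbid}^*$ paired with bids $w_{Q_{j_\ell}^*}$ defines a valid $\numbid$-uniform safe strategy $\ibid' \in \optoneufclass{\numbid} \subseteq \optufclass{\numbid}$. Applying the decomposition identity to $\ibid'$ gives $\val(\ibid'; \otherbid{t}) \geq \val((w_{Q_{j_\ell}^*}, Q_{j_\ell}^*); \otherbid{t})$ for each $\ell \in [\numbid]$, so summing over $t \in \bigcup_{\ell} T_{j_\ell}$ and using non-negativity of values yields
\begin{align*}
\sum_{t=1}^T \val(\ibid'; \otherbid{t}) \;\geq\; \sum_{\ell=1}^\numbid S_{j_\ell} \;\geq\; \frac{\numbid}{\numbid'} \sum_{j=1}^{\numbid'} S_j,
\end{align*}
where the last inequality uses that the largest $\numbid$ of $\numbid'$ non-negative numbers sum to at least $\numbid/\numbid'$ times the full total. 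Maximizing over $\hist$ and $\v$ delivers $\Lambda_{\optufclass{\numbid'}, \optufclass{\numbid}} \leq \numbid'/\numbid$, and plugging into \cref{prop:choose-alpha} gives $\alpha = \numbid/\numbid'$.

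For the matching lower bound, the plan is to engineer $\numbid'$ essentially disjoint ``opportunities'' in the bid history---one per intended bid-quantity pair of the clairvoyant---so that the clairvoyant's $\numbid'$-uniform safe strategy harvests (almost) equal value from each opportunity, while any $\numbid$-uniform safe strategy must skip $\numbid' - \numbid$ of them. Concretely, I would pick a valuation curve with $\numbid'$ distinguished cumulative thresholds $Q_1^* < \cdots < Q_{\numbid'}^*$ and partition the $T$ rounds into $\numbid'$ roughly equally-sized groups, setting competing bids in group $j$ so that only the 1-uniform primitive $(w_{Q_j^*}, Q_j^*)$ wins a meaningful number of units there. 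The main obstacle, as in the constructions underlying \cref{thm:Price_universal} (see \crefrange{apx:LB-2-m-1}{apx:LB-2m}), is ruling out \emph{leakage}: an $\numbid$-uniform safe strategy whose $Q_\ell$'s differ from $\{Q_{j_1}^*, \ldots, Q_{j_\numbid}^*\}$ could in principle steal value from an uncovered group through an intermediate bid level. Preventing this requires choosing the competing bids in each group to sit in a narrow band around $w_{Q_j^*}$, so that any nested bid level other than $w_{Q_j^*}$ either loses outright or wins a vanishing fraction---the same delicate calibration carried out in the $\numbid' = \numbid$ case, suitably lifted to accommodate an arbitrary gap $\numbid' \geq \numbid$.
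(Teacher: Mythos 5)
Your upper-bound argument is correct and is essentially the paper's proof in a lighter notation: the paper also applies the sum-to-max decomposition to the clairvoyant's optimum, partitions $[T]$ into the sets $T_j$ where $(w_{Q_j^*},Q_j^*)$ is the binding component, extracts a size-$\numbid$ sub-sequence of the nested thresholds (formalized as \cref{alg:construct-bidclass} together with \cref{lem:new-class-1}, which verifies the resulting strategy is safe), and closes with the same averaging step $\max_{|\mathcal{S}|=\numbid}\sum_{j\in\mathcal{S}}S_j\geq\frac{\numbid}{\numbid'}\sum_j S_j$. Your pigeonhole selection of the top $\numbid$ indices is just the explicit maximizer of that expression, so there is no substantive difference.

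The lower bound, however, has a genuine gap. Partitioning the $T$ rounds into $\numbid'$ \emph{roughly equally-sized} groups cannot produce a tight instance. The whole construction rests on a trade-off between the number of rounds in which a 1-uniform primitive $(w_q,q)$ can win anything (which shrinks as $q$ grows, since $w_q$ decreases) and the number of units it wins per such round (which grows with $q$). For every group to contribute an equal share to the clairvoyant while no single primitive captures more than one group's worth, you need (number of rounds in group $j$) $\times$ (units obtainable per round in group $j$) to be balanced across $j$ — and since the per-round unit caps $Q_j^*$ must be strictly increasing, the group sizes must vary inversely, i.e.\ geometrically. This is exactly what the paper does in \cref{apx:LB-m}: with a near-flat valuation ($v=1-O(\delta)$), partition $j$ has $N^{j-1}-N^{j-2}$ rounds but permits only $N^{\numbid'-j}$ units per round, so each partition is worth $\approx N^{\numbid'-1}$ while any $(w_q,q)$ harvests only about one partition. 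With equal group sizes you would either need the cumulative values $W_{Q_j^*}$ to be nearly constant across $j$ — in which case the single primitive $(w_{Q_1^*},Q_1^*)$, having the highest bid, clears the price in every group and captures almost the entire benchmark — or the groups contribute unequal shares and the ratio degrades. A second, minor omission: rather than computing the optimal $\numbid$-uniform safe value on the instance directly (a nontrivial DP), the paper bounds it by $\numbid\cdot\safevalue{1}$ using the already-established $\Lambda_{\optufclass{\numbid},\optufclass{1}}\leq\numbid$, and then only has to show that every 1-uniform safe strategy gets at most $N^{\numbid'-1}$; your sketch would need some analogous device to control all $O(\maxbid^\numbid)$ competing safe strategies.
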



Finally, suppose the clairvoyant can now select the optimal strategy from \(\feasclass{\numbid'}\), the class of bidding strategies with at most \(\numbid'\) bid-quantity pairs (\(\numbid' \geq \numbid\)) that are only RoI-feasible for the bid history \(\hist = [\otherbid{t}]_{t \in [T]}\). This strategy class is richer than the bidder's strategy class along both dimensions. Then,
\begin{theorem}\label{thm:mbar-non-safe}
For any $\numbid, \numbid'\in\N$ such that $\numbid'\geq \numbid$, $\Lambda_{\feasclass{\numbid'}, \optufclass{\numbid}}\leq\frac{2\numbid'}{\numbid}$. Moreover, there exist $\hist$ and $\v$ such that for any $\delta\in(0, \frac{1}{2}]$, we have
\begin{align*}
    \Lambda_{\feasclass{\numbid'}, \optufclass{\numbid}}(\hist, \v)\geq\frac{2\numbid'}{\numbid}-\delta\,.
\end{align*}
Thus, the richness ratio of the class $\feasclass{\numbid'}$ is $\alpha=\numbid/2\numbid'$.
\end{theorem}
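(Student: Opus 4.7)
The plan is to establish the upper bound by a multiplicative decomposition that collapses onto Theorems \ref{thm:Price_universal} and \ref{thm:m-mbar}, and to prove tightness by composing their lower-bound instances into a single construction.

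For the upper bound, I will fix any bid history $\hist$ and valuation curve $\v$, then multiply and divide the ratio in \eqref{eq:def-lambda} by $\max_{\ibid \in \optufclass{\numbid'}} \sum_{t=1}^T \val(\ibid; \otherbid{t})$ to write
\[
\Lambda_{\feasclass{\numbid'}, \optufclass{\numbid}}(\hist, \v) = \Lambda_{\feasclass{\numbid'}, \optufclass{\numbid'}}(\hist, \v) \cdot \Lambda_{\optufclass{\numbid'}, \optufclass{\numbid}}(\hist, \v).
\]
The first factor is at most $2$ by Theorem \ref{thm:Price_universal} applied with $\numbid'$ in place of $\numbid$, and the second factor is at most $\numbid'/\numbid$ by Theorem \ref{thm:m-mbar}. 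Since both bounds hold uniformly over $(\hist, \v)$, taking the supremum yields $\Lambda_{\feasclass{\numbid'}, \optufclass{\numbid}} \leq 2\numbid'/\numbid$.

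For the matching lower bound, I will build a single instance that simultaneously activates both sources of loss. The instance behind Theorem \ref{thm:Price_universal} uses $2\numbid$ geometrically scaled partitions with a `big'/`small' competing-bid structure to force the factor-$2$ gap between feasible and safe strategies, while the instance behind Theorem \ref{thm:m-mbar} spreads useful allocations across $\numbid'$ disjoint ``plateaus'' to force the factor-$\numbid'/\numbid$ gap between $\optufclass{\numbid'}$ and $\optufclass{\numbid}$. My plan is to merge these by using $2\numbid'$ partitions of exponentially varying sizes on the order of $N^{2\numbid'-j}$ for $j \in [2\numbid']$ with $N = O(1/\delta)$, valuation $\v = [1, v, \dots, v]$ with $v = 1 - O(\delta)$, and per-partition competing bids chosen so that a well-tuned $\numbid'$-uniform non-safe strategy extracts nearly full per-partition value across all $2\numbid'$ partitions, whereas any strategy in $\optufclass{\numbid}$ is, by Theorem \ref{thm:opt-bid}, constrained to $\numbid$ nested bid levels $b_j = w_{Q_j}$ and can therefore match the clairvoyant on at most $\numbid$ partitions while losing roughly half of the value on each due to the safety requirement.

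The main obstacle will be the lower-bound construction: the partition sizes, the number of `small' bids per partition, and their exact levels must be co-calibrated so that both the factor-$2$ safety loss and the factor-$\numbid'/\numbid$ expressiveness loss materialize at the same time within a single instance. In particular, the exponent in the partition sizes must absorb both sources of geometric decay, and one must verify, by a case analysis over which $\numbid$ of the $2\numbid'$ partitions a candidate safe strategy effectively serves, that no safe strategy can recover more than a $\tfrac{\numbid}{2\numbid'}$-fraction of the clairvoyant's value up to $O(\delta)$ terms. This mirrors the delicate reasoning of Appendix \ref{apx:LB-2-m-gen} but with $2\numbid'$ rather than $2\numbid$ partitions, which I expect to be the heaviest bookkeeping step. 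Once the instance is specified, the value-ratio computation reduces to summing geometric-like series across partitions and passing $\delta \to 0$.
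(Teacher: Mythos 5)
Your proposal is correct and follows the same high-level strategy as the paper: a two-factor multiplicative decomposition for the upper bound, and for the lower bound the same $2\numbid'$-partition instance obtained by rerunning the construction of \cref{apx:LB-2-m-gen} with $\numbid'$ in place of $\numbid$. The one substantive difference is the choice of intermediate class in the factorization. You write $\Lambda_{\feasclass{\numbid'}, \optufclass{\numbid}} = \Lambda_{\feasclass{\numbid'}, \optufclass{\numbid'}}\cdot\Lambda_{\optufclass{\numbid'}, \optufclass{\numbid}}$, passing through the safe class $\optufclass{\numbid'}$, whereas the paper passes through the feasible class $\feasclass{\numbid}$ and uses $\Lambda_{\feasclass{\numbid'},\feasclass{\numbid}}\cdot\Lambda_{\feasclass{\numbid}, \optufclass{\numbid}}$; the latter requires the strengthened statement $\Lambda_{\feasclass{\numbid'},\feasclass{\numbid}}\le\numbid'/\numbid$ (\cref{thm:m-mbar-strong}, proved only in the appendix for feasible rather than safe classes), while your route needs only the two theorems exactly as stated in the main text, which is marginally more self-contained. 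Both factorizations are valid pointwise in $(\hist,\v)$ and give $2\numbid'/\numbid$ after taking suprema.

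On the lower bound, one simplification you may be missing: you anticipate a case analysis over which $\numbid$ of the $2\numbid'$ partitions a candidate safe strategy in $\optufclass{\numbid}$ serves, which would indeed be heavy bookkeeping. The paper sidesteps this entirely by bounding $\max_{\ibid\in\optufclass{\numbid}}\sum_t\val(\ibid;\otherbid{t})\le\numbid\cdot\max_{\ibid\in\optufclass{1}}\sum_t\val(\ibid;\otherbid{t})$ via \cref{thm:m-mbar}, so that only $1$-uniform safe strategies need to be analyzed on the instance (and these are easy: the best one is $(1,1)$ with value $N^{2\numbid'-1}$). Adopting that reduction would make your tightness argument go through with no new case analysis beyond what \cref{apx:LB-2-m-gen} already contains.
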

By \cref{thm:alpha-UB}, for $\numbid'\geq \numbid$,  \cref{alg:weight-pushing} obtains $\frac{\numbid}{\numbid'}$-approximate~(resp. $\frac{\numbid}{2\numbid'}$-approximate) sublinear regret~(same as \cref{thm:full-info}) in the online setting when the clairvoyant draws the optimal strategy from the class of safe~(resp. RoI-feasible) strategies with at most $\numbid'$ bid-quantity pairs.

The richness ratio in \cref{thm:m-mbar} and \cref{thm:mbar-non-safe} can theoretically become arbitrarily small as \(\numbid'\) increases, \textit{for a fixed \(\numbid\)}. However, the bidder retains the flexibility to choose \(\numbid\): increasing \(\numbid\), though, incurs higher space and time complexity for \cref{alg:weight-pushing}. Moreover, for any given pair \((\numbid, \numbid')\), the bounds are tight for carefully constructed non-trivial hard instances (see details in \crefrange{apx:LB-m}{apx:LB-2m}). In practice, as shown in \cref{sec:sims}, the empirical richness ratios are substantially better than the worst-case bounds, suggesting that such pathological scenarios are unlikely to arise in real-world settings.

We conclude the section with two key observations. First, when $\numbid = \numbid’$, \cref{thm:m-mbar} implies $\alpha = 1$, which corresponds to the setting where both the bidder and the clairvoyant follow the class of safe strategies with at most $\numbid$ bid-quantity pairs, the case considered in \cref{sec:learning-safe}. Similarly, when $\numbid = \numbid’$, \cref{thm:mbar-non-safe} yields $\alpha = \frac{1}{2}$, recovering the result of \cref{thm:Price_universal}. Second, recall that we defined the richness of a strategy class along two dimensions: safety and expressiveness. The results in this section imply that when a benchmark strategy class is richer along both dimensions~(\cref{thm:mbar-non-safe}), its overall richness ratio is the product of the richness ratios along each individual dimension~(\cref{thm:Price_universal} and \cref{thm:m-mbar}).

\subsection{Computing Empirical Richness Ratio: A Case Study}\label{sec:sims}
As discussed in the previous section, constructing worst-case instances for showing tight bounds are significantly intricate that may not reflect real world settings. This motivates us to empirically estimate how well do the class of safe strategies approximate the richer strategy classes in practice. To this end, with slight abuse of notation, we define the \textit{empirical richness ratio} as 
\begin{align*}
    \alpha_{\nature, \optufclass{\numbid}}(\hist, \v)= \frac{1}{\Lambda_{\nature, \optufclass{\numbid}}(\hist, \v)},
\end{align*}
where $\Lambda_{\nature, \optufclass{\numbid}}(\hist, \v)$ is defined in \cref{eq:def-lambda}. We assess the tightness of the theoretical bounds in practice by computing $\alpha_{\nature, \optufclass{\numbid}}(\hist, \v)$ using the EU ETS emission permit auction data from 2022 and 2023. However, only aggregate statistics of the submitted bids is publicly available for privacy reasons. Hence, we synthesize individual level bid data from these available statistics. The exact procedure to reconstruct the bid data is presented in \cref{apx:data}.




\begin{figure}[ht]
\begin{minipage}[b]{0.45\linewidth}
\centering
\includegraphics[width=\textwidth]{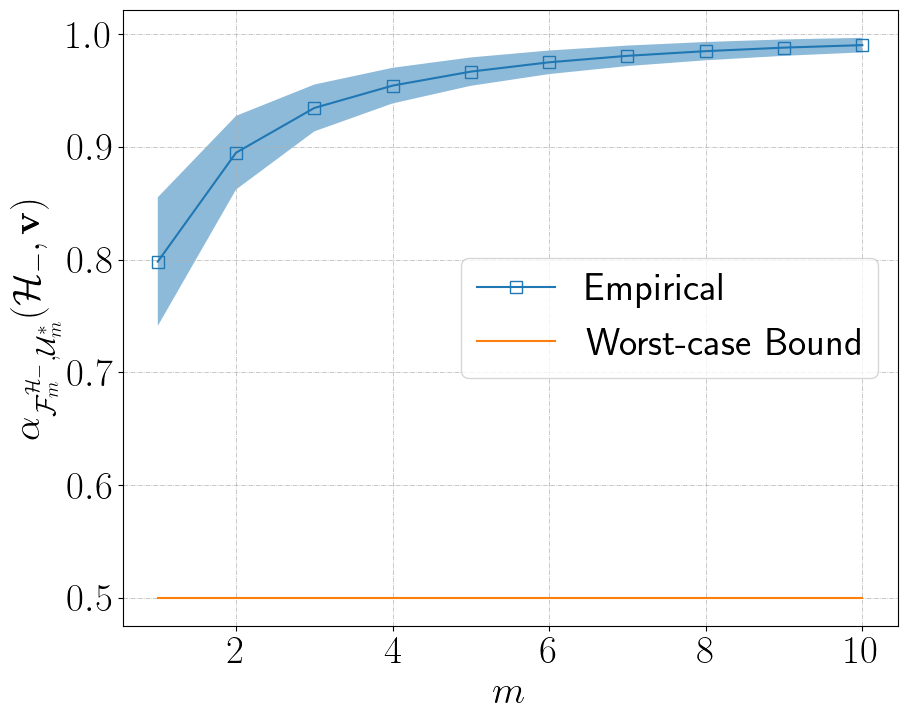}
\label{fig:figure1}
\end{minipage}
\hspace{0.5cm}
\begin{minipage}[b]{0.45\linewidth}
\centering
\includegraphics[width=\textwidth]{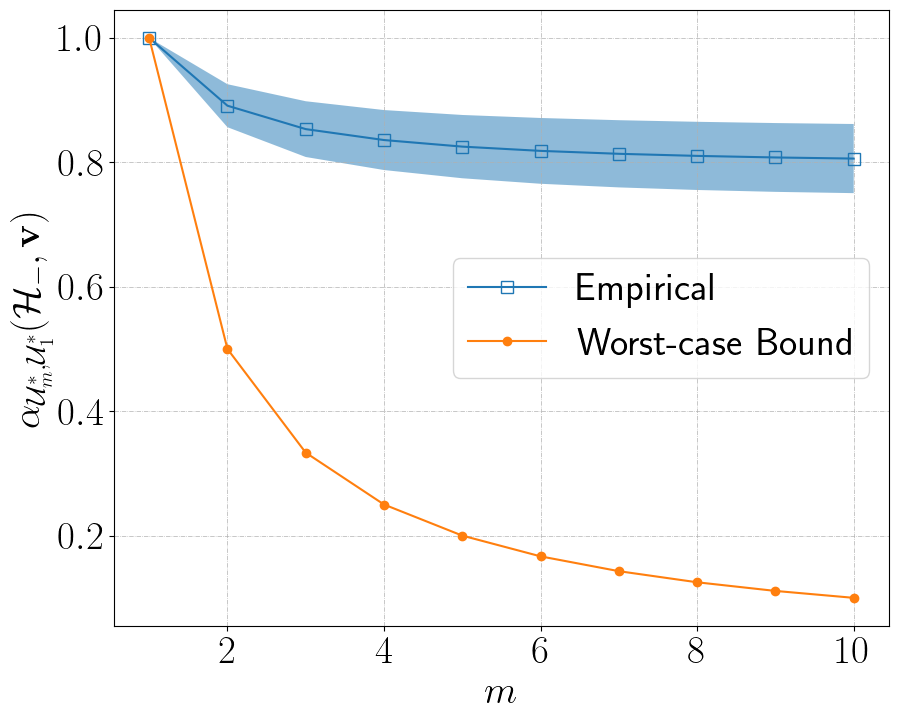}
\label{fig:figure2}
\end{minipage}
\vspace{-0.5cm}
\caption{The left~(resp. right) figure refers to (the lower bound on) $\alpha_{\feasclass{\numbid},\optufclass{\numbid}}(\hist, \v)$~(resp. $\alpha_{\optufclass{\numbid}, \optufclass{1}}(\hist, \v)$) as a function of $\numbid$. The shaded area corresponds to one standard deviation.}
\label{fig:sims}
\end{figure}
\textbf{Estimating $\alpha_{\feasclass{\numbid},\optufclass{\numbid}}(\hist, \v)$.} Since computing the exact value attained by the optimal strategy in $\feasclass{\numbid}$ is non-trivial, we instead compute a uniform upper bound that is \textit{independent of $\numbid$} (see \cref{apx:ilp}). The left plot of \cref{fig:sims} shows that the empirical lower bound of $\alpha_{\feasclass{\numbid},\optufclass{\numbid}}(\hist, \v)$ is substantially better than the theoretical worst-case bound of $1/2$. In particular, for $\numbid \geq 4$, even the lower bound exceeds $0.95$, indicating that safe bidding strategies are nearly optimal in practice.

\textbf{Estimating $\alpha_{\optufclass{\numbid},\optufclass{1}}(\hist, \v)$.} As shown in the right plot of \cref{fig:sims}, the empirical values of $\alpha_{\optufclass{\numbid},\optufclass{1}}(\hist, \v)$ significantly exceed the worst-case theoretical bound of $1/\numbid$ (\cref{thm:m-mbar}). The marginal gain from increasing the number of bid-quantity pairs plateaus beyond $\numbid = 4$; even at $\numbid = 10$, the ratio of the value achieved by the optimal 1-uniform safe strategy to that of the optimal safe strategy with at most 10 bid-quantity pairs is $\sim0.8$, in contrast to the worst-case bound of $0.1$. 

Both these results suggest that empirical richness ratios are substantially better than the worst-case guarantees, implying that safe strategies with small $\numbid$ offer near-optimal performance in realistic settings.

\subsection{Proof of \cref{thm:Price_universal}}\label{apx:thm:Price_universal}

We present the proof of \cref{thm:Price_universal} below. While \cref{thm:m-mbar} follows a similar high-level approach, it requires different techniques, which we detail in \cref{apx:thm:m-mbar}. Finally, \cref{thm:mbar-non-safe} follows directly from the proofs of \cref{thm:Price_universal} and \cref{thm:m-mbar}; see \cref{apx:thm:mbar-non-safe} for details.


To prove that $\Lambda_{\feasclass{\numbid}, \optufclass{\numbid}}\leq2$ for any $\numbid\in\N$, consider the metric presented in \cref{eq:def-lambda} for any bid history, $\hist=[\otherbid{t}]_{t\in[T]}$,
\begin{align}\label{eq:pouf}
    \Lambda_{\feasclass{\numbid}, \optufclass{\numbid}}(\hist, \v) = \frac{\max_{\ibid\in\feasclass{\numbid}}\sum_{t=1}^T\val(\ibid; \otherbid{t})}{\max_{\ibid\in\optufclass{\numbid}}\sum_{t=1}^T\val(\ibid; \otherbid{t})}\,.
\end{align}
To prove the upper bound in \cref{thm:Price_universal}, we show that for any $\hist$ and $\v$, $\Lambda_{\feasclass{\numbid}, \optufclass{\numbid}}(\hist, \v)\leq 2$. Maximizing over all bid histories and valuation vectors, we get the desired result.

Let $\optbid{\numbid}=\langle(b_1^*, q_1^*), \dots, (b_\numbid^*, q_\numbid^*)\rangle$ where $\optbid{\numbid}:=\argmax_{\ibid\in\feasclass{\numbid}} \sum_{t=1}^T\val(\ibid; \otherbid{t})$.\footnote{We assumed $\optbid{\numbid}=\langle(b_1^*, q_1^*), \dots, (b_\numbid^*, q_\numbid^*)\rangle$ without loss of generality. The proof also follows if we considered $\optbid{\numbid}=\langle(b_1^*, q_1^*), \dots, (b_k^*, q_k^*)\rangle$ for some $k\in[\numbid]$ instead.} Define  $Q_\l^*=\sum_{j \leq \l}q_j^*$. Suppose $\optbid{\numbid}$ is allocated $r_t^*$ units in round $t$. For any $j\in[\numbid]$, let $T_j$ be the set of rounds in which the least winning bid is $b_j^*$, i.e.,
\begin{align}
\label{eq:Tj}T_j= \big\{t\in [T]: Q_{j-1}^*<r_t^*\leq Q_j^* \big\}\,.
\end{align}
For any $j\in[\numbid]$, partition $T_j$ into $T_{j, 0}$ and $T_{j, 1}$ such that $T_{j, 0}$ is the set of rounds where the bidder gets \emph{strictly less than} $Q_j^*$ units and $T_{j, 1}$ is the set of rounds when they get \emph{exactly} $Q_j^*$ units:
\begin{align}\label{eq:Tj01}
T_{j, 0}= \{t\in T_j: r_t^*< Q_j^* \}, \qquad  T_{j, 1}= \{t\in T_j: r_t^* = Q_j^* \}\,.
\end{align}
For any $j\in [m]$, let 
\begin{align*}
    \widehat{Q}_j=\begin{cases}
       \max\{r_t^*: t\in T_{j, 0}\}, ~&\text{ if } T_{j, 0}\neq \emptyset\\
       Q_j^*~&\text{ if } T_{j, 0}= \emptyset\,.
    \end{cases}
\end{align*}
In other words, $\widehat{Q}_j$ is the $2^{nd}$ highest number of units allocated to $\optbid{\numbid}$ over the rounds in $T_j$~(or the highest if $T_{j, 0}=\emptyset$ or $T_{j, 1}=\emptyset$). 

\textbf{Step 1. Constructing a Restricted Class of Safe Bidding Strategies.} 
Here, as a crucial part of the proof, we construct a restricted class of safe bidding strategies, denoted by $\optufclass{\numbid}(\hist)$, where $\optufclass{\numbid}(\hist) \subset \optufclass{\numbid}$. This construction serves two purposes. First, it reduces the search space for the optimal safe bidding strategy. Second, and more importantly, it enables us to establish a connection between the optimal safe bidding strategy and $\optbid{\numbid}$.

In defining the restricted class, we use the quantities, $\{\widehat{Q}_j, Q_j^*\}_{j\in [m]}$ as follows:
\begin{align}\label{eq:restricted-safe}
 \optufclass{\numbid}(\hist) = \Big\{\ibid=\langle(b_1, q_1), \dots, (b_\numbid, q_\numbid)\rangle: b_\l =  w_{Q_\l} ,~~ Q_\l\in\{\widehat{Q}_\l, Q_\l^*\}, \quad \forall \l\in[\numbid] \Big\}\,.
\end{align}
Recall that any strategy in $\optufclass{\numbid}$ with $\numbid$ bid-quantity pairs takes the form of $\ibid=\langle(b_1, q_1), \dots, (b_\numbid, q_\numbid)\rangle: b_\l = w_{Q_\l}, \forall \l\in[\numbid]$. The strategies in $\optufclass{\numbid}(\hist)$ also have the same structure, but as an important difference, for any $\ell\in [\numbid]$, we enforce $Q_\l\in\{\widehat{Q}_\l, Q_\l^*\}$. {Observe that for any $\l\in[\numbid-1]$, 
\begin{align*}
    Q_{\l}\leq Q_\l^*< \widehat{Q}_{\l+1}\leq Q_{\l+1},
\end{align*}
where the first and third inequalities follow directly from the definition of $Q_\l$ and the second one follows from the definition of $\widehat{Q}_{\l+1}$. So, $Q_\l$'s are distinct and ordered.} Further observe that the number of bidding strategies in $ \optufclass{\numbid}(\hist) $ is $O(2^{\numbid})$; significantly smaller than the number of strategies in $\optufclass{\numbid}$, which is $O(\maxbid^\numbid)$. 

With the definition of the restricted class of safe bidding strategies, we have 
\begin{align}\label{eq:pouf-1}
    \Lambda_{\feasclass{\numbid}, \optufclass{\numbid}}(\hist, \v) =\frac{\max_{\ibid\in\feasclass{\numbid}}\sum_{t=1}^T\val(\ibid; \otherbid{t})}{\max_{\ibid\in \optufclass{\numbid}}\sum_{t=1}^T\val(\ibid; \otherbid{t})}\leq \frac{\max_{\ibid\in\feasclass{\numbid}}\sum_{t=1}^T\val(\ibid; \otherbid{t})}{\max_{\ibid\in\optufclass{\numbid}(\hist)}\sum_{t=1}^T\val(\ibid; \otherbid{t})}\,.
\end{align}

\textbf{Step 2. Value Decomposition for $\numbid$-uniform Strategies.}  We now present a crucial result that expresses the value obtained by a $\numbid$-uniform strategy as a function of the value obtained by $\numbid$ 1-uniform strategies. 
\begin{lemma} \label{lem:sum-to-max-equivalence-UF} Let $\ibid=\langle(w_{Q_1}, q_1),\dots, (w_{Q_\numbid}, q_\numbid)\rangle$ be a $\numbid$-uniform safe bidding strategy for some $\numbid\in\N$. Then, for any competing bids $\otherbid{}$,
    \begin{align*}  \val(\ibid;\otherbid{})=\max_{\l\in[\numbid]}\val((w_{Q_\l}, Q_\l);\otherbid{}), 
    \end{align*}
    where we recall that $Q_\l=\sum_{j=1}^\l q_j, \forall \l\in[\numbid]$.
\end{lemma}
A key consequence of \cref{lem:sum-to-max-equivalence-UF} is that knowing the value obtained by the strategies in $\optufclass{1}$ is sufficient to compute the value obtained \textit{any} $\numbid$-uniform safe bidding strategy. 

For any $\ibid\in\optufclass{\numbid}(\hist)$, where $\ibid=\langle(b_1, q_1), \dots, (b_\numbid, q_\numbid)\rangle: b_\l = w_{Q_\l}$, and $Q_{\l} \in \{Q_\l^*, \widehat Q_\l\}$  for any $\l\in[\numbid]$, partition the set $[\numbid]$ into $\mathcal{S}^*$ and $\widehat{\mathcal{S}}$ as follows:
\begin{align*}
    \mathcal{S}^*=\{j:Q_j=Q_j^*, j\in[\numbid]\} \quad\text{ and } \quad\widehat{\mathcal{S}}=\{j:Q_j=\widehat{Q}_j< Q_j^*, j\in[\numbid]\}\,.
\end{align*}

Hence, by \cref{lem:sum-to-max-equivalence-UF}, for any round $t\in T_j$, and $\ibid\in\optufclass{\numbid}(\hist)$, we get
\begin{align}
    \val(\ibid; \otherbid{t}) &= \max_{\ell \in[\numbid]}\val((w_{Q_{\l}}, Q_{\l}); \otherbid{t}) \geq \val((w_{Q_j}, Q_j); \otherbid{t})\,,\label{eq:sum-max-thm1-proof}\\
    \implies \sum_{t=1}^T \val(\ibid; \otherbid{t}) &\geq  \sum_{j\in \mathcal{S}^*}\sum_{t\in T_j} \val((w_{Q_j}, Q_j); \otherbid{t}) + \sum_{j\in \widehat{\mathcal{S}}}\sum_{t\in T_j} \val((w_{Q_j}, Q_j); \otherbid{t})\,.\label{eq:lem5-6-main}
\end{align}

\textbf{Step 3. Allocation Lower Bounds for 1-uniform Strategies.} Let $V_{j, k}$ be the time-average  value obtained by $\optbid{\numbid}$ in the set of rounds in $T_{j, k}$~(as defined in \cref{eq:Tj01}). Formally, $\forall j\in[\numbid], k\in\{0, 1\}$, 
\begin{align*}
    V_{j, k} = \inv{|T_{j, k}|}\sum_{t\in T_{j, k}} \val(\optbid{\numbid}, \otherbid{t})\,.
\end{align*}
Define $\psum{k}=\sum_{j=1}^k v_j$. Note that $V_{j, 1}=\psum{Q_j^*}$ because for any $t\in T_{j, 1}$, we have $r_t^*= Q_j^*$. 

For any $j\in[\numbid]$, we present the following lemma to establish lower bound on $\sum_{t\in T_j}\val((w_{Q_j}, Q_j); \otherbid{t})$. 

\begin{lemma} \label{lem:value_restricted}
Let $\ibid\in\optufclass{\numbid}(\hist)$, where $\ibid=\langle(b_1, q_1), \dots, (b_\numbid, q_\numbid)\rangle: b_j = w_{Q_j}$, and $Q_{j} \in \{Q_j^*, \widehat Q_j\}, \forall j\in[\numbid]$.  
Then, for any $j\in [\numbid]$,
\begin{itemize}
    \item[(a)] if $Q_j =  \widehat Q_j< Q_j^*$ (i.e., $j\in \widehat{\mathcal{S}}$), we have 
\begin{align}\label{eq:xj-is-qbar}
    \sum_{t\in T_j}\val((w_{Q_j}, Q_j); \otherbid{t})\geq V_{j, 0}|T_{j, 0}|+\psum{\widehat{Q}_j}|T_{j, 1}|\,.
\end{align}
\item[(b)] If $Q_j =  Q_j^*$ (i.e., $j\in  \mathcal{S}^*$), we have 
 \begin{align}\label{eq:xj-is-qstar}
    \sum_{t\in T_j}\val((w_{Q_j}, Q_j); \otherbid{t})\geq \psum{Q_j^*}|T_{j, 1}|\,.
\end{align}
\end{itemize}
\end{lemma}

Note that the right hand side of \cref{eq:lem5-6-main} depends only on the choice of the partitions $\mathcal{S}^*$ and $\widehat{\mathcal{S}}$. Substituting the lower bound from \cref{lem:value_restricted} in \cref{eq:lem5-6-main}, we establish that,
\begin{align}
    \Lambda_{\feasclass{\numbid}, \optufclass{\numbid}}(\hist, \v)&\stackrel{\eqref{eq:pouf-1}}{\leq} \frac{\max_{\ibid\in\feasclass{\numbid}}\sum_{t=1}^T\val(\ibid; \otherbid{t})}{\max_{\ibid\in\optufclass{\numbid}(\hist)}\sum_{t=1}^T\val(\ibid; \otherbid{t})}\nonumber\\
    &\leq \frac{\max_{\ibid\in\feasclass{\numbid}}\sum_{t=1}^T\val(\ibid; \otherbid{t})}{\max_{(\mathcal{S}^*, \widehat{\mathcal{S}})}\Big\{\sum_{j\in \mathcal{S}^*}\Big(\psum{Q_j^*}|T_{j, 1}|\Big)+\sum_{j\in\widehat{\mathcal{S}}}\Big(V_{j, 0}|T_{j, 0}| + \psum{\widehat{Q}_j}|T_{j, 1}|\Big)\Big\}}\,.\label{eq:pouf-helper-1}
\end{align}
We complete the proof by invoking the following result:
\begin{lemma}\label{lem:tedious-algebra}
   Let $(\mathcal{S}_0^*, \widehat{\mathcal{S}}_0)=\argmax_{(\mathcal{S}^*, \widehat{\mathcal{S}})}\Big\{\sum_{j\in \mathcal{S}^*}\Big(\psum{Q_j^*}|T_{j, 1}|\Big)+\sum_{j\in\widehat{\mathcal{S}}}\Big(V_{j, 0}|T_{j, 0}| + \psum{\widehat{Q}_j}|T_{j, 1}|\Big)\Big\}$. Then,
   \begin{align*}
       \frac{\max_{\ibid\in\feasclass{\numbid}}\sum_{t=1}^T\val(\ibid; \otherbid{t})}{\sum_{j\in \mathcal{S}_0^*}\Big(\psum{Q_j^*}|T_{j, 1}|\Big)+\sum_{j\in\widehat{\mathcal{S}}_0}\Big(V_{j, 0}|T_{j, 0}| + \psum{\widehat{Q}_j}|T_{j, 1}|\Big)} \leq 2\,.
   \end{align*}
\end{lemma}
From \cref{eq:pouf-helper-1} and \cref{lem:tedious-algebra}, we get $\Lambda_{\feasclass{\numbid}, \optufclass{\numbid}}(\hist, \v)\leq 2$ as desired.

\subsubsection{Tight Lower Bound for $\numbid=1$.} 
We complement the proof of the upper bound of \cref{thm:Price_universal} by constructing a tight lower bound instance for $\numbid = 1$. The case for general $\numbid \geq 2$ is deferred to \cref{apx:LB-2-m-gen}. While the key ideas behind the hard instance construction remain the same for both cases, the $\numbid \geq 2$ case is substantially more intricate and is therefore presented separately.


 Formally, for any $\delta\in(0, \frac{1}{2}]$, we construct a bid history, $\hist$, and valuation vector, $\mathbf{v}$ for which $\Lambda_{\feasclass{1}, \optufclass{1}}(\hist, \v)\geq 2-\delta$. Let $\maxbid=T= 2\ceil{\frac{1}{\delta}}$ and $K=\maxbid+1$. Suppose $\mathbf{v}=[1, v, \cdots, v]\in\R^{\maxbid}$, target RoI $\gamma=0$  and $v=1-4\epsilon$ where $\epsilon=\frac{\delta-1/\maxbid}{4(1-1/\maxbid)}< \frac{\delta}{4}\leq \frac{1}{8}$. 
 

 The bid history is defined as:
\begin{align*}
    \ordotherbid{t}{j}=\begin{cases}
        1-\epsilon, ~&\text{ if $t\leq \maxbid-1$ and $j=1$}\\
        C, ~&\text{ if $t\leq \maxbid-1$ and $2\leq j\leq K$}\\
        \epsilon, ~&\text{ if $t= \maxbid$ and $1\leq j\leq K$}\\
    \end{cases}\,,
\end{align*}
\normalsize
where $C\gg 1$. The bid history is presented in \cref{tab:qmaxq}.

\begin{table}[!tbh]
\centering
\caption{\small Bid history for tight lower bound for $\numbid=1$. Here, $C\gg1$ and $\epsilon>0$ is a small real number.}
\label{tab:qmaxq}
\small
    \begin{tabular}{ccccc}
    \toprule
         $t=1$&  $t=2$ &$\cdots$&  $t=\maxbid-1$&  $t=\maxbid$\\\midrule
         $C$&  $C$ &$\cdots$&  $C$ &  $\epsilon$\\
         $\vdots$&  $\vdots$ &$\cdots$&  $\vdots$ &  $\vdots$\\
         $C$&  $C$ &$\cdots$&  $C$ &  $\vdots$\\
 $1-\epsilon$& $1-\epsilon$ &$\cdots$& $1-\epsilon$ & $\epsilon$\\
 \bottomrule
    \end{tabular} 
\end{table}
\normalsize
Notice that the constructed $\hist$ contains $\maxbid-1$ rounds with high competition (where the bidder can acquire at most 1 unit), while there is one round with minimal competition, allowing the bidder to obtain any desired number of units. The 1-uniform safe bidding strategies of the form $(w_q, q)$ do not perform well universally on all rounds because with increase in $q$, despite the increasing demand (i.e., $q$), the bid value $w_q$ decreases, thereby reducing the likelihood of acquiring any units.


For a given $\hist$, recall that for any $\numbid\in\N$, $\optbid{\numbid}=\argmax_{\ibid\in\feasclass{\numbid}} \sum_{t=1}^T\val(\ibid; \otherbid{t})$. We define \[\optvalue{\numbid}=\sum_{t=1}^T\val(\optbid{\numbid}; \otherbid{t}),\] 

the value obtained by $\optbid{\numbid}$ over the bid history $\hist$. Similarly, define
\begin{align*}
\safebid{\numbid}=\argmax_{\ibid\in\optufclass{\numbid}} \sum_{t=1}^T\val(\ibid; \otherbid{t})~\quad\text{and}\quad \safevalue{\numbid}=\sum_{t=1}^T\val(\safebid{\numbid}; \otherbid{t})\,.
\end{align*}


\textbf{Computing $\optbid{1}$.} Observe that no strategy in $\feasclass{1}$ can obtain more than 1 unit in the first $\maxbid-1$ rounds and $\maxbid$ units in the final round. It is easy to verify that $\optbid{1}=(1, \maxbid)$ as it obtains the maximum number of units possible for the constructed $\hist$. So, $\optvalue{1}=\maxbid+v(\maxbid-1)$.


\textbf{Computing $\safebid{1}$.} We claim that $\safebid{1}= (w_1, 1)= (1,1)\implies \safevalue{1}=\maxbid$. As $\ordotherbid{t}{1}=1-\epsilon >1-2\epsilon=w_2$ for $t\in[\maxbid-1]$, the strategy $\left(w_{q}, q\right)$ for $2\leq q\leq \maxbid$ does not get any value in the first $\maxbid-1$ rounds. Bidding $\left(w_{q}, q\right)$ gets exactly $q$ units in round $\maxbid$ as $w_q>\epsilon$ for any $q\geq 2$. So, for $2\leq q\leq \maxbid$, the total value obtained by bidding $(w_q, q)$ is $1+(q-1)v<q$ which proves that $\safevalue{1}=\maxbid$. So, 
\begin{align*}
    \Lambda_{\feasclass{1}, \optufclass{1}}(\hist, \v)= \frac{\optvalue{1}}{\safevalue{1}}=\frac{\maxbid+v(\maxbid-1)}{\maxbid}= \frac{2\maxbid-1-4\epsilon(\maxbid-1)}{\maxbid}=2-\delta\,.
\end{align*}
\section{Extensions}\label{sec:extensions}
 In this section, we consider several important extensions of the model introduced in \cref{sec:model} by relaxing some of its assumptions. We first consider the setting where the competing bids are generated by an adaptive adversary instead of an oblivious adversary. Next, we present an efficient heuristic for the bidder to follow when RoI constraints are enforced over a sliding window of $T_0\in\N$ rounds. Finally, we examine the case where the valuation vectors vary across rounds—either stochastically or adversarially. In each case, we build upon the safe bidding strategies and learning algorithm introduced in \cref{sec:learning-safe}.

\subsection{Adaptive Adversary}\label{ssec:non-oblivious}
In many practical marketplaces, particularly in repeated auctions, the actions of a bidder may influence the behavior of competitors over time. Competing bidders can observe past auction outcomes, infer bidding patterns, and adapt their strategies accordingly. This motivates extending our model to an \emph{adaptive adversary} setting, where the competing bids in each round may depend on the bidder's past submitted strategies. Formally, we consider the model presented in \cref{sec:model} with the crucial change that the competing bids are now generated by an adaptive adversary, i.e., the competing bids in any round $t$ can depend on the strategies submitted by the bidder in previous rounds, $\ibid^1, \dots, \ibid^{t-1}$. For an adaptive adversary, we consider the following performance metric:
\begin{align}\label{def:adaptive-reg}
\textsf{REG}_{\text{adap}}=\E\left[\max_{\ibid\in\optufclass{\numbid}}\sum_{t=1}^T\val(\ibid; \otherbid{t}) - \sum_{t=1}^T\val(\ibid^t; \otherbid{t})\right]\,,
\end{align}
where the expectation is with respect to the randomness of the learning algorithm of the bidder.\footnote{Without loss of generality, we assume that the mapping from past bidding strategies, $\ibid^1, \dots, \ibid^{t-1}$, to current competing bids, $\otherbid{t}$, is deterministic~\cite[Remark 3.1]{dani2006robbing}.} Observe that unlike the case for an oblivious adversary~(see \cref{eq:safe-regret}), the first term in \cref{def:adaptive-reg} is not fixed in advance for an adaptive adversary as the competing bids $\otherbid{t}$ can adaptively change as a function of historical bids. 

In the remaining portion of this section, we propose learning algorithms that yield $\textsf{REG}_{\text{adap}}=o(T)$ both under the full-information and bandit feedback model, when the competing bids are generated by an adaptive adversary. All the remaining modeling assumptions from \cref{sec:model} continue to hold without any change. 

\textbf{Full Information Setting.} In our setting, a learning algorithm is said to be \textit{forgetful}~\citep{dani2006robbing}, if the probability distribution for choosing a strategy $\ibid^t$ in round $t$ is fully determined by the past competing bids $\otherbid{1}, \dots, \otherbid{t-1}$, and does not explicitly depend on the bidder's actions $\ibid^1, \dots, \ibid^{t-1}$. In particular,
\begin{align}\label{eq:forgetful}
    \P[\ibid^t=\ibid|\otherbid{1}, \dots, \otherbid{t-1}, \ibid^1, \dots, \ibid^{t-1}] = \P[\ibid^t=\ibid|\otherbid{1}, \dots, \otherbid{t-1}],\quad \forall \ibid\in\optufclass{\numbid}\,.
\end{align}

Now, we state the following result that rephrases \cite[Lemma 4.1]{cesa2006prediction}:
\begin{lemma}\label{lem:forgetful}
Suppose a learning algorithm satisfies \cref{eq:forgetful} and achieves $\textsf{REG}\leq B$ against an \emph{oblivious} adversary. Then, the same algorithm also yields $\textsf{REG}_{\text{adap}}\leq B$ when deployed against an adaptive adversary.
\end{lemma}
From \cref{sec:learning-safe}, recall that in the full-information setting, \cref{alg:weight-pushing} is an equivalent implementation of the Hedge algorithm. Hence, for any learning rate $\eta>0$, 
\begin{align*}
    \P[\ibid^t=\ibid|\otherbid{1}, \dots, \otherbid{t-1}, \ibid^1, \dots, \ibid^{t-1}] = \frac{\exp(\eta\sum_{\tau=1}^{t-1}\val(\ibid; \otherbid{\tau}))}{\sum_{\ibid'\in\optufclass{\numbid}}\exp(\eta\sum_{\tau=1}^{t-1}\val(\ibid'; \otherbid{\tau}))} =\P[\ibid^t=\ibid|\otherbid{1}, \dots, \otherbid{t-1}]\,.
\end{align*}

This implies \cref{alg:weight-pushing} is forgetful and hence  yields $\textsf{REG}_{\text{adap}} \leq O(\maxbid\sqrt{\numbid T \log \maxbid})$ in the full-information setting even when the competing bids are generated by an adaptive adversary.

\textbf{Bandit Setting.}  
In the bandit feedback model, the forgetfulness property of \cref{alg:weight-pushing} no longer holds, because the unbiased estimator of the value of a strategy \( \ibid \) depends on the bidder’s past actions. This dependency necessitates a different approach for controlling regret.

To bound \( \textsf{REG}_{\text{adap}} \), we begin by analyzing the \emph{random regret}, defined as
\[
    \widehat{\textsf{REG}}_{\text{adap}} = \max_{\ibid \in \optufclass{\numbid}} \sum_{t=1}^T \val(\ibid; \otherbid{t}) - \sum_{t=1}^T \val(\ibid^t; \otherbid{t}),
\]
and we show that it is bounded with high probability. We then relate this high-probability bound to the expected regret using the following lemma:

\begin{lemma}\label{lem:whp-to-expected}
    For any \( \delta \in (0, 1] \), if \( \widehat{\textsf{REG}}_{\text{adap}} \le B(\delta) \) with probability at least \( 1 - \delta \), then 
    \[
    \textsf{REG}_{\text{adap}} \leq B(\delta) + \maxbid T \delta.
    \]
\end{lemma}

To address the challenge posed by adaptive adversaries, we develop a new algorithm inspired by the EXP3.P algorithm~\citep{auer2002nonstochastic}, which provides high-probability regret guarantees. Our method builds on \cref{alg:weight-pushing} and incorporates ideas from~\citet{gyorgy2007line}.

The algorithm maintains a \emph{covering path set} \( \mathcal{C} \), which is a collection of \( s \)--\( d \) paths in the DAG \( \mathcal{G}(V, E) \) such that every edge \( e \in E \) appears in at least one path \( \path \in \mathcal{C} \). At each round \( t \), the algorithm samples a path using a mixture strategy: with probability \( \lambda > 0 \), it samples uniformly at random from \( \mathcal{C} \); with probability \( 1 - \lambda \), it updates edge weights and samples a path as a Markov chain, following the procedure of \cref{alg:weight-pushing}. The resulting path distribution is thus a convex combination of the uniform distribution over \( \mathcal{C} \) and the distribution induced by Hedge-like updates. See \cref{alg:weight-pushing-adaptive-whp} in \cref{apx:bandit-feedback-adaptive} for implementation details. This yields the following regret bound:

\begin{theorem}\label{thm:bandit-adaptive}
    Against an adaptive adversary, there exists a learning algorithm that requires \( \text{poly}(\numbid, \maxbid) \) space, runs in \( \text{poly}(\numbid, \maxbid) \) time per round and, for any \( \delta \in (0, 1] \) and sufficiently large \( T \), guarantees
    \[
    \widehat{\textsf{REG}}_{\text{adap}} \leq O\left(\numbid^{3/2} \maxbid^2 \sqrt{T \log \maxbid} + \numbid \maxbid^2 \sqrt{T \log (\maxbid/\delta)}\right)
    \]
    with probability at least \( 1 - \delta \) under bandit feedback. Setting \( \delta = 1/T \) gives the bound
    \[
    \textsf{REG}_{\text{adap}} \leq O\left(\numbid^{3/2} \maxbid^2 \sqrt{T \log (\maxbid T)}\right).
    \]
\end{theorem}

\subsection{Cumulative RoI Constraints}\label{ssec:cumulative-roi}
In this section, we consider a natural relaxation of the per-round RoI constraints introduced in \cref{sec:model}. Specifically, we assume that bidders aim to satisfy the RoI constraint in aggregate over the entire horizon of $T$ rounds, which is a common assumption in online ad auction literature~\citep{lucier2023autobidders, feng2023online, deng2021towards, deng2023multi}. In this setting, we define regret as 
\begin{align}\label{eq:regret-def-cumulative}
    \textsf{REG}_{\text{cumul}}=\max_{\ibid\in\generic}\sum_{t=1}^T\val(\ibid; \otherbid{t}) - \sum_{t=1}^T\E[\val(\ibid^t; \otherbid{t})],
\end{align}
where the expectation is with respect to any randomness in the learning algorithm. Here, $\generic$ is the class of strategies that satisfy cumulative RoI constraints over $T$ rounds for the sequence of competing bids $[\otherbid{t}]_{t\in[T]}$. In this setting, we consider the following assumption on the bidder's behavior:

\begin{assumption}\label{assumpt:1}
     For adversarial competing bids, a value maximizing bidder aiming to satisfy cumulative RoI constraints over $T$ rounds does not submit non-safe strategies, i.e., overbidding strategies per \cref{def:under-over_bid}, in any round until they have accrued a non-zero value from previous auctions. Formally, for any round $t\in[T]$, if $\sum_{\tau=1}^{t-1}\val(\ibid^\tau; \otherbid{\tau})=0$, then the bidder submits $\ibid^t\in\optufclass{\numbid}$.

\end{assumption}



\cref{assumpt:1} is a natural assumption in adversarial settings. We argue that if a bidder does not adhere to \cref{assumpt:1}, then there exists a sequence of competing bids under which their cumulative RoI constraint will necessarily be violated. To illustrate this, consider a bidder, not having won any units in rounds $1, \dots, t-1$, who overbids in round~$t$ and wins at least one unit violating the round-wise RoI constraint.\footnote{For any overbidding strategy per \cref{def:under-over_bid}, there exists a competing bid profile for which the RoI constraint in that round is violated. See proof of \cref{thm:P-m} for details.} Suppose now that in all subsequent rounds $t+1, \dots, T$, the competing bids are fixed at $v_1 + \epsilon$, irrespective of the bidder's strategy, the initial RoI violation cannot be made up for in subsequent rounds, and the cumulative RoI constraint is violated. This highlights the risk of early overbidding in an adversarial environment and motivates the practicality of \cref{assumpt:1}. Under \cref{assumpt:1}, we have the following strong impossibility result:
\begin{theorem}\label{thm:cumulative-impossible}
    There exists a sequence of competing bids for which there exists no constant $c\in(0, 1]$ for which the $c$-approximate regret is sublinear in $T$, i.e., $c$-$\textsf{REG}_{\text{cumul}}=\Omega(T)$ for every $c\in(0, 1]$.
\end{theorem}
Despite this impossibility result, we propose a flexible heuristic based on safe strategies and \cref{alg:weight-pushing} for practitioners seeking to satisfy cumulative RoI constraints in practice.

To design our heuristic, we adopt a general behavioral model in which bidders aim to satisfy the RoI constraint over a sliding window of $T_0\geq 1$ rounds. Formally, for any sequence of competing bids $[\otherbid{t}]_{t \in [T]}$, the bidder seeks to ensure that
\begin{align}\label{eq:cumulative-roi-const}
\sum_{\tau = \max(1, t - T_0 + 1)}^t \left( \val(\ibid^\tau; \otherbid{\tau}) - \price(\ibid^\tau; \otherbid{\tau}) \right) \geq 0, \quad \forall t \in [T]\,.
\end{align}

An immediate consequence of the condition in \cref{eq:cumulative-roi-const} is that it guarantees the RoI constraint is satisfied in aggregate over the interval $[1, t]$ for all $t\in[T]$. To see this, let $n = \floor{t / T_0}$. Then, the interval $[1, t]$ can be partitioned into the following subintervals: $[1, t - nT_0]$, $[t - nT_0 + 1, t - (n - 1)T_0]$, $\dots$, $[t - T_0 + 1, t]$. As \cref{eq:cumulative-roi-const} ensures that the constraint holds in each disjoint interval, summing the RoI constraints over these intervals yields the desired result. 

In this setting, the per-round RoI constraint is relaxed and evaluated over sliding windows of length $T_0$. This allows the bidder some flexibility to temporarily overbid in individual auctions, as long as the cumulative value collected over the preceding $T_0-1$ rounds sufficiently exceeds the total payment incurred. Intuitively, this creates a temporary ``surplus" that can be used to justify more aggressive bidding in the current round.

To capture this effect, we introduce the notion of \textit{$\delta$-shifted} $\numbid$-uniform undominated safe strategies for $\delta \geq 0$. Specifically, we define:
\begin{align*}
\optoneufclass{\numbid}(\delta) = \left\{ \ibid = \langle(b_1, q_1), \dots, (b_\numbid, q_\numbid)\rangle : b_\l = w_{Q_\l} + \delta, , \l \in [\numbid] \right\}.
\end{align*}
We denote the union of these classes over all $k \in [\numbid]$ as $\optufclass{\numbid}(\delta) = \bigcup_{k \in [\numbid]} \optoneufclass{k}(\delta)$. This class corresponds to the set of safe bidding strategies with at most $\numbid$ bid-quantity pairs under a shifted valuation vector $\v(\delta) = \v + \delta \cdot \boldsymbol{1}$, where $\boldsymbol{1}$ is the all-ones vector. The tunable parameter $\delta$ captures the bidder’s willingness to overbid relative to their valuation, reflecting the additional flexibility afforded by the relaxed RoI constraint.

\textbf{The Heuristic.} When the bidder aims to satisfy the RoI constraints according to \cref{eq:cumulative-roi-const}, we propose a heuristic in the online setting that builds on \cref{alg:weight-pushing}. Specifically, in round $t$, the bidder constructs a DAG, $\mathcal{G}^t(V, E)$ similar to $\mathcal{G}(V, E)$, and follows the \textsf{UPDATE} and \textsf{SAMPLE} step from \cref{alg:weight-pushing} to select a path. Then, the selected path is mapped to a safe bidding strategy within the class $\optufclass{\numbid}(\delta^t)$, where
\begin{align}
    \delta^t := \frac{1}{\maxbid} \sum_{\tau=\max(1, t-T_0+1)}^{t-1}(\val(\ibid^\tau; \otherbid{\tau}) - \price(\ibid^\tau; \otherbid{\tau})), \quad\text{and}\quad \delta^1=0\,.
\end{align}
Note that $\delta^t$ can be computed in both the full-information and bandit settings. If the sampled path $\path^t=s\to (1, z_1)\to\dots\to(k, z_k)\to d$ for some $k\in[\numbid]$, the bidder submits $\ibid^t=\langle(b_1, q_1), \dots, (b_k, q_k)\rangle$ where
    \begin{align*}
      b_\l=w_{z_\l} + \delta^t \quad\text{and}\quad q_\l=z_\l-z_{\l-1}, ~\forall  \l\in [k]\,,
    \end{align*}
    where $w_{z_\l}$ is defined as per \cref{eq:w} for all $\l\in[k]$. Finally, the bidder sets the edge weights in the DAG. In the full-information setting, the edge weights are:

(i) If \(x=(\l-1, j)\) and \(y=(\l, j')\) with \(\l\in[\numbid]\) and \(j < j'\),
\begin{align*}
    \textsf{w}^t(e) = \sum_{k=j+1}^{j'}v_k\cdot\ind{w_{j'} + \delta^t\geq \ordotherbid{t}{k}}\,.
\end{align*}

(ii) If \(x=(\l, j)\) and \(y=d\), then \(\textsf{w}^t(e) = 0, \forall \l \in [\numbid], j \in [\maxbid]\).

In the bandit setting, $\textsf{w}^t(\cdot)$ is replaced by its unbiased estimator, $\widehat{\textsf{w}}^t(\cdot)$, as stated in \cref{eq:w-hat-estimate-exp3}.

We claim that a bidder following the $\delta^t$-shifted safe strategies satisfies the RoI constraints described in \cref{eq:cumulative-roi-const}. To see this, fix any round~$t$. The maximum RoI violation in round~$t$ under a $\delta^t$-shifted safe strategy is at most $\maxbid \delta^t$. Thus, the maximum violation of the RoI constraints over the window of $T_0$ rounds in the interval $[\max(1, t - T_0 + 1), t]$ is $\sum_{\tau=\max(1, t - T_0 + 1)}^{t-1} \big(\price(\ibid^\tau; \otherbid{\tau}) - \val(\ibid^\tau; \otherbid{\tau})\big) + \maxbid \delta^t, $ which equals zero by construction. Hence, the RoI constraints are satisfied as desired.

\textbf{Experiments.} We conduct experiments using EU ETS emission permit data to evaluate the proposed heuristic. First, we quantify the improvement in the bidder’s cumulative value over $T$ rounds when RoI constraints are enforced over a sliding window of $T_0$ rounds, compared to the stricter case of per-round RoI enforcement. We measure this via the \textit{relative gain} as a function of $\numbid$. Next, we compute the degree of per-round RoI feasibility by computing the \textit{relative feasibility} in each round $t \in [T]$, defined as $\frac{\val(\ibid^t; \otherbid{t})}{\price(\ibid^t; \otherbid{t})} - 1.$ When per-round RoI constraints are enforced, this quantity is guaranteed to be at least the target RoI $\gamma$ for all $t$.\footnote{In our experiments, we ensure that the smallest competing bid in each auction is positive, so that the relative feasibility is well-defined whenever the bidder wins at least one unit. Thus, $\price(\ibid^t; \otherbid{t}) = 0$ if and only if $\val(\ibid^t; \otherbid{t}) = 0$. In such cases, we define the relative feasibility to be the target RoI $\gamma$.} In our experiments, we set $T = 200$ and $T_0 \in \{1, 8, 16, 50, \infty\}$. These values are chosen to reflect real-world scenarios: approximately 200 EU ETS auctions are conducted per year, and $T_0$ values of 8, 16, and 50 correspond to the bidder enforcing RoI constraints roughly biweekly, monthly, and quarterly, respectively. In the second experiment, we fix $\numbid=4$, which reflects the average number of bid-quantity pairs submitted by bidders in EU ETS auctions~\citep{eex-eua-primary-auction}. Each experiment is repeated over 50 simulations. 

\begin{figure}[ht]
\begin{minipage}[b]{0.45\linewidth}
\centering
\includegraphics[width=\textwidth]{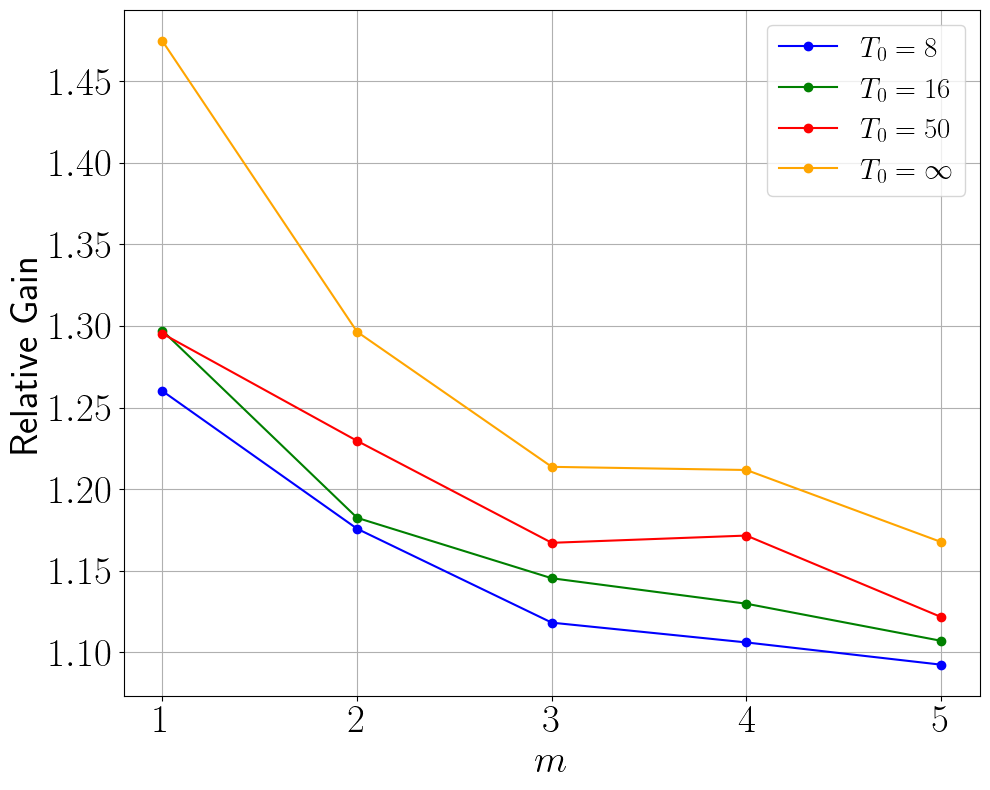}
\label{fig:relative_value}
\end{minipage}
\hspace{0.5cm}
\begin{minipage}[b]{0.45\linewidth}
\centering
\includegraphics[width=\textwidth]{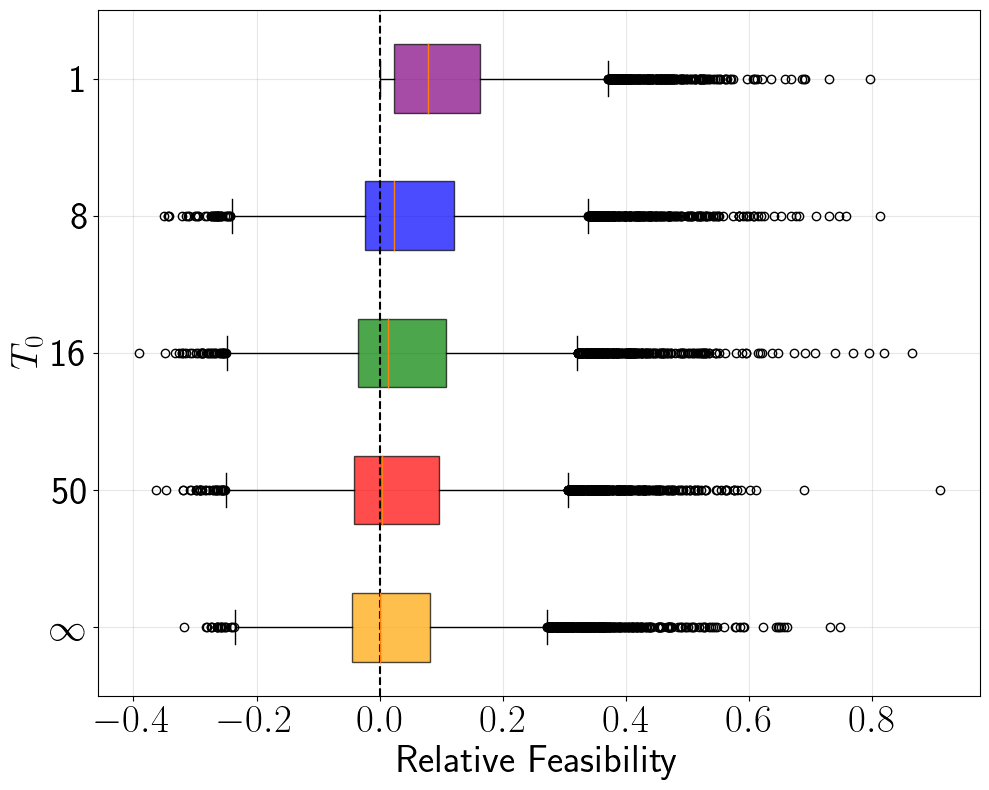}
\label{fig:relative_violations}
\end{minipage}
\vspace{-0.5cm}
\caption{The left figure shows the relative gain obtained as a function of $\numbid$. Here, relative gain is the ratio of the value obtained by the bidder over $T$ rounds when RoI constraints are enforced over $T_0$ rounds to that when RoI constraints are enforced in each round, i.e., $T_0=1$. The box plot in the right figure shows the distribution of per-round relative feasibility $\frac{\val(\ibid^t; \otherbid{t})}{\price(\ibid^t; \otherbid{t})} - 1.$ }
\label{fig:cumulative-roi}
\end{figure}


\begin{table}[h!]
\centering
\caption{Performance metrics of shifted safe strategies as a function of $T_0$. Here, $\gamma=0$ and $\numbid=4$. Here, RF stands for relative feasibility.}
\small
\label{tab:transposed-metrics}
\begin{tabular}{lccccc}
\toprule
 & $T_0=1$ & $T_0=8$ & $T_0=16$ & $T_0=50$ & $T_0=\infty$ \\
\midrule
Fraction of rounds with RoI violations $(\text{RF} \leq \gamma)$ &  
$0.0$ & $0.3225$ & $0.353$ & $0.3815$ & $0.4235$
\\
Median RF $(\frac{\val(\ibid^t; \otherbid{t})}{\price(\ibid^t; \otherbid{t})} - 1)$ & $0.0786$ & $0.0231$ & $0.0127$ & $0.0042$ & $0.0$ \\
\bottomrule
\end{tabular}
\end{table}
\normalsize
\textbf{Discussion.} Our experiments reveal a few interesting trends. In \cref{fig:cumulative-roi}, we observe that while the relative gain increases with $T_0$ for a fixed $\numbid$ as expected, for a fixed $T_0$, it decreases with $\numbid$. The fraction of rounds with RoI violations steadily increases as $T_0$ increases~(see \cref{tab:transposed-metrics}). Hence, the buyers face an inherent trade- off between the increase in obtained value and per-round RoI feasibility. For example, fixing $\numbid=4$, for $T_0=\infty$, the shifted safe strategies improve the obtained value by $\sim21\%$ compared to the safe strategies, but the RoI constraint is violated in $\sim42\%$ of the rounds. Similarly, for $T_0=8$, the same metrics are $\sim10\%$ and $\sim32\%$ respectively. Thus, based on their requirements and risk profile, bidders can employ shifted safe strategies with appropriate values of $T_0$.  The median relative feasibility also steadily decreases with larger \( T_0 \), approaching zero as the strategy becomes more aggressive. This indicates that while occasional high-value rounds may drive up cumulative gains, most rounds are closer to the feasibility barrier or even violate it.  

\subsection{Time Varying Valuations}\label{ssec:time-varying}
Recall that we had assumed the valuation vector remains fixed over the $T$ rounds. Here, we consider the case where the valuation vector in each round is chosen from a \textit{finite} set $\mathcal{V}$ in each round $t\in[T]$ and revealed to the bidder before they play their bidding strategy. As earlier, the competing bids are chosen by an oblivious adversary and the RoI constraints are enforced for each round. To satisfy the RoI constraint in round $t$, the bidder submits a safe bidding strategy corresponding to the valuation vector $\v^t$, i.e., $\ibid^t \in \optufclass{\numbid}(\v^t)$, where
\begin{align*}
    \optufclass{\numbid}(\v^t)=\bigcup_{k\in[\numbid]}\optoneufclass{\numbid}(\v^t)\quad\text{and}\quad \optoneufclass{\numbid}(\v^t)=\Big\{\ibid=\langle(b_1, q_1), \dots, (b_\numbid, q_\numbid)\rangle: b_\l = w_{Q_\l}^t,\forall \l\in[\numbid] \Big\}\,.
\end{align*}
Here, $\v^t=[v^t_1, \dots, v^t_\maxbid]$ and $w_j^t=\frac{1}{j}\sum_{\l\leq j}v_\l^t$ for all $j\in[\maxbid], t\in[T]$. 

For time varying valuations, we frame the learning task as an instance of \textit{contextual online learning} problem with a finite number of contexts, where the contexts are the valuation vectors. These contexts can be either sampled in each round from an (unknown) distribution $F_\v$ or they can be chosen adversarially at the beginning of the first round. The case of adversarially generated contexts is discussed in \cref{apx:ssec:adv-contexts}. In both cases, we assume full information feedback for ease of exposition.  




Suppose the contexts are sampled from an (unknown) distribution $F_\v$ with a finite support. The regret in this setting is defined as
\begin{align*}
    \textsf{REG}_{sto} = \max_{\pi^*\in\Pi}\sum_{t=1}^T\E_{\v^t\sim F_\v}[\val(\pi^*({\v^t}); \otherbid{t})] - \sum_{t=1}^T\E[\val(\ibid^t; \otherbid{t})]\,,
\end{align*}
where $\ibid^t\in \optufclass{\numbid}(\v^t), \forall t\in[T]$. Here, $\pi^*$ is a stationary policy belonging to the class
\begin{align}\label{eq:Pi}
  \Pi = \left\{\pi:\mathcal{V}\to\bigcup_{\v\in\mathcal{V}}\optufclass{\numbid}(\v),~~\text{such that}~~\pi(\v)\in\optufclass{\numbid}(\v), \forall \v\in\mathcal{V}\right\} \,. 
\end{align}
The expectation in the second term in the regret is with respect to the randomness in the contexts as well as any internal randomness of the learning algorithm. Our benchmark in this setting is consistent with existing literature on adversarial rewards with stochastic contexts~\citep{balseiro2019contextual,schneider2023optimal, galgana2023learning}. Suppose $\P[\v^t=\v]=p_\v, \forall \v\in\mathcal{V}, \forall t\in[T]$. As $\sum_{t=1}^T\E_{\v^t\sim F_\v}[\val(\pi({\v^t}); \otherbid{t})] = \sum_{\v\in\mathcal{V}}p_\v \cdot\sum_{t=1}^T \val(\pi(\v); \otherbid{t})$, the optimal stationary policy is 
\begin{align*}
   \pi^*(\v) = \argmax_{\ibid\in\optufclass{\numbid}(\v)}\sum_{t=1}^T\val(\ibid; \otherbid{t}), \quad\forall \v\in\mathcal{V}.
\end{align*}

 
In this setting, the bidder follows an algorithm that generalizes \cref{alg:weight-pushing} by maintaining $|\mathcal{V}|$ copies of the DAG, one for each context in $\mathcal{V}$. In each round $t \in [T]$, the context $\v^t \sim F_\v$ is observed, and the edge probabilities for every DAG copy are updated. The bidder then samples a bidding strategy $\ibid^t \in \optufclass{\numbid}(\v^t)$ and observes the competing bids $\otherbid{t}$ ex post. Finally, edge weights are assigned for all edges in all DAGs, which are used to compute edge probabilities for the next round (see \cref{alg:weight-pushing-stoc} in \cref{apx:ssec:time-varying} for details). 

\begin{theorem}\label{thm:regret-UB-stoc}
    When the contexts are stochastic, there exists an algorithm that runs in \(\text{poly}(\numbid, \maxbid, |\mathcal{V}|)\) space and per-round time and achieves \(\textsf{REG}_{sto} \leq O(\maxbid\sqrt{\numbid T\log \maxbid})\) under full information feedback. 
\end{theorem}

\section{Conclusion and Open Problems}\label{sec:conclusion}
In this work, we studied the bidding problem in repeated uniform-price auctions for a value-maximizing buyer subject to per-round return-on-investment (RoI) constraints. To address this, we introduced the notion of safe bidding strategies as those that satisfy RoI constraints regardless of competing bids and characterized the class of $\numbid$-uniform safe strategies. These strategies depend solely on the buyer’s valuation curve, and the buyer can focus on a finite subset of this class without loss of generality. We proposed a polynomial-time online learning algorithm to efficiently learn optimal strategies within this class. We then analyzed the robustness of safe strategies and the proposed algorithm against stronger benchmarks, showing that our approach achieves $\alpha$-approximate sublinear regret, where $\alpha$ is the richness ratio. We complemented the theoretical results by simulations on semi-synthetic EU ETS auction data which demonstrate that empirical richness ratios are substantially better than the worst-case bounds. We also showed the flexibility and adaptability of safe strategies and our learning algorithm across several model extensions such as adaptive adversaries, sliding-window RoI constraints, and time-varying valuations highlighting their practical relevance.

From a practitioner’s perspective, safe bidding strategies are intuitive, robust, and adaptable to a wide range of risk preferences. The proposed algorithm is simple to implement and backed by strong theoretical guarantees, making it well-suited for real-world deployment. This study opens up several exciting avenues for future investigation. A natural extension is to characterize safe strategies for value-maximizing buyers in other multi-unit auction formats, such as discriminatory-price (pay-as-bid) auctions. From a modeling standpoint, incorporating both RoI and budget constraints could lead to richer and more realistic representations of bidder behavior. While our analysis focuses on a single buyer, it would be valuable to explore market dynamics, equilibrium outcomes, and convergence properties when all participants adopt safe strategies—paralleling recent developments under different market conditions and bidders' behavior~\citep{conitzer2022pacing, li2022auto, lucier2023autobidders, fikioris2023liquid, paes2024complex}. Finally, extending this framework to combinatorial auctions with heterogeneous items presents a host of new algorithmic and design challenges. 

\section*{Acknowledgements}
N.G. and S.S. are partially supported by the MIT Junior Faculty Research Assistance Grant and the Young Investigator Award from the Office of Naval Research (ONR), Award No. N00014-21-1-2776. 

\small
\bibliography{icml2024/references}
\bibliographystyle{plainnat}
\normalsize
\newpage
\appendix



\section{Omitted Proofs from \cref{ssec:uf-strategies}}\label{apx:sec:sbs}
\subsection{Proof of \cref{thm:P-m}}
Before proving \cref{thm:P-m}, we present the following lemma about the bidder's per unit payments, which we will also use to prove several other results.

\begin{lemma}[Per-unit Payments]\label{lem:price-multiple}
Suppose the bidder bids $\ibid=\langle(b_1, q_1), \dots, (b_\numbid, q_\numbid)\rangle$. Recall that $Q_j=\sum_{\l=1}^jq_\l, \forall j\in[\numbid]$. Recall that, $\ordotherbid{t}{j}$ is the $j^{th}$ 
smallest winning bid in the absence of bids from bidder $i$ for round $t$. If $j=0$, $\ordotherbid{t}{j}=0$ and $j> K$, $\ordotherbid{t}{j}=\infty$. Then, the per-unit payments by the bidder in round $t$ is
\begin{align}
        p(\allbids^t)=\begin{cases}
            0,&~\text{if } x(\allbids^t)=0\\
            b_\l,&~\text{if } Q_{\l-1}<x(\allbids^t)<Q_\l\\
            \min(b_\l, \ordotherbid{t}{Q_\l+1}), &~\text{if } x(\allbids^t)=Q_\l
        \end{cases}\,.
    \end{align}
\end{lemma}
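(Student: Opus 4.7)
\textbf{Proof plan for \cref{lem:price-multiple}.}
The plan is to proceed by cases on the value of $x(\allbids^t)$, using two simple facts about the uniform-price auction: (i) each bidder's set of winning bids must be a prefix of its own bid vector (in decreasing order), and (ii) the clearing price $p(\allbids^t)$ equals the $K^{th}$-highest bid overall, i.e., the lowest bid among the $K$ winning bids from all participants combined. Since $\ibid$ uses values $b_1 > \dots > b_\numbid$ with multiplicities $q_1,\dots,q_\numbid$, any allocation $x = x(\allbids^t)$ must satisfy either $x=0$, or $Q_{\l-1} < x < Q_\l$ for some $\l$, or $x = Q_\l$ for some $\l\in[\numbid]$, which gives the three cases of the formula.

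The case $x=0$ is immediate: the bidder pays nothing, so the per-unit price can be set to $0$ without loss. For the case $Q_{\l-1}<x<Q_\l$, I would argue that the bidder simultaneously has a winning bid at level $b_\l$ (the $x^{th}$ sorted bid of this bidder) and a losing bid at the same level $b_\l$ (the $(x{+}1)^{th}$ sorted bid, since $x+1 \le Q_\l$). By fact (ii), the $K^{th}$-highest bid overall is at least every winning bid and at most every losing bid; since both occur at $b_\l$, this forces $p(\allbids^t)=b_\l$.

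For the case $x = Q_\l$, I will describe the set of winning bids explicitly. The bidder contributes exactly the $Q_\l$ bids corresponding to its first $\l$ pairs, whose minimum is $b_\l$. The remaining $K-Q_\l$ winning slots must go to the top $K-Q_\l$ competing bids, which by the definition of $\ordotherbidnew{(\cdot)}$ (the $j^{th}$ smallest among the competitors' top $K$ bids) are exactly $\ordotherbid{t}{Q_\l+1},\dots,\ordotherbid{t}{K}$; the minimum of this set is $\ordotherbid{t}{Q_\l+1}$. Taking the overall minimum gives $p(\allbids^t)=\min(b_\l,\ordotherbid{t}{Q_\l+1})$, matching the claim. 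A small sanity check is that the consistency of ``bidder wins exactly $Q_\l$'' with this identification of winners holds precisely when $b_\l \ge \ordotherbid{t}{Q_\l+1}$ and $b_{\l+1}<\ordotherbid{t}{Q_\l}$ (with the obvious convention for $\l=\numbid$ and the boundary values $\ordotherbid{t}{0}=0$, $\ordotherbid{t}{j}=\infty$ for $j>K$).

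The only real subtlety in the proof is the case $x=Q_\l$, where one must argue that the competitors' winning bids are exactly the top $K-Q_\l$ among their own bids, rather than some arbitrary subset. This follows from the fact that removing a bid from the bidder's losing pile (all strictly below $b_\l$) and replacing it with any higher competing bid cannot change the identity of the top $K$ bids. The tie convention (ties broken in favor of the bidder in consideration) together with the strict ordering $b_1>\dots>b_\numbid$ suffices to avoid ambiguity at the boundary $p(\allbids^t)=b_\l=\ordotherbid{t}{Q_\l+1}$, and the degenerate boundary cases $Q_\l=K$ (where $\ordotherbid{t}{Q_\l+1}=\infty$ makes the $\min$ collapse to $b_\l$) and $\l=\numbid$ (where ``losing bid at $b_\l$'' from the second case simply does not arise) are handled by the stated conventions.
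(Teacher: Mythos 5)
Your proposal is correct and follows essentially the same route as the paper: a case split on $x(\allbids^t)$, a sandwich argument pinning $p(\allbids^t)=b_\l$ in the interior case (the bidder has both a winning and a losing bid at value $b_\l$), and an explicit identification of the winning set giving the $\min(b_\l,\ordotherbid{t}{Q_\l+1})$ formula when $x(\allbids^t)=Q_\l$. One wording slip: the clearing price is \emph{at most} every winning bid and \emph{at least} every losing bid (you state the reverse), though your sandwich conclusion is unaffected since both bounds coincide at $b_\l$.
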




    Define the following class of no-overbidding (NOB) strategies with $\numbid$ bid-quantity pairs:
\begin{align*}
\oneufclass{\numbid}^{\mathbf{NOB}}:=\Big\{\ibid=\langle(b_1, q_1), \dots, (b_\numbid, q_\numbid)\rangle:
    b_\l \leq w_{Q_\l}, \forall \l\in[\numbid] \Big\}\,.
\end{align*}
We first show that $\oneufclass{\numbid} \subseteq\oneufclass{\numbid}^{\mathbf{NOB}}$, and then complete the proof by showing $ \oneufclass{\numbid}^{\mathbf{NOB}}\subseteq \oneufclass{\numbid} $. 

\textbf{Proof of $\oneufclass{\numbid} \subseteq\oneufclass{\numbid}^{\mathbf{NOB}}$.} 

\begin{observation}
    Overbidding is not a safe bidding strategy. To see this, let 
    \begin{align*}
      \ibid=\langle(b_1, q_1), \dots, (b_\l, q_\l), \dots, (b_\numbid, q_\numbid)\rangle  
    \end{align*}
    
    be an overbid such that $b_\l>w_{Q_\l}$. Now, consider an auction in which the competing bids are:
    \begin{align*}
        \ordotherbid{}{j}=\begin{cases}
           \epsilon, &~\text{if } 1\leq j\leq Q_\l\\
           2b_1, &~\text{if } Q_\l < j \leq K
        \end{cases}\,,
    \end{align*}
    where $\epsilon < \frac{b_\numbid}{2}$. Submitting $\ibid$ gets allocated $Q_\l$ units and from \cref{lem:price-multiple}, we conclude that the clearing price of auction is $b_\l$. As $b_\l>w_{Q_\l}$ by assumption, the RoI constraint is violated.
\end{observation}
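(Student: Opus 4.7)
The plan is to verify the adversarial competing-bid construction supplied in the statement of the observation and confirm that it forces an RoI violation, thereby showing that any overbidding strategy $\ibid$ admits at least one profile $\otherbid{}$ under which RoI fails, i.e., $\ibid \notin \oneufclass{\numbid}$. Fix the overbid $\ibid = \langle (b_1, q_1), \dots, (b_\numbid, q_\numbid)\rangle$ with $b_\l > w_{Q_\l}$ for some $\l \in [\numbid]$, and adopt the prescribed profile $\ordotherbidnew{j} = \epsilon$ for $1 \le j \le Q_\l$ and $\ordotherbidnew{j} = 2 b_1$ for $Q_\l < j \le K$, with $\epsilon < b_\numbid/2$. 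The verification then proceeds in three steps: (i) pin down the bidder's allocation $x(\allbids)$, (ii) invoke \cref{lem:price-multiple} to read off the clearing price $p(\allbids)$, and (iii) compare the resulting value with the resulting payment.

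For step (i), I would argue $x(\allbids) = Q_\l$ exactly. The bidder's top $Q_\l$ bids (namely $b_1, \dots, b_\l$) all satisfy $b_j \ge b_\l \ge b_\numbid > 2\epsilon > \epsilon$, so each of them strictly beats every low competing bid in positions $1$ through $Q_\l$, yielding $Q_\l$ won units. Conversely, all of the bidder's bids are bounded above by $b_1 < 2 b_1$, so none of them can displace any of the high competing bids occupying positions $Q_\l + 1$ through $K$. Hence exactly $Q_\l$ units are won. For step (ii), since $x(\allbids) = Q_\l$ sits on the boundary of the $\l$-th price tier, the third case of \cref{lem:price-multiple} applies, giving $p(\allbids) = \min(b_\l, \ordotherbidnew{Q_\l+1}) = \min(b_\l, 2 b_1) = b_\l$, using $b_\l \le b_1 < 2 b_1$. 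For step (iii), the value is $\val(\allbids) = \sum_{k=1}^{Q_\l} v_k = Q_\l \cdot w_{Q_\l}$ by the definition of $w_{Q_\l}$ in Eq.~\eqref{eq:w}, and the payment is $\price(\allbids) = p(\allbids) \cdot x(\allbids) = Q_\l \cdot b_\l$. The overbidding hypothesis $b_\l > w_{Q_\l}$ then yields $\val(\allbids) < \price(\allbids)$, violating the (normalized $\gamma = 0$) RoI constraint and showing $\ibid \notin \oneufclass{\numbid}$.

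The argument is essentially bookkeeping once \cref{lem:price-multiple} is in hand, so the only (mild) obstacles are making the allocation calculation airtight and covering a corner case. The corner case is $Q_\l = K$, where there are no high competing bids at all; here the lemma's convention $\ordotherbidnew{K+1} = \infty$ makes step (ii) go through unchanged because $\min(b_\l, \infty) = b_\l$. Under the paper's no-tie convention (ties broken in the bidder's favor), the strict inequalities $b_j > \epsilon$ for $j \le \l$ and $b_j < 2 b_1$ for all $j$ avoid any tie-breaking subtleties, and the overbidding index $\l$ can be any position, not necessarily the largest one, so the construction is fully generic. With these checks, the observation follows directly.
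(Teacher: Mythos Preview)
Your proposal is correct and follows exactly the same approach as the paper's own argument, which is embedded directly in the statement of the observation: you verify the allocation $x(\allbids)=Q_\l$ by comparing the bidder's bids against the two tiers of competing bids, read off $p(\allbids)=b_\l$ from \cref{lem:price-multiple}, and conclude the RoI violation from $b_\l>w_{Q_\l}$. Your write-up simply makes the bookkeeping explicit (including the $Q_\l=K$ corner case), which the paper leaves implicit.
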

As overbidding is not a safe strategy, every safe bidding strategy is a NOB strategy, i.e., $\oneufclass{\numbid} \subseteq\oneufclass{\numbid}^{\mathbf{NOB}}$.

\textbf{Proof of $\oneufclass{\numbid}^{\mathbf{NOB}}\subseteq\oneufclass{\numbid}$.} We now prove that the converse is also true, i.e., every NOB strategy is also a safe strategy. To show this, fix any competing bid, $\otherbid{}$, and consider any $\ibid=\langle(b_1, q_1), \dots, (b_\numbid, q_\numbid)\rangle\in\oneufclass{\numbid}^{\mathbf{NOB}}$. Suppose bidding $\ibid$ wins $x(\allbids)$ units. So, from \cref{lem:price-multiple}, if $x(\allbids)=0$, trivially, $\price(\allbids)=0=\val(\allbids)$. If $Q_{\l-1}<x(\allbids)\leq Q_\l$, for some $\l\in[\numbid]$,
\begin{align*}
    \price(\allbids)&= x(\allbids)\cdot p(\allbids)\leq x(\allbids)\cdot b_\l \leq x(\allbids)\cdot w_{Q_\l}\leq x(\allbids)\cdot w_{x(\allbids)}=\val(\allbids)\,.
\end{align*}
The first inequality holds true because  bidders' per-unit payment is at most their least winning bid~(individual rationality of the auction format), the second is true by definition, and the third is true because the $w_j$ is non-decreasing in $j$ and $x(\allbids)\leq Q_\l$. As the choice of competing bids and bidding strategy $\ibid$ was arbitrary, we conclude that every strategy in $\oneufclass{\numbid}^{\mathbf{NOB}}$ is safe, i.e., $\oneufclass{\numbid}^{\mathbf{NOB}}\subseteq\oneufclass{\numbid}$, which completes the proof.

\subsubsection{Proof of \cref{lem:price-multiple}}\label{apx:lem:price-multiple}
   Consider the following three cases:

   (1) If $x(\allbids^t)=0$, trivially, $p(\allbids^t)=0$.
   
   (2) Let $Q_{\l-1}<x(\allbids^t)<Q_\l$. Let $b$ be the last accepted bid after including $\ibid$, i.e., the smallest bid in $\allbids^t=(\ibid, \otherbid{t})$. Then $b_\l\stackrel{(a)}{\geq} b \stackrel{(b)}{\geq} b_\l$. 
    \begin{enumerate}
        \item[I.] $(a)$ holds true because the bidder is allocated at least one unit for bid $b_\l$ and
        \item[II.] $(b)$ is correct because they do not acquire at least one unit for bid $b_\l$. Hence, $p(\allbids^t)=b_\l$.
    \end{enumerate}

   (3) Suppose $x(\allbids^t)=Q_\l$. If $b_\l>\ordotherbid{t}{Q_\l+1}$, then $p(\allbids^t)=\ordotherbid{t}{Q_\l+1}$. However, if $b_\l\leq \ordotherbid{t}{Q_\l+1}$, $p(\allbids^t)=b_\l$. So, $p(\allbids^t)=\min(b_\l, \ordotherbid{t}{Q_\l+1})$. 

\subsection{Proof of \cref{thm:opt-bid}}
We show that for any $\numbid\in\N$, most safe bidding strategies are very weakly dominated for which they can eliminated from $\oneufclass{\numbid}$, resulting in $\optoneufclass{\numbid}$. We begin by establishing a general result regarding the monotonocity of feasible bid vectors~(not necessarily $\numbid$-uniform strategies) for value maximizing bidders. As the result holds for any bid vector, it is also true for $\numbid$-uniform bidding strategies.

\begin{lemma}[Monotonocity of feasible bids]\label{lem:monotone}
    Consider two \emph{sorted} bid vectors: $\ibid=[b_1, b_2, \dots, b_{k}]$ and $\ibid'=[b_1', b_2', \dots, b_{k}']$ such that $b_j\geq b_j', \forall j\in[k]$. Suppose $\ibid$ is RoI feasible for some competing bid, $\otherbid{}$. Then, $\ibid'$ is also feasible for $\otherbid{}$ and $\val(\ibid; \otherbid{})\geq \val(\ibid'; \otherbid{})$.
\end{lemma}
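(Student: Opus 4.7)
My plan is to derive both conclusions — the value ordering $\val(\ibid;\otherbid{}) \geq \val(\ibid';\otherbid{})$ and the RoI-feasibility of $\ibid'$ — from a single allocation-monotonicity observation. Writing $x := x(\ibid;\otherbid{})$ and $x' := x(\ibid';\otherbid{})$, I would first argue that $x' \leq x$: the bidder's $j$-th highest bid lands in the top $K$ overall iff it weakly exceeds the $(K{-}j{+}1)$-th highest competing bid, i.e.\ iff $b_j \geq \ordotherbidnew{j}$. Since $b_j \geq b_j'$ componentwise, any unit won under $\ibid'$ is also won under $\ibid$, giving $x' \leq x$. The value inequality is then immediate because $\val(\ibid';\otherbid{}) = \sum_{i=1}^{x'} v_i \leq \sum_{i=1}^{x} v_i = \val(\ibid;\otherbid{})$ by non-negativity of $v_i$.

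For RoI-feasibility, I would invoke \cref{lem:price-multiple} (viewing each sorted bid vector as a $k$-uniform bid with unit-quantity blocks) to write the per-unit payments as $p' = \min(b_{x'}', \ordotherbidnew{x'+1})$ when $x' \geq 1$ and $p' = 0$ when $x' = 0$, and analogously $p = \min(b_x, \ordotherbidnew{x+1})$. The case $x'=0$ is trivial, so assume $x' \geq 1$; it suffices to show $w_{x'} \geq p'$. I split into two sub-cases. When $x' = x$, the bound $b_{x'}' \leq b_x$ yields $p' \leq p$, and feasibility of $\ibid$ gives $w_{x'} = w_x \geq p \geq p'$. When $x' < x$, I would chain
\begin{align*}
w_{x'} \;\geq\; w_x \;\geq\; p \;\geq\; \ordotherbidnew{x} \;\geq\; \ordotherbidnew{x'+1} \;\geq\; p',
\end{align*}
where the first inequality is non-increasing valuations, the second is feasibility of $\ibid$, the third uses $b_x \geq \ordotherbidnew{x}$ (required for $\ibid$ to win $x$ units) together with $\ordotherbidnew{x+1} \geq \ordotherbidnew{x}$, the fourth uses the monotonicity of $\ordotherbidnew{\cdot}$ and $x'+1 \leq x$, and the fifth is the definition of $p'$.

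The main subtlety is that although both value and payment weakly decrease under a pointwise drop in bids, they need not decrease by comparable amounts — a direct perturbation bound (subtracting $\sum_{i=x'+1}^{x} v_i$ from the value side and $xp - x'p'$ from the payment side) can easily be too loose or even of the wrong sign when $p$ happens to be much smaller than $v_x$. The chain above sidesteps this by routing through the competing bid $\ordotherbidnew{x}$: feasibility of $\ibid$ already forces $w_x \geq \ordotherbidnew{x}$, and because the competing bids do not depend on the learner's own bid vector, $\ordotherbidnew{x}$ upper-bounds $p'$ whenever $x' < x$. This transfers the RoI guarantee from $\ibid$ to $\ibid'$ without ever needing to compare $p$ and $p'$ directly.
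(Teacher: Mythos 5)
Your proposal is correct and follows essentially the same route as the paper's: the decisive chain is $p' \leq p \leq w_{x} \leq w_{x'}$, which is exactly the paper's comparison (run there by contradiction), and your detour through $\ordotherbidnew{x}$ and $\ordotherbidnew{x'+1}$ simply makes explicit the allocation- and price-monotonicity facts that the paper asserts ``by definition of the allocation and payment rule.'' No gaps.
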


Suppose $\ibid=\langle(b_1, q_1),\dots, (b_\numbid, q_\numbid)\rangle$ is an underbidding strategy per \cref{def:under-over_bid}. Consider $\ibid'=\langle(b_1', q_1'),\dots, (b_\numbid', q_\numbid')\rangle$ such that $q_j'=q_j$ and $b_j'=w_{Q_j}, \forall j\in[\numbid]$. By \cref{thm:P-m}, we establish that $\ibid, \ibid'\in\oneufclass{\numbid}$. Invoking \cref{lem:monotone} gives us that the underbidding strategy is very weakly dominated, and hence can be removed from $\oneufclass{\numbid}$ to obtain $\optoneufclass{\numbid}$.

\subsubsection{Proof of \cref{lem:monotone}}
   Let $\allbids=(\ibid, \otherbid{})$ 
and $\allbids'=(\ibid', \otherbid{})$. We first prove that $\ibid'$ is also feasible for $\otherbid{}$. Contrary to our claim, suppose $\ibid'$ is infeasible, i.e., the value obtained by $\ibid'$ when the competing bids are $\otherbid{}$ is strictly less than the payment. Suppose $\ibid'$ is allocated $r'$ units with clearing price $p(\allbids')$ such that the RoI constraint is violated: 
\begin{align}\label{eq:mono1}
    p(\allbids') > w_{r'}.
\end{align}

Suppose $\ibid$ is allocated $r$ units when the competing bids are $\otherbid{}$. By definition of allocation and payment rule, the units allocated and the clearing price in an auction are weakly increasing in bids, so $r'\leq r$ and $p(\allbids')\leq p(\allbids)$. As $\ibid$ is feasible, 
\begin{align}\label{eq:mono2}
   p(\allbids) \leq w_{r}. 
\end{align}
Combining Equations \eqref{eq:mono1} and \eqref{eq:mono2}, we have
\begin{align*}
    p(\allbids) \leq w_{r} \stackrel{(a)}{\leq} w_{r'} < p(\allbids') \implies p(\allbids)<p(\allbids'),
\end{align*}
which is a contradiction. Here $(a)$ is true as $w_j$ is non-increasing in $j$ and $r'\leq r$. So, $\ibid'$ is feasible.

By definition of the allocation rule, the value obtained in an auction is weakly increasing in the bids submitted. As $\ibid$ and $\ibid'$ are both feasible, $\val(\ibid; \otherbid{})\geq \val(\ibid'; \otherbid{})$.
\section{Omitted Proofs From \cref{sec:learning-safe}}\label{apx:sec:proofs}

\subsection{Proof of \cref{thm:DAG-base-strategy}}\label{apx:thm:DAG-base-strategy}

\textbf{$\mathcal{G}(V, E)$ is a DAG.} For convenience, let $d=(\numbid+1, \infty)$. In the constructed graph, any node $(\l, j)$ where $\l\in[\numbid]$ has edges edges either to the next layer, i.e. nodes of the form $(\l+1, j')$ where $j'>j$ or to the destination node $(\numbid+1, \infty)$. Hence, the directed graph has a topological sorting of the nodes implying that $\mathcal{G}(V, E)$ is a DAG.

\textbf{Bijection between $s$-$d$ paths and strategies in $\optufclass{\numbid}$.}
Consider a path\footnote{Here, we assume the path has $\numbid+1$ edges without loss of generality. The same argument follows if $\path=s\to (1, z_1)\to\dots\to (k, z_k)\to d$ with $k+1$ edges for some $k\in[\numbid]$ is considered instead.} 
\begin{align*}
   \path=s\to (1, z_1)\to\dots\to (\numbid, z_\numbid)\to d\,. 
\end{align*}

By construction of $\mathcal{G}(V, E)$, edges $e=s\to(1, z_1)$ and $e=(\numbid, z_\numbid)\to d$ always exist. For any $\l\in[\numbid-1]$, the edge $(\l, z_\l)\to(\l+1, z_{\l+1})$ exists if $z_\l< z_{\l+1}$. With this path $\path$, we associate the strategy
\begin{align*}
   \ibid=\langle (b_1, q_1), \dots, (b_\numbid, q_\numbid)\rangle\quad\text{where}\quad b_\l=w_{z_\l}\quad\text{and}\quad q_\l=z_\l-z_{\l-1}, &\forall\l\in[\numbid]\,. 
\end{align*}
where $z_0=0$. By definition, $Q_j=\sum_{\l=1}^jq_\l=z_j$ and  $ b_j=w_{z_j}=w_{Q_j}, \forall j\in[\numbid]$. Hence, $\ibid\in\optufclass{\numbid}$.

Conversely, consider any safe strategy $\ibid=\langle (b_1, q_1), \dots, (b_\numbid, q_\numbid)\rangle$ with $b_j=w_{Q_j}$. With this strategy, we associate the $s$-$d$ path $\path'=s\to (1, Q_1)\to\dots\to (\numbid, Q_\numbid)\to d$ where $Q_j=\sum_{\l=1}^j q_\l, \forall j\in[\numbid]$. We claim that $\path'$ is a \textit{valid} path, i.e., all the edges exist, because by definition $e=s\to (1, Q_1)$ and $e=(\numbid, Q_\numbid)\to d$ always exist. Furthermore, for any $\l\in[\numbid-1]$, the edge $(\l, Q_l)\to (\l+1, Q_{\l+1})$ also exists as $Q_{\l+1}-Q_\l=q_{\l+1}>0$.

\textbf{Weight of $s$-$d$ paths.} By assumption, $s=(0, 0)$ and $z_0=0$. The weight of $\path=s\to (1, z_1)\to\dots\to (\numbid, z_\numbid)\to d$ is:
\begin{align*}
    \textsf{w}(\path) &= \sum_{e\in \path} \textsf{w}(e)=\sum_{\l=1}^\numbid \textsf{w}((\l-1, z_{\l-1})\to (\l, z_{\l})) \\
    &\stackrel{\eqref{eq:dag-base-weight}}{=}\sum_{\l=1}^\numbid \sum_{t=1}^T\sum_{k=z_{\l-1}+1}^{z_\l}v_k\cdot\ind{w_{z_\l}\geq \ordotherbid{t}{k}}\\
    &=\sum_{t=1}^T\sum_{\l=1}^\numbid \sum_{k=z_{\l-1}+1}^{z_\l}v_k\cdot\ind{w_{z_\l}\geq \ordotherbid{t}{k}},
\end{align*}
which is the value obtained by the safe strategy $\ibid=\langle (b_1, q_1), \dots, (b_\numbid, q_\numbid)\rangle\in\optufclass{\numbid}$ where $b_\l=w_{z_\l}, \forall\l\in[\numbid]$.

\textbf{Computing maximum weight path.} As $\mathcal{G}(V, E)$ is a DAG, the edge weights can be negated and the maximum~(resp. minimum) weight path problem in the original~(resp. `negated') DAG can be solved in space and time complexity of $O(|V|+|E|)=O(\numbid\maxbid^2)$ which is polynomial in the parameters of the problem.

\subsection{Proof of \cref{thm:full-info}}\label{apx:thm:full-info}
We begin by observing that there is a bijective mapping between $s$-$d$ paths in the DAG $\mathcal{G}^t(V, E)$ and bidding strategies $\ibid\in\optufclass{\numbid}$~(see proof of \cref{thm:DAG-base-strategy}). The core idea of the proof is to show that \cref{alg:weight-pushing} is an \textit{equivalent and efficient} implementation of the Hedge algorithm~\citep{freund1997decision} where every $s$-$d$ path is treated as an expert. 


In \cref{alg:weight-pushing}, $\varphi^t(u\to\cdot)$ denotes the probability distribution over the outgoing neighbors of node $u$. By the recursive sampling of nodes, we get that the probability of selecting $s$-$d$ path $\path$ is 
\begin{align}\label{eq:edge-recursion}
    \prob^t(\path) = \prod_{e\in\path} \varphi^t(e),
\end{align}
and for any edge $e=u\to v$ in $\mathcal{G}^t(V, E)$, the edge probabilities, $\varphi^t(e)$, are updated as
\begin{align}\label{eq:edge-prob-update}
    \varphi^t(e)=\varphi^{t-1}(e)\cdot\exp(\eta \textsf{w}^{t-1}(e))\cdot\frac{\Gamma^{t-1}(v)}{\Gamma^{t-1}(u)},
\end{align}
where $\Gamma^{t-1}(d)=1$ and $\Gamma^{t-1}(\cdot)$ is computed recursively in bottom-to-top fashion for every node $u\in V$ as follows:
    \begin{align}\label{eq:gamma-recursion}
        \Gamma^{t-1}(u)=\sum_{v:u\to v\in E}\Gamma^{t-1}(v)\cdot\varphi^{t-1}(u\to v)\cdot\exp(\eta \textsf{w}^{t-1}(u\to v))\,.
    \end{align}
    
 Now, consider a na\"ive version of the Hedge algorithm with learning rate $\eta$ in which each $s$-$d$ path is treated as an individual expert. For $t=1$, the algorithm samples a path uniformly at random from $\mathscr{P}$, the collection of all $s$-$d$ paths in the DAG. For $t\geq 2$, the probability of selecting path $\path$ in round $t$ is 
\begin{align*}
   \prob_{\textsc{hedge}}^t(\path) = \frac{\prob_{\textsc{hedge}}^{t-1}(\path)\exp(\eta\sum_{e\in\path}\textsf{w}^{t-1}(e) )}{\sum_{\path'}\prob_{\textsc{hedge}}^{t-1}(\path')\exp(\eta \sum_{e\in\path'}\textsf{w}^{t-1}(e))}\,.
\end{align*}

 We prove \cref{alg:weight-pushing} is equivalent to the Hedge algorithm by showing that $\prob^t(\path)=\prob_{\textsc{hedge}}^t(\path), \forall \path\in\mathscr{P}, \forall t\in[T]$.
 To this end, we first present the following result:
\begin{claim}\label{cl:hedge}
    For any node $u$ in the graph, let $\mathscr{P}(u)$ be the set of paths starting at $u$ and terminating at $d$. Then,
    \begin{align*}
        \Gamma^{t-1}(u)=\sum_{\path\in\mathscr{P}(u)}\prod_{e\in\path}\varphi^{t-1}(e)\cdot\exp(\eta \textsf{w}^{t-1}(e))
    \end{align*}
\end{claim}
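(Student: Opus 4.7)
The plan is to prove the claim by backward induction over the nodes of the DAG $\mathcal{G}^t(V,E)$ in reverse topological order, starting from the destination node $d$ and moving backward toward $s$. The induction is well-defined because $\mathcal{G}^t(V,E)$ is a DAG with the layered structure described in \cref{sec:offline}, so every node $u$ has all of its out-neighbors strictly closer to $d$ in topological order than $u$ itself.

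For the base case $u = d$, the recursion in \cref{alg:weight-pushing-subroutine} sets $\Gamma^{t-1}(d) = 1$. On the right-hand side, $\mathscr{P}(d)$ consists of the single trivial path of zero length, whose weight is the empty product, also equal to $1$; the base case therefore holds.

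For the inductive step, fix a node $u \neq d$ and assume the claim already holds for every out-neighbor $v$ of $u$. Starting from the recursive definition of $\Gamma^{t-1}(u)$ in \eqref{eq:gamma-recursion} and substituting the inductive hypothesis for each $\Gamma^{t-1}(v)$, I obtain
\begin{align*}
\Gamma^{t-1}(u) = \sum_{v:\, u\to v\in E} \varphi^{t-1}(u\to v)\cdot \exp(\eta\, \mathsf{w}^{t-1}(u\to v)) \sum_{\path'\in\mathscr{P}(v)} \prod_{e\in\path'} \varphi^{t-1}(e)\cdot \exp(\eta\, \mathsf{w}^{t-1}(e)).
\end{align*}
The next step is to observe that every path $\path\in\mathscr{P}(u)$ admits a unique decomposition into a first edge $u\to v$ and a tail $\path'\in\mathscr{P}(v)$, so pushing the outer factor $\varphi^{t-1}(u\to v)\cdot \exp(\eta\,\mathsf{w}^{t-1}(u\to v))$ inside the inner product exactly re-indexes the double sum as a single sum over $\path\in\mathscr{P}(u)$, yielding
\begin{align*}
\Gamma^{t-1}(u) = \sum_{\path\in\mathscr{P}(u)} \prod_{e\in\path} \varphi^{t-1}(e)\cdot \exp(\eta\, \mathsf{w}^{t-1}(e)),
\end{align*}
which completes the induction.

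The argument is essentially routine once the DAG structure is in hand; there is no real obstacle. The only point worth double-checking is that the unique-decomposition step works also for the boundary edges of the form $(\ell, j)\to d$, for which the ``tail'' $\path'$ is the empty path at $d$ with value $1$, so the formula still collapses correctly. Once the claim is established, it will be used downstream (in the subsequent portion of the proof of \cref{thm:full-info}) to verify that $\prob^{t}(\path)$ from \eqref{eq:edge-recursion}--\eqref{eq:edge-prob-update} coincides with the Hedge probability $\prob^t_{\textsc{hedge}}(\path)$, since $\Gamma^{t-1}(s)$ will play the role of the Hedge normalizer $\sum_{\path'} \prob^{t-1}_{\textsc{hedge}}(\path') \exp(\eta \sum_{e\in\path'}\mathsf{w}^{t-1}(e))$.
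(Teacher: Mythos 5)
Your proof is correct and follows essentially the same route as the paper: a backward induction from $d$ (the paper phrases it as bottom-to-top induction over layers, you as reverse topological order over nodes, which is equivalent here), with the same base case $\Gamma^{t-1}(d)=1$ as the empty product and the same re-indexing of the double sum via the unique first-edge/tail decomposition of each path in $\mathscr{P}(u)$. Your explicit remark about the boundary edges $(\ell,j)\to d$ is a harmless extra check that the paper's layer-wise induction handles implicitly through the base case.
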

\begin{proof}
    We prove the result by strong induction in a bottom-to-top order. For the base case, $u=d$, $\Gamma^{t-1}(d)=1$. Suppose the result is true for all the nodes in layer $\l'$ for some $\l< \l'$. By the recursion in \cref{eq:gamma-recursion}, for any node $u$ in layer $\l$,
    \begin{align*}
        \Gamma^{t-1}(u)&=\sum_{v:u\to v\in E}\Gamma^{t-1}(v)\cdot\varphi^{t-1}(u\to v)\cdot\exp(\eta \textsf{w}^{t-1}(u \to v))\\
        &=\sum_{v:u\to v\in E}\Big(\sum_{\path\in\mathscr{P}(v)}\prod_{e\in\path}\varphi^{t-1}(e)\cdot\exp(\eta \textsf{w}^{t-1}(e))\Big)\varphi^{t-1}(u \to v)\cdot\exp(\eta \textsf{w}^{t-1}(u \to v))\\
        &=\sum_{\path\in\mathscr{P}(u)}\prod_{e\in\path}\varphi^{t-1}(e)\cdot\exp(\eta \textsf{w}^{t-1}(e)),
    \end{align*}
    where the first equality follows from induction hypothesis.
\end{proof}
 \begin{claim}\label{cl:hedge-1}
     $\prob_{\textsc{hedge}}^1(\path)= \frac{1}{|\mathscr{P}|} =\prod_{e\in\path} \varphi^1(e) =\prob^1(\path)$.
 \end{claim}
 
\begin{proof}
     The first equality holds from the definition of the na\"ive Hedge algorithm above and the last equality follows from \cref{eq:edge-recursion}. To complete the proof, we need to show that $\prod_{e\in\path} \varphi^1(e)=\frac{1}{|\mathscr{P}|}$. As $\varphi^0(e)=1$ and $\textsf{w}^0(e)=0$ for all $e\in E$ by initialization in \cref{alg:weight-pushing}, for any edge $e=u\to v$, we get that 
     \begin{align*}
         \varphi^1(e)\stackrel{\eqref{eq:edge-prob-update}}{=}\frac{\Gamma^0(v)}{\Gamma^0(u)} \implies \prob^1(\path) \stackrel{\eqref{eq:edge-recursion}}{=} \prod_{e\in\path}\varphi^1(e) = \frac{\Gamma^0(d)}{\Gamma^0(s)}
     \end{align*}
     for any $s$-$d$ path $\path\in\mathscr{P}$. For any node $u\in V$, by \cref{cl:hedge}, $\Gamma^0(u)=\sum_{\path\in\mathscr{P}(u)}\prod_{e\in \path}1=|\mathscr{P}(u)|$, the number of paths that start at $u$ and terminate at $d$. By initialization, $\Gamma^0(d)=1$ and $\Gamma^0(s)=|\mathscr{P}(s)|=|\mathscr{P}|$, which proves the result.
 \end{proof}

We will show by induction on $t\in[T]$ that for any path $\path\in\mathscr{P}$, $\prob^t(\path)=\prob_{\textsc{hedge}}^t(\path)$. For $t=1$ and any path $\path$, the result is true due to \cref{cl:hedge-1}. 
Suppose the result holds for round $t-1$, i.e., $\prob^{t-1}(\path)=\prob_{\textsc{hedge}}^{t-1}(\path)=\prod_{e\in\path} \varphi^{t-1}(e)$. For round $t$,
\begin{align*}
    \prob^{t}(\path)&=\prod_{e\in\path} \varphi^{t}(e)\\
    &\stackrel{\eqref{eq:edge-prob-update}}{=}\prod_{e=u\to v \in\path} \varphi^{t-1}(e)\cdot\exp(\eta \textsf{w}^{t-1}(e))\cdot\frac{\Gamma^{t-1}(v)}{\Gamma^{t-1}(u)}\\
    &=\prob_{\textsc{hedge}}^{t-1}(\path)\cdot\exp\Big(\eta\sum_{e\in\path} \textsf{w}^{t-1}(e)\Big)\cdot\frac{\Gamma^{t-1}(d)}{\Gamma^{t-1}(s)},
\end{align*}
where the last equality follows by telescoping product and invoking the induction hypothesis. By definition, $\Gamma^{t-1}(d)=1$. By \cref{cl:hedge},
\begin{align*}
    \Gamma^{t-1}(s) = \sum_{\path'}\prod_{e\in\path'}\varphi^{t-1}(e)\cdot\exp(\eta \textsf{w}^{t-1}(e)) = \sum_{\path'}\prob_{\textsc{hedge}}^{t-1}(\path')\cdot\exp\Big(\eta\sum_{e\in\path'} \textsf{w}^{t-1}(e)\Big)
\end{align*}
Substituting the values gives $\prob^{t}(\path)=\prob_{\textsc{hedge}}^{t}(\path)$. So, \cref{alg:weight-pushing} is a correct implementation of the Hedge algorithm.

\textbf{Regret Upper Bound and Time Complexity.} Having shown that \cref{alg:weight-pushing} is equivalent to the Hedge algorithm, we now recall its regret bound from \citet[Theorem 2.2]{cesa2006prediction} which states that for learning rate $\eta$:
\begin{align*}
   \textsf{REG} \le \frac{\log |\mathscr{P}|}{\eta} + \frac{\eta T\maxbid^2}{8} \,.
\end{align*}

As $|\mathscr{P}|\leq \maxbid^{\numbid}$, 
\begin{align*}
    \textsf{REG} \leq \frac{\numbid\log \maxbid}{\eta}+ \frac{\eta T\maxbid^2}{8}\,.
\end{align*}

Setting $\eta=\frac{1}{\maxbid }\sqrt{\frac{8\numbid\log \maxbid}{T}}$, we get $\textsf{REG} \le O(\maxbid\sqrt{\numbid T\log \maxbid})$. In the full information setting, the running time bottleneck is computing \cref{eq:gamma-recursion}. Thus, \cref{alg:weight-pushing} runs in $|E|=O(\numbid\maxbid^2)$ time per round.

\subsection{Proof of \cref{thm:bandit}}\label{apx:thm:bandit}
\textbf{Regret Analysis.} In the bandit setting, the bidder follows \cref{alg:weight-pushing} by substituting $\textsf{w}^t(e)$ with its unbiased estimator $\widehat{\textsf{w}}^t(e)$. Thus, in this setting, the updates is similar to that of the EXP3 algorithm~\citep{auer2002nonstochastic}. Using the standard analysis for EXP3 algorithm~(see \citet[Chapter 11]{lattimore2020bandit}), we obtain
 \begin{align*}
     \textsf{REG} = \frac{\log |\mathscr{P}|}{\eta} +\sum_{t=1}^T\E\left[\frac{1}{\eta}\log\left(\sum_{\path}\prob^t(\path)e^{\eta \widehat{\textsf{w}}^t(\path)}\right)-\sum_{\path}\prob^t(\path)\widehat{\textsf{w}}^t(\path)\right],
 \end{align*}
where $\widehat{\textsf{w}}^t(\path)=\sum_{e\in \path}\widehat{\textsf{w}}^t(e)$. Note that $\widehat{\textsf{w}}^t(\path)=\sum_{e\in \path}\widehat{\textsf{w}}^t(e)\leq \sum_{e\in \path}\overline{\textsf{w}}(e)\le \maxbid$ and $|\mathscr{P}|\leq \maxbid^{\numbid}$. Note that if the natural importance-weighted estimator had been used instead, we would not be able to guarantee that $\widehat{\textsf{w}}^t(\path) \leq \maxbid$. For $\eta\leq \frac{1}{\maxbid }$, $\eta \widehat{\textsf{w}}^t(\path) \leq 1$. Hence,
\begin{align*}
    \textsf{REG} &\leq \frac{\numbid \log \maxbid}{\eta} + \eta \sum_{t=1}^T \sum_{\path}\prob^t(\path)\E[\widehat{\textsf{w}}^t(\path)^2]\\
    &\le\frac{\numbid \log \maxbid}{\eta} + \eta\numbid \sum_{t=1}^T \sum_{\path}\prob^t(\path)\sum_{e\in \path}\E[\widehat{\textsf{w}}^t(e)^2],
\end{align*}
where in the first inequality, we used $e^x\leq 1+x+x^2$ for $x\leq 1$ and $\log(1+x)\leq x$ for $x\ge0$. The second inequality follows from Cauchy-Schwarz. Observe that
\begin{align*}
    \sum_{\path}\prob^t(\path)\sum_{e\in \path}\E[\widehat{\textsf{w}}^t(e)^2] =\sum_{e\in E}\E[\widehat{\textsf{w}}^t(e)^2] \sum_{\path:e\in\path}\prob^t(\path)=\sum_{e\in E}p^t(e)\E[\widehat{\textsf{w}}^t(e)^2],
\end{align*}
where the last equality holds as $p^t(e)=\sum_{\path:e\in\path}\prob^t(\path)$.
\begin{align*}
   \sum_{e\in E}p^t(e)\E[\widehat{\textsf{w}}^t(e)^2] &= \sum_{e\in E}p^t(e)\Big[\overline{\textsf{w}}(e)^2(1-p^t(e))+\Big(\overline{\textsf{w}}(e)-\frac{\overline{\textsf{w}}(e)-\textsf{w}^t(e)}{p^t(e)}\Big)^2p^t(e) \Big]\\
   &=\sum_{e\in E}\textsf{w}^t(e)^2p^t(e)+(\overline{\textsf{w}}(e)-\textsf{w}^t(e))^2(1-p^t(e))\leq \sum_{e\in E}\overline{\textsf{w}}(e)^2\,.
\end{align*}
Observe that $\sum_{e}\overline{\mathsf{w}}(e)^2\leq |E|M^2\lesssim\numbid\maxbid^4$. Hence, 
\begin{align*}
   \mathsf{REG} &\leq \frac{\numbid \log \maxbid}{\eta} + \eta\numbid \sum_{t=1}^T \sum_{\path}\prob^t(\path)\sum_{e\in \path}\E[\widehat{\mathsf{w}}^t(e)^2] \\
   &\lesssim \frac{\numbid \log \maxbid}{\eta} + \eta T\numbid^2\maxbid^4\,.
\end{align*}
Setting $\eta = \frac{1}{\maxbid^2}\sqrt{\frac{\log \maxbid}{\numbid T}}$, we get $\mathsf{REG} \leq O(\maxbid^2\sqrt{\numbid^3 T\log \maxbid})$.
\textbf{Time Complexity.}  The bottleneck of running \cref{alg:weight-pushing} under the bandit feedback is the efficient computation of the marginal distribution, $p^t(e)$. We claim that this can be done in $O(|E|)=O(\numbid\maxbid^2)$ time per-round.

We begin by defining a quantity analogous to $\Gamma^{t-1}(\cdot)$. Define $\widetilde{\Gamma}^{t-1}(s)=1$ and $\widetilde{\Gamma}^{t-1}(\cdot)$ is computed recursively for every node $u\in V$ as follows:
    \begin{align}\label{eq:gamma-tilde-recursion}
        \widetilde{\Gamma}^{t-1}(u)=\sum_{v:v\to u\in E}\widetilde{\Gamma}^{t-1}(v)\cdot\varphi^{t-1}(v\to u)\cdot\exp(\eta \widehat{\textsf{w}}^{t-1}(v\to u))\,.
    \end{align}
\begin{claim}\label{cl:hedge-tilde}
    For any node $u$ in the graph, let $\widetilde{\mathscr{P}}(u)$ be the set of paths starting at $s$ and terminating at $u$. Then,
    \begin{align*}
        \widetilde{\Gamma}^{t-1}(u)=\sum_{\path\in\widetilde{\mathscr{P}}(u)}\prod_{e\in\path}\varphi^{t-1}(e)\cdot\exp(\eta \widehat{\textsf{w}}^{t-1}(e))
    \end{align*}
\end{claim}
\begin{proof}
The proof follows the same structure as \cref{cl:hedge}, but instead of proceeding bottom-up through the nodes, we use top-down induction.
\end{proof}
Let $e=u\to v$. Recall that
\begin{align*}
    p^t(e)&\stackrel{\eqref{eq:edge-recursion}, \eqref{eq:w-hat-estimate-exp3}}{=}\sum_{\path:e\in\path}\prod_{e'\in\path}\varphi^t(e')\\
    &= \varphi^t(e)\left(\sum_{\path\in\widetilde{\mathscr{P}}(u)}\prod_{e'\in\path}\varphi^t(e')\right)\left(\sum_{\path\in\mathscr{P}(v)}\prod_{e''\in\path}\varphi^t(e'')\right)\\
    &\stackrel{\eqref{eq:edge-prob-update}}{=}\varphi^t(e)\cdot\frac{\Gamma^{t-1}(u)}{\Gamma^{t-1}(s)}\cdot\frac{\Gamma^{t-1}(d)}{\Gamma^{t-1}(v)}\times\left(\sum_{\path\in\widetilde{\mathscr{P}}(u)}\prod_{e'\in\path}\varphi^{t-1}(e')\exp(\eta\widehat{\textsf{w}}^{t-1}(e'))\right)\\
    &\quad\times\left(\sum_{\path\in\mathscr{P}(v)}\prod_{e''\in\path}\varphi^{t-1}(e'')\exp(\eta\widehat{\textsf{w}}^{t-1}(e''))\right)\,.
\end{align*}
By \cref{cl:hedge}, \cref{cl:hedge-tilde}, \cref{eq:edge-prob-update} and the fact that $\Gamma^{t-1}(d)=1$, we get
\begin{align*}
   p^t(e)&= \varphi^{t-1}(e)\cdot\exp(\eta \widehat{\textsf{w}}^{t-1}(e))\cdot\frac{\Gamma^{t-1}(v)}{\Gamma^{t-1}(u)}\cdot\frac{\Gamma^{t-1}(u)}{\Gamma^{t-1}(s)}\cdot\frac{1}{\Gamma^{t-1}(v)}\cdot\widetilde{\Gamma}^{t-1}(u)\cdot\Gamma^{t-1}(v) \\
   &= \varphi^{t-1}(e)\cdot\exp(\eta \widehat{\textsf{w}}^{t-1}(e))\cdot\frac{\widetilde{\Gamma}^{t-1}(u)\Gamma^{t-1}(v)}{\Gamma^{t-1}(s)}
\end{align*}
Observe that $\Gamma^{t-1}(\cdot)$~(resp. $\widetilde{\Gamma}^{t-1}(\cdot)$) can be computed for all nodes recursively using \cref{eq:gamma-recursion}~(resp. \cref{eq:gamma-tilde-recursion}) in $O(|E|)$ time. Hence, $p^t(e)$ can be computed for all edges $e\in E$ in $O(|E|)=O(\numbid\maxbid^2)$ time in any round.
\subsection{Proof of \cref{thm:regret-LB}}\label{apx:thm:regret-LB}
For $\numbid=1$, let $K=\maxbid$ be an even integer. Let 
\begin{align}\label{eq:LB-v}
   \v=[1, \dots, 1, v, \dots, v] \,.
\end{align}
Here, the first $\frac{\maxbid}{2}$ entries are 1 and the remaining $\frac{\maxbid}{2}$ entries are $v$. Let $v=1-\delta$. Define $\delta'=\frac{\delta}{2\maxbid}$,
\begin{align}\label{eq:regret-LB-competing-bids}
    \otherbid{\clubsuit} = [1-\delta', \dots, 1-\delta']\quad\text{and}\quad
    \otherbid{\diamondsuit} = \Big[\frac{1+v}{2}-\delta', \dots, \frac{1+v}{2}-\delta'\Big]
\end{align}
Consider two scenarios: 

\textbf{Scenario 1.} In this scenario, for every $t\in[T]$, the competing bids $\otherbid{t}$ are:
\begin{align*}
    \otherbid{t}=\begin{cases}
        \otherbid{\clubsuit}, &\text{ w.p. } \frac{1}{2}+\delta,\\
        \otherbid{\diamondsuit}, &\text{ w.p. } \frac{1}{2}-\delta,\\
    \end{cases}
\end{align*}

\textbf{Scenario 2.} In this scenario, for every $t\in[T]$, the competing bids $\otherbid{t}$ are:
\begin{align*}
    \otherbid{t}=\begin{cases}
        \otherbid{\clubsuit}, &\text{ w.p. } \frac{1}{2}-\delta,\\
        \otherbid{\diamondsuit}, &\text{ w.p. } \frac{1}{2}+\delta,\\
    \end{cases}
\end{align*}

for some $\delta\in(0, 1/4)$ to be determined shortly. Assume the randomness used in different rounds are independent.

Let $P$ and $Q$ be the distribution of $[\otherbid{t}]_{t\in[T]}$ for scenario 1 and 2 respectively. Then, for $\delta\in(0, \frac{1}{4})$,
\begin{align*}
    \textsc{KL}(P||Q) = T\cdot\textsc{KL}(\textsc{Bern}(0.5+\delta)||\textsc{Bern}(0.5-\delta))=2T\delta\log\Big(\frac{1+2\delta}{1-2\delta}\Big)\leq \frac{8T\delta^2}{1-2\delta}\leq 16T\delta^2
\end{align*}
where the first inequality follows from $\log(\frac{1+x}{1-x})\leq \frac{2x}{1-x}$. By \citet[Lemma 2.6]{tsybakov2009introduction},
\begin{align*}
    1-\textsc{TV}(P, Q) \geq \frac{1}{2}\exp{(-\textsc{KL}(P||Q))}\geq \frac{1}{2}\exp{\left(-16T\delta^2\right)}\,.
\end{align*}

Consider the class of 1-uniform safe strategies. We first show that we need to consider only two strategies out of the $\maxbid$ possible strategies. Recall that any strategy in this class is of the form $(w_q, q)$.

\textbf{Case 1. $1\leq q\leq \frac{\maxbid}{2}$}: Here, $w_q=1, \forall q$. So, $(1, \frac{\maxbid}{2})$ weakly dominates all the strategies of the form $(1, q)$ for $1\leq q\leq \frac{\maxbid}{2}$.

\textbf{Case 2. $\frac{\maxbid}{2}+1\leq q\leq \maxbid$}: In this interval, the highest bid value is for $q=\frac{\maxbid}{2}+1$ due to diminishing marginal returns property. Note that,
\begin{align*}
   w_{\frac{\maxbid}{2}+1}=\frac{\frac{\maxbid}{2}+1-\delta}{\frac{\maxbid}{2}+1}=1-\frac{\delta}{\frac{\maxbid}{2}+1}\leq 1-\frac{\delta}{\maxbid}<1-\delta'\,. 
\end{align*}

Hence, no strategy of form $(w_q, q)$ for $q\geq \frac{\maxbid}{2}+1$ is allocated any unit in when the competing bid profile is $\otherbid{\clubsuit}$. The smallest bid value for $\frac{\maxbid}{2}+1\leq q\leq \maxbid$ is for $q=\maxbid$. Note that, $w_{\maxbid}=\frac{1+v}{2}>\frac{1+v}{2}-\delta'$. Hence, $(w_{\maxbid}, \maxbid)=(\frac{1+v}{2}, \maxbid)=(1-\frac{\delta}{2}, \maxbid)$ is allocated all the units when the competing bid profile is $\otherbid{\diamondsuit}$ and, by definition, weakly dominates all the strategies of the form $(w_q, q)$ for $q\geq \frac{\maxbid}{2}+1$. Hence, we have two undominated~(for the constructed competing bids) strategies: $\ibid_1=(1, \frac{\maxbid}{2})$ and $\ibid_2=(1-\frac{\delta}{2}, \maxbid)$.

Now, we compute the expected value obtained by $\ibid_1$ and $\ibid_2$ for scenarios 1 and 2. 
\begin{align*}
    \E_P[\val(\ibid_1; \otherbid{t})] &= \E_Q[\val(\ibid_1; \otherbid{t})] =\frac{\maxbid}{2}\\
    \E_P[\val(\ibid_2; \otherbid{t})] &= \maxbid\left(\frac{1}{2}-\delta\right)\left(\frac{1+v}{2}\right) = \frac{\maxbid}{2}\left(\frac{1}{2}-\delta\right)(2-\delta)< \frac{\maxbid}{2}\\
    \E_Q[\val(\ibid_2; \otherbid{t})] &=\maxbid\left(\frac{1}{2}+\delta\right)\left(\frac{1+v}{2}\right) = \frac{\maxbid}{2}\left(\frac{1}{2}+\delta\right)(2-\delta)> \frac{\maxbid}{2}
\end{align*}
So, $\ibid_1$~(resp. $\ibid_2$) is optimal for scenario 1~(resp. scenario 2). 
Now, consider the distribution $\frac{P+Q}{2}$, i.e., $\textsc{Bern}(0.5)$.
\begin{align}
   \E_{\frac{P+Q}{2}}[\val(\ibid_1; \otherbid{t})] = \frac{\maxbid}{2} ~~\text{ and }~~&\E_{\frac{P+Q}{2}}[\val(\ibid_2; \otherbid{t})] =\frac{\maxbid}{2} \left(\frac{1+v}{2}\right) \leq \frac{\maxbid}{2}\nonumber\\
   \implies \max_{\ibid\in\optufclass{1}}\E_{\frac{P+Q}{2}}[\val(\ibid; \otherbid{t})]&\leq \frac{\maxbid}{2}\label{eq:p-q-UB}
\end{align}
Hence, for any $\ibid\in\optufclass{1}$, and any round $t\in[T]$,
\begin{align*}
    &\max_{\ibid^*\in\optufclass{1}}\E_P[\val(\ibid^*; \otherbid{t})-\val(\ibid; \otherbid{t})] + \max_{\ibid^*\in\optufclass{1}}\E_Q[\val(\ibid^*; \otherbid{t})-\val(\ibid; \otherbid{t})]\\
    &\geq \max_{\ibid^*\in\optufclass{1}}\E_P[\val(\ibid^*; \otherbid{t})] + \max_{\ibid^*\in\optufclass{1}}\E_Q[\val(\ibid^*; \otherbid{t})] -2\max_{\ibid^*\in\optufclass{1}}\E_{\frac{P+Q}{2}}[\val(\ibid^*; \otherbid{t})]\\
    &=\E_P[\val(\ibid_1; \otherbid{t})] + \E_Q[\val(\ibid_2; \otherbid{t})] -2\max_{\ibid^*\in\optufclass{1}}\E_{\frac{P+Q}{2}}[\val(\ibid^*; \otherbid{t})]\\
    &\stackrel{\eqref{eq:p-q-UB}}{\geq}\frac{\maxbid}{2}+\frac{\maxbid}{2}\left(\frac{1}{2}+\delta\right)(2-\delta)-2\cdot\frac{\maxbid}{2} = \frac{\maxbid}{2}\left(\frac{3\delta}{2}-\delta^2\right)\geq \frac{5\maxbid\delta}{8},
\end{align*}

where the last inequality follows as $\delta\in(0, \frac{1}{4})$. Hence, any strategy $\ibid\in\optufclass{1}$ incurs a total regret of $\frac{5\maxbid T\delta}{16}$ either under $P$ or under $Q$. By two-point method from~\citet[Theorem 2.2]{tsybakov2009introduction},
\begin{align*}
    \E_{\frac{P+Q}{2}}[\textsf{REG}] \geq \frac{5\maxbid T\delta}{16}\cdot(1-\text{TV}(P, Q))\geq \frac{5\maxbid T\delta}{32}\exp{\left(-16T\delta^2\right)}
\end{align*}

Setting $\delta=\frac{1}{4\sqrt{2T}}$, we get $\E_{\frac{P+Q}{2}}[\textsf{REG}] \geq\frac{\maxbid \sqrt{T}}{36\sqrt{e}}$. Hence, $\E_{\frac{P+Q}{2}}[\textsf{REG}]=\Omega(\maxbid\sqrt{T})$.

\section{Omitted Proofs from \cref{sec:learning-rich}}\label{apx:sec:learning-rich}
\subsection{Proof of \cref{lem:sum-to-max-equivalence-UF}}\label{apx:lem:sum-to-max-equivalence}
We state and prove a stronger version of the result in \cref{lem:sum-to-max-equivalence-UF}. Formally,
\begin{lemma} \label{lem:sum-to-max-equivalence} For any $\numbid\in\N$ and competing bid $\otherbid{}$,  
let $\ibid=\langle(b_1, q_1),\dots, (b_\numbid, q_\numbid)\rangle$ be a feasible (not necessarily safe) $\numbid$-uniform strategy for $\otherbid{}$. Then,
    \begin{align*}  \val(\ibid;\otherbid{})=\max_{\l\in[\numbid]}\val((b_\l, Q_\l);\otherbid{}), 
    \end{align*}
    where we recall that $Q_\l=\sum_{j=1}^\l q_j, \forall \l\in[\numbid]$.
\end{lemma}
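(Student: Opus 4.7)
The plan is to reduce the allocation under the full $\numbid$-uniform bid $\ibid$ to that of a single ``pivotal'' single-pair bid $(b_\ell, Q_\ell)$, and then verify that no other $(b_j, Q_j)$ can exceed this value. The key observation is purely combinatorial and does not rely on feasibility beyond having $\ibid$ well defined.

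First, I would introduce a convenient bookkeeping device: for any $b > 0$, let $n(b)$ denote the number of competing bids strictly greater than $b$ (under the paper's tie-breaking convention in favor of the bidder). Merging $\ibid$ with $\otherbid{}$ in decreasing order then has a transparent structure: the bidder's level-$j$ bids (all equal to $b_j$) occupy the contiguous block of positions $n(b_j)+Q_{j-1}+1,\ldots,n(b_j)+Q_j$, because they are preceded in the sorted list exactly by (i) the $n(b_j)$ competing bids above $b_j$ and (ii) the $Q_{j-1}$ strictly higher bidder bids at levels $1,\ldots,j-1$. Consequently, a single-pair bid $(b,q)$ wins exactly $r(b,q):=\max\{0,\min(q, K-n(b))\}$ units.

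Next, let $r := x(\ibid;\otherbid{})$ denote the total allocation under $\ibid$. I would dispose of the trivial case $r=0$ (then $n(b_1)\geq K$, so monotonicity of $n(\cdot)$ makes every $r(b_j,Q_j)=0$) and otherwise take the unique $\ell$ with $Q_{\ell-1} < r \leq Q_\ell$. Because the top-$K$ positions are contiguous, levels $1,\ldots,\ell-1$ must win in full, level $\ell$ contributes exactly $r-Q_{\ell-1}$ winners, and levels $\ell+1,\ldots,m$ win nothing. Summing the level-wise contributions yields the identity $r = Q_{\ell-1}+\min(q_\ell, K-n(b_\ell)-Q_{\ell-1}) = \min(Q_\ell, K-n(b_\ell)) = r(b_\ell,Q_\ell)$, which immediately gives $\val((b_\ell,Q_\ell);\otherbid{})=\val(\ibid;\otherbid{})$ since the value is the nondecreasing function $r\mapsto\sum_{k\leq r}v_k$.

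To finish, I would verify $r(b_j,Q_j)\leq r$ for every $j\neq\ell$. The case $j<\ell$ is easy: since level-$j$ wins fully we have $K-n(b_j)\geq Q_j$, so $r(b_j,Q_j)=Q_j\leq Q_{\ell-1}<r$. The main obstacle is $j>\ell$, where the argument splits on whether level $\ell$ wins fully. If $r<Q_\ell$ then $K-n(b_\ell)=r$, and monotonicity $n(b_j)\geq n(b_\ell)$ (from $b_j<b_\ell$) yields $K-n(b_j)\leq r$, hence $r(b_j,Q_j)\leq r$. If instead $r=Q_\ell$, one must separately exploit that level $\ell+1$ wins nothing under $\ibid$, which forces $n(b_{\ell+1})\geq K-Q_\ell$; propagating through $n(b_j)\geq n(b_{\ell+1})$ for $j>\ell+1$ again gives $K-n(b_j)\leq Q_\ell=r$. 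In every case $r(b_j,Q_j)\leq r$, and since $\mathbf{v}\geq 0$ this allocation comparison lifts to the value comparison, completing $\val(\ibid;\otherbid{})=\max_{\ell\in[\numbid]}\val((b_\ell,Q_\ell);\otherbid{})$.
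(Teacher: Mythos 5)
Your proof is correct, but it takes a genuinely different route from the paper's. The paper argues by induction on $\numbid$: it peels off the last bid-quantity pair, invokes the monotonicity lemma (\cref{lem:monotone}) to show that $(b_{\numbid+1},Q_{\numbid+1})$ is feasible and yields no more value than $\ibid$, and then splits on whether the allocation exceeds $Q_\numbid$ before applying the induction hypothesis to $\ibid[1:\numbid]$. You instead compute allocations directly from positions in the merged sorted bid list: the closed form $r(b,q)=\max\{0,\min(q,K-n(b))\}$, the identification of the pivotal level $\ell$, the identity $r=\min(Q_\ell,K-n(b_\ell))$, and the case analysis $r<Q_\ell$ versus $r=Q_\ell$ for dominating the levels $j>\ell$ are all verified correctly (including the prefix structure of the bidder's winning bids, which justifies that levels below $\ell$ win fully and levels above win nothing). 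What your approach buys is a self-contained, non-inductive argument with an explicit allocation formula, and it reveals that the identity is a property of the allocation rule alone -- you never use the feasibility hypothesis, so you in fact prove a slightly stronger statement (with $\val$ understood as $\sum_{j\le x}v_j$ regardless of RoI feasibility). What the paper's route buys is brevity given that \cref{lem:monotone} is already established, plus the feasibility of the component bids $(b_\l,Q_\l)$ as a byproduct, which is convenient for downstream uses. The only cosmetic caveat is that your $r=Q_\ell$ subcase implicitly assumes $\ell<\numbid$ so that level $\ell+1$ exists; when $\ell=\numbid$ there is nothing to prove, so this is not a gap.
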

\cref{lem:sum-to-max-equivalence} states a similar result as \cref{lem:sum-to-max-equivalence-UF}, except a key difference that the bidding strategies are not necessarily safe.

\subsubsection{Proof of \cref{lem:sum-to-max-equivalence}}
We prove the lemma via induction on $\numbid$. The base case is $\numbid=1$ for which  the result is trivially true. Now assume that the result holds for any $\numbid$-uniform bidding strategy for any competing bid $\otherbid{}$. We now show that the result holds for any $(\numbid+1)$-uniform bidding strategy which is feasible for $\otherbid{}$.



Consider any $\numbid+1$-uniform bidding strategy $\ibid=\langle(b_1, q_1),\dots, (b_\numbid, q_\numbid), (b_{\numbid+1}, q_{\numbid+1})\rangle$ feasible for $\otherbid{}$. Then,
\begin{claim}\label{claim:1}
        The bid-quantity pair $(b_{\numbid+1}, Q_{\numbid+1})$ is feasible for $\otherbid{}$ and $\val(\ibid;\otherbid{})\geq \val((b_{\numbid+1}, Q_{\numbid+1});\otherbid{})$.
    \end{claim}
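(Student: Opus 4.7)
The claim has two parts---feasibility of $(b_{\numbid+1}, Q_{\numbid+1})$ for $\otherbid{}$ and the value dominance $\val(\ibid;\otherbid{})\geq \val((b_{\numbid+1}, Q_{\numbid+1});\otherbid{})$---and the plan is to obtain both in one shot from the monotonicity of feasible bids established in \cref{lem:monotone}. First, I would write out the $(\numbid+1)$-uniform strategy $\ibid$ in its expanded vector form of length $Q_{\numbid+1}$ guaranteed by \cref{def:m-unif-bid}: the first $q_1$ entries equal $b_1$, the next $q_2$ equal $b_2$, and so on, with the final $q_{\numbid+1}$ entries equal to $b_{\numbid+1}$. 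In the same way, the $1$-uniform bid $(b_{\numbid+1}, Q_{\numbid+1})$ expands into a constant vector of length $Q_{\numbid+1}$ whose every coordinate equals $b_{\numbid+1}$. Both vectors are already sorted in non-increasing order and have identical length.

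Next, I would invoke the strict ordering $b_1>b_2>\dots>b_{\numbid+1}$ from \cref{def:m-unif-bid} to observe that every coordinate of the expanded $\ibid$ is at least $b_{\numbid+1}$, the coordinate of $(b_{\numbid+1}, Q_{\numbid+1})$ in that slot. The two vectors therefore satisfy the hypothesis of \cref{lem:monotone} with $\ibid$ playing the role of the coordinatewise-dominating, feasible bid. Applying \cref{lem:monotone} to this pair delivers both conclusions of the claim simultaneously: $(b_{\numbid+1}, Q_{\numbid+1})$ inherits RoI feasibility for $\otherbid{}$, and the value inequality $\val(\ibid;\otherbid{})\geq \val((b_{\numbid+1}, Q_{\numbid+1});\otherbid{})$ holds.

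There is no genuine obstacle to this argument; the only bookkeeping point I would verify is that \cref{lem:monotone} applies to bid vectors whose common length is $Q_{\numbid+1}$ rather than the ambient dimension $\maxbid$. Since \cref{lem:monotone} is stated for two sorted vectors of an arbitrary common length $k$, its hypotheses are satisfied verbatim with $k=Q_{\numbid+1}\leq\maxbid$. If strictly required, one can pad both vectors with zeros up to length $\maxbid$: padding preserves sort order and the coordinatewise inequality, and since padded zero-entries can never appear among the top $K$ winning bids when non-zero entries already suffice, the allocation, payment, and value functions are unchanged. With this observation, the claim follows immediately, and the induction in \cref{lem:sum-to-max-equivalence} can proceed by combining it with the inductive hypothesis applied to $\ibid[1:\numbid]$.
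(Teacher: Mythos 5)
Your proof is correct and matches the paper's own argument: both expand $\ibid$ and $(b_{\numbid+1}, Q_{\numbid+1})$ into sorted bid vectors of common length $Q_{\numbid+1}$, observe coordinatewise dominance from $b_1>\dots>b_{\numbid+1}$, and invoke \cref{lem:monotone} to obtain feasibility and the value inequality in one step. The extra remark about the common length $Q_{\numbid+1}$ is harmless and consistent with how the lemma is stated.
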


    \begin{proof}
        By assumption, $\ibid$ is feasible, the total demand of $\ibid$ and $(b_{\numbid+1}, Q_{\numbid+1})$ are equal and $b_{(j)}\geq b_{\numbid+1}, \forall j\in[Q_{\numbid+1}]$, where $b_{(j)}$ denotes the bid value in $j$th position in the sorted bid vector. So, by \cref{lem:monotone}, $(b_{\numbid+1}, Q_{\numbid+1})$ is feasible and $\val(\ibid; \otherbid{}) \geq \val((b_{\numbid+1}, Q_{\numbid+1}), \otherbid{})$.
    \end{proof}

Suppose that by bidding $\ibid$, the bidder is allocated $r$ units. There are two cases: (a) $r\leq Q_\numbid$ and (b) $r> Q_\numbid$.

\textbf{Case I.} $r\leq Q_\numbid$. In this case, we have 
    \begin{align}\label{eq:sum-max-1}
    \val(\ibid;\otherbid{})=\val(\ibid[1:\numbid];\otherbid{})\,.
    \end{align}

    Hence, by \cref{claim:1} and \cref{eq:sum-max-1},
    \begin{align}
       \val(\ibid;\otherbid{})&=\max\Big\{\val(\ibid[1:\numbid];\otherbid{}), \val((b_{\numbid+1}, Q_{\numbid+1});\otherbid{}) \Big\}\,.\label{eq:sum-max-2}
    \end{align}

    \textbf{Case II.} $r>Q_\numbid$. As $r>Q_\numbid$, $b_{\numbid+1}$ is the least winning bid which implies $b_{\numbid+1}\geq \ordotherbidnew{r}$, where we recall that $\ordotherbidnew{r}$ is the $r^{th}$ smallest competing bid in $\otherbid{}$. So, $(b_{\numbid+1}, Q_{\numbid+1})$ is allocated at least $r$ units which implies $\val((b_{\numbid+1}, Q_{\numbid+1}), \otherbid{})\geq \val(\ibid; \otherbid{})$. By \cref{claim:1}, we also have $\val(\ibid; \otherbid{}) \geq \val((b_{\numbid+1}, Q_{\numbid+1}), \otherbid{})$. Hence,
    \begin{align}
       \val(\ibid;\otherbid{})=\val((b_{\numbid+1}, Q_{\numbid+1});\otherbid{})\,.
    \end{align}
    
    As $r>Q_\numbid$, $(b_{\numbid+1}, Q_{\numbid+1})$ is allocated at least $Q_\numbid+1$ units, whereas $\ibid[1:\numbid]$ has demand for~(hence, can be allocated) at most $Q_\numbid$ units. So, 
    \begin{align}
        \val(\ibid;\otherbid{})&\geq \val(\ibid[1:\numbid];\otherbid{})\nonumber\\
        \implies \val(\ibid;\otherbid{})&=\max\Big\{\val(\ibid[1:\numbid];\otherbid{}), \val((b_{\numbid+1}, Q_{\numbid+1});\otherbid{})\Big\}\,.\label{eq:sum-max-4}
    \end{align}
    
For both \textbf{Case I} and \textbf{Case II}, we get the same result~(cf. \eqref{eq:sum-max-2} and \eqref{eq:sum-max-4}). Hence,
\begin{align*}
    \val(\ibid;\otherbid{})&=\max\Big\{\val(\ibid[1:\numbid];\otherbid{}), \val((b_{\numbid+1}, Q_{\numbid+1});\otherbid{})\Big\}\\
    &\stackrel{(a)}{=}\max\Big\{\max_{\l\in[\numbid]}\val((b_\l, Q_\l);\otherbid{}), \val((b_{\numbid+1}, Q_{\numbid+1});\otherbid{})\Big\}\\
    &=\max_{\l\in[\numbid+1]}\val((b_\l, Q_\l);\otherbid{})\,.
\end{align*}
Here, $(a)$ holds as $\ibid[1:\numbid]$ is feasible for $\otherbid{}$ allowing us to apply the induction hypothesis for $\numbid$.

\subsection{Proof of \cref{lem:value_restricted}}\label{apx:lem:value_restricted}
To prove this result, we use the following key lemma that measures the value obtained by $\optbid{\numbid}$ in terms of 1-uniform safe bidding strategies. 

\begin{lemma}\label{lem:approx1_multiple} 
Let $\optbid{\numbid}=\langle(b_1^*, q_1^*),\dots, (b_\numbid^*, q_\numbid^*)\rangle$. Suppose  $Q_j^*=\sum_{\l\leq j}q_\l^*, \forall j\in[\numbid]$ and $\optbid{\numbid}$ is allocated $r_t^*$ units in any round $t\in [T]$. Then,

1. For any $t\in [T]$ and $q\le r_t^*$, $(w_{q}, q)$ gets \emph{exactly} $q$ units.

2. For any $j\in[\numbid]$, let $T_j\subseteq[T]$ be the rounds in which $b_j^*$ is the least winning bid, i.e., 
\begin{align*}
    T_j= \big\{t\in [T]: Q_{j-1}^*<r_t^*\leq Q_j^* \big\}\,.
\end{align*}
Suppose that $\exists t\in T_j$ such that $r_t^*<Q_j^*$. Then in any round $t'\in T_j$ in which $\optbid{\numbid}$ is allocated at most $r_t^*$ units, i.e., $r_{t'}^*\leq r_{t}^*$, $(w_{r_t^*}, r_t^*)$ is allocated \emph{at least} $r_{t'}^*$ units.
\end{lemma}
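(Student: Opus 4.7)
\textbf{Proof plan for \cref{lem:approx1_multiple}.} The common thread of both parts is the following simple chain: feasibility of $\optbid{\numbid}$ in a round forces the clearing price $p_t$ to be at most the average valuation $w_{r_t^*}$, while the clearing price is also at least the largest losing competing bid. Since a $1$-uniform strategy $(w_q, q)$ wins at least $q$ units iff $w_q \ge \ordotherbid{t}{q}$, this chain is enough to verify the required allocation statements. The plan is to isolate the clearing-price bound once, and then apply it twice.

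\textbf{Part 1.} Fix $t \in [T]$ and $q \le r_t^*$. Since $\optbid{\numbid}$ is allocated $r_t^*$ units in round $t$, the value obtained is $\sum_{\ell=1}^{r_t^*} v_\ell = r_t^* w_{r_t^*}$, while the payment is $r_t^* p(\allbids^t)$; so RoI feasibility of $\optbid{\numbid}$ gives $p(\allbids^t) \le w_{r_t^*}$. On the other hand, because the bidder wins $r_t^*$ of the top-$K$ slots, the $r_t^*$ unsuccessful top-$K$ competing bids must be $\ordotherbid{t}{1} \le \cdots \le \ordotherbid{t}{r_t^*}$, each of which is at most the clearing price. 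Thus
\begin{align*}
\ordotherbid{t}{q} \le \ordotherbid{t}{r_t^*} \le p(\allbids^t) \le w_{r_t^*} \le w_q,
\end{align*}
where the last inequality uses $q \le r_t^*$ and monotonicity of $\mathbf{w}$. Consequently, the $q$ copies of the bid $w_q$ beat at least $q$ top-$K$ competing bids; since $(w_q,q)$ demands only $q$ units, the bidder wins \emph{exactly} $q$ units, proving (1).

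\textbf{Part 2.} Fix $j$, and suppose $t \in T_j$ with $r_t^* < Q_j^*$. Because $b_j^*$ is the least winning bid of $\optbid{\numbid}$ in round $t$ but $Q_j^* - r_t^* > 0$ of the bids at $b_j^*$ are losing, \cref{lem:price-multiple} yields $p(\allbids^t) = b_j^*$. Combined with the feasibility argument of Part 1, this gives the key inequality $b_j^* \le w_{r_t^*}$. Now take any $t' \in T_j$ with $r_{t'}^* \le r_t^*$. Since $t' \in T_j$, $b_j^*$ is the least winning bid of $\optbid{\numbid}$ in round $t'$, which implies $\ordotherbid{t'}{r_{t'}^*} \le b_j^*$ (the $r_{t'}^*$ unsuccessful top-$K$ competing bids are all at most the bidder's lowest winning bid). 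Chaining,
\begin{align*}
\ordotherbid{t'}{r_{t'}^*} \le b_j^* \le w_{r_t^*},
\end{align*}
so the bid $w_{r_t^*}$ beats at least $r_{t'}^*$ top-$K$ competing bids in round $t'$. Since $(w_{r_t^*}, r_t^*)$ demands $r_t^* \ge r_{t'}^*$ units, it is allocated at least $r_{t'}^*$ units, proving (2).

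\textbf{Main obstacle.} There is no hard technical obstacle; the delicate point is simply to extract the sharp bound $b_j^* \le w_{r_t^*}$ in Part 2, which requires simultaneously invoking (i) the precise clearing-price formula from \cref{lem:price-multiple} to conclude $p(\allbids^t)=b_j^*$ (this uses the hypothesis $r_t^* < Q_j^*$ crucially), and (ii) RoI feasibility of the non-safe strategy $\optbid{\numbid}$ to upgrade this to $b_j^* \le w_{r_t^*}$. Once this inequality is in hand, both parts follow by straightforward bid-counting against the sorted competing bids.
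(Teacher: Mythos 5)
Your proof is correct and follows essentially the same route as the paper's: both parts rest on the chain $\ordotherbid{t}{q}\le p(\allbids^t)\le w_{r_t^*}$ extracted from RoI feasibility of $\optbid{\numbid}$, with \cref{lem:price-multiple} supplying $p(\allbids^t)=b_j^*$ in Part 2. The only cosmetic differences are that you justify $\ordotherbid{t}{r_t^*}\le p(\allbids^t)$ directly from the last-accepted-bid rule rather than via the paper's two-case analysis of \cref{lem:price-multiple}, and you leave the trivial $r_t^*=0$ edge case implicit.
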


With this lemma, we are ready to prove \cref{lem:value_restricted}. 

\textbf{Case I: $Q_j=\widehat{Q}_j< Q_j^*$.} For any $t\in T_{j, 1}$, invoking \cref{lem:approx1_multiple}~(1) with $q=\widehat{Q}_j$, we conclude that $(w_{\widehat{Q}_j}, \widehat{Q}_j)$ is allocated exactly $\widehat{Q}_j$ units. Summing over all rounds in $T_{j, 1}$,
\begin{align}\label{eq:xj-is-qbar-1}
    \sum_{t\in T_{j, 1}}\val((w_{Q_j}, Q_j); \otherbid{t}) = \psum{\widehat{Q}_j}|T_{j, 1}|\,.
\end{align}
As $\widehat{Q}_j< Q_j^*$, $T_{j, 0}\neq\emptyset$. Let $r_t^*=\widehat{Q}_j$ for some $t\in T_{j, 0}$. By definition, $\widehat{Q}_j < Q_j^*$ and $\widehat{Q}_j\geq r_s^*$ for any $s\in T_{j, 0}$. So, for any $s\in T_{j, 0}$, invoking \cref{lem:approx1_multiple}~(2) with $r_t^*=\widehat{Q}_j$, we conclude that $(w_{\widehat{Q}_j}, \widehat{Q}_j)$ is allocated at least $r_s^*$ units. Hence, summing over all rounds, $(w_{\widehat{Q}_j}, \widehat{Q}_j)$ gets at least the value obtained by $\optbid{\numbid}$ over the rounds in $T_{j, 0}$. So,
\begin{align}\label{eq:xj-is-qbar-2}
    \sum_{t\in T_{j, 0}}\val((w_{Q_j}, Q_j); \otherbid{t})&\geq \sum_{t\in T_{j, 0}}\val\Big(\optbid{\numbid}, \otherbid{t}\Big)=V_{j, 0}|T_{j, 0}|\,.
\end{align}
Combining \cref{eq:xj-is-qbar-1} and \eqref{eq:xj-is-qbar-2}, for $Q_j=\widehat{Q}_j$, 
\begin{align}\label{eq:xj-is-qbar}
    \sum_{t\in T_j}\val((w_{Q_j}, Q_j); \otherbid{t})\geq V_{j, 0}|T_{j, 0}|+\psum{\widehat{Q}_j}|T_{j, 1}|\,.
\end{align}

\textbf{Case II: $Q_j=Q_j^*$.} So, for any $t\in T_{j, 1}$, using \cref{lem:approx1_multiple}~(1) with $q=Q_j^*$, we get that $(w_{Q_j^*}, Q_j^*)$ is allocated exactly $Q_j^*$ units, which is the same as the allocation for $\optbid{\numbid}$. Summing over all rounds in $T_{j, 1}$,
\begin{align}\label{eq:xj-is-qstar-1}
    \sum_{t\in T_{j, 1}}\val((w_{Q_j}, Q_j); \otherbid{t}) = \psum{Q_j^*}|T_{j, 1}|\,.
\end{align}
For the rounds in $T_{j, 0}$, trivially,
\begin{align}\label{eq:xj-is-qstar-2}
    \sum_{t\in T_{j, 0}}\val((w_{Q_j}, Q_j); \otherbid{t})&\geq 0\,.
\end{align}
Combining \cref{eq:xj-is-qstar-1} and \eqref{eq:xj-is-qstar-2}, for $Q_j=Q_j^*$, 
\begin{align}\label{eq:xj-is-qstar-3}
    \sum_{t\in T_j}\val((w_{Q_j}, Q_j); \otherbid{t})\geq \psum{Q_j^*}|T_{j, 1}|\,.
\end{align}

        


\subsubsection{Proof of \cref{lem:approx1_multiple}}\label{apx:lem:approx1_multiple}
(1) First we show that for any $t\in [T]$ and $q\le r_t^*$,  the 1-uniform bid $(w_{q}, q)$ is allocated \emph{exactly} $q$ units. 

As $(w_{q}, q)\in\optufclass{1}$, it is a safe strategy. By assumption, $\optbid{\numbid}$ is allocated $r_t^*$ units in round $t$. Let $\allbids^t=(\optbid{\numbid}, \otherbid{t})$. Recall that, $\ordotherbid{t}{j}$ is the $j^{th}$ 
smallest winning bid in the absence of bids from bidder $i$ for round $t$. If $r_t^*=0$, the result is vacuously true. Suppose $r_t^*>0$, then
\begin{align*}
   \ordotherbid{t}{q}\leq\ordotherbid{t}{r_t^*}\leq p(\allbids^t)\leq w_{r_t^*} \leq w_{q}\,,    
\end{align*}
    where the first inequality holds by definition of $\ordotherbid{t}{j}$ and our assumption that $q\leq r_t^*$. For the second inequality, suppose $Q_{\l-1}^*<r_t^*\leq Q_{\l}^*$ for some $\l\in[\numbid]$ which by definition implies that $b_\l^*$ is the least winning bid. By \cref{lem:price-multiple}, 
    \begin{enumerate}
        \item If $p(\allbids^t)=b^*_\l$, we have  $p(\allbids^t)\geq \ordotherbid{t}{r_t^*}$ as $b_\l^*$ is the least winning bid.
        \item If $p(\allbids^t)=\ordotherbid{t}{Q^*_\l+1}$, we have $r_t^*=Q_\l^*$ and by definition of $\ordotherbid{t}{j}$, $p(\allbids^t) = \ordotherbid{t}{Q^*_\l+1} \geq \ordotherbid{t}{r_t^*}$. 
    \end{enumerate}
    The third inequality is true as RoI constraint is satisfied by $\optbid{\numbid}$ for round $t$, and the fourth is true as $w_j$ is a non-decreasing function of $j$ and $q\leq r_t^*$. From the first and last expressions, $\ordotherbid{t}{q}\leq w_{q}$ which implies that $(w_q, q)$ is allocated at least $q$ units in round $t$. Moreover, $(w_q, q)$ can be allocated at most $q$ units. Hence, the 1-uniform bid $(w_q, q)$ is allocated \emph{exactly} $q$ units.

(2) 
For any $j\in[\numbid]$, let $T_j\subseteq[T]$, defined in \cref{eq:Tj}, be the rounds in which $b_j^
*$ is the least winning bid. Suppose that $\exists t\in T_j$ such that $r_t^*<Q_j^*$. We show that in any round $t'\in T_j$ in which $\optbid{\numbid}$ is allocated at most $r_t^*$ units, i.e., $r_{t'}^*\leq r_{t}^*$, the 1-uniform strategy $(w_{r_t^*}, r_t^*)$ is allocated \emph{at least} $r_{t'}^*$ units.

Observe that $(w_{r_t^*}, r_t^*)\in\optufclass{1}$, so it is a safe strategy. If $r_t^*=0$, the result is trivially true. Hence, suppose $r_t^*>0$ and consider the set of rounds in $T_j$ for any $j\in[\numbid]$. As $r_t^*<Q_{j}^*$, by \cref{lem:price-multiple}, $p(\allbids^t)=b_j^*$. So, for any $t'\in T_j$ such that $r_{t'}^*\leq r_t^*$,
\begin{align*}
    \ordotherbid{t'}{r_{t'}^*} \leq b_j^* \leq w_{r_t^*}\,.
\end{align*}
Here, the first inequality holds as $b_j^*$ is the least winning bid for round $t'\in T_j$ and the second one holds as the RoI constraint is true for $\optbid{\numbid}$ for round $t$. Hence, $(w_{r_t^*}, r_t^*)$ is allocated \emph{at least} $r_{t'}^*$ units in round $t'$.

\subsection{Proof of \cref{lem:tedious-algebra}}
Let $(\mathcal{S}_0^*, \widehat{\mathcal{S}}_0)=\argmax_{(\mathcal{S}^*, \widehat{\mathcal{S}})}\Big\{\sum_{j\in \mathcal{S}^*}\Big(\psum{Q_j^*}|T_{j, 1}|\Big)+\sum_{j\in\widehat{\mathcal{S}}}\Big(V_{j, 0}|T_{j, 0}| + \psum{\widehat{Q}_j}|T_{j, 1}|\Big)\Big\}$. Then,

\begin{align}\label{eq:v-opt}
\max_{\ibid\in\feasclass{\numbid}}\sum_{t=1}^T\val(\ibid; \otherbid{t})&=\sum_{t=1}^T\sum_{\l=0}^1V_{j, \l}|T_{j, \l}|\nonumber\\
&=\sum_{j\in \mathcal{S}_0^*}\Big(V_{j, 0}|T_{j, 0}|+\psum{Q_j^*}|T_{j, 1}|\Big)+\sum_{j\in\widehat{\mathcal{S}}_0}\Big(V_{j, 0}|T_{j, 0}|+\psum{Q_j^*}|T_{j, 1}|\Big)\,.
\end{align}

Consider a partition of $[\numbid]$ that differs from the maximizing partition $(\mathcal{S}_0^*, \widehat{\mathcal{S}}_0)$ by exactly one element, i.e., for any $a\in\widehat{\mathcal{S}}_0$, consider the following partition: $(\mathcal{S}_0^*\cup\{a\}, \widehat{\mathcal{S}}_0\setminus\{a\})$. By definition,
\small
\begin{align}
    \sum_{j\in \mathcal{S}_0^*}\Big(\psum{Q_j^*}|T_{j, 1}|\Big)+\sum_{j\in\widehat{\mathcal{S}}_0}\Big(V_{j, 0}|T_{j, 0}| + \psum{\widehat{Q}_j}|T_{j, 1}|\Big)&\geq \sum_{j\in \mathcal{S}_0^*\cup\{a\}}\Big(\psum{Q_j^*}|T_{j, 1}|\Big)+\sum_{j\in\widehat{\mathcal{S}}_0\setminus\{a\}}\Big(V_{j, 0}|T_{j, 0}| + \psum{\widehat{Q}_j}|T_{j, 1}|\Big)\nonumber\\
    \implies V_{a, 0}|T_{a, 0}| + \psum{\widehat{Q}_a}|T_{a, 1}|&\geq \psum{Q_a^*}|T_{a, 1}|\label{eq:a}\,.
\end{align}
\normalsize
Now, for any $b\in \mathcal{S}^*_0$, consider the following partition: $(\mathcal{S}_0^*\setminus\{b\}, \widehat{\mathcal{S}}_0\cup\{b\})$. By definition,
\small
\begin{align}
    \sum_{j\in \mathcal{S}_0^*}\Big(\psum{Q_j^*}|T_{j, 1}|\Big)+\sum_{j\in\widehat{\mathcal{S}}_0}\Big(V_{j, 0}|T_{j, 0}| + \psum{\widehat{Q}_j}|T_{j, 1}|\Big)&\geq \sum_{j\in \mathcal{S}_0^*\setminus\{b\}}\Big(\psum{Q_j^*}|T_{j, 1}|\Big)+\sum_{j\in\widehat{\mathcal{S}}_0\cup\{b\}}\Big(V_{j, 0}|T_{j, 0}| + \psum{\widehat{Q}_j}|T_{j, 1}|\Big)\nonumber\\
    \implies \psum{Q_b^*}|T_{b, 1}|- \psum{\widehat{Q}_b}|T_{b, 1}| &\geq V_{b, 0}|T_{b, 0}| \,.\label{eq:b}
\end{align}
\normalsize
Plugging in the values,

\begin{align*}
  &\frac{\max_{\ibid\in\feasclass{\numbid}}\sum_{t=1}^T\val(\ibid; \otherbid{t})}{\sum_{j\in \mathcal{S}_0^*}\Big(\psum{Q_j^*}|T_{j, 1}|\Big)+\sum_{j\in\widehat{\mathcal{S}}_0}\Big(V_{j, 0}|T_{j, 0}| + \psum{\widehat{Q}_j}|T_{j, 1}|\Big)}\\
  &= \frac{\sum_{j\in \mathcal{S}_0^*}\Big(V_{j, 0}|T_{j, 0}|+\psum{Q_j^*}|T_{j, 1}|\Big)+\sum_{j\in\widehat{\mathcal{S}}_0}\Big(V_{j, 0}|T_{j, 0}|+\psum{Q_j^*}|T_{j, 1}|\Big)}{\sum_{j\in \mathcal{S}_0^*}\Big(\psum{Q_j^*}|T_{j, 1}|\Big)+\sum_{j\in\widehat{\mathcal{S}}_0}\Big(V_{j, 0}|T_{j, 0}| + \psum{\widehat{Q}_j}|T_{j, 1}|\Big)}\\
  &\stackrel{\eqref{eq:a}}{\leq}\frac{\sum_{j\in \mathcal{S}_0^*}\Big(V_{j, 0}|T_{j, 0}|+\psum{Q_j^*}|T_{j, 1}|\Big)+\sum_{j\in\widehat{\mathcal{S}}_0}\Big(2V_{j, 0}|T_{j, 0}|+\psum{\widehat{Q}_j}|T_{j, 1}|\Big)}{\sum_{j\in \mathcal{S}_0^*}\Big(\psum{Q_j^*}|T_{j, 1}|\Big)+\sum_{j\in\widehat{\mathcal{S}}_0}\Big(V_{j, 0}|T_{j, 0}| + \psum{\widehat{Q}_j}|T_{j, 1}|\Big)}\\
  &\stackrel{\eqref{eq:b}}{\leq}\frac{\sum_{j\in \mathcal{S}_0^*}\Big(2\psum{Q_j^*}|T_{j, 1}|-\psum{\widehat{Q}_j}|T_{j, 1}|\Big)+\sum_{j\in\widehat{\mathcal{S}}_0}\Big(2V_{j, 0}|T_{j, 0}|+\psum{\widehat{Q}_j}|T_{j, 1}|\Big)}{\sum_{j\in \mathcal{S}_0^*}\Big(\psum{Q_j^*}|T_{j, 1}|\Big)+\sum_{j\in\widehat{\mathcal{S}}_0}\Big(V_{j, 0}|T_{j, 0}| + \psum{\widehat{Q}_j}|T_{j, 1}|\Big)}\\
  &=\frac{2\Big\{\sum_{j\in \mathcal{S}_0^*}\psum{Q_j^*}|T_{j, 1}|+\sum_{j\in\widehat{\mathcal{S}}_0}\Big(V_{j, 0}|T_{j, 0}|+\psum{\widehat{Q}_j}|T_{j, 1}|\Big)\Big\}-\sum_{j=1}^\numbid\psum{\widehat{Q}_j}|T_{j, 1}|}{\sum_{j\in \mathcal{S}_0^*}\Big(\psum{Q_j^*}|T_{j, 1}|\Big)+\sum_{j\in\widehat{\mathcal{S}}_0}\Big(V_{j, 0}|T_{j, 0}| + \psum{\widehat{Q}_j}|T_{j, 1}|\Big)}\\
  &=2-\theta_{\hist}\,,
\end{align*}

where

\begin{align*}
    0< \theta_{\hist} &= \frac{\sum_{j=1}^\numbid\psum{\widehat{Q}_j}|T_{j, 1}|}{\sum_{j\in \mathcal{S}_0^*}\Big(\psum{Q_j^*}|T_{j, 1}|\Big)+\sum_{j\in\widehat{\mathcal{S}}_0}\Big(V_{j, 0}|T_{j, 0}| + \psum{\widehat{Q}_j}|T_{j, 1}|\Big)}\\
    &\stackrel{\eqref{eq:a}}{\leq} \frac{\sum_{j=1}^\numbid\psum{\widehat{Q}_j}|T_{j, 1}|}{\sum_{j\in \mathcal{S}_0^*}\Big(\psum{Q_j^*}|T_{j, 1}|\Big)+\sum_{j\in\widehat{\mathcal{S}}_0}\Big(\psum{Q_j^*}|T_{j, 1}|\Big)}=\frac{\sum_{j=1}^\numbid\psum{\widehat{Q}_j}|T_{j, 1}|}{\sum_{j=1}^\numbid\psum{Q_j^*}|T_{j, 1}|}\stackrel{(a)}{\leq} \max_{j\in[\numbid]}\frac{\psum{\widehat{Q}_j}}{\psum{Q_j^*}}\leq 1\,,
\end{align*}

and $(a)$ follows because for positive $a_1, \dots, a_\numbid$ and $b_1, \dots, b_\numbid$, $\frac{\sum_{j=1}^\numbid a_j}{\sum_{j=1}^\numbid b_j}\leq \max_{j\in[\numbid]}\frac{a_j}{b_j}$. 



\subsection{Proof of \cref{thm:m-mbar}}\label{apx:thm:m-mbar}
We state and prove a stronger result which recovers the upper bound in \cref{thm:m-mbar} as a corollary and is also crucial to prove the upper bound in \cref{thm:mbar-non-safe}. Let $\feasclass{\numbid}$ be the class of bidding strategies with at most $\numbid$ bid-quantity pairs that are feasible for the bid history $\hist=[\otherbid{t}]_{t\in[T]}$. Then,
\begin{theorem}\label{thm:m-mbar-strong}
For any $\numbid, \numbid'\in\N$ such that $\numbid'\geq \numbid$, $\Lambda_{\feasclass{\numbid'},\feasclass{\numbid}}\leq\frac{\numbid'}{\numbid}$.
\end{theorem}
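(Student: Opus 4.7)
The plan is to fix any bid history $\hist=[\otherbid{t}]_{t\in[T]}$ and valuation curve $\v$, let $\ibid^{\star}=\langle(b_1,q_1),\dots,(b_{\numbid'},q_{\numbid'})\rangle\in\feasclass{\numbid'}$ be a value-maximizing clairvoyant strategy, and exhibit some $\ibid_S\in\feasclass{\numbid}$ whose cumulative value is at least $(\numbid/\numbid')\sum_t\val(\ibid^{\star};\otherbid{t})$. The starting point is the sum-to-max decomposition of \cref{lem:sum-to-max-equivalence}, which applies because $\ibid^{\star}$ is feasible and yields $\val(\ibid^{\star};\otherbid{t})=\max_{\ell\in[\numbid']}\nu_\ell^t$ for every $t$, where I write $\nu_\ell^t:=\val((b_\ell,Q_\ell);\otherbid{t})$. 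Each 1-uniform pair $(b_\ell,Q_\ell)$ is itself feasible for $\otherbid{t}$ (this is implicit in the proof of \cref{lem:sum-to-max-equivalence}), so the $\nu_\ell^t$'s are ordinary nonnegative values. The task thus reduces to exhibiting a size-$\numbid$ subset $S=\{\ell_1<\dots<\ell_\numbid\}\subseteq[\numbid']$ together with a feasible $\numbid$-uniform strategy whose per-round value equals $\max_{\ell\in S}\nu_\ell^t$, and showing that $S$ can be chosen so that the aggregate $\sum_t\max_{\ell\in S}\nu_\ell^t$ is at least $(\numbid/\numbid')\sum_t\max_{\ell\in[\numbid']}\nu_\ell^t$.

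For the choice of $S$, I will use an averaging (probabilistic-method) argument: draw $S$ uniformly at random from the size-$\numbid$ subsets of $[\numbid']$. For each fixed $t$, any maximizer $\ell^{\star}(t)\in\argmax_\ell\nu_\ell^t$ lies in $S$ with probability exactly $\numbid/\numbid'$, so $\E[\max_{\ell\in S}\nu_\ell^t]\ge(\numbid/\numbid')\max_{\ell}\nu_\ell^t$. Summing over $t$ and passing to a realization at least as good as the mean yields the desired deterministic $S$.

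Given $S$, the candidate strategy is $\ibid_S=\langle(b_{\ell_1},Q_{\ell_1}),(b_{\ell_2},Q_{\ell_2}-Q_{\ell_1}),\dots,(b_{\ell_\numbid},Q_{\ell_\numbid}-Q_{\ell_{\numbid-1}})\rangle$, a legitimate $\numbid$-uniform strategy because the $b_{\ell_j}$'s are strictly decreasing and the quantities are positive. The main obstacle, and the only step requiring real care, is showing $\ibid_S\in\feasclass{\numbid}$; otherwise \cref{lem:sum-to-max-equivalence} cannot be reapplied to it. I plan to establish feasibility via \cref{lem:monotone}: pad $\ibid_S$'s sorted bid vector with zeros to length $Q_{\numbid'}$ and compare it position-by-position with the sorted $\ibid^{\star}$. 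On any position $k\le Q_{\ell_\numbid}$ lying in the $\ibid_S$-block $(Q_{\ell_{i-1}},Q_{\ell_i}]$, $\ibid_S$ carries $b_{\ell_i}$ while $\ibid^{\star}$ carries $b_j$ for the unique $j$ with $Q_{j-1}<k\le Q_j$; minimality of $\ell_i$ among indices of $S$ with $Q_{\ell_i}\ge k$ forces $j\le\ell_i$ and hence $b_j\ge b_{\ell_i}$. On positions $Q_{\ell_\numbid}<k\le Q_{\numbid'}$, the padded $\ibid_S$ carries $0$ while $\ibid^{\star}$ carries a positive bid. Hence $\ibid^{\star}$ dominates the padded $\ibid_S$ pointwise; since padding with zeros preserves allocations and payments, \cref{lem:monotone} transfers feasibility from $\ibid^{\star}$ to $\ibid_S$.

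Feasibility in hand, a second application of \cref{lem:sum-to-max-equivalence} gives $\val(\ibid_S;\otherbid{t})=\max_{j\in[\numbid]}\nu_{\ell_j}^t=\max_{\ell\in S}\nu_\ell^t$, so summing and combining with the averaging bound yields $\sum_t\val(\ibid_S;\otherbid{t})\ge(\numbid/\numbid')\sum_t\val(\ibid^{\star};\otherbid{t})$. Since $\ibid_S\in\feasclass{\numbid}$, this proves $\Lambda_{\feasclass{\numbid'},\feasclass{\numbid}}(\hist,\v)\le\numbid'/\numbid$; maximizing over $\hist$ and $\v$ completes the proof. I expect essentially all the technical care to go into the monotone-domination verification for $\ibid_S$; the decomposition and averaging are clean, and the bound for $\Lambda_{\optufclass{\numbid'},\optufclass{\numbid}}$ in \cref{thm:m-mbar} then follows as an immediate corollary since $\optufclass{\numbid'}\subseteq\feasclass{\numbid'}$ and $\optufclass{\numbid}\subseteq\feasclass{\numbid}$ (with a small additional argument to replace $\ibid_S$ by its safe ``nested'' counterpart when needed).
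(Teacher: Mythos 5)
Your proposal is correct and follows essentially the same route as the paper's proof: decompose the clairvoyant's value via \cref{lem:sum-to-max-equivalence}, build the candidate $\numbid$-uniform strategy $\ibid_S$ from a size-$\numbid$ subset of the clairvoyant's cutoffs exactly as in \cref{alg:construct-bidclass}, transfer feasibility by pointwise bid domination and \cref{lem:monotone} (the paper's \cref{lem:new-class-2}), reapply the sum-to-max lemma, and average over subsets. The only (harmless) variation is that you average per round using the indicator that the round's maximizer lands in $S$, whereas the paper first partitions rounds into the sets $T_j$ and averages the partition sums $\sum_{t\in T_j}\val((b_j^*,Q_j^*);\otherbid{t})$ over subsets; the two computations give the same $\numbid/\numbid'$ factor.
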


\begin{proof}
For $\hist=[\otherbid{t}]_{t\in[T]}$, we define
\begin{align*}
    \Lambda_{\feasclass{\numbid'},\feasclass{\numbid}}(\hist, \v) = \frac{\max_{\ibid'\in\feasclass{\numbid'}}\sum_{t=1}^T\val(\ibid'; \otherbid{t})}{\max_{\ibid\in\feasclass{\numbid}}\sum_{t=1}^T\val(\ibid; \otherbid{t})}\,.
\end{align*}
Then, maximizing over all bid histories and valuation vectors, we get the desired result.

Let $\optbid{\numbid'}=\argmax_{\ibid\in\feasclass{\numbid'}}\sum_{t=1}^T\val(\ibid; \otherbid{t})$. If $\optbid{\numbid'}$ is a $k$-uniform bidding strategy where $k\leq \numbid\leq \numbid'$, by definition, $\optbid{\numbid'}\in\feasclass{\numbid}$ implying $\Lambda_{\feasclass{\numbid'},\feasclass{\numbid}}(\hist, \v)=1\leq \frac{\numbid'}{\numbid}$.

Without loss of generality, let the optimal bidding strategy in $\feasclass{\numbid'}$ be 
\begin{align*}
  \optbid{\numbid'}=\langle(b^*_1, q^*_1),\dots, (b^*_{\numbid'}, q^*_{\numbid'})\rangle  
\end{align*}

and $Q_\l^*=\sum_{j\leq \l}q_j^*$ for all $\l\in[\numbid']$.\footnote{A similar analysis also follows if $\optbid{\numbid'}$ is a $k$-uniform strategy where $\numbid\leq k\leq \numbid'$.} Let $r_t^*$  be the number of units allocated to $\optbid{\numbid'}$ in round $t$. Recall that $T_j$ is the set of rounds when $b_j^*$ is the least winning bid when $\optbid{\numbid'}$ is submitted~(see \cref{eq:Tj}). So,
\begin{align*}
\sum_{t=1}^T\val(\optbid{\numbid'}; \otherbid{t})=\sum_{j=1}^{\numbid'}\sum_{t\in T_j}\val(\optbid{\numbid'}; \otherbid{t})\,.
\end{align*}
Observe that, for any $t\in T_j$,
\begin{align*}
    \val(\optbid{\numbid'};\otherbid{t})=\val(\optbid{\numbid'}[1:j];\otherbid{t})=\max_{\l\in[j]}\val((b_\l^*, Q_\l^*);\otherbid{t})\,,
\end{align*}
where recall that $\ibid[1:j]$ represents the first $j$ bid-quantity pairs and the last equality holds due to \cref{lem:sum-to-max-equivalence}. For any $t\in T_j$, $r_t^*> Q_{j-1}^*$. So, $\forall \l < j$, $\val(\optbid{\numbid'};\otherbid{t})>\val((b_\l^*, Q_\l^*);\otherbid{t})$ as the demand of the strategy $(b_\l^*, Q_\l^*)$ is strictly less than the number of units obtained by $\optbid{\numbid'}$ in that round. Hence, for any $t\in T_j$, 
\begin{align}\label{eq:opt-value-equivalence}
    \val(\optbid{\numbid'};\otherbid{t})&=\val((b_j^*, Q_j^*);\otherbid{t})\implies \sum_{t=1}^T\val(\optbid{\numbid'}; \otherbid{t}) = \sum_{j=1}^{\numbid'}\sum_{t\in T_j} \val((b_j^*, Q_j^*);\otherbid{t})\,.
\end{align}
Let $\onefeasclass{\numbid}\subseteq\feasclass{\numbid}$ be the class of feasible strategies for $\hist$ that contain \textit{exactly} $\numbid$ bid-quantity pairs. Then, 
\begin{align}\label{eq:m-mbar-UB-helper}
   \Lambda_{\feasclass{\numbid'},\feasclass{\numbid}}(\hist, \v)\stackrel{\eqref{eq:opt-value-equivalence}}{=} \frac{\sum_{j=1}^{\numbid'}\sum_{t\in T_j} \val((b_j^*, Q_j^*);\otherbid{t})}{\max_{\ibid\in\feasclass{\numbid}}\sum_{t=1}^T\val(\ibid; \otherbid{t})} \leq \frac{\sum_{j=1}^{\numbid'}\sum_{t\in T_j} \val((b_j^*, Q_j^*);\otherbid{t})}{\max_{\ibid\in\onefeasclass{\numbid}}\sum_{t=1}^T\val(\ibid; \otherbid{t})}\,. 
\end{align}

\textbf{Constructing the class of bidding strategies, $\resonefeasclass{\numbid}$.}
We now construct a class of $\numbid$-uniform bidding strategies, $\resonefeasclass{\numbid}$ using $\optbid{\numbid'}$ as a `parent' bidding strategy, in \cref{alg:construct-bidclass}. 

\begin{algorithm}[!tbh]
\caption{Constructing $\resonefeasclass{\numbid}$}
\label{alg:construct-bidclass}
\small{
\begin{algorithmic}[1]
\Require $\optbid{\numbid'}=\langle(b^*_1, q^*_1),\dots, (b^*_{\numbid'}, q^*_{\numbid'})\rangle$. The collection, $\mathscr{C}$, of all subsets $\mathcal{S}\subseteq[\numbid']$ such that $|\mathcal{S}|=\numbid$. 

\State Initialize $\resonefeasclass{\numbid}\gets\emptyset$. 
\For{$\mathcal{S}\in\mathscr{C}$}
    \State Suppose $\mathcal{S}=\{k_1, \dots, k_{\numbid}\}$ such that ${k_1}<{k_2}<\dots<{k_{\numbid}}$.
    \State Construct a $\numbid$-uniform bidding strategy $\ibid=\langle (b_1, q_1), \dots, (b_\numbid, q_\numbid)\rangle$ where
    \begin{align}\label{eq:construct-b}
    b_j=b_{k_j}^* \quad\text{and}\quad q_j=Q_{k_j}^*-Q_{k_{j-1}}^*, ~&\forall j\in[\numbid]\,.
\end{align}
\State $\resonefeasclass{\numbid}\gets\resonefeasclass{\numbid}\cup\{\ibid\}$
\EndFor

\State \Return The class of bidding strategies, $\resonefeasclass{\numbid}$.
\end{algorithmic}}
\end{algorithm}


For example: suppose $\numbid'=4$ and $\optbid{\numbid'}=\langle (w_3, 3), (w_7, 4), (w_8, 1), (w_{10}, 2)\rangle$. Let $\numbid=2$ and $\mathcal{S}=\{2, 4\}$. Then, $\ibid=\langle (w_7, 7), (w_{10}, 3)\rangle$. 

\begin{lemma}\label{lem:new-class-2}
Let $b_{(j)}^*$ be the $j^{th}$ entry when $\optbid{\numbid'}$ is expressed as a bid vector. Similarly, for any $\ibid\in\resonefeasclass{\numbid}$, let $b_{(j)}$ be the $j^{th}$ entry when $\ibid$ is expressed as a bid vector. Then, for any $j\in[Q_{k_{\numbid}}]$, $b_{(j)}\leq b_{(j)}^*$. Thus, by \cref{lem:monotone}, $\ibid$ is also feasible for $\hist$ and as $\ibid$ was chosen arbitrarily, we conclude that $\resonefeasclass{\numbid}\subseteq\onefeasclass{\numbid}$.
\end{lemma}
\begin{proof}
Let $\mathcal{S}$ contain $k_1<\dots<{k_{\numbid}}$ and $\ibid$ be constructed per \cref{alg:construct-bidclass}. Then, for any $j\in[\numbid]$ and any entry $\l\in(Q_{k_{j-1}}^*, Q_{k_{j}}^*]$, 
\begin{align*}
    b_{(\l)}^*&\geq b_{(Q_{k_j}^*)}^*=b_{k_j}^* = b_{(\l)},
\end{align*}
where the first inequality follows because entries in bid vector are in non increasing order, the first equality follows as $b_{(Q_r)}^*=b_r^*$ for all $r\in[m']$ and second equality follows from \cref{eq:construct-b}. As $j$ and $\l$ were picked arbitrarily, we get that for all $j\in[Q_{k_{\numbid}}]$, $b_{(j)}\leq b_{(j)}^*$.
\end{proof}

By \cref{lem:new-class-2},
\begin{align}
  \Lambda_{\feasclass{\numbid'},\feasclass{\numbid}}(\hist, \v)\stackrel{\eqref{eq:m-mbar-UB-helper}}{\le}\frac{\sum_{j=1}^{\numbid'}\sum_{t\in T_j} \val((b_j^*, Q_j^*);\otherbid{t})}{\max_{\ibid\in\onefeasclass{\numbid}}\sum_{t=1}^T\val(\ibid; \otherbid{t})}\leq \frac{\sum_{j=1}^{\numbid'}\sum_{t\in T_j} \val((b_j^*, Q_j^*);\otherbid{t})}{\max_{\ibid\in\resonefeasclass{\numbid}}\sum_{t=1}^T\val(\ibid; \otherbid{t})}\,.
\end{align}

Consider any $\ibid\in\resonefeasclass{\numbid}$ obtained from the set $\mathcal{S}=\{k_1, \dots, k_\numbid\}\subseteq[\numbid']$. As $\ibid$ is feasible for $\hist$, invoking \cref{lem:sum-to-max-equivalence} for any round $t\in[T]$, we get:
\begin{align}\label{eq:m'-max-lemma}
    \val(\ibid; \otherbid{t}) = \max_{\l\in[\numbid]}\val((b_{k_\l}^*, Q_{k_\l}^*); \otherbid{t})= \max_{\l\in\mathcal{S}} \val((b_\l^*, Q_\l^*); \otherbid{t}),
\end{align}
because for any $\l\in[\numbid]$, $b_\l=b_{k_\l}^*$ and $Q_\l=\sum_{j=1}^jq_j=\sum_{j=1}^jQ_{k_j}^*-Q_{k_{j-1}}^*=Q_{k_\l}^*\,.$

Recall that $T_j$s, the set of rounds in which $b_j^*$ is the least winning bid, partition the set $[T]$. So, 
\begin{align}\label{eq:m'-LB}
    \sum_{t=1}^T\val(\ibid; \otherbid{t}) &= \sum_{t=1}^{\numbid'}\sum_{t\in T_j}\val(\ibid; \otherbid{t})\nonumber\\
    &=\sum_{j\in \mathcal{S}}\sum_{t\in T_j}\val(\ibid; \otherbid{t}) + \sum_{j\in [\numbid']\setminus \mathcal{S}}\sum_{t\in T_j}\val(\ibid; \otherbid{t}) \\
    &\ge \sum_{j\in \mathcal{S}}\sum_{t\in T_j}\val(\ibid; \otherbid{t})
\end{align}
For any $j\in\mathcal{S}$, consider any round $t\in T_j$: 
\begin{align}\label{eq:m-mbar-LB1}
  \val(\ibid; \otherbid{t}) &\stackrel{\eqref{eq:m'-max-lemma}}{=} \max_{\l\in\mathcal{S}}\val((b_\l^*, Q_\l^*);\otherbid{t}) \geq \val((b_j^*, Q_j^*);\otherbid{t})\nonumber\\
  \implies \sum_{t\in T_j}\val(\ibid; \otherbid{t})&\geq \sum_{t\in T_j}\val((b_j^*, Q_j^*);\otherbid{t})
\end{align}
which implies,
\begin{align}\label{eq:m-mbar-LB2}
   \sum_{t=1}^T\val(\ibid; \otherbid{t}) &\stackrel{\eqref{eq:m'-LB}}{\ge} \sum_{j\in \mathcal{S}}\sum_{t\in T_j}\val(\ibid; \otherbid{t}) \stackrel{\eqref{eq:m-mbar-LB1}}{\geq} \sum_{j\in \mathcal{S}}\sum_{t\in T_j}\val((b_j^*, Q_j^*);\otherbid{t})\,. 
\end{align}

Note that a bidding strategy $\ibid\in\resonefeasclass{\numbid}$ is uniquely determined by the set $\mathcal{S}\subseteq[\numbid']$ that generates $\ibid$. Hence, 
\begin{align}
    \max_{\ibid\in\resonefeasclass{\numbid}}\sum_{t=1}^T\val(\ibid; \otherbid{t}) &= \max_{\mathcal{S}\subseteq[\numbid']:|\mathcal{S}|= \numbid}\sum_{t=1}^T\val(\ibid; \otherbid{t})\nonumber\\
&\stackrel{\eqref{eq:m-mbar-LB2}}{\geq}\max_{\mathcal{S}\subseteq[\numbid']:|\mathcal{S}|= \numbid}\sum_{j\in \mathcal{S}}\sum_{t\in T_j}\val((b_j^*, Q_j^*);\otherbid{t})\nonumber\\
&\geq \frac{\numbid}{\numbid'}\sum_{j=1}^{\numbid'} \sum_{t\in T_j}\val((b_j^*, Q_j^*);\otherbid{t})\,.\label{eq:m'-LB-final}
\end{align}
Hence,  
\begin{align*}
   \Lambda_{\feasclass{\numbid'},\feasclass{\numbid}}(\hist, \v)\leq \frac{\sum_{j=1}^{\numbid'}\sum_{t\in T_j} \val((b_j^*, Q_j^*);\otherbid{t})}{\max_{\ibid\in\resonefeasclass{\numbid}}\sum_{t=1}^T\val(\ibid; \otherbid{t})}\stackrel{\eqref{eq:m'-LB-final}}{\leq} \frac{\numbid'}{\numbid}\,.
\end{align*}
Maximizing over all bid histories and valuation vectors, we get that $\Lambda_{\feasclass{\numbid'},\feasclass{\numbid}}\leq \frac{\numbid'}{\numbid}$.
    
\end{proof}
Now, we prove that the result also holds for safe bidding strategies. Formally, we show that for $\numbid, \numbid'\in\N$ and $\numbid'\geq \numbid$, $\Lambda_{\optufclass{\numbid'}, \optufclass{\numbid}}\leq \frac{\numbid'}{\numbid}$. To this end, we consider the quantity $\Lambda_{\optufclass{\numbid'}, \optufclass{\numbid}}(\hist, \v)$ analogous to $\Lambda_{\feasclass{\numbid'}, \feasclass{\numbid}}(\hist, \v)$, upper bound it and maximize over all bid histories and valuation vectors to get the desired result. 

For any bid history, $\hist=[\otherbid{t}]_{t\in[T]}$, let $\safebid{\numbid'}:=\argmax_{\ibid'\in\optufclass{\numbid'}}\sum_{t=1}^T\val(\ibid'; \otherbid{t})$. Without loss of generality, let $\safebid{\numbid'}=\langle (w_{Q_1^*}, q_1^*), \dots, (w_{Q_{\numbid'}^*}, q_{\numbid'}^*)\rangle$. The proof to bound $\Lambda_{\optufclass{\numbid'}, \optufclass{\numbid}}(\hist)$ is similar to that of \cref{thm:m-mbar-strong} till \cref{eq:m-mbar-UB-helper}, where we have
\begin{align*}
    \Lambda_{\optufclass{\numbid'}, \optufclass{\numbid}}(\hist) \leq \frac{\sum_{j=1}^{\numbid'}\sum_{t\in T_j} \val((w_{Q_j}^*, Q_j^*);\otherbid{t})}{\max_{\ibid\in\oneufclass{\numbid}}\sum_{t=1}^T\val(\ibid; \otherbid{t})}\,.
\end{align*}
Recall that $\oneufclass{\numbid}$ is the class of $\numbid$-uniform safe bidding strategies. Suppose $\oneufclass{\numbid}(\hist)$ be the class that is constructed analogous to the class $\resonefeasclass{\numbid}$ obtained in \cref{alg:construct-bidclass}. We now show that $\oneufclass{\numbid}(\hist)\subseteq\oneufclass{\numbid}$. 

\begin{lemma}\label{lem:new-class-1}
    For any subset $\mathcal{S}\subseteq[\numbid']$ such that $|\mathcal{S}|=\numbid$, suppose $\ibid$ is the bidding strategy obtained from $\mathcal{S}$ by applying \cref{alg:construct-bidclass}. Then, $\ibid$ is a safe bidding strategy. As $\mathcal{S}$ was chosen arbitrarily, $\oneufclass{\numbid}(\hist)\subseteq\oneufclass{\numbid}$.    
\end{lemma}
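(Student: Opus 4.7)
The plan is to verify directly that any $\ibid$ produced by Algorithm 3 from a parent safe strategy $\safebid{\numbid'}$ satisfies the explicit characterization of $\optoneufclass{\numbid}$ given in Theorem 3.4, and hence lies in $\oneufclass{\numbid}$ by Theorem 3.3. Since $\safebid{\numbid'}\in\optoneufclass{\numbid'}$, we have $b_j^\ast = w_{Q_j^\ast}$ for every $j\in[\numbid']$, and this is the only structural fact we will use about the parent.

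First I would compute the cumulative demands of $\ibid=\langle(b_1,q_1),\ldots,(b_\numbid,q_\numbid)\rangle$ obtained from $\mathcal{S}=\{k_1<\cdots<k_\numbid\}$. Setting $k_0:=0$ so that $Q_{k_0}^\ast = 0$, the construction in \cref{eq:construct-b} gives $q_j = Q_{k_j}^\ast - Q_{k_{j-1}}^\ast$, whence a telescoping sum yields
\[
Q_j \;=\; \sum_{\ell=1}^{j} q_\ell \;=\; \sum_{\ell=1}^{j}\bigl(Q_{k_\ell}^\ast - Q_{k_{\ell-1}}^\ast\bigr) \;=\; Q_{k_j}^\ast,\qquad \forall j\in[\numbid].
\]
Second, I would read off the bid values: by construction $b_j = b_{k_j}^\ast$, and since $\safebid{\numbid'}\in\optoneufclass{\numbid'}$ we have $b_{k_j}^\ast = w_{Q_{k_j}^\ast}$. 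Combining this with the identity $Q_j = Q_{k_j}^\ast$ just established gives $b_j = w_{Q_j}$ for every $j\in[\numbid]$.

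Third, I would invoke Theorem 3.4 to conclude that $\ibid\in\optoneufclass{\numbid}$, and then note $\optoneufclass{\numbid}\subseteq\oneufclass{\numbid}$ (indeed, equality with equality replacing $\le$ in the Theorem 3.3 characterization of $\oneufclass{\numbid}$ certainly implies membership). Since $\mathcal{S}$ was arbitrary, the entire constructed class satisfies $\oneufclass{\numbid}(\hist)\subseteq\oneufclass{\numbid}$, which is the claim. There is no real obstacle here: the whole argument hinges on the telescoping identity $Q_j = Q_{k_j}^\ast$, after which safety of $\safebid{\numbid'}$ transfers bid-by-bid to $\ibid$. The only small point to be careful about is the convention $k_0=0$ so that the first pair $(b_1,q_1)=(w_{Q_{k_1}^\ast},Q_{k_1}^\ast)$ is handled uniformly with the rest.
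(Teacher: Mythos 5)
Your proposal is correct and matches the paper's own argument: the paper likewise establishes $Q_j = Q_{k_j}^\ast$ and $b_j = b_{k_j}^\ast = w_{Q_{k_j}^\ast}$ for each $j$, and concludes safety from the characterization $b_j = w_{Q_j}$. You merely spell out the telescoping sum and the appeal to \cref{thm:opt-bid} and \cref{thm:P-m} more explicitly than the paper does.
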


\begin{proof}
Let $\mathcal{S}$ contain $k_1<\dots<{k_{\numbid}}$ and $\ibid$ be constructed per \cref{alg:construct-bidclass}. Then, for any $j\in[\numbid]$, 
\begin{align*}
   b_j=b_{k_j}^* = w_{Q_{k_j}^*},\quad\text{and}\quad Q_j=Q_{k_j}^*\,.
\end{align*}
Thus, $\ibid$ is a safe bidding strategy.
\end{proof}
The rest of the proof follows similar to that of \cref{thm:m-mbar-strong} which gives the desired result.

\subsection{Proof of \cref{thm:mbar-non-safe}}\label{apx:thm:mbar-non-safe}

We need to establish upper bounds on $\Lambda_{\feasclass{\numbid'}, \optufclass{\numbid}}$. From \cref{thm:m-mbar-strong} and \cref{thm:Price_universal}, we have:
\begin{align*}
   \Lambda_{\feasclass{\numbid'}, \optufclass{\numbid}} &= \frac{\max_{\ibid'\in\feasclass{\numbid'}}\sum_{t=1}^T\val(\ibid'; \otherbid{t})}{\max_{\ibid\in\optufclass{\numbid}}\sum_{t=1}^T\val(\ibid; \otherbid{t})}\\
   &=\frac{\max_{\ibid'\in\feasclass{\numbid'}}\sum_{t=1}^T\val(\ibid'; \otherbid{t})}{\max_{\ibid''\in\feasclass{\numbid}}\sum_{t=1}^T\val(\ibid''; \otherbid{t})}\cdot\frac{\max_{\ibid''\in\feasclass{\numbid}}\sum_{t=1}^T\val(\ibid''; \otherbid{t})}{\max_{\ibid\in\optufclass{\numbid}}\sum_{t=1}^T\val(\ibid; \otherbid{t})}\\
   &=\Lambda_{\feasclass{\numbid'},\feasclass{\numbid}}\cdot\Lambda_{\feasclass{\numbid}, \optufclass{\numbid}} \leq \frac{2\numbid'}{\numbid}\,.
\end{align*}
\section{Tight Lower Bounds for Results in \cref{sec:learning-rich}}\label{apx:sec:LB}
\subsection{Tight Lower Bound for \cref{thm:Price_universal} (For $\numbid\geq 2$)}\label{apx:LB-2-m-gen}

In this section, we design a bid history $\hist$ and valuation vector, $\mathbf{v}$ such that for any $\delta\in(0, 1/2]$, $\Lambda_{\feasclass{\numbid}, \optufclass{\numbid}}(\hist, \v)\geq 2-\delta$. By definition, for any $\hist$ and $\numbid\geq 2$,
\begin{align}\label{eq:pouf-loose-bound}
    \Lambda_{\feasclass{\numbid}, \optufclass{\numbid}}(\hist, \v)= \frac{\optvalue{\numbid}}{\safevalue{\numbid}}\geq\frac{\optvalue{\numbid}}{\numbid \safevalue{1}}\,,
\end{align}
where the second inequality follows because $\Lambda_{\optufclass{\numbid}, \optufclass{1}}\leq \numbid$~(by \cref{thm:m-mbar}). Surprisingly, we show that instead of computing $\safevalue{\numbid}$ directly, computing the upper bound of $\numbid\safevalue{1}$ also gives a tight bound.


\subsubsection{Construction of $\hist$.} We first decide all the parameters. 
\begin{itemize}
    \item Fix $\numbid\geq 2$ and any integer $N\geq 2\ceil{\frac{1}{\delta}}$. 
    \item Let $\maxbid=N^{2\numbid-1}$. Consider $T=N^{2\numbid-1}$ rounds and $K=N^{2\numbid-1}+1$ units in each auction.
    \item Let $\epsilon'=\frac{\numbid\delta/(2\numbid-1)-1/N}{2(1-1/N)}<\frac{\delta}{2}\leq \frac{1}{4}$. Set $\epsilon$ such that $\epsilon'=\epsilon N^{2\numbid-1}(N^{2\numbid-1}+1)$. 
\end{itemize}
Consider a valuation vector $\mathbf{v}=[1, v, \cdots, v]$ such that $v=1-2\epsilon'$, and target RoI $\gamma=0$. Partition the $N^{2\numbid-1}$ rounds into $2\numbid$ partitions such that the first partition has $1$ round and the $j^{th}$ partition has $N^{j-1}-N^{j-2}$ rounds for $2\leq j\leq 2\numbid$. Each partition has identical competing bid profile submitted by other bidders. In particular,
\begin{enumerate}
    \item The first partition (containing one round) has all the bids submitted by others as $w_{N^{2\numbid-1}+1}+\epsilon$.
    \item If $j>1$ and $j$ is odd, for the $j^{th}$ partition~(of size $N^{j-1}-N^{j-2}$), the smallest $N^{2\numbid-j}+1$ winning competing bids are $w_{N^{2\numbid-j}+1}+\epsilon$ and the remaining bids are $C\gg w_1$. 
    \item If $j>1$ and $j$ is even, for the $j^{th}$ partition~(of size $N^{j-1}-N^{j-2}$), the smallest $N^{2\numbid-j}$ winning competing bids are $w_{N^{2\numbid-j}+1}+\epsilon$ and the remaining bids are $C\gg w_1$. 
\end{enumerate}
We present an example for such a bid history in \cref{tab:lb-gen-m}.
\begin{table}[!tbh]
    \centering
    \caption{Bid history achieving tight lower bound for $\numbid=2$. Each round in the same partition has identical competing bid profile. Total number of units in each auction is $K=N^3+1$.}
    \small
    \begin{tabular}{cccc}
    \toprule
 Partition 1& Partition 2& Partition 3&Partition 4\\
 
         $t=1$&  $t\in[2, N]$&  $t\in[N+1,N^2]$&  $t\in[N^2+1,N^3]$\\
         \midrule
         $0$ bids are $C$&  $N^3-N^2+1$ bids are $C$ &  $N^3-N$ bids are $C$&  $N^3$ bids are $C$\\
         $N^3+1$ bids are $w_{N^3+1}+\epsilon$&  $N^2$ bids are $w_{N^2+1}+\epsilon$ &  $N+1$ bids are $w_{N+1}+\epsilon$&  1 bid is $w_2+\epsilon$\\
         \bottomrule
    \end{tabular}
    
    \label{tab:lb-gen-m}
\end{table}
\normalsize

\subsubsection{Computing $\optbid{\numbid}$.} 

\begin{lemma}\label{lem:\numbid-bids-tight-2}
    For the aforementioned $\hist$, $\optbid{\numbid}=\langle(b_1,q_1), \dots, (b_\numbid, q_\numbid)\rangle$ where
\begin{align}
    (b_j, q_j) = \begin{cases}
        \left(1, N \right), &~\text{ if } j=1\\
        \left(w_{N^{2j-2}}, N^{2j-1}-N^{2j-3} \right), &~\text{ if } 2\leq j \leq \numbid\,.
    \end{cases}
\end{align}

Furthermore,
\begin{align*}
\optvalue{\numbid} =N^{2\numbid-1} + (2\numbid-1)(N^{2\numbid-1}-N^{2\numbid-2})v\,.    
\end{align*}
\end{lemma}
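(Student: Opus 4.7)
The plan is to prove the lemma in three steps: (i) cap the per-round allocation achievable by any strategy in $\feasclass{\numbid}$ partition-by-partition, (ii) verify that the candidate strategy $\ibid^\star$ in the statement is itself in $\feasclass{\numbid}$ and attains this cap in every round, and (iii) use the monotonicity of the per-round value function to sum up and conclude.

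For step (i), fix round $t$ lying in partition $k$, and let $c_k$ denote the number of ``small'' competing bids, each equal to $w_{s_k}+\epsilon$ with $s_k=N^{2\numbid-k}+1$. By construction $c_k=s_k$ for odd $k>1$, $c_k=s_k-1$ for even $k$, and the remaining competing bids are $C\gg w_1$. For any $\ibid\in\feasclass{\numbid}$, let $r_t$ be its allocation; then $r_t\le c_k$ because the bidder cannot displace a bid of size $C$. I will invoke \cref{lem:price-multiple} and split on $r_t<c_k$ versus $r_t=c_k$. In the first case the clearing price is $w_{s_k}+\epsilon$, so the RoI condition $p(\allbids^t)\le w_{r_t}$ combined with the gap $w_{N^{2\numbid-k}}-w_{s_k}=(1-v)/[N^{2\numbid-k}\cdot s_k]$ (which is comfortably larger than $\epsilon=\epsilon'/[\maxbid(\maxbid+1)]$ since $1-v=2\epsilon'$) forces $r_t\le s_k-1=N^{2\numbid-k}$. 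In the second case the clearing price equals the bidder's lowest winning bid, which must exceed $w_{s_k}+\epsilon$ in order to displace every small competitor; this contradicts the feasibility bound $w_{c_k}$ precisely when $c_k=s_k$ (odd $k>1$), whereas when $c_k=s_k-1$ (even $k$) the sub-case is admissible. In partition $1$ the only cap is the demand, giving $r_t\le \maxbid=N^{2\numbid-1}$. Hence $r_t\le N^{2\numbid-k}$ for every $k\ge 2$ and $r_t\le \maxbid$ for $k=1$.

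For step (ii), the bid $b_j$ of $\ibid^\star$ beats the small competing bid of partition $k$ iff $w_{N^{2j-2}}>w_{s_k}+\epsilon$, which (using again the spacing of $w$ relative to $\epsilon$) reduces to $k\le 2\numbid-2j+2$. The prefix of winning bids therefore has cumulative length $Q_{j(k)}=N^{2j(k)-1}$, where $j(k)=\numbid+1-\lceil k/2\rceil$; this equals $N^{2\numbid-k}$ for odd $k>1$ and at least $N^{2\numbid-k}$ for even $k$. After capping by the $c_k$ available slots, $\ibid^\star$ wins exactly $N^{2\numbid-k}$ units in every round of every partition $k\ge 2$, and $\maxbid$ units in partition $1$. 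Feasibility follows from a direct RoI check: in odd partitions the clearing price equals $w_{s_k}+\epsilon$ and the RoI slack is $w_{N^{2\numbid-k}}-w_{s_k}-\epsilon>0$; in even partitions the clearing price equals the bidder's lowest winning bid $w_{N^{2\numbid-k}}$, making RoI exactly tight because $N^{2\numbid-k}\,w_{N^{2\numbid-k}}=\sum_{i\le N^{2\numbid-k}}v_i$ by definition of $w$.

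For step (iii), since the per-round value $v(r)=1+(r-1)v$ (with $v(0)=0$) is strictly increasing in $r$, combining the upper bound from step (i) with the achievement in step (ii) yields $\sum_t\val(\ibid;\otherbid{t})\le\sum_t\val(\ibid^\star;\otherbid{t})$ for every $\ibid\in\feasclass{\numbid}$, so $\ibid^\star$ is optimal. Plugging $|T_1|=1$ and $|T_k|=N^{k-1}-N^{k-2}$ for $k\ge 2$ into $\sum_k|T_k|\,v(N^{2\numbid-k})$, the value-$1$ contributions sum to $T=N^{2\numbid-1}$ while the value-$v$ contributions telescope to $(2\numbid-1)(N^{2\numbid-1}-N^{2\numbid-2})$, giving $\optvalue{\numbid}=N^{2\numbid-1}+(2\numbid-1)(N^{2\numbid-1}-N^{2\numbid-2})v$.

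The main obstacle is the case analysis in step (i) for odd $k>1$: the entire construction hinges on the fact that the extra small competing bid cannot be displaced without violating RoI, and this requires carefully separating the sub-cases $r_t<c_k$ and $r_t=c_k$ via \cref{lem:price-multiple}. Without this one-unit reduction, the upper bound would be off by $|T_k|$ units per odd partition and the construction would fail to deliver the $2-\delta$ tight lower bound targeted by \cref{thm:Price_universal}.
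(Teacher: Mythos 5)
Your proof is correct and follows essentially the same route as the paper's: bound the per-round allocation in each partition by showing that winning all $N^{2m-k}+1$ small-bid slots in an odd partition forces a clearing price above $w_{N^{2m-k}+1}$ and hence an RoI violation, then verify that the candidate nested strategy attains the cap $N^{2m-k}$ in every round and sum the values. The only (welcome) addition is that you explicitly check RoI feasibility of the candidate itself in the even partitions, which the paper leaves implicit.
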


\begin{proof}
We begin by a crucial observation that the bid history does not allow obtaining more than $N^{2\numbid-j}$ units in the $j^{th}$ partition while satisfying the RoI constraint, \textit{irrespective of the number of bids} submitted by the bidder. To verify this, note that, the maximum number of units that can be allocated to any bidding strategy in the $j^{th}$ partition is either $N^{2\numbid-j}$ or $N^{2\numbid-j}+1$~(depending on if $j$ is even or odd). Suppose contrary to our claim, the bidder is allocated $N^{2\numbid-j}+1$ units in the some round $t$ in the $j^{th}$ partition by bidding some $\ibid$. Let $\allbids^t=(\ibid, \otherbid{t})$ be the complete bid profile. So, $p(\allbids^t)\geq w_{N^{2\numbid-j}+1}+\epsilon$ but $x(\allbids^t)=N^{2\numbid-j}+1$ indicating that the RoI constraint is violated, which verifies our claim.

So, the total number of units, $N_{\text{total}}$, that can be obtained by the bidder over all the rounds is:
    \begin{align*}
        N_{\text{total}} \leq  N^{2\numbid-1}+\sum_{j=2}^{2\numbid}N^{2\numbid-j}(N^{j-1}-N^{j-2})=2\numbid N^{2\numbid-1}-(2\numbid-1)N^{2\numbid-2}=:N_{\max}.
    \end{align*}

 Now, we compute the the number of units obtained by bidding $\optbid{\numbid}$ and show that it is allocated $N_{\max}$ units for the constructed bid history, demonstrating that it is the optimal bidding strategy.

 Consider any auction in the $j^{th}$ partition. The lowest winning bid in the bid profile is $w_{N^{2\numbid-j}+1}+\epsilon$. Note that the unique bid values~(ignoring the quantity for the sake of brevity) in $\optbid{\numbid}$ are $\ibid=\left\{1, w_{N^2}, \dots, w_{N^{2\numbid-2}} \right\}$. We claim that the winning bid values of $\optbid{\numbid}$ in the $j^{th}$ partition are $\widehat{\ibid}=\left\{1, w_{N^2}, \dots, w_{N^{2\numbid+2\floor{-j/2}}} \right\}$. This is true because the least bid value in $\widehat{\ibid}$ is greater than $w_{N^{\numbid-j}+1}+\epsilon$, i.e., 
 \begin{align*}
    w_{N^{2\numbid+2\floor{-j/2}}} - (w_{N^{2\numbid-j}+1}+\epsilon)&\geq w_{N^{2\numbid-j}} - (w_{N^{2\numbid-j}+1}+\epsilon)\\
    &=\frac{1-v}{N^{2\numbid-j}(N^{2\numbid-j}+1)}-\epsilon=\frac{2\epsilon N^{2\numbid-1}(N^{2\numbid-1}+1)}{N^{2\numbid-j}(N^{2\numbid-j}+1)}-\epsilon\geq \epsilon>0\,.
 \end{align*}
 
Let $N_j$ denote the number of units allocated to $\optbid{\numbid}$ in each auction in the $j^{th}$ partition. There are two cases:
 \begin{enumerate}
     \item [(a)] for $j$ odd, recall that 
    for the $j^{th}$ partition~(of size $N^{j-1}-N^{j-2}$), the smallest $N^{2\numbid-j}+1$ winning competing bids are $w_{N^{2\numbid-j}+1}+\epsilon$ and the remaining bids are $C\gg w_1$. Then,  
     \begin{align*}
     N_j = N+\sum_{\l=2}^{\numbid-\frac{j-1}{2}}(N^{2\l-1}-N^{2\l-3}) = N^{2\numbid-j}\,.
 \end{align*}
 \item [(b)] For $j$ even, recall that  for the $j^{th}$ partition~(of size $N^{j-1}-N^{j-2}$), the smallest $N^{2\numbid-j}$ winning competing bids are $w_{N^{2\numbid-j}+1}+\epsilon$ and the remaining bids are $C\gg w_1$. 
 \begin{align*}
     N_j = \min\left\{N^{2\numbid-j}, N+\sum_{\l=2}^{\numbid+1-\frac{j}{2}}(N^{2\l-1}-N^{2\l-3}) \right\}=\min\{N^{2\numbid-j}, N^{2\numbid-j+1}\}= N^{2\numbid-j}\,.
 \end{align*}
 \end{enumerate}
  Here, the minimum is taken over two quantities where the first quantity is the number of competing bids less than $C$ in any round $t$ in the $j^{th}$ partition and the second quantity represents the total demand of the winning bids in $\optbid{\numbid}$ for that round. So, the total number of units obtained across all rounds is
 \begin{align*}
     N^{2\numbid-1}+\sum_{j=2}^{2\numbid}N^{2\numbid-j}(N^{j-1}-N^{j-2})=2\numbid N^{2\numbid-1}-(2\numbid-1)N^{2\numbid-2}.
 \end{align*}
 As this is the maximum number of units that can be obtained by the bidder, $\optbid{\numbid}$ is optimal. The total value obtained by bidding $\optbid{\numbid}$ is
 \begin{align*}
     \optvalue{\numbid} &= 1+(N^{2\numbid-1}-1)v+\sum_{j=2}^{2\numbid}  (N^{j-1}-N^{j-2})(1+(N^{2\numbid-j}-1)v)\\
     &=N^{2\numbid-1} + (2\numbid-1)(N^{2\numbid-1}-N^{2\numbid-2})v\,.
\end{align*}
\end{proof}

\subsubsection{Computing $\safevalue{1}$} Recall that we compute $\safevalue{1}$ and invoke the bounds on $\Lambda_{\optufclass{\numbid}, \optufclass{1}}$, instead of directly evaluating $\safevalue{\numbid}$.

\begin{lemma}\label{lem:1-bid-optimal-2}
  For the aforementioned $\hist$, $\safebid{1}=(1, 1)$ and $\safevalue{1}=N^{2\numbid-1}$.  
\end{lemma}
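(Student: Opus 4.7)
The plan is to enumerate the finite class $\optufclass{1} = \{(w_q, q) : q \in [\maxbid]\}$ (per \cref{thm:opt-bid}), compute the value each such strategy achieves on the constructed bid history, and show that the maximum occurs at $q = 1$.

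First, I would establish a clean partition-by-partition characterization of the allocation obtained by $(w_q,q)$. Using the identity $w_Q - w_{Q+1} = (1-v)/(Q(Q+1)) = 2\epsilon'/(Q(Q+1))$ together with the choice $\epsilon = \epsilon'/(N^{2\numbid-1}(N^{2\numbid-1}+1))$, a direct computation (analogous to the one carried out in the proof of \cref{lem:\numbid-bids-tight-2}) shows that for each $k \in [2\numbid]$,
\begin{align*}
    w_q > w_{N^{2\numbid-k}+1} + \epsilon \quad \Longleftrightarrow \quad q \leq N^{2\numbid-k}.
\end{align*}
Since in every round of partition $k$ the ``small'' competing bids all equal $w_{N^{2\numbid-k}+1}+\epsilon$ and the remaining bids are $C\gg w_1$, the strategy $(w_q,q)$ displaces those small bids exactly when $q\leq N^{2\numbid-k}$. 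In that case the number of small bids $\tilde{N}_k \in \{N^{2\numbid-k},N^{2\numbid-k}+1\}$ is at least $q$, so the bidder receives its full demand of $q$ units; otherwise it wins $0$ units.

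Next, for any $q \in [\maxbid]$, let $j \in \{0, 1, \dots, 2\numbid-1\}$ be the unique index with $q \in (N^{j-1}, N^j]$ (using the convention $N^{-1} := 0$). Then the inequality $q \leq N^{2\numbid-k}$ holds precisely for $k \leq 2\numbid - j$. Telescoping the partition sizes gives $1 + \sum_{k=2}^{2\numbid-j}(N^{k-1}-N^{k-2}) = N^{2\numbid-j-1}$ contributing rounds, and in each such round the bidder earns value $1 + (q-1)v$ from its $q$ units (using $v_1 = 1$ and $v_i = v$ for $i \geq 2$), so
\begin{align*}
    V(q) \;:=\; \sum_{t=1}^T \val((w_q, q); \otherbid{t}) \;=\; \bigl(1 + (q-1)v\bigr)\, N^{2\numbid-j-1}.
\end{align*}

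Finally, I would maximize $V(q)$ over $q$. Within each interval $(N^{j-1}, N^j]$, $V$ is increasing in $q$, so the maximum on that interval is at $q = N^j$, where
\begin{align*}
    V(N^j) \;=\; \bigl(1 + (N^j-1)v\bigr)\, N^{2\numbid-j-1} \;=\; (1-v)\, N^{2\numbid-j-1} + v\, N^{2\numbid-1}.
\end{align*}
Since $v < 1$, the first summand is strictly decreasing in $j$, so $V(N^j)$ is strictly decreasing over $j = 0, 1, \ldots, 2\numbid-1$, and hence the global maximum is $V(1) = N^{2\numbid-1}$, attained by $(w_1,1) = (1,1)$. This gives $\safebid{1} = (1,1)$ and $\safevalue{1} = N^{2\numbid-1}$. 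The main obstacle is the careful partition-by-partition allocation bookkeeping---in particular, verifying both the sharp cutoff $q \leq N^{2\numbid-k}$ using the prescribed $\epsilon$ and that the cap $\tilde{N}_k$ never binds---and correctly telescoping partition sizes to the clean expression $N^{2\numbid-j-1}$.
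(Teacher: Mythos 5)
Your proof is correct and follows essentially the same route as the paper: enumerate $(w_q,q)$, group the values of $q$ by the power-of-$N$ interval containing them, use the structure of the bid history to count the partitions in which the strategy can win, and conclude that $q=1$ is optimal. The only difference is cosmetic—you compute the exact value $V(q)$ via the sharp cutoff $w_q > w_{N^{2\numbid-k}+1}+\epsilon \iff q\le N^{2\numbid-k}$, whereas the paper only needs the upper bound $N^{j-2}(1+(q-1)v) < N^{2\numbid-1}$ for $q\ge 2$.
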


\begin{proof}
   The basic idea is to enumerate the total units~(value) that can be obtained by bidding $(w_q, q)$ for $q\in[N^{2\numbid-1}]$ and then finding the maximum of those values. As $q$ can be exponential in $\numbid$, we exploit the structure of the bid history to compute the objective in an efficient manner. 
   
   Suppose $q=1$. The maximum number of units~(value) that can be obtained by bidding $(1, 1)$ is trivially $N^{2\numbid-1}$, So, $(1, 1)$ obtains a total value $N^{2\numbid-1}$.
   
   Suppose $q\geq 2$. Furthermore, assume $N^{2\numbid-j}<q\leq N^{2\numbid-j+1}$, for some $2\leq j\leq 2\numbid$. Consider any bid of the form $(w_q, q)$. Bidding $(w_q, q)$ does not obtain any units in the partitions indexed by $j, j+1, \dots, 2\numbid$ as $w_q$ is strictly less than the least winning competing bids in those partitions, i.e., $w_q \leq w_{N^{2\numbid-j}+1} < w_{N^{2\numbid-\l}+1}+\epsilon$, for any $\l\in\{j, j+1,\dots, 2\numbid\}$. So, if $N^{2\numbid-j}<q\leq N^{2\numbid-j+1}$, $(w_q, q)$ gets no units in
   \begin{align*}
       \sum_{\l=j}^{2\numbid} N^{\l-1}-N^{\l-2} = N^{2\numbid-1}-N^{j-2} \text{ auctions.}
   \end{align*}
   
   In the remaining $N^{j-2}$ auctions it can win at most $q$ units. So, the maximum value obtained by $(w_q, q)$ for any $q\geq 2$ is
   \begin{align*}
       N^{j-2}(1+(q-1)v) &\leq N^{j-2}(1+(N^{2\numbid-j+1}-1)v)=N^{j-2}+(N^{2\numbid-1}-N^{j-2})v<N^{2\numbid-1},
   \end{align*}
   
where the last inequality holds as $v<1$. Hence, $\safebid{1}=(1, 1)$ and $\safevalue{1}=N^{2\numbid-1}$. 
\end{proof}
Hence, from \cref{lem:\numbid-bids-tight-2}, \cref{lem:1-bid-optimal-2} and \cref{eq:pouf-loose-bound}
\begin{align*}
     \Lambda_{\feasclass{\numbid}, \optufclass{\numbid}}(\hist, \v)\geq\frac{\optvalue{\numbid}}{\numbid \safevalue{1}}&=\frac{N^{2\numbid-1} + (2\numbid-1)(N^{2\numbid-1}-N^{2\numbid-2})v}{\numbid N^{2\numbid-1}}\\
    &=\frac{2\numbid N^{2\numbid-1}-(2\numbid-1)N^{2\numbid-2} - 2\epsilon'(2\numbid-1)(N^{2\numbid-1}-N^{2\numbid-2})}{\numbid N^{2\numbid-1}}\\
    &=2-\frac{2\numbid-1}{\numbid}\Big(\frac{1}{N}+2\epsilon'\Big(1-\frac{1}{N}\Big)\Big)\\
    &=2-\delta\,,
\end{align*}
where the last step follows by substituting the value of $\epsilon'$ and thus concludes the proof.

\subsection{Tight Lower Bound for \cref{thm:m-mbar}}\label{apx:LB-m}
In this section, we construct a bid history $\hist$ and valuation vector, $\mathbf{v}$ such that for any $\delta\in(0, \frac{1}{2}]$, $\Lambda_{\optufclass{\numbid'}, \optufclass{\numbid}}(\hist, \v)\geq \frac{\numbid'}{\numbid}-\delta$, i.e., the upper bound is tight for $\numbid'\geq \numbid+1$~(If $\numbid'=\numbid$, $\Lambda_{\optufclass{\numbid'}, \optufclass{\numbid}}=1$ implying the bound is tight).  

Recall that \cref{thm:m-mbar} states that $\Lambda_{\optufclass{\numbid}, \optufclass{1}}\leq \numbid$ for any $\numbid\geq1$. So, 
\begin{align}\label{eq:m-approx-trick}
   \Lambda_{\optufclass{\numbid'}, \optufclass{\numbid}}(\hist, \v)= \frac{\safevalue{\numbid'}}{\safevalue{\numbid}}\geq\frac{\safevalue{\numbid'}}{\numbid \safevalue{1}}=\frac{1}{\numbid}\cdot\Lambda_{\optufclass{\numbid'}, \optufclass{1}}(\hist, \v)\,. 
\end{align}
We now establish a lower bound on $\Lambda_{\optufclass{\numbid'}, \optufclass{1}}(\hist, \v)$.
\subsubsection{Construction of $\hist$.} We first decide all the parameters. 
\begin{itemize}
    \item Fix $\numbid'\geq 2$ and any integer $N\geq \ceil{\frac{\numbid'}{\delta}}$. 
    \item Let $\maxbid=N^{\numbid'-1}$. Consider $T=N^{\numbid'-1}$ rounds and $K=N^{\numbid'-1}$ units in each auction.
    \item Let $\epsilon'=\frac{\numbid\delta/(\numbid'-1)-1/N}{2(1-1/N)}<\frac{\numbid\delta}{2(\numbid'-1)}\leq \frac{1}{4}$. Set $\epsilon$ such that $\epsilon'=\epsilon N^{\numbid'-1}(N^{\numbid'-1}+1)$. 
\end{itemize}

 Let the valuation vector be $\mathbf{v}=[1, v, \cdots, v]$ such that $v=1-2\epsilon'$, and target RoI, $\gamma=0$.  Partition the $N^{\numbid'-1}$ rounds into $\numbid'$ partitions such that the first partition has $1$ round and the $j^{th}$ partition has $N^{j-1}-N^{j-2}$ rounds for $2\leq j\leq \numbid'$. Each partition has identical competing bid profile submitted by the other bidders. In particular,
\begin{enumerate}
    \item The first partition (containing one round) has all the competing bids to be $w_{N^{\numbid'-1}+1}+\epsilon$.
    \item For $2\leq j\leq \numbid'$, the $j^{th}$ partition~(of size $N^{j-1}-N^{j-2}$); the smallest $N^{\numbid'-j}+1$ competing winning bids are $w_{N^{\numbid'-j}+1}+\epsilon$ and the remaining bids are $C\gg w_1$. 
\end{enumerate}
We present an example for such a bid history in \cref{tab:LB-m}.
\begin{table}[!tbh]
    \centering
    \caption{Bid history attaining tight lower bound for $\numbid'=4$~(and $\numbid=1$). Each round in the same partition has identical competing bid profile. Total number of units in each auction is $K=N^3$.}
    \small
    \begin{tabular}{cccc}
    \toprule
 Partition 1& Partition 2& Partition 3&Partition 4\\
 
         $t=1$&  $t\in[2, N]$&  $t\in[N+1,N^2]$&  $t\in[N^2+1,N^3]$\\
         \midrule
         $0$ bids are $C$&  $N^3-N^2-1$ bids are $C$ &  $N^3-N-1$ bids are $C$&  $N^3-2$ bids are $C$\\
         $N^3$ bids are $w_{N^3+1}+\epsilon$&  $N^2+1$ bids are $w_{N^2+1}+\epsilon$ &  $N+1$ bids are $w_{N+1}+\epsilon$&  2 bids are $w_2+\epsilon$\\
         \bottomrule
    \end{tabular}
    
    \label{tab:LB-m}
\end{table}
\normalsize
\subsubsection{Computing $\safebid{\numbid'}$} 

\begin{lemma}\label{lem:numbid'-bids-tight}
    For the aforementioned $\hist$, $\safebid{\numbid'}=\langle(b_1, q_1),\dots, (b_{\numbid'}, q_{\numbid'})\rangle$ where
\begin{align}\label{eq:tight_example}
    (b_j, q_j) = \begin{cases}
        \left(1, 1\right), &~\text{if } j=1\\
        \left(w_{N^{j-1}}, N^{j-1}-N^{j-2}\right), &~\text{if } 2\leq j\leq \numbid'
    \end{cases}\,.
\end{align}

Furthermore,
\begin{align*}
\safevalue{\numbid'}=N^{\numbid'-1} + (\numbid'-1)(N^{\numbid'-1}-N^{\numbid'-2})v \,.   
\end{align*}
\end{lemma}
\begin{proof}
We begin by a crucial observation that the bid history does not allow obtaining more than $N^{\numbid'-j}$ units in the $j^{th}$ partition, while satisfying the RoI constraint,  \textit{irrespective of the number of bids} submitted by the bidder. To verify this, suppose contrary to our claim, that the bidder is allocated $N^{\numbid'-j}+1$ units in the some round $t$ in the $j^{th}$ partition by bidding some $\ibid$. Let $\allbids^t=(\ibid, \otherbid{t})$ be the complete bid profile. Hence, $p(\allbids^t)\geq w_{N^{\numbid'-j}+1}+\epsilon$ but $x(\allbids^t)=N^{\numbid'-j}+1$ indicating that the RoI constraint is violated which verifies our claim.



So, the total number of units, $N_{\text{total}}$, that can be obtained by the bidder across all the rounds is:
    \begin{align*}
        N_{\text{total}} \leq  N^{\numbid'-1}+\sum_{j=2}^{\numbid'}N^{\numbid'-j}(N^{j-1}-N^{j-2})=\numbid' N^{\numbid'-1}-(\numbid'-1)N^{\numbid'-2}:=N_{\max}.
    \end{align*}
    
 Now, we compute the number of units obtained by bidding $\safebid{\numbid'}$ and show that it is allocated $N_{\max}$ units across all the auctions, implying that it is the optimal bidding strategy. 
 
 To show $\safebid{\numbid'}$ is optimal, consider any auction in the $j^{th}$ partition. The lowest winning competing bid is $w_{N^{\numbid'-j}+1}+\epsilon$. Note that the unique bid values~(ignoring the quantity for the sake of brevity) in $\safebid{\numbid'}$, provided in \cref{eq:tight_example},  are $\ibid=\left\{1, w_N, \dots, w_{N^{\numbid'-1}}\right\}$. We claim that the winning bid values of $\safebid{\numbid'}$ in the $j^{th}$ partition are $\widetilde{\ibid}=\left\{1, w_N, \dots, w_{N^{\numbid'-j}} \right\}$. Again recall that for $2\leq j\leq \numbid'$, for the $j^{th}$ partition~(of size $N^{j-1}-N^{j-2}$),
the smallest $N^{\numbid'-j}+1$ competing winning bids are $w_{N^{\numbid'-j}+1}+\epsilon$ and the remaining bids are $C\gg w_1$. And here, the least bid value in $\widetilde{\ibid}$ is greater than $w_{N^{\numbid'-j}+1}+\epsilon$, i.e., 
 \begin{align*}
    w_{N^{\numbid'-j}} - (w_{N^{\numbid'-j}+1}+\epsilon)=\frac{1-v}{N^{\numbid'-j}(N^{\numbid'-j}+1)}-\epsilon=\frac{2\epsilon N^{\numbid'-1}(N^{\numbid'-1}+1)}{N^{\numbid'-j}(N^{\numbid'-j}+1)}-\epsilon\geq \epsilon>0\,.
 \end{align*}

 Moreover, observe that the bidder is allocated the maximum number of units demanded for each of the bid value in $\widetilde{\ibid}$. So, the number of units in each auction in the $j^{th}$ partition by bidding $\safebid{\numbid'}$ is
 \begin{align*}
     N_j = 1+\sum_{\l=2}^{\numbid'-j+1}(N^{\l-1}-N^{\l-2}) = N^{\numbid'-j}\,.
 \end{align*}
 Hence, the total number of units obtained across all rounds is
 \begin{align*}
    N^{\numbid'-1}+\sum_{j=2}^{\numbid'}N^{\numbid'-j}(N^{j-1}-N^{j-2})=\numbid' N^{\numbid'-1}-(\numbid'-1)N^{\numbid'-2} = N_{\max}.
 \end{align*}
 As this is the maximum number of units that can be obtained by the bidder, $\safebid{\numbid'}$ is optimal. The total value obtained by bidding $\safebid{\numbid'}$ is given by
 \begin{align*}
     \safevalue{\numbid'} &= 1+(N^{\numbid'-1}-1)v+\sum_{j=2}^{\numbid'}  (N^{j-1}-N^{j-2})(1+(N^{\numbid'-j}-1)v)\\
     &=N^{\numbid'-1} + (\numbid'-1)(N^{\numbid'-1}-N^{\numbid'-2})v\,.
\end{align*}
\end{proof}

\subsubsection{Computing $\safebid{1}$} 

\begin{lemma}\label{lem:1-bid-optimal}
  For the constructed $\hist$, $\safebid{1}=\left(1, 1\right)$ and $\safevalue{1}=N^{\numbid-1}$.  
\end{lemma}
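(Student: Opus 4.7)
The plan is to prove the lemma's two claims---that $\safebid{1}=(1,1)$ is the optimal $1$-uniform safe strategy and that its value equals the number of rounds $N^{\numbid'-1}=\maxbid=T$---by (i) computing the value of $(1,1)$ directly and (ii) showing that every other safe $1$-uniform strategy $(w_q,q)$ with $q\ge 2$ earns strictly less. I follow the active construction's partition parameter, which the preceding lemmas denote by $\numbid'$ and which the lemma statement writes as $\numbid$; the two refer to the same quantity.

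\textbf{Step (i): the value of $(1,1)$.} First I would verify that $(1,1)\in\optufclass{1}$ since $b_1=1=w_1$. Next I would observe that in every one of the $T=N^{\numbid'-1}$ rounds the small winning competing bids have the form $w_{N^{\numbid'-\l}+1}+\epsilon$ for some partition index $\l$, and each such bid is strictly below $w_1=1$ because $w_j<w_1$ for every $j\ge 2$ while $\epsilon$ is tiny by the defining relation $\epsilon'=\epsilon\,N^{\numbid'-1}(N^{\numbid'-1}+1)$ with $\epsilon'\in(0,1/4)$. Hence $(1,1)$ collects exactly one unit per round, yielding total value $N^{\numbid'-1}$.

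\textbf{Step (ii): strict suboptimality for $q\ge 2$.} For each such $q$, I would pick the unique $j\in\{2,\dots,\numbid'\}$ with $N^{\numbid'-j}<q\le N^{\numbid'-j+1}$ and split the $\numbid'$ partitions at index $j$. In partitions $\l\ge j$ the least winning competing bid satisfies $w_{N^{\numbid'-\l}+1}+\epsilon\ge w_{N^{\numbid'-j}+1}+\epsilon>w_{N^{\numbid'-j}+1}\ge w_q$, so $(w_q,q)$ wins no units there. In partitions $\l\le j-1$, the monotonicity $w_q\ge w_{N^{\numbid'-j+1}}$ combined with the gap inequality already established inside \cref{lem:numbid'-bids-tight} (which yields $w_{N^{\numbid'-j+1}}>w_{N^{\numbid'-\l}+1}+\epsilon$ for all $\l\le j-1$) ensures that $(w_q,q)$ beats every small competing bid; since $q\le N^{\numbid'-j+1}\le N^{\numbid'-\l}+1$, the bidder's entire demand is satisfied and they win exactly $q$ units. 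The number of rounds in partitions $\l\le j-1$ is $1+\sum_{\l=2}^{j-1}(N^{\l-1}-N^{\l-2})=N^{j-2}$, so the total value of $(w_q,q)$ is at most $N^{j-2}(1+(q-1)v)\le N^{j-2}+(N^{\numbid'-1}-N^{j-2})v$, which is strictly smaller than $N^{\numbid'-1}$ because $v<1$. Combining with Step (i) gives $\safebid{1}=(1,1)$ and $\safevalue{1}=N^{\numbid'-1}$.

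\textbf{Main obstacle.} The only nontrivial step is the pair of bid-gap inequalities---one giving a strict win in partitions $\l<j$ and the other a strict loss in partitions $\l\ge j$---and both reduce to the identity $w_{N^{\numbid'-j}}-(w_{N^{\numbid'-j}+1}+\epsilon)\ge\epsilon$ that is already verified inside the proof of \cref{lem:numbid'-bids-tight}. Everything else is careful bookkeeping: identifying the correct partition index $j$ for each $q$, counting the rounds partition-by-partition via the telescoping sum $\sum_{\l=2}^{j-1}(N^{\l-1}-N^{\l-2})=N^{j-2}-1$, and invoking $v<1$ to upgrade a weak inequality into the strict comparison needed to rule out every $q\ge 2$.
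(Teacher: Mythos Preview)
Your proof is correct and follows essentially the same approach as the paper, which simply defers to the argument of \cref{lem:1-bid-optimal-2}. Both compute the value of $(1,1)$ directly, then for each $q\ge 2$ locate the partition index $j$ with $N^{\numbid'-j}<q\le N^{\numbid'-j+1}$, show zero allocation in partitions $\l\ge j$ and at most $q$ units in the $N^{j-2}$ earlier rounds, and finish with the bound $N^{j-2}+(N^{\numbid'-1}-N^{j-2})v<N^{\numbid'-1}$.
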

\begin{proof}
    The proof is similar to that of \cref{lem:1-bid-optimal-2} in the sense that it involves enumerating the total units obtaining by submitting $(w_q, q)$ for all $q\in[N^{\numbid-1}]$ in an efficient manner by leveraging the structure in the bid history.
\end{proof}
Substituting values from \cref{lem:numbid'-bids-tight}, \cref{lem:1-bid-optimal}
\begin{align*}
\Lambda_{\optufclass{\numbid'}, \optufclass{\numbid}}(\hist, \v)&\stackrel{\eqref{eq:m-approx-trick}}{\geq} \frac{1}{\numbid}\cdot\Lambda_{\optufclass{\numbid'}, \optufclass{1}}(\hist, \v)\\
&=\frac{\safevalue{\numbid'}}{\numbid\safevalue{1}}\\
&=\frac{N^{\numbid'-1} + (\numbid'-1)(N^{\numbid'-1}-N^{\numbid'-2})v}{\numbid N^{\numbid'-1}}\\
    &=\frac{\numbid' N^{\numbid'-1}-(\numbid'-1)N^{\numbid'-2} - 2\epsilon'(\numbid'-1)(N^{\numbid'-1}-N^{\numbid'-2})}{\numbid N^{\numbid'-1}}\\
    &=\frac{\numbid'}{\numbid}-\frac{(\numbid'-1)}{\numbid}\Big(\frac{1}{N}+2\epsilon'\Big(1-\frac{1}{N}\Big)\Big)\\
    &= \frac{\numbid'}{\numbid}-\delta\,,
\end{align*}
where the last step follows by substituting the value of $\epsilon'$ and thus concludes the proof. 

\begin{corollary}
    As $\optufclass{\numbid}\subseteq\feasclass{\numbid}$ for all $\numbid\in\N$, the upper bound in \cref{thm:m-mbar-strong} is also tight, i.e., for any $\numbid'\geq\numbid$ and $\delta\in(0, \frac{1}{2}]$, there exists a bid history $\hist$ and valuation curve $\v$ such that $\Lambda_{\feasclass{\numbid'}, \feasclass{\numbid}}(\hist, \v)\geq \frac{\numbid'}{\numbid}-\delta$.
\end{corollary}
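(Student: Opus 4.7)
The proposal relies on recycling the instance from \cref{apx:LB-m}. I plan to reuse the bid history $\hist$ and valuation curve $\v$ constructed there (parameters $N \geq \lceil \numbid'/\delta\rceil$, $T = K = \maxbid = N^{\numbid'-1}$, $\v=[1,v,\ldots,v]$ with $v=1-2\epsilon'$, and $\numbid'$ partition blocks each with competing-bid levels $w_{N^{\numbid'-j}+1}+\epsilon$ or $C\gg w_1$). For that instance, \cref{lem:numbid'-bids-tight} and \cref{lem:1-bid-optimal} already give $\safevalue{\numbid'} = N^{\numbid'-1} + (\numbid'-1)(N^{\numbid'-1}-N^{\numbid'-2})v$ and $\safevalue{1} = N^{\numbid'-1}$. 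Since $\optufclass{\numbid'}\subseteq\feasclass{\numbid'}$, the numerator of $\Lambda_{\feasclass{\numbid'},\feasclass{\numbid}}(\hist,\v)$ satisfies $\max_{\ibid'\in\feasclass{\numbid'}}\sum_{t=1}^T\val(\ibid';\otherbid{t}) \geq \safevalue{\numbid'}$. For the denominator, invoking \cref{thm:m-mbar-strong} with parameters $(\numbid,1)$ gives $\max_{\ibid\in\feasclass{\numbid}}\sum_{t=1}^T\val(\ibid;\otherbid{t}) \leq \numbid\cdot\max_{\ibid\in\feasclass{1}}\sum_{t=1}^T\val(\ibid;\otherbid{t})$. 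The proof thus reduces to the instance-specific lemma $\max_{\ibid\in\feasclass{1}}\sum_{t=1}^T\val(\ibid;\otherbid{t}) \leq N^{\numbid'-1}=\safevalue{1}$---i.e., relaxing safety to feasibility yields no improvement for a 1-uniform bidder on this instance.

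The main technical work is proving this lemma. I would take an arbitrary 1-uniform feasible bid $(b,q)$ and parameterize it by the largest block index $j^*\in[\numbid']$ such that $b > w_{N^{\numbid'-j^*}+1}+\epsilon$; monotonicity of the thresholds $\{w_{N^{\numbid'-j}+1}+\epsilon\}_{j}$ in $j$ implies the bid wins zero units in blocks $j>j^*$. For blocks $j\leq j^*$, a per-round RoI check via \cref{lem:price-multiple} shows that attempting to win the marginal $(N^{\numbid'-j^*}+1)$-th unit in block $j^*$ would force a clearing price exceeding $w_{N^{\numbid'-j^*}+1}$ and violate RoI; hence feasibility effectively constrains the per-round allocation to $q$ units with $q\leq N^{\numbid'-j^*}$ (the one exceptional configuration $j^*=1$ with $q> N^{\numbid'-1}$ wins at most $N^{\numbid'-1}$ units in the single round of block~1 for a value of $1+(N^{\numbid'-1}-1)v<N^{\numbid'-1}$, which is subsumed by the main case). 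Aggregating over the $1+(N-1)+\cdots+(N^{j^*-1}-N^{j^*-2})=N^{j^*-1}$ rounds in blocks $1,\ldots,j^*$, the total value is bounded by $(1+(q-1)v)\cdot N^{j^*-1}$; optimizing over $q\in[1,N^{\numbid'-j^*}]$ produces the envelope $N^{j^*-1}(1-v)+N^{\numbid'-1}v$, which (since $1-v>0$) is strictly increasing in $j^*$ and attains maximum $N^{\numbid'-1}$ at $j^*=\numbid'$.

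Assembling the pieces, $\Lambda_{\feasclass{\numbid'},\feasclass{\numbid}}(\hist,\v) \geq \safevalue{\numbid'}/(\numbid\cdot \safevalue{1})$, which is precisely the ratio computed at the end of \cref{apx:LB-m} and simplifies to $\numbid'/\numbid - \delta$ under the parameter choice $\epsilon' = (\numbid\delta/(\numbid'-1) - 1/N)/(2(1-1/N))$ used there. The hard part will be the case analysis in the second paragraph: while each individual RoI check via \cref{lem:price-multiple} is routine, correctly parameterizing \emph{all} 1-uniform feasible bids by $j^*$, handling the mixed competing-bid structure ($w_{N^{\numbid'-j}+1}+\epsilon$ versus $C$) block by block, and verifying the monotonicity of the envelope $N^{j^*-1}(1-v)+N^{\numbid'-1}v$ in $j^*$ require careful bookkeeping. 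Once this instance-specific lemma is in hand, the rest of the argument recycles the computation already completed in \cref{apx:LB-m}.
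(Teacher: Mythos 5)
Your proposal is correct, and it uses the same instance and the same overall skeleton as the paper: the paper states this corollary with no separate proof, citing only the inclusion $\optufclass{k}\subseteq\feasclass{k}$ and the construction of \cref{apx:LB-m}. Where you add genuine value is in noticing that the inclusion alone is not a complete argument—it raises the numerator of $\Lambda_{\feasclass{\numbid'},\feasclass{\numbid}}(\hist,\v)$ but could in principle also raise the denominator—and in supplying the missing instance-specific fact that on this bid history the best \emph{feasible} (not necessarily safe) $1$-uniform strategy earns at most $N^{\numbid'-1}=\safevalue{1}$. The paper's \cref{lem:1-bid-optimal} only enumerates the safe strategies $(w_q,q)$ (its proof defers to the style of \cref{lem:1-bid-optimal-2}), so your extension to arbitrary feasible $(b,q)$ is exactly the right patch: parameterize by the deepest block $j^*$ whose threshold $w_{N^{\numbid'-j^*}+1}+\epsilon$ the bid clears, use \cref{lem:price-multiple} to show that RoI feasibility forces $q\le N^{\numbid'-j^*}$ (winning the $(N^{\numbid'-j^*}+1)$-th unit would set a clearing price above $w_{N^{\numbid'-j^*}+1}$), and maximize the envelope $N^{j^*-1}(1-v)+N^{\numbid'-1}v$ over $j^*$, which is increasing since $1-v=2\epsilon'>0$ and equals $N^{\numbid'-1}$ at $j^*=\numbid'$; your accounting of $N^{j^*-1}$ rounds in blocks $1,\dots,j^*$ and the $j^*=1$ edge case (subsumed because total demand is at most $\maxbid=N^{\numbid'-1}$) are both sound. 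Combining this with the numerator bound $\max_{\ibid'\in\feasclass{\numbid'}}\sum_{t=1}^T\val(\ibid';\otherbid{t})\ge\safevalue{\numbid'}$ and \cref{thm:m-mbar-strong} applied with parameters $(\numbid,1)$, the final ratio computation of \cref{apx:LB-m} goes through verbatim and yields $\frac{\numbid'}{\numbid}-\delta$. In short, the paper's version is terser but leaves the denominator step implicit; your version is self-contained and closes that gap with the natural argument.
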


\subsection{Tight Lower Bound for \cref{thm:mbar-non-safe}}\label{apx:LB-2m}

In this section, we present a bid history $\hist$ and valuation vector, $\mathbf{v}$ such that for any $\delta\in(0, \frac{1}{2}]$, $\Lambda_{\feasclass{\numbid'}, \optufclass{\numbid}}(\hist, \v)\geq\frac{2\numbid'}{\numbid}-\delta$, i.e., the upper bound is tight for $\numbid'\geq \numbid$. 

Recall that \cref{thm:m-mbar} states that $\Lambda_{\optufclass{\numbid}, \optufclass{1}}\leq \numbid$ for any $\numbid\geq1$. So, 
\begin{align}\label{eq:m-approx-trick-2}
   \Lambda_{\feasclass{\numbid'}, \optufclass{\numbid}}(\hist, \v)= \frac{\optvalue{\numbid'}}{\safevalue{\numbid}}\geq\frac{\optvalue{\numbid'}}{\numbid \safevalue{1}}=\frac{1}{\numbid}\cdot\Lambda_{\feasclass{\numbid'}, \optufclass{1}}(\hist, \v)\,. 
\end{align}
Recall that for showing tightness of the bound for \cref{thm:Price_universal} for any $\numbid\geq2$, we computed a lower bound on $\Lambda_{\feasclass{\numbid}, \optufclass{1}}(\hist, \v)$~(cf. \cref{eq:pouf-loose-bound}). 
Thus, we consider a valuation curve and a bid history that has a structure identical to the one presented in \cref{apx:LB-2-m-gen} but with the following modified parameters, 
\begin{itemize}
    \item Fix $\numbid'\geq 1$ and any integer $N\geq 2\ceil{\frac{2\numbid'}{\delta}}$. 
    \item Let $\maxbid=N^{2\numbid'-1}$. Consider $T=N^{2\numbid'-1}$ rounds and $K=N^{2\numbid'-1}+1$ units in each auction.
    \item Let $\epsilon'=\frac{\numbid\delta/(2\numbid'-1)-1/N}{2(1-1/N)}<\frac{\numbid\delta}{2(2\numbid'-1)}\leq \frac{1}{4}$. Set $\epsilon$ such that $\epsilon'=\epsilon N^{2\numbid'-1}(N^{2\numbid'-1}+1)$. 
\end{itemize}

Consider a valuation vector $\mathbf{v}=[1, v, \cdots, v]$ such that $v=1-2\epsilon'$, and target RoI $\gamma=0$. Substituting the values from \cref{lem:\numbid-bids-tight-2} and \cref{lem:1-bid-optimal-2},
\begin{align*}
    \Lambda_{\feasclass{\numbid'}, \optufclass{\numbid}}(\hist, \v)&\stackrel{\eqref{eq:m-approx-trick-2}}{\geq} \frac{1}{\numbid}\cdot \Lambda_{\feasclass{\numbid'}, \optufclass{1}}(\hist, \v)\\
    &=\frac{N^{2\numbid'-1} + (2\numbid'-1)(N^{2\numbid'-1}-N^{2\numbid'-2})v}{\numbid N^{2\numbid'-1}}\\
    &=\frac{2\numbid' N^{2\numbid'-1}-(2\numbid'-1)N^{2\numbid'-2} - 2\epsilon'(2\numbid'-1)(N^{2\numbid'-1}-N^{2\numbid'-2})}{\numbid N^{2\numbid'-1}}\\
    &=\frac{2\numbid'}{\numbid}-\frac{(2\numbid'-1)}{\numbid}\Big(\frac{1}{N}+2\epsilon'\Big(1-\frac{1}{N}\Big)\Big)\\
    &=\frac{2\numbid'}{\numbid} -\delta\,,
\end{align*}
where the last step follows by substituting the value of $\epsilon'$ and thus concludes the proof.
\section{Omitted Details from \cref{sec:extensions}}\label{apx:sec:extensions}
\subsection{Proof of \cref{lem:forgetful}}
We introduce a few additional notations to make the proof easier to parse. Specifically, we denote random variables with a hat, e.g. $\widehat{\ibid}^t$, and their realizations without hats, e.g. $\ibid^t$. Let $\E_t[\cdot]$ denote the expectation taken only with respect to randomness in selecting $\widehat{\ibid}^t$. For an adaptive adversary and any fixed $\ibid\in\optufclass{\numbid}$,
\begin{align}\label{eq:adaptive-UB}
    \sum_{t=1}^T\val(\ibid; \widehat{\otherbid{t}}) - \sum_{t=1}^T\E_t\Big[\val(\widehat{\ibid}^t; \widehat{\otherbid{t}})\Big]&\leq \sum_{t=1}^T\sup_{\otherbid{t}}\E_t\Big[\val(\ibid; \otherbid{t}) - \val(\widehat{\ibid}^t; \otherbid{t})\Big]\nonumber\\
    &=\sup_{\otherbid{1}, \dots, \otherbid{T}}\sum_{t=1}^T\E_t\Big[\val(\ibid; \otherbid{t}) - \val(\widehat{\ibid}^t; \otherbid{t})\Big]\,.
\end{align}
To see why \cref{eq:adaptive-UB} holds, let $T=2$. Then, the right hand side in the first equation becomes
\begin{align*}
   \sup_{\otherbid{1}}\E_1\Big[\val(\ibid; \otherbid{1}) - \val(\widehat{\ibid}^1; \otherbid{1})\Big] + \sup_{\otherbid{2}}\E_2\Big[\val(\ibid; \otherbid{2}) - \val(\widehat{\ibid}^2; \otherbid{2})\Big] 
\end{align*}
Note that $\E_1\Big[\val(\ibid; \otherbid{1}) - \val(\widehat{\ibid}^1; \otherbid{1})\Big]$ does not depend on $\otherbid{2}$ and $\otherbid{2}$, due to the forgetful property of the learning algorithm, does not depend on the realization of $\widehat{\otherbid{1}}$. Hence, 
\begin{align*}
   &\sup_{\otherbid{1}}\E_1\Big[\val(\ibid; \otherbid{1}) - \val(\widehat{\ibid}^1; \otherbid{1})\Big] + \sup_{\otherbid{2}}\E_2\Big[\val(\ibid; \otherbid{2}) - \val(\widehat{\ibid}^2; \otherbid{2})\Big] \\
   &= \sup_{\otherbid{1}, \otherbid{2}}\Big\{\E_1\Big[\val(\ibid; \otherbid{1}) - \val(\widehat{\ibid}^1; \otherbid{1})\Big] + \E_2\Big[\val(\ibid; \otherbid{2}) - \val(\widehat{\ibid}^2; \otherbid{2})\Big]\Big\}
\end{align*}
Repeating the same argument for any $T\geq 2$, we get the equality in \cref{eq:adaptive-UB}. Hence,
\begin{align}
   \max_{\ibid\in\optufclass{\numbid}}\sum_{t=1}^T\val(\ibid; \widehat{\otherbid{t}}) - \sum_{t=1}^T\E_t\Big[\val(\widehat{\ibid}^t; \widehat{\otherbid{t}})\Big]&\leq \max_{\ibid\in\optufclass{\numbid}}\sup_{\otherbid{1}, \dots, \otherbid{T}}\Big\{\sum_{t=1}^T\val(\ibid; \otherbid{t}) - \sum_{t=1}^T\E_t[\val(\widehat{\ibid}^t; \otherbid{t})]\Big\}\nonumber \\
   &=\sup_{\otherbid{1}, \dots, \otherbid{T}}\max_{\ibid\in\optufclass{\numbid}}\Big\{\sum_{t=1}^T\val(\ibid; \otherbid{t}) - \sum_{t=1}^T\E_t[\val(\widehat{\ibid}^t; \otherbid{t})]\Big\}\leq B,\nonumber   
\end{align}
where the last inequality holds as $\textsf{REG}\leq B$ for any sequence of competing bids $\otherbid{1}, \dots, \otherbid{T}$ by an \textit{oblivious} adversary. Taking expectations on both sides gives $\textsf{REG}_{\text{adap}}\leq B$.

\subsection{Proof of \cref{lem:whp-to-expected}}
We have that $\widehat{\textsf{REG}}_{\text{adap}} \le B(\delta)$ w. p. at least $1-\delta$. Then, 
\begin{align*}
   \textsf{REG}_{\text{adap}}=\E[\widehat{\textsf{REG}}_{\text{adap}}] &= \E\left[\widehat{\textsf{REG}}_{\text{adap}}\cdot\ind{\widehat{\textsf{REG}}_{\text{adap}} \geq B(\delta)}\right] + \E\left[\widehat{\textsf{REG}}_{\text{adap}}\cdot\ind{\widehat{\textsf{REG}}_{\text{adap}} \leq B(\delta)}\right]\\
   &\leq \maxbid T \cdot\P[\widehat{\textsf{REG}}_{\text{adap}} \geq B(\delta)] + B(\delta) \leq \maxbid T\delta + B(\delta)\,,
\end{align*}
where the first inequality holds as $\textsf{REG}_{\text{adap}}\leq \maxbid T$ w.p. 1. 

\subsection{Proof of \cref{thm:bandit-adaptive}}\label{apx:bandit-feedback-adaptive}

\textbf{The Algorithm}. Before proving \cref{thm:bandit-adaptive}, we present \cref{alg:weight-pushing-adaptive-whp}, which is designed for this setting. As stated earlier, our approach builds on ideas from \cite{gyorgy2007line}. We first define a covering path set $\mathcal{C}$ as a collection of $s$-$d$ paths $\path$ such that every edge $e \in E$ appears in at least one path $\path \in \mathcal{C}$. Such a set can be constructed with size at most $|E|$ by including, for each edge $e$, at least one path that contains $e$. Hence, $|\mathcal{C}| \leq |E|$.

In each round $t$, similar to \cref{alg:weight-pushing}, the bidder constructs a DAG $\mathcal{G}^t(V, E)$ without edge weights. With probability $\lambda > 0$, the bidder selects an $s$-$d$ path uniformly at random from the set $\mathcal{C}$. Otherwise, they update edge weights and sample a path as a Markov chain, following \cref{alg:weight-pushing}. Once a path is selected, the bidder maps it to a safe bidding strategy with at most $\numbid$ bid-quantity pairs.

Since the true weights cannot be computed $\forall e\in E$ under the bandit feedback structure, we therefore use the following \textit{biased} estimator $\widehat{\textsf{w}}^t(e)$ of $\textsf{w}^t(e)$: 
\begin{align}\label{eq:w-hat-adaptive}
        \widehat{\textsf{w}}^t(e) &= \frac{\textsf{w}^t(e)}{p^t(e)}\cdot\ind{e\in\path^t} + \frac{\theta}{p^t(e)}\quad\text{where}\quad p^t(e)=\sum_{\path:e\in\path}\P^t(\path)\,.
    \end{align}
    
   Here, $\textsf{w}^t(e)$ is as per \cref{eq:full-info-edge-weight} and $\theta \in [0, \maxbid]$ is a parameter that will be determined shortly. As before, $p^t(e)$ denotes the probability of selecting edge $e$ in round $t$, and $\P^t(\cdot)$ denotes the distribution over all $s$–$d$ paths in the graph $\mathcal{G}^t(V, E)$. Following the description of the algorithm, the distribution $\P^t(\cdot)$ can be expressed as a convex combination of two distributions: the uniform distribution over the paths in the covering path set $\mathcal{C}$, given by $\frac{\ind{\cdot \in \mathcal{C}}}{|\mathcal{C}|}$, and $\widehat{\P}^t(\cdot)$, which is the distribution over $s$–$d$ paths in round $t$ obtained using the path selection step from \cref{alg:weight-pushing}. Formally,
    \begin{align}\label{eq:def-P}
       \P^t(\path)&:= (1-\lambda) \cdot                                  \widehat{\P}^t(\path) + \lambda\cdot\frac{\ind{\path\in\mathcal{C}}}{|\mathcal{C}|},
    \quad\text{and}\quad\widehat{\P}^t(\path)=\prod_{e\in\path} \varphi^t(e) = \frac{\exp(\eta\sum_{\tau=1}^{t-1}\widehat{\textsf{w}}^{\tau}(\path) )}{\sum_{\path'}\exp(\eta\sum_{\tau=1}^{t-1}\widehat{\textsf{w}}^{\tau}(\path'))}\,.
    \end{align}

    Here, the last equality holds because the $s$-$d$ path distribution induced by the path selection step \cref{alg:weight-pushing} matches that of the Hedge algorithm over $s$-$d$ paths (see proof of \cref{thm:full-info}). The estimator, $\widehat{\textsf{w}}^t(e)$, is well defined as 
    \begin{align*}
       p^t(e)\stackrel{\eqref{eq:w-hat-adaptive},\eqref{eq:def-P}}{\geq} \lambda\sum_{\path:e\in\path} \frac{\ind{\path\in\mathcal{C}}}{|\mathcal{C}|}\geq \frac{\lambda}{|\mathcal{C}|} >0,
    \end{align*}
    
    where the second inequality follows as every edge $e$ is contained in at least one path $\path\in\mathcal{C}$.

\begin{algorithm}[!tbh]
\caption{Learning Safe Bidding Strategies~(Bandit Setting, Adaptive Adversary)}
\label{alg:weight-pushing-adaptive-whp}
\small{
\begin{algorithmic}[1]
\Require valuation curve $\mathbf{v}$, time horizon $T$, learning rate $\eta>0$, parameters $\lambda$, $\theta$, covering path set $\mathcal{C}$. Initialize $\varphi^0(e)=1$ and $\widehat{\textsf{w}}^0(e)=0, \forall e\in E$.
\For{$t = 1, 2, \dots, T$}
    \State Construct $\mathcal{G}^t(V, E)$ similar to $\mathcal{G}(V, E)$ without weights.
    \State Sample $Z_t\sim\text{Unif}[0, 1]$. 
    \If{$Z_t\leq \lambda$}
    \State Select path $\path^t$ uniformly at random from the set $\mathcal{C}$. 
    \Else
    \State Set $\Gamma^{t-1}(d)=1$ and recursively compute in bottom-to-top fashion for every node $u$ in $\mathcal{G}^t(V, E)$:
    \begin{align*}
        \Gamma^{t-1}(u)=\sum_{v:u\to v=e\ni E}\Gamma^{t-1}(v)\cdot\varphi^{t-1}(e)\cdot\exp(\eta \widehat{\textsf{w}}^{t-1}(e))\,.
    \end{align*}

    \State For edge $e=u\to v$ in $\mathcal{G}^t(V, E)$, update edge probability:
       \begin{align*}
           \varphi^t(e)=\varphi^{t-1}(e)\cdot\exp(\eta \widehat{\textsf{w}}^{t-1}(e))\cdot\frac{\Gamma^{t-1}(v)}{\Gamma^{t-1}(u)}.
       \end{align*}
    
    \quad Define initial node $u=s$ and path $\path^t=s$. 
    \While{$u\neq d$}
    \State Sample $v$ with probability $\varphi^t(u\to v)$.
    \State Append $v$ to the path $\path^t$; set $u\gets v$.
    \EndWhile
    \EndIf
    \State If $\path^t=s\to (1, z_1)\to\dots\to(k, z_k)\to d$ for some $k\in[\numbid]$, submit $\ibid^t=\langle(b_1, q_1), \dots, (b_k, q_k)\rangle$ where
    \begin{align*}
      b_\l=w_{z_\l} \quad\text{and}\quad q_\l=z_\l-z_{\l-1}, ~\forall  \l\in [k]\,.
    \end{align*}
    \State The bidder sets edge weights $\widehat{\textsf{w}}^t(e)$ per \cref{eq:w-hat-adaptive}.
    
\EndFor
\end{algorithmic}}
\end{algorithm}

\textbf{Regret Analysis.} We begin by stating the following result which is a modification of \cite[Lemma 2]{gyorgy2007line}.

\begin{lemma}\label{lem:gyrogy}
    For any $\delta\in(0, 1)$ and $\theta\in(0, \maxbid]$ and $e\in E$, 
    \begin{align*}
        \P\left[ \sum_{t=1}^T \textsf{w}^t(e)> \sum_{t=1}^T \widehat{\textsf{w}}^t(e) + \frac{\maxbid^2}{\theta}\log \frac{|E|}{\delta}\right] \leq \frac{\delta}{|E|}\,.
    \end{align*}
\end{lemma}

To upper bound the regret of \cref{alg:weight-pushing-adaptive-whp}, define $\Phi_t = \sum_{\path} \exp(\eta\sum_{\tau=1}^t \widehat{\textsf{w}}^\tau(\path))$ where $\widehat{\textsf{w}}^\tau(\path)=\sum_{e\in\path}\widehat{\textsf{w}}^\tau(e)$ for any path $\path$ in the DAG. Suppose
\begin{align}\label{def:eta-lambda}
  \eta=\frac{\lambda}{2(\numbid+1)\maxbid|E|}\quad\text{and}\quad 0< \lambda\leq \frac{1}{2}\,.
\end{align}

Observe that $\widehat{\textsf{w}}^t(e)=\frac{\textsf{w}^t(e)}{p^t(e)}\cdot\ind{e\in\path^t} + \frac{\theta}{p^t(e)}\implies \widehat{\textsf{w}}^t(e)\leq \frac{2\maxbid |E|}{\lambda}$ as $\textsf{w}^t(e)\leq \maxbid, \theta\leq \maxbid$ and $p^t(e)\geq \frac{\lambda}{|\mathcal{C}|}\geq \frac{\lambda}{|E|}$. So,
\begin{align}\label{eq:less-than-one}
    \eta\widehat{\textsf{w}}^\tau(\path) = \eta\sum_{e\in\path}\widehat{\textsf{w}}^\tau(e)  \leq \frac{2\eta(\numbid+1)\maxbid|E|}{\lambda} \stackrel{\eqref{def:eta-lambda}}{=} 1,
\end{align}

where the inequality holds as any path $\path$ in the DAG has at most $\numbid+1$ edges. Now,
\begin{align*}
    \frac{\Phi_t}{\Phi_{t-1}} &= \sum_{\path}\frac{ \exp(\eta\sum_{\tau=1}^{t-1} \widehat{\textsf{w}}^\tau(\path))}{\Phi_{t-1}}\cdot\exp(\eta\widehat{\textsf{w}}^t(\path))\\
    &\stackrel{\eqref{eq:def-P}}{=}\sum_\path\widehat{\P}^t(\path)\cdot\exp(\eta\widehat{\textsf{w}}^t(\path))\\
    &\stackrel{\eqref{eq:less-than-one}}{\leq} \sum_\path\widehat{\P}^t(\path)\cdot(1+\eta\widehat{\textsf{w}}^t(\path)+\eta^2\widehat{\textsf{w}}^t(\path)^2)\\
    &=1+ \sum_\path\widehat{\P}^t(\path)\cdot(\eta\widehat{\textsf{w}}^t(\path)+\eta^2\widehat{\textsf{w}}^t(\path)^2)\\
    &\leq 1+\frac{\eta}{1-\lambda}\sum_\path\P^t(\path)\widehat{\textsf{w}}^t(\path)+\frac{\eta^2}{1-\lambda}\sum_\path\P^t(\path)\widehat{\textsf{w}}^t(\path)^2\\
    &\leq \exp\left(\frac{\eta}{1-\lambda}\sum_\path\P^t(\path)\widehat{\textsf{w}}^t(\path)+\frac{\eta^2}{1-\lambda}\sum_\path\P^t(\path)\widehat{\textsf{w}}^t(\path)^2\right)
\end{align*}

where the first inequality uses $e^x\leq 1+x+x^2$ for $x\leq1$, second one follows as $\widehat{\P}^t(\path) = \frac{\P^t(\path)-\frac{\lambda}{|\mathcal{C}|}\cdot\ind{\path\in\mathcal{C}}}{1-\lambda} \leq \frac{\P^t(\path)}{1-\lambda}$ and third one is true as $1+x\leq e^x, \forall x$.
Now, 
\begin{align*}
    \sum_\path\P^t(\path)\widehat{\textsf{w}}^t(\path) &= \sum_\path\P^t(\path)\sum_{e\in\path}\widehat{\textsf{w}}^t(e)\\
    &= \sum_{e\in E}\widehat{\textsf{w}}^t(e)\sum_{\path:e\in\path}\P^t(\path)\\
    &\stackrel{\eqref{eq:w-hat-adaptive}}{=} \sum_{e\in E}\widehat{\textsf{w}}^t(e)p^t(e)\\
     &\stackrel{\eqref{eq:w-hat-adaptive}}{=}\sum_{e\in E}\textsf{w}^t(e)\cdot\ind{e\in\path^t} + \theta \\
     &= \textsf{w}^t(\path^t) + |E|\theta\,.
\end{align*}

Similarly,
\begin{align*}
    \sum_\path\P^t(\path)\widehat{\textsf{w}}^t(\path)^2 &\leq (\numbid+1)\sum_\path\P^t(\path)\sum_{e\in\path}\widehat{\textsf{w}}^t(e)^2\\
    &= (\numbid+1)\sum_{e\in E}\widehat{\textsf{w}}^t(e)^2p^t(e)\\
    &\stackrel{\eqref{eq:w-hat-adaptive}}{=}(\numbid+1)\sum_{e\in E}\widehat{\textsf{w}}^t(e)(\textsf{w}^t(e)\cdot\ind{e\in\path^t}+\theta)\\
    &\leq 2(\numbid+1)\maxbid\sum_{e\in E}\widehat{\textsf{w}}^t(e),
\end{align*}

where the first inequality holds due to Cauchy-Schwarz inequality. Combining everything,
\begin{align*}
  \frac{\Phi_t}{\Phi_{t-1}} &\leq  \exp\left(\frac{\eta}{1-\lambda}(\textsf{w}^t(\path^t) + |E|\theta)+\frac{2(\numbid+1)\maxbid\eta^2}{1-\lambda}\sum_{e\in E}\widehat{\textsf{w}}^t(e)\right) 
\end{align*}

Taking logarithms on both sides and adding from $t=1$ till $t=T$, we get
\begin{align*}
    \log \Phi_T - \log \Phi_0 &\leq \frac{\eta}{1-\lambda}\sum_{t=1}^T(\textsf{w}^t(\path^t) + |E|\theta)+\frac{2(\numbid+1)\maxbid\eta^2}{1-\lambda}\sum_{t=1}^T\sum_{e\in E}\widehat{\textsf{w}}^t(e)\\
    &\leq \frac{\eta}{1-\lambda}\sum_{t=1}^T(\textsf{w}^t(\path^t) + |E|\theta)+\frac{2(\numbid+1)\maxbid\eta^2|E|}{1-\lambda}\max_\path\sum_{t=1}^T\widehat{\textsf{w}}^t(\path),
\end{align*}

where the last inequality follows from the fact that \(\sum_{e\in E} \widehat{\textsf{w}}^t(e) \leq \sum_{\path \in \mathcal{C}} \widehat{\textsf{w}}^t(\path)\), since each edge $e$ is contained in at least one path \(\path \in \mathcal{C}\). Hence,
\begin{align*}
  \sum_{t=1}^T\sum_{e\in E}\widehat{\textsf{w}}^t(e) \leq \sum_{t=1}^T\sum_{\path\in\mathcal{C}}\widehat{\textsf{w}}^t(\path)=\sum_{\path\in\mathcal{C}}\sum_{t=1}^T\widehat{\textsf{w}}^t(\path) \leq |E|\max_\path\sum_{t=1}^T\widehat{\textsf{w}}^t(\path),
\end{align*}

where the last inequality holds as $|\mathcal{C}|\leq |E|$. Observe that, $\Phi_0 \leq \maxbid^\numbid$ and $\log \Phi_T \geq \eta\max_\path\sum_{t=1}^T\widehat{\textsf{w}}^t(\path)$. Thus,
\begin{align*}
    \max_\path\sum_{t=1}^T\widehat{\textsf{w}}^t(\path) - \frac{\numbid\log \maxbid}{\eta}
    &\leq \frac{1}{1-\lambda}\sum_{t=1}^T(\textsf{w}^t(\path^t) + |E|\theta)+\frac{2(\numbid+1)\maxbid\eta|E|}{1-\lambda}\max_\path\sum_{t=1}^T\widehat{\textsf{w}}^t(\path)\,.
\end{align*}

Rearranging the terms, we get
\begin{align*}
   (1-\lambda-2\eta(\numbid+1)\maxbid|E|)\max_\path\sum_{t=1}^T\widehat{\textsf{w}}^t(\path)-\sum_{t=1}^T\textsf{w}^t(\path^t) 
    &\leq \frac{\numbid\log \maxbid}{\eta}+ T|E|\theta\,. 
\end{align*}

Note that $1-\lambda-2\eta(\numbid+1)\maxbid|E| = 1-2\lambda\geq0$. By \cref{lem:gyrogy} and union bound over all edges $e\in E$, we get that with probability at least $1-\delta$, 
\begin{align*}
    \sum_{t=1}^T \widehat{\textsf{w}}^t(e) \geq \sum_{t=1}^T \textsf{w}^t(e) - \frac{\maxbid^2}{\theta}\log \frac{|E|}{\delta}, \forall e\in E\,.
\end{align*}

For any $\path$, summing over the edges, we get that with probability at least $1-\delta$,
\begin{align*}
    \sum_{t=1}^T\widehat{\textsf{w}}^t(\path) \geq \sum_{t=1}^T\textsf{w}^t(\path) - \frac{(m+1)\maxbid^2}{\theta}\log \frac{|E|}{\delta}\,.
\end{align*}

Hence,
\begin{align*}
   (1-\lambda-2\eta(\numbid+1)\maxbid|E|)\left(\max_\path\sum_{t=1}^T\textsf{w}^t(\path)-\frac{(m+1)\maxbid^2}{\theta}\log \frac{|E|}{\delta} 
    \right)- \sum_{t=1}^T\textsf{w}^t(\path^t)&\leq \frac{\numbid\log \maxbid}{\eta} + T|E|\theta\,.
\end{align*}

Using the fact that $\lambda = 2\eta(\numbid+1)\maxbid|E|$ and $\max_\path\sum_{t=1}^T\textsf{w}^t(\path)\leq \maxbid T$, we get
\begin{align*}
    \max_\path\sum_{t=1}^T\textsf{w}^t(\path)
    - \sum_{t=1}^T\textsf{w}^t(\path^t)&\leq \frac{\numbid\log \maxbid}{\eta} + 4\eta(\numbid+1)\maxbid^2|E| T + T|E|\theta + \frac{(m+1)\maxbid^2}{\theta}\log \frac{|E|}{\delta} 
\end{align*}

Setting
\begin{align*}
    \eta = \frac{1}{2\maxbid}\sqrt{\frac{\numbid \log \maxbid}{(\numbid+1)|E| T}}, ~~\theta = \maxbid\sqrt{\frac{(m+1)}{T|E|}\log \frac{|E|}{\delta}}, \quad\text{and}\quad \lambda = \sqrt{\frac{\numbid(\numbid+1)|E|\log \maxbid}{T}}
\end{align*}

and $T\geq \max\{4\numbid(\numbid+1)|E|\log \maxbid, \log (|E|/\delta)\}$~(to ensure $\theta\in(0,  \maxbid]$ and $\lambda\in(0, \frac{1}{2}]$) we obtain,
\begin{align*}
   \max_\path\sum_{t=1}^T\textsf{w}^t(\path)
    - \sum_{t=1}^T\textsf{w}^t(\path^t)&\leq 4\maxbid\sqrt{\numbid(\numbid+1)|E| T \log \maxbid} + 2\maxbid\sqrt{(\numbid+1)|E|T \log (|E|/\delta)}\,. 
\end{align*}

By \cref{thm:DAG-base-strategy} and \cref{alg:weight-pushing-adaptive-whp}, $\max_\path\sum_{t=1}^T\textsf{w}^t(\path) = \max_{\ibid\in\optufclass{\numbid}}\sum_{t=1}^T\val(\ibid; \otherbid{t})$ and $\sum_{t=1}^T\textsf{w}^t(\path^t)=\sum_{t=1}^T\val(\ibid^t; \otherbid{t})$. Furthermore, $|E|=O(\numbid\maxbid^2)$. Hence, with probability at least $1-\delta$,
\begin{align*}
   \widehat{\textsf{REG}}_{\text{adap}}  = \max_{\ibid\in\optufclass{\numbid}}\sum_{t=1}^T\val(\ibid; \otherbid{t})
    - \sum_{t=1}^T\val(\ibid^t; \otherbid{t})&\leq O(\numbid^{3/2}\maxbid^2\sqrt{T \log \maxbid} + \numbid\maxbid^2\sqrt{T\log (\maxbid/\delta)})\,. 
\end{align*}

\textbf{Time Complexity.} The computational bottleneck is in computing the edge weight estimator in \cref{eq:w-hat-adaptive}. Specifically, we need to compute the marginal probabilities $p^t(e)$ efficiently. From the proof of \cref{thm:bandit}, we know that this can be done efficiently in $O(\numbid\maxbid^2)$ time. Thus, the per-round running time of \cref{alg:weight-pushing-adaptive-whp} is $O(\numbid\maxbid^2)$ time per-round.

\subsubsection{Proof of \cref{lem:gyrogy}}
    Fix any $e\in E$. By Chernoff bound, for any $z>0$ and $\zeta>0$, 
    \begin{align*}
       \P\left[\sum_{t=1}^T \textsf{w}^t(e)> \sum_{t=1}^T \widehat{\textsf{w}}^t(e) + z\right] \leq e^{-\zeta z}\E\left[\exp\left(\zeta\left(\sum_{t=1}^T \textsf{w}^t(e)- \sum_{t=1}^T \widehat{\textsf{w}}^t(e)\right)\right)\right] 
    \end{align*}
    
    Setting $z=\frac{\maxbid^2}{\theta}\log \frac{|E|}{\delta}$ and $\zeta=\frac{\theta}{\maxbid^2}$, we get 
    \begin{align*}
        \P\left[ \sum_{t=1}^T \textsf{w}^t(e)> \sum_{t=1}^T \widehat{\textsf{w}}^t(e) + \frac{\maxbid^2}{\theta}\log \frac{|E|}{\delta}\right] \leq \frac{\delta}{|E|}\E\left[\exp\left(\zeta\left(\sum_{t=1}^T \textsf{w}^t(e)- \sum_{t=1}^T \widehat{\textsf{w}}^t(e)\right)\right)\right]\,.
    \end{align*}

    To complete the proof, we need to show that   $\E[\exp(\zeta(\sum_{t=1}^T \textsf{w}^t(e)- \sum_{t=1}^T \widehat{\textsf{w}}^t(e)))] \leq 1$. Define 
    \begin{align*}
       X_t = \exp\left(\zeta\left(\sum_{\tau=1}^t \textsf{w}^\tau(e)- \sum_{\tau=1}^t\widehat{\textsf{w}}^\tau(e)\right)\right)\quad \text{and}\quad X_0=1\,. 
    \end{align*}
    
    We show that for any $t$, $\E[X_t|\ibid_1, \dots, \ibid_{t-1}]:=\E_t[X_t]\leq X_{t-1}$. Computing conditional expectations recursively results in $\E[X_t]\leq 1, \forall t$ which proves the result. To prove $\E_t[X_t]\leq X_{t-1}$, note that $X_t = X_{t-1}\exp(\zeta(\textsf{w}^t(e)- \widehat{\textsf{w}}^t(e)))$. So,
    \begin{align*}
        \E_t[X_t] &= X_{t-1}\E_t\left[\exp(\zeta(\textsf{w}^t(e)- \widehat{\textsf{w}}^t(e)))\right]\\
        &\stackrel{\eqref{eq:w-hat-adaptive}}{=} X_{t-1}\E_t\left[\exp\left(\frac{\theta}{\maxbid^2}\left(\textsf{w}^t(e)- \frac{\textsf{w}^t(e)}{p^t(e)}\cdot\ind{e\in\path^t} - \frac{\theta}{p^t(e)}\right)\right)\right]\\
        &=X_{t-1}e^{-\frac{\theta^2}{\maxbid^2p^t(e)}}\E_t\left[\exp\left(\frac{\theta}{\maxbid^2}\left(\textsf{w}^t(e)- \frac{\textsf{w}^t(e)}{p^t(e)}\cdot\ind{e\in\path^t}\right)\right)\right]\\
        &\leq X_{t-1}e^{-\frac{\theta^2}{\maxbid^2 p^t(e)}}\E_t\left[1 + \frac{\theta}{\maxbid^2}\left(\textsf{w}^t(e)- \frac{\textsf{w}^t(e)}{p^t(e)}\cdot\ind{e\in\path^t}\right)+\frac{\theta^2}{\maxbid^4}\left(\textsf{w}^t(e)- \frac{\textsf{w}^t(e)}{p^t(e)}\cdot\ind{e\in\path^t}\right)^2\right] \\
        &=X_{t-1}e^{-\frac{\theta^2}{\maxbid^2 p^t(e)}}\E_t\left[1 + \frac{\theta^2}{\maxbid^4}\left(\textsf{w}^t(e)- \frac{\textsf{w}^t(e)}{p^t(e)}\cdot\ind{e\in\path^t}\right)^2\right] \\
        &\leq X_{t-1}e^{-\frac{\theta^2}{\maxbid^2 p^t(e)}}\E_t\left[1 + \frac{\theta^2}{\maxbid^4}\left( \frac{\textsf{w}^t(e)}{p^t(e)}\cdot\ind{e\in\path^t}\right)^2\right]\\
        &\leq X_{t-1}e^{-\frac{\theta^2}{\maxbid^2 p^t(e)}}\left(1 + \frac{\theta^2}{ \maxbid^2p^t(e)}\right)\\
        &\leq X_{t-1}, 
    \end{align*}
    which is the desired result. Here, the first inequality follows as $\frac{\theta}{\maxbid^2}\left(\textsf{w}^t(e)- \frac{\textsf{w}^t(e)}{p^t(e)}\cdot\ind{e\in\path^t}\right) \leq \theta/\maxbid \leq 1$ and $e^x\leq 1+x+x^2, \forall x\leq 1$, the fourth equality is true as $\E_t[\frac{\textsf{w}^t(e)}{p^t(e)}\cdot\ind{e\in\path^t}]=\textsf{w}^t(e)$, the third inequality follows as $\textsf{w}^t(e)\leq \maxbid$ and the final inequality holds true as $1+x\leq e^x, \forall x$.

\subsection{Proof of \cref{thm:cumulative-impossible}}
Let $\numbid=\maxbid=1$ and $K=2$.  Fix any $c\in(0, 1]$ and choose $\epsilon\in(0, \frac{c}{2})$. Assume $\epsilon T \in \N$. For the first $(1-\epsilon) T$ rounds, the top $K=2$ competing bids are $[3T, 1+\frac{\epsilon}{2}]$. For the remaining $\epsilon T$ rounds, the competing bids are $[\epsilon, \epsilon]$. Define $\v=1$. 

Here, $\ibid^*=1+\epsilon$ is the fixed hindsight optimal strategy that satisfies the cumulative RoI constraint. The total value obtained by $\ibid^*$ is $T$~(1 in each round). Similarly, the total payment is $(1+\epsilon)\cdot(1-\epsilon)T+\epsilon\cdot\epsilon T=T$. 

    By \cref{assumpt:1}, the bidder follows safe strategies for the first $(1-\epsilon)T$ rounds and does not win any units. In the remaining $\epsilon T$ rounds, it can win at most $\maxbid=1$ unit per round. Thus, the maximum possible value obtained by the bidder is $\epsilon T$. Hence, $c$-$\textsf{REG}_{\text{cumul}}\geq c\cdot T-\epsilon T\geq \frac{cT}{2}=\Omega(T)$. 
    
\subsection{Proof of \cref{thm:regret-UB-stoc}}\label{apx:ssec:time-varying}
Before proving the result, we formally present the learning algorithm for stochastic contexts in \cref{alg:weight-pushing-stoc}. Note that the learner does not need to know the distribution $F_\v$ from which the contexts are sampled. The space complexity and per-round time complexity of \cref{alg:weight-pushing-stoc} is $O(\numbid\maxbid^2|\mathcal{V}|)$.

\begin{algorithm}[!tbh]
\caption{Learning Safe Bidding Strategies~(Full Information, Stochastic Contexts)}
\label{alg:weight-pushing-stoc}
\small{
\begin{algorithmic}[1]
\Require Set of valuation vectors, $\mathcal{V}$, time horizon $T$, learning rate $0< \eta\leq \frac{1}{\maxbid}$. Initialize $\varphi^0(e; \v)=1$ and $\textsf{w}^0(e; \v)=0, \forall e\in E, \forall \v\in\mathcal{V}$.
\For{$t = 1, 2, \dots, T$}
    \State Observe valuation curve $\v^t\in\mathcal{V}$ which is sampled from $F_\v$.
    \State Construct $\mathcal{G}^t(V, E, \v)$ similar to $\mathcal{G}(V, E)$ without weights for all $\v\in\mathcal{V}$.
    \State $\textsf{UPDATE}:$ Obtain edge probabilities $\varphi^t(\cdot; \v)$ for all $\v\in\mathcal{V}$ as follows:
    \For{$\v\in\mathcal{V}$}
    
    
    
    \State Set $\Gamma^{t-1}(d; \v)=1$ and recursively compute in bottom-to-top fashion for every node $u$ in $\mathcal{G}^t(V, E, \v)$:
    \begin{align*}
        \Gamma^{t-1}(u; \v)=\sum_{v:u\to v=e\ni E}\Gamma^{t-1}(v; \v)\cdot\varphi^{t-1}(e; \v)\cdot\exp(\eta \textsf{w}^{t-1}(e; \v ))
    \end{align*}

    \State For edge $e=u\to v$ in $\mathcal{G}^t(V, E, \v)$, update:
       $ \varphi^t(e; \v)=\varphi^{t-1}(e; \v)\cdot\exp(\eta \textsf{w}^{t-1}(e; \v))\cdot\frac{\Gamma^{t-1}(v; \v)}{\Gamma^{t-1}(u; \v)}.$
\EndFor
    \State $\textsf{SAMPLE}$: Define initial node $u=s$ and path $\path^t=s$. 
    \While{$u\neq d$}
    \State Sample $v$ with probability $\varphi^t(u\to v; \v^t)$.
    \State Append $v$ to the path $\path^t$; set $u\gets v$.
    \EndWhile
    \State $\textsf{MAP}$: If $\path^t=s\to (1, z_1)\to\dots\to(k, z_k)\to d$ for some $k\in[\numbid]$, submit $\ibid^t=\langle(b_1, q_1), \dots, (b_k, q_k)\rangle$ where
    \begin{align*}
      b_\l=w_{z_\l}^t \quad\text{and}\quad q_\l=z_\l-z_{\l-1}, ~\forall  \l\in [k]\,.
    \end{align*}
    
    where $w_{z_\l}^t$ is the $z_\l^{th}$ entry of the average cumulative valuation vector for all $\l\in[k]$ corresponding to $\v^t$.
    \State The bidder observes $\otherbid{t}$ and sets the edge weights for $e=x\to y$ as follows for each $\v=[v_1, \dots, v_\maxbid]\in\mathcal{V}$: 

(i) If \(x=(\l-1, j)\) and \(y=(\l, j')\) with \(\l\in[\numbid]\) and \(j < j'\),
\begin{align*}
    \textsf{w}^t(e; \v) = \sum_{k=j+1}^{j'}v_k\cdot\ind{w_{j'}\geq \ordotherbid{t}{k}}\,.
\end{align*}

(ii) If \(x=(\l, j)\) and \(y=d\), then \(\textsf{w}^t(e, \v) = 0, \forall \l \in [\numbid], j \in [\maxbid]\).
\EndFor
\end{algorithmic}}
\end{algorithm}

\textbf{Regret Analysis.} For any fixed $\v\in\mathcal{V}$, define $\Phi_t(\v) = \sum_{\ibid\in\optufclass{\numbid}(\v)}\exp(\eta\sum_{\tau=1}^t\val(\ibid; \otherbid{\tau}))$. Then, 
\begin{align*}
    \frac{\Phi_{t}(\v)}{\Phi_{t-1}(\v)} &= \sum_{\ibid\in\optufclass{\numbid}(\v)}\frac{\exp(\eta\sum_{\tau=1}^{t-1}\val(\ibid; \otherbid{\tau}))}{\Phi_{t-1}(\v)}\cdot\exp(\eta\val(\ibid; \otherbid{t})) \\
    &= \sum_{\ibid\in\optufclass{\numbid}(\v)}\P[\ibid^t=\ibid|\v^t=\v]\cdot\exp(\eta\val(\ibid; \otherbid{t})),
\end{align*}
where the last line follows because for a fixed context, \cref{alg:weight-pushing-stoc} is equivalent to \cref{alg:weight-pushing}, which is an equivalent implementation of the Hedge algorithm~(see proof of \cref{thm:full-info}).

For $x\leq 1$, $e^x\leq 1+x+x^2$ and $\log(1+x)\leq x$ for all $x\geq0$. As $\eta\val(\ibid; \otherbid{t}) \leq \eta \maxbid\leq 1$, we have:
\begin{align*}
   \frac{\Phi_{t}(\v)}{\Phi_{t-1}(\v)} &\leq  \sum_{\ibid\in\optufclass{\numbid}(\v)}\P[\ibid^t=\ibid|\v^t=\v]\cdot(1+\eta\val(\ibid; \otherbid{t}) + \eta^2\val(\ibid; \otherbid{t})^2)\\
   &\leq \exp\Big(\sum_{\ibid\in\optufclass{\numbid}(\v)}\P[\ibid^t=\ibid|\v^t=\v](\eta\val(\ibid; \otherbid{t}) + \eta^2\val(\ibid; \otherbid{t})^2)\Big)
\end{align*}

Taking logarithms on both sides and summing over $t=1$ to $t=T$, we get
\begin{align*}
    \log \Phi_T(\v)-\log \Phi_0(\v) \leq \sum_{t=1}^T\sum_{\ibid\in\optufclass{\numbid}(\v)}\P[\ibid^t=\ibid|\v^t=\v](\eta\val(\ibid; \otherbid{t}) + \eta^2\val(\ibid; \otherbid{t})^2)
\end{align*}

Define $\Phi_0(\v)=\sum_{\ibid\in\optufclass{\numbid}(\v)} 1\leq \maxbid^\numbid$. Moreover, for any stationary policy $\pi\in\Pi$, per \cref{eq:Pi}, 
\begin{align*}
  \log \Phi_T(\v) \geq \eta\sum_{t=1}^T\val(\pi(\v); \otherbid{t}) \,. 
\end{align*}

Hence, 
\begin{align}\label{eq:abs-UB-con-regret-stoc}
&\sum_{t=1}^T\val(\pi(\v); \otherbid{t}) - \sum_{t=1}^T\sum_{\ibid\in\optufclass{\numbid}(\v)}\P[\ibid^t=\ibid|\v^t=\v]\val(\ibid; \otherbid{t})\nonumber\\
&\leq \frac{\numbid\log\maxbid}{\eta} +\eta \sum_{t=1}^T\sum_{\ibid\in\optufclass{\numbid}(\v)}\P[\ibid^t=\ibid|\v^t=\v]\val(\ibid; \otherbid{t})^2
\end{align}
Taking expectations with respect to the randomness of the context~(valuation vectors) and using the fact that the choice of $\pi$ was arbitrary, we get:
\begin{align*}
   \textsf{REG}_{sto}&=\max_{\pi\in\Pi}\sum_{t=1}^T\E_{\v_t\sim F_\v}[\val(\pi(\v^t); \otherbid{t})] - \sum_{t=1}^T\E[\val(\ibid^t; \otherbid{t})] \\
   &\stackrel{\eqref{eq:abs-UB-con-regret-stoc}}{\leq} \frac{\numbid\log\maxbid}{\eta} +\eta \E_{\v^t\sim F_\v}\left[\sum_{t=1}^T\sum_{\ibid\in\optufclass{\numbid}(\v)}\P[\ibid^t=\ibid|\v^t=\v]\val(\ibid; \otherbid{t})^2\right]\\
   &\leq \frac{\numbid\log\maxbid}{\eta} +\eta \maxbid^2\sum_{t=1}^T\sum_{\v}\P[\v^t=\v]\sum_{\ibid\in\optufclass{\numbid}(\v)}\P[\ibid^t=\ibid|\v^t=\v]=\frac{\numbid\log\maxbid}{\eta} +\eta \maxbid^2T,
\end{align*}
where the second inequality follows as $\val(\ibid; \otherbid{t})\leq \maxbid$. Setting, $\eta=\frac{1}{\maxbid}\sqrt{\frac{\numbid\log\maxbid}{T}}$, we get
\begin{align*}
  \textsf{REG}_{sto}&\leq O(\maxbid\sqrt{\numbid T\log \maxbid})\,.  
\end{align*}

\subsection{Adversarial Contexts}\label{apx:ssec:adv-contexts}
In this section, we consider the case when the contexts~(valuation vectors) are generated adversarially in advance (akin to an oblivious adversary). The regret in this setting is defined as
\begin{align*}
    \textsf{REG}_{adv} = \max_{\pi^*\in\Pi}\sum_{t=1}^T \val(\pi^*({\v^t}); \otherbid{t}) - \sum_{t=1}^T\E[\val(\ibid^t; \otherbid{t})],
\end{align*}
where $\ibid^t\in \optufclass{\numbid}(\v^t), \forall t\in[T]$ and $\Pi$ is defined in \cref{eq:Pi}. Observe that for any $\pi$, $\sum_{t=1}^T \val(\pi({\v^t}); \otherbid{t}) = \sum_{\v\in\mathcal{V}}\sum_{t=1}^T\val(\pi({\v}); \otherbid{t})\cdot\ind{\v^t=\v}$. Hence, the optimal stationary policy in this setting is 
\begin{align}\label{eq:pi-star-adv}
   \pi^*(\v) = \argmax_{\ibid\in\optufclass{\numbid}(\v)}\sum_{t=1}^T\val(\ibid; \otherbid{t})\cdot\ind{\v^t=\v}, \quad\forall \v\in\mathcal{V}.
\end{align}
This benchmark is the natural choice for adversarial rewards and contexts~\citep[Chapter 18.1]{lattimore2020bandit}. Note that the optimal policy $\pi^*(\cdot)$ differs between the stochastic and adversarial settings. In the stochastic case, the benchmark assumes that the clairvoyant knows the competing bids and distribution over the contexts \textit{a priori} but does not observe the realized contexts in each round in advance. Hence, their goal is to maximize the expected total value over the $T$ rounds \textit{ex ante}. In contrast, under adversarial contexts, the bidder is assumed to know both the contexts and competing bids \textit{a priori}, and thus aims to maximize the total value \textit{ex post} over the $T$ rounds. 

In this setting, the bidder maintains $|\mathcal{V}|$ copies of \cref{alg:weight-pushing}, each corresponding to a context in $\mathcal{V}$. In each round $t\in[T]$, after observing $\v^t$, the bidder updates the edge probabilities only in the algorithm corresponding to $\v^t$ and submits a bidding strategy in $\optufclass{\numbid}(\v^t)$. The space complexity of \cref{alg:weight-pushing-adv} is $O(\numbid\maxbid^2|\mathcal{V}|)$ while the per-round time complexity is $O(\numbid\maxbid^2)$.

\begin{algorithm}[!tbh]
\caption{Learning Safe Bidding Strategies~(Full Information, Adversarial Contexts)}
\label{alg:weight-pushing-adv}
\small{
\begin{algorithmic}[1]
\Require Initialize $|\mathcal{V}|$ copies of \cref{alg:weight-pushing}, each corresponding to a context in $\mathcal{V}$.
\For{$t = 1, 2, \dots, T$}
    \State Observe valuation curve $\v^t\in\mathcal{V}$.
    \State Obtain bidding strategy and update edge probabilities by running \cref{alg:weight-pushing} corresponding to the context $\v^t$.
\EndFor
\end{algorithmic}}
\end{algorithm}

Observe that for stochastic as well as adversarial contexts, a separate DAG is maintained for each context. However, in any given round $t$, while the edge probabilities in each DAG are updated under stochastic contexts the edge probabilities in only the DAG corresponding to the context $\v^t$ are updated for adversarial contexts. 

\begin{theorem}\label{thm:regret-UB-adv}
    When the contexts are adversarial, there exists an algorithm that runs in \(\text{poly}(\numbid, \maxbid)\) space and per-round time and achieves \(\textsf{REG}_{adv} \leq O(\maxbid\sqrt{\numbid |\mathcal{V}| T\log \maxbid})\) under full information feedback.
\end{theorem}
\begin{proof}
    Recall that when the contexts are adversarial, regret is defined as
\begin{align*}
\textsf{REG}_{adv} = \sum_{t=1}^T \val(\pi^*({\v^t}); \otherbid{t}) - \sum_{t=1}^T\E[\val(\ibid^t; \otherbid{t})]
\end{align*}
where $\pi^*(\cdot)$ is defined in \cref{eq:pi-star-adv}. Let $T_\v$ be the set of rounds in which the context is $\v\in\mathcal{V}$. Then, 
\begin{align*}
    \textsf{REG}_{adv} &= \sum_{\v\in\mathcal{V}}\left\{\sum_{t\in T_\v} \val(\pi^*({\v}); \otherbid{t}) - \sum_{t\in T_\v}\E[\val(\ibid^t; \otherbid{t})]\right\}\\
     &\stackrel{\eqref{eq:pi-star-adv}}{=} \sum_{\v\in\mathcal{V}}\left\{\max_{\ibid\in\optufclass{\numbid}(\v)}\sum_{t\in T_\v} \val(\ibid; \otherbid{t}) - \sum_{t\in T_\v}\E[\val(\ibid^t; \otherbid{t})]\right\}\\
     &\leq \sum_{\v\in\mathcal{V}}O(\maxbid\sqrt{\numbid |T_\v|\log \maxbid})\,,
\end{align*}
where the last inequality follows as a separate instance of \cref{alg:weight-pushing} is utilized for each context and using the regret upper bound of \cref{alg:weight-pushing} in the full information setting.\footnote{Although $T_\v$ is not known in advance~(which is required for tuning the learning rate), the bidder can use the standard \textit{doubling trick}~\citep{cesa2006prediction} to implement \cref{alg:weight-pushing} which yields the same regret guarantee up to constant factors.} Finally, by Jensen's inequality, we get that $\textsf{REG}_{adv} \leq O(\maxbid\sqrt{\numbid |\mathcal{V}|T\log \maxbid})$.
\end{proof}

We complement the regret upper bound for adversarial contexts with the following lower bound:

\begin{theorem}\label{thm:regret-LB-adv}
Let $\maxbid\geq2$, $\numbid=1$ and $|\mathcal{V}|<\infty$. There exist competing bids, $[\otherbid{t}]_{t\in[T]}$, such that for adversarial contexts, under any learning algorithm, $\E[\textsf{REG}_{adv}]=\Omega(\maxbid\sqrt{|\mathcal{V}|T})$ in the full information setting.
\end{theorem}


\begin{proof}
We begin by presenting a corollary of \cref{thm:regret-LB}:
\begin{corollary}\label{cor:regret-LB}
    Fix any $\theta>0$ and assume that $v_j\in[0, \theta], \forall j\in[\maxbid]$. Then, for $\maxbid\geq2$ and $\numbid=1$, there exist competing bids, $[\otherbid{t}]_{t\in[T]}$, such that, under any learning algorithm, $\E[\textsf{REG}]\geq c\theta\maxbid\sqrt{T}$ for some $c>0$, in the full information setting where $\textsf{REG}$ is defined in \cref{eq:safe-regret}.
\end{corollary}
\begin{proof}[Proof of \cref{cor:regret-LB}]
    The result follows directly by considering the scaled valuation vector $\theta\v$ (as defined in \cref{eq:LB-v}), the scaled competing bid profiles $\theta\otherbid{\clubsuit}$ and $\theta\otherbid{\diamondsuit}$ (as defined in \cref{eq:regret-LB-competing-bids}), and applying the analysis of \cref{thm:regret-LB}.
\end{proof}

Fix any $N\in\N$ and define $\mathcal{V}=\{\alpha_i\v: \alpha_i=\frac{i}{N},~~\text{for}~~i\in[N]\}$ where $\v$ is defined in \cref{eq:LB-v}. Note that $\v'\in[0, 1]^\maxbid$ for all $\v'\in\mathcal{V}$. Assume that $T$ is a multiple of $N$. Partition the $T$ rounds into $N$ epochs with the $i^{th}$ epoch spanning the rounds $t=\frac{(i-1)T}{N}+1,\dots,  \frac{iT}{N}$. For all the rounds in the $i^{th}$ epoch, set the context as $\alpha_i\v$. Then, for some $c'>0$,
\begin{align*}
    \E[\textsf{REG}_{adv}] = \sum_{i=1}^{N}\E\left[\max_{\ibid\in\optufclass{\numbid}(\alpha_i\v)}\sum_{t=\frac{(i-1)T}{N}+1}^{\frac{iT}{N}} \val(\ibid; \otherbid{t}) - \sum_{t=\frac{(i-1)T}{N}+1}^{\frac{iT}{N}}\val(\ibid^t; \otherbid{t})\right]\geq c'\maxbid\sqrt{\frac{T}{N}}\sum_{i=1}^{N}\alpha_i\,.
\end{align*}
The inequality follows from \cref{cor:regret-LB}. As $\sum_{i=1}^{N}\alpha_i\geq N/2$, $\E[\textsf{REG}_{adv}]\geq \frac{1}{2}c'\maxbid
N\sqrt{\frac{T}{N}}=\Omega(\maxbid\sqrt{|\mathcal{V}|T})$ where the last equality holds as $|\mathcal{V}|=N$.
\end{proof}
\section{Simulation Details}\label{apx:exp_details}



We sample the values from the $\text{Unif}[0, 1]$ distribution.  In each simulation, we sample $T\sim\text{Unif}[100, 300]$ auctions and let $\maxbid\sim\text{Unif}[10, 80]$. We vary $\numbid=1$ to $\numbid=10$ and average over 100 simulations to obtain plots in \cref{fig:sims}. As computing $\optvalue{\numbid}$~(the value obtained by the optimal bidding strategy in $\feasclass{\numbid}$) can be non-trivial, we obtain a uniform upper bound for $\optvalue{\numbid}$ that is \textit{independent of $\numbid$}~(see details in \cref{apx:ilp}). 

\subsection{Reconstructing Individual Bid Data}\label{apx:data}

We obtained the publicly available auction data for $T_{\max}=443$ EU ETS emission permit auctions held in 2022 and 2023~\citep{eex-eua-primary-auction}. For each auction indexed by $t\in[T_{\max}]$, we have the following relevant information: the minimum bid~($b_{\min}^t$), the maximum bid~($b_{\max}^t$), the average of the bids~($b_{\text{avg}}^t$), the median of the  bids~($b_{\text{med}}^t$), and the number of bid-quantity pairs submitted~($n_{\text{sub}}^t$). We normalized the bids to be in $[0, 1]$. For all rounds $t$, $b_{\text{avg}}^t\approx b_{\text{med}}^t$~(linear regression yields coefficient 1.01 and intercept $-0.008$).

Upon further investigation, we observed that, except a few, a significant number of auctions had either $b_{\min}^t\approx b_{\text{avg}}^t\ll b_{\max}^t$~(Type I) or $b_{\min}^t\ll b_{\text{avg}}^t\approx b_{\max}^t$~(Type II). As $b_{\text{avg}}^t\approx b_{\text{med}}^t, \forall t$, we deduce that for Type I, most of the bids are concentrated in the interval $[b_{\min}^t, 2b_{\text{avg}}^t-b_{\min}^t]$ whereas for Type II, most of the bids are in the interval $[2b_{\text{avg}}^t-b_{\max}^t, b_{\max}^t]$. We posit that for Type I~(resp. Type II) auctions, $f\in(0, 1)$ fraction of the bids~($n_{\text{sub}}^t$) are in $[b_{\min}^t, 2b_{\text{avg}}^t-b_{\min}^t]$~(resp. $[2b_{\text{avg}}^t-b_{\max}^t, b_{\max}^t]$)  and the $1-f$ fraction of bids are in $[2b_{\text{avg}}^t-b_{\min}^t, b_{\max}^t]$~(resp. $[b_{\min}^t,2b_{\text{avg}}^t-b_{\max}^t]$). If for Type I (resp. Type II) auctions, $2b_{\text{avg}}^t>b_{\min}^t+ b_{\max}^t$~(resp. $2b_{\text{avg}}^t<b_{\min}^t+ b_{\max}^t$), we assume that all the bids are uniformly present in the interval $[b_{\min}^t, b_{\max}^t]$. With these assumptions, we generate individual bid data for each auction by sampling uniformly from these intervals.

After generating the individual bid data, we compute the metrics for the reconstructed bids (say $\widehat{b}_{\text{avg}}^t$) for each auction and reject those with a relative error of at least $\delta$~(tolerance). For our simulations, we set $\delta = 0.05$. We perform a grid search for $f$ to maximize the number of auctions where the metrics of the reconstructed data are within $\delta$ relative error of the actual metrics, and obtain that $f = 0.97$. Following this pre-processing, we have reconstructed individual bid data for $T = 341$ auctions. The bids are normalized to be in $[0, 1]$. 
\subsection{An Uniform Upper Bound for $\optvalue{\numbid}.$}\label{apx:ilp}
For any bid history, $\hist=[\otherbid{t}]_{t\in[T]}$, suppose $\optbid{\numbid}$ is allocated $r_t$ units in any round $t$. Then, by \cref{lem:approx1_multiple}~(1), we know that $(w_{r_t}, r_t)$ also obtains $r_t$ units in round $t$. So,
\begin{align*}
    \val(\optbid{\numbid}; \otherbid{t}) &= \val((w_{r_t}, r_t); \otherbid{t}) \leq \max_{\ibid\in\optufclass{1}}\val(\ibid; \otherbid{t})\\
    \implies\optvalue{\numbid}&=\sum_{t=1}^T\val(\optbid{\numbid}; \otherbid{t})\leq \sum_{t=1}^T\max_{\ibid\in\optufclass{1}}\val(\ibid; \otherbid{t})\,.
\end{align*}
\section{Resolving Ties}\label{apx:sec:ties}
We first emphasize that the effect of resolution of ties on the objective function under the value maximization behavioral model is fundamentally different from the quasilinear utility maximization model. To illustrate this, consider a second price auction with a single indivisible item with two bidders each valuing the item equally at $v$. Assume that both the bidders bid $v$ and there exists a definitive tie breaking rule~(either randomized or deterministic). If the bidders are considered quasilinear utility maximizers, the objective function value~(value obtained minus the payments) for both the bidders is 0. However, under the value maximization model~(assume $\gamma=0$ for both the bidders), the objective function value for the winning bidder is $v$ and 0 for the losing bidder.\footnote{Although, the total (liquid) welfare is $v$ in both the cases.} So, we need to carefully analyze the key results of our work in the event of ties.

In the context of multi-unit auctions, we can classify ties into two informal types: (a) `good ties'---ties occurring at any bid other than the last accepted bid (LAB) and (b) `bad ties', which are the ties occurring at the LAB. All the results in this work are unaffected in case only `good ties' occur. Thus, we focus on `bad ties' in this paper. For this discussion, we consider a public, deterministic tie breaking rule under which ties are always broken in favor of lower indexed bidder. In the presence of ties, the undominated class of safe bidding strategies is still the class of nested strategies~(\cref{thm:opt-bid}). The tie breaking rule is incorporated into the decomposition in \cref{lem:decomp} and computing the edge weights in the DAG in the offline and online settings in \cref{sec:learning-safe}, ensuring that the maximum weight path in the DAG gives the optimal offline solution and \cref{alg:weight-pushing} achieves sublinear regret in the online setting.


\end{document}